\theoremstyle{plain}
\newtheorem{definition}{Definition}
\newtheorem{assumption}{Assumption}
\newtheorem{theorem}{Theorem}
\newtheorem{proposition}{Proposition}
\newtheorem{corollary}{Corollary}
\newtheorem{lemma}{Lemma}
\newtheorem{conjecture}{Conjecture}
\newcounter{takeaway}[section]\setcounter{takeaway}{0}
\newcommand{\bc}[1]{}
\newcommand{\indep}{\perp \!\!\! \perp}
\newcolumntype{P}[1]{>{\RaggedRight\hspace{0pt}}p{#1}}
\newcolumntype{X}[1]{>{\RaggedRight\hspace*{0pt}}p{#1}}
\colorlet{linecol}{black!75}
\newcommand{\highlight}[2]{\colorbox{#1!17}{$\displaystyle #2$}}
\colorlet{mhpurple}{Plum!80}
\renewcommand{\highlight}[2]{\colorbox{#1!17}{#2}}
\title{Reweighting the RCT for generalization: finite sample error and variable selection}
\author{B\'{e}n\'{e}dicte Colnet \thanks{Soda project-team,  Premedical project-team, INRIA (email: benedicte.colnet@inria.fr).}
  \and
Julie Josse\thanks{Premedical project team, INRIA Sophia-Antipolis, Montpellier, France.}
  \and
Ga\"{e}l Varoquaux\thanks{Soda project-team, INRIA Saclay, France.}
     \and 
Erwan Scornet \thanks{Centre de Math\'{e}mathiques Appliqu\'{e}es, UMR 7641, \'{E}cole polytechnique, CNRS, Institut Polytechnique de Paris, Palaiseau, France.}
}
\date{\today}
\begin{document}
\maketitle

\begin{abstract}
Randomized Controlled Trials (RCTs) may suffer from limited scope. In particular, samples may be unrepresentative:
some RCTs over- or under- sample individuals with certain characteristics compared to the target population, for which one wants conclusions on  treatment effectiveness.
Re-weighting trial individuals to match the target population can improve the treatment effect estimation. 
In this work, we establish the expressions of the bias and variance of such reweighting procedures - also called Inverse Propensity of Sampling Weighting (IPSW) - in presence of categorical covariates for any sample size. 
Such results allow us to compare the theoretical performance of different versions of IPSW estimates. Besides, our results show how the performance (bias, variance, and quadratic risk) of IPSW estimates depends on the two sample sizes (RCT and target population). A by-product of our work is the proof of consistency of IPSW estimates. Results also reveal that IPSW performances are improved when the trial probability to be treated is estimated (rather than using its oracle counterpart).
In addition, we analyze how including covariates that are not necessary for
identifiability of the causal effect may impact the asymptotic variance.
Including covariates that are shifted between the two samples but not
treatment effect modifiers increases the variance while non-shifted but
treatment effect modifiers do not.
We illustrate all the takeaways in a didactic example, and on a semi-synthetic simulation inspired from critical care medicine.
    \vspace{12pt}\\
\textit{Keywords:} Average treatment effect (ATE);  
Sampling bias; 
External validity; 
Transportability;
Distributional shift;
IPSW.
\end{abstract}


\section{Introduction}

\paragraph{Motivation}
Modern \textit{evidence-based} medicine puts Randomized Controlled Trial
(RCT) at the core of clinical evidence. Indeed, randomization enables to
estimate the average treatment effect  (called ATE) by avoiding
confounding effects of spurious or undesirable associated factors. 
But more recently, concerns have been raised on the limited scope of RCTs: stringent eligibility criteria, unrealistic real-world compliance, short timeframe, limited sample size, etc. All these possible limitations threaten the external validity of RCT studies to other situations or populations \citep{Rothwell2007ToWhom, Stuart2017ChapterBook, Deaton2018Misunderstanding}.
The usage of complementary non-randomized data, referred to as \textit{observational} or from the \textit{real world}, brings promises as additional sources of evidence, in particular combined to trials \citep{kallus2018removing, athey2020combining, Liu2021TrialPathFinder}. 
For example, assume policy makers are studying an RCT which comes with great promises about a new treatment. But when reading the report, they may discover that the RCT is composed of substancially younger people than the target population of interest. Such a situation can be uncovered from the so-called \textit{Table 1} of this newly published trial, which summarizes the demographics of the study population.
In case of treatment effect heterogeneities, e.g. if the younger individuals respond better to the treatment, the ATE estimated from the trial is over-estimated and then biased. 
Now, assume these policy makers have also at disposal a sample of the actual patients in the district, being a representative sample of the true distribution of age in this population (typically without information on the outcome or the treatment). 
Can they use this representative sample of the target population of interest to \textit{re-weight} or to \textit{generalize} the trial's findings?
The answer is \textit{yes}: the strategy has been formalized and popularized lately \citep{stuart2011use, pearl2011transportability, Bareinboim2012Controlling, bareinboim2012completeness, tipton2013improving, Muircheartaigh2014GeneralizingApproach, Hartman2015FromSATE, kern2016assessing, dahabreh2020extending} (reviewed in \cite{Colnet2020review, Degtiar2021Generalizability}) and can come under many variants named \textit{generalization}, \textit{transportability}, \textit{recoverability}, and \textit{data-fusion}.
In fact, the idea of re-weighting a trial can be traced back before the 2010's. Several epidemiology books had already presented the core idea under the name \textit{standardization} \citep{Rothman2000ModernEpidemiology, Rothman2011bookEpidemiologyIntrod}.

In this work, we focus on one estimator used to generalize RCTs: the \textit{Inverse Propensity of Sampling Weighting} (IPSW) \citep{cole2010generalizing, stuart2011use}, also named \textit{Inverse Odds of Sampling Weights} (IOSW) \citep{Westreich2017IOSW, josey2021transporting} or \textit{Inverse probability of participation weighting} (IPPW) \citep{Degtiar2021Generalizability}.
Despite an increasing literature on \textit{generalization}, important practical questions remain open \citep{kern2016assessing, tipton2016smallsample, stuart2017CaseStudyDifficulties, Ling2022CriticalReview}. 
For instance, which covariates -- for e.g. age, and others --  should be used to build the weights? Are some covariates increasing or lowering the overall precision? What is the impact of the size of the two samples (trial and representative sample) on the IPSW's properties? 

\paragraph{Outline}
We start by illustrating the principles of trial re-weighting and some key results of this article on a toy example (Section~\ref{sec:toy-example-generalization}). Section~\ref{sec:toy-example-generalization} ends with related works. Then Section~\ref{section:notations-and-assumptions} introduces the mathematical notations, assumptions, and the precise definition of the IPSW estimator. In particular, we present several versions of the IPSW estimator: whether the covariates probability of the trial or the target population are estimated from the data or assumed as an oracle.
This links our results to classic work in causal inference and epidemiology. Section~\ref{section:theoretical-results} contains all the theoretical results, such as finite sample bias, variance, bounds on the risk, consistency, and large sample variance.
We also detail why another version of the IPSW, where the probability of treatment assignment in the trial is also estimated, has a lower variance. 
Finally, we discuss in Section~\ref{section:theoretical-results} how additional and non-necessary covariates can either improve or damage variance, depending on their status: whether they are only shifted between the two populations or only treatment-effect modifiers.
Section~\ref{section:semi-synthetic-simulations} completes the toy example and illustrates all theoretical results on an extensive semi-synthetic example inspired from the medical domain.
Finally, Section~\ref{sec:conclusion-discussion} summarizes all practical takeaways for this research and discusses it.

\section{Problem setting}\label{sec:toy-example-generalization}

\subsection{Toy example}

\subsubsection{Context and intuitive estimation strategy}
\label{subsection_toy_example}
\begin{wrapfigure}{r}{0.35\textwidth}
\centerline{\includegraphics[width=0.35\textwidth]{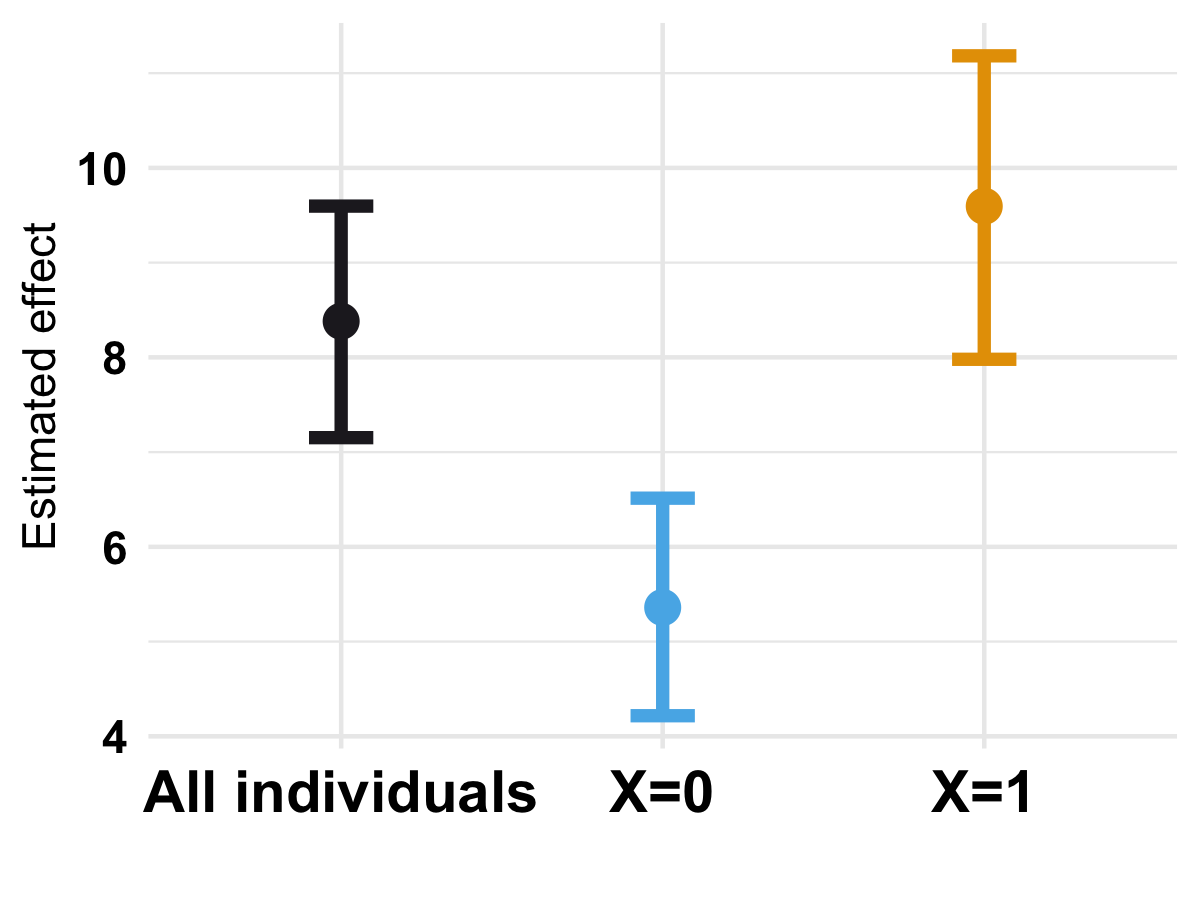}}
\caption{\textbf{Treatment effect estimates} (absolute difference) measured on a simulated trial of size $n=150$ sampled according to the trial population $\mathcal{P}_\text{\tiny R}$. On the left the estimate on \textbf{all individuals}, and on the right the two estimate stratified (\textcolor{cyan}{\textbf{$X=0$}} and \textcolor{orange}{\textbf{$X=1$}}) showing treatment effect heterogeneities along the genetic mutation $X$.}
\label{fig:toy_example_stratification}
\end{wrapfigure}
Assume that we would like to measure the average effect of a treatment (ATE) $A$ on a outcome $Y$ in a target population of interest $\mathcal{P}_\text{\tiny T}$ (for \textbf{t}arget),
and that an existing Randomized Controlled Trial (RCT) had already been conducted on $n=150$ individuals, sampled from a population $\mathcal{P}_\text{\tiny R}$ (for \textbf{r}andomized), to assess the average effect of $A$ on $Y$. A popular estimator, with interesting theoretical properties, to estimate the average treatment effect in a trial is the Horvitz-Thomson estimator \citep{HorvitzThompson1952seminal},
\begin{align}\label{eq:toy-estimator-HT}
    \hat \tau_{\text{\tiny HT},n} &=   \frac{1}{n} \sum_{i \in \text{Trial}}\left(\frac{ Y_i A_i}{\pi} - \frac{ Y_i (1-A_i)}{1-\pi} \right),
\end{align}
where $\pi$ is the probability of treatment allocation in the trial (in most applications, $\pi = 0.5$). Figure~\ref{fig:toy_example_stratification} presents results of a simulated trial with an average treatment effect around $8.2$. In addition, assume that the trial provides evidence that the treatment effect is heterogeneous with respect to a certain genetic mutation denoted $X$ (with $X=1$ for the mutation, and $X=0$ if no mutation). More specifically, the average treatment effect conditional to $X$ is larger for individuals with $X = 1$ than for those with $X = 0$. This situation is illustrated on Figure~\ref{fig:toy_example_stratification} where the average effect per strata $X$ is also represented. We have at hand a representative sample of $m=1000$ individuals from the target population we are interest in (for example from an existing observational database). We observe that individuals with the genetic mutation ($X=1$) are over-represented in the trial compared to the target population of interest (see Figure~\ref{fig:toy_example_covariate_shift}). 
As a consequence, the trial overestimates the target population's ATE we are interested in.

\begin{figure}[!h]
	\begin{minipage}{.55\linewidth}
      \caption{\textbf{Covariate shift along the genetic mutation $X$} between the trial population $\mathcal{P}_\text{\tiny R}$ and target population $\mathcal{P}_\text{\tiny T}$, highlighting the distributional shift between the two data sources. Such population's difference questions what is named the \underline{external validity of a trial}.}
      \label{fig:toy_example_covariate_shift}
    \end{minipage}
    \begin{minipage}{.45\linewidth}
    \begin{center}
    {\small
	\begin{tabular}{|l|l|l|}
\hline
\cline{2-3}
                                                             & \cellcolor[HTML]{CBCEFB}\textbf{Target ($\mathcal{P}_\text{\tiny T}$)} & \cellcolor[HTML]{CBCEFB}\textbf{Trial ($\mathcal{P}_\text{\tiny R}$)} \\ \hline
\multicolumn{1}{|l|}{\cellcolor[HTML]{ECF4FF}\textbf{$X = 1$}} & 30\%                                    & 75\%                                   \\ \hline
\multicolumn{1}{|l|}{\cellcolor[HTML]{ECF4FF}\textbf{$X = 0$}} & 70\%                                    & 25\%                                   \\ \hline
\end{tabular}}
    \end{center}
	\end{minipage}
	
\end{figure}

Fortunately, the representative sample of the target population can be used to learn weights, and re-weight the trial data in the following way,
\begin{equation}\label{eq:toy-estimator-oracle}
    \hat \tau_{n,m} =   \frac{1}{n} \sum_{i \in \text{Trial}} \underbrace{\hat w_{n,m}(X_i)}_\textrm{Weights} \underbrace{ \left(\frac{ Y_i A_i}{\pi} - \frac{ Y_i (1-A_i)}{1-\pi} \right)}_\textrm{Horvitz-Thomson}.
\end{equation}

As detailed later on, the weights $\hat w_{n,m}$ aims at estimating the probability ratio $\frac{p_\text{\tiny T}\left(x\right)}{p_\text{\tiny R}\left(x\right)}$, where $p_\text{\tiny T}\left(x\right)$ (resp. $p_\text{\tiny R}\left(x\right)$) is the probability of observing an individual with characteristics $X=x$ in the target (resp. randomized) population. The weights $\hat w_{n,m}$ depend on the sizes of the randomized and observational data sets, namely $n$ and $m$. 
Consequently, the ATE estimator $\hat  \tau_{n,m}$ depends on the size of two data sets, raising questions on how this estimator behaves (bias and variance) as function of $n$ and $m$.

\subsubsection{Simulations and first observations}

\begin{figure}[b!]
        \centering
        \begin{subfigure}[b]{0.28\textwidth}
            \centering
            \includegraphics[width=0.95\textwidth]{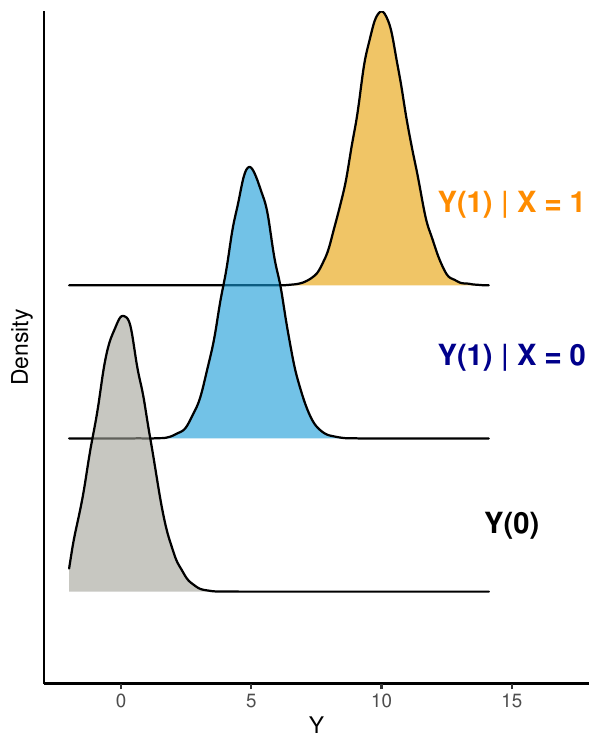}
            \caption[Network2]%
            {{\small \textbf{Toy example's data generative model}: where individuals with $X=1$ have a higher average treatment effect compared to individuals with $X=0$. The baseline, centered on $0$, is the same for both stratum.}} 
            \label{fig:toy_example_response_level}
        \end{subfigure}
        \hspace{0.2cm}
        \begin{subfigure}[b]{0.68\textwidth}
            \centering
            \includegraphics[width=0.95\textwidth]{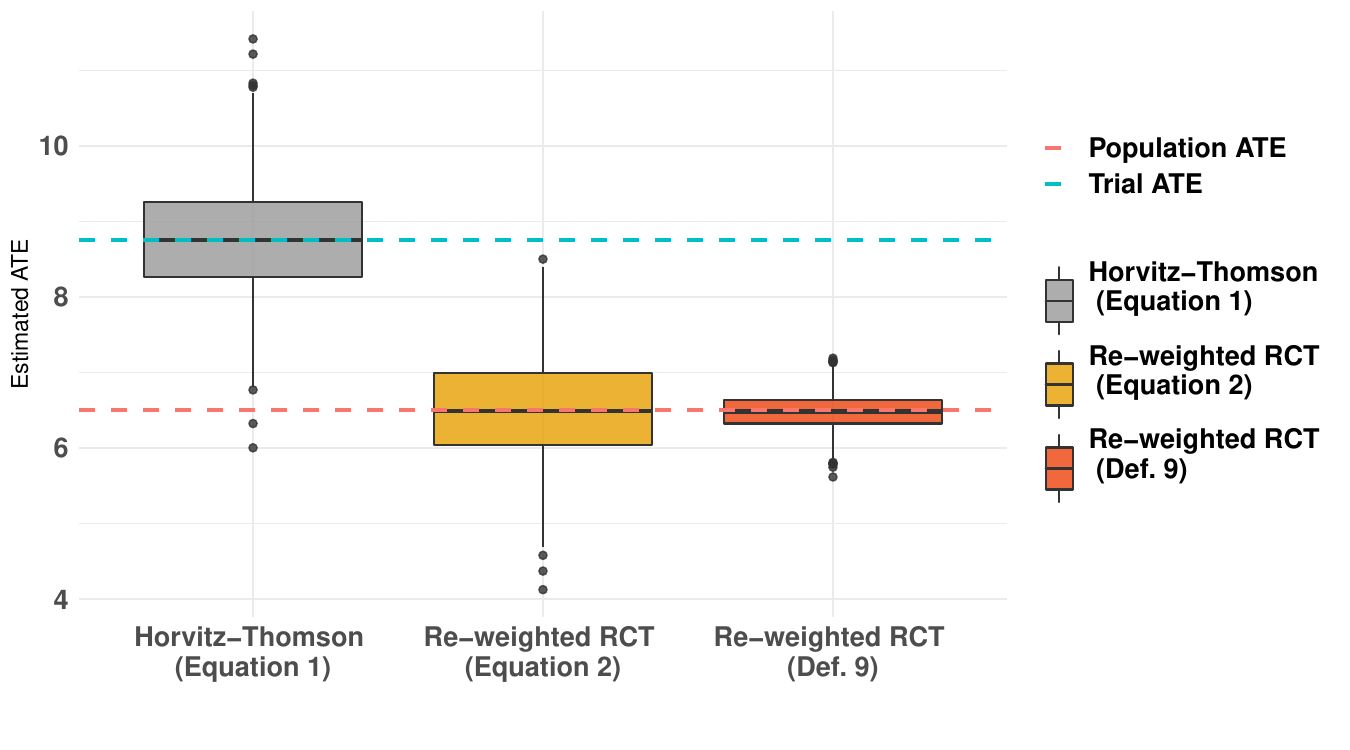}
            \caption[Network2]%
            {{\small \textbf{Re-weighting in action}: Simulations's results with a trial of size $n=150$, a target sample of size $m=1,000$ with $1,000$ repetitions, where the \textcolor{gray}{naive trial estimate} corresponds Equation~\ref{eq:toy-estimator-HT}, and \textcolor{YellowOrange}{re-weighted trial} to Equation~\ref{eq:toy-estimator-oracle}. As expected re-weighting allows to recover the ATE of the target population (red dashed line). It is also possible to \textcolor{RedOrange}{estimate $\pi$} from the data, giving another re-weighting estimator with lower variance (later introduced in Definition~\ref{def:ipsw-with-pi}).}}    
            \label{fig:toy_example_simplest_expe}
        \end{subfigure}
        \vskip\baselineskip
         \begin{subfigure}[b]{0.90\textwidth}
            \centering
            \includegraphics[width=0.85\textwidth]{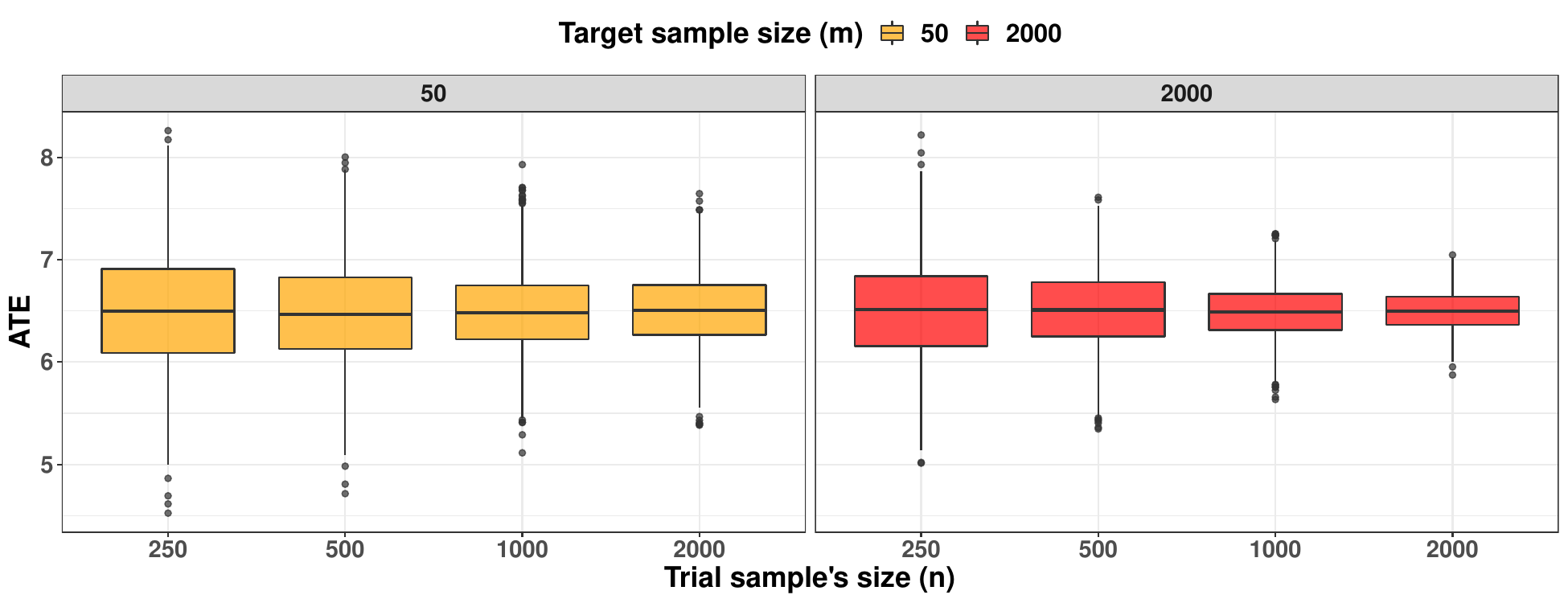}
            \caption[Network2]%
            {{\small \textbf{Two data sets leading to two asymptotic regimes}: where two situations are considered, one with a large target sample ($m=2000$) or a small target sample ($n=50$). Then, increasing $n$ leads to a variance stagnation if $m$ is small, while increasing $n$ allows to further gain in precision if $n \le m$.}}
            \label{fig:toy_example_2_asympt}
        \end{subfigure}
        \caption[ Toy example's simulations - General results under the \textbf{Minimal} adjustment set, that is covariate $X$. ]
        {\small Toy example's simulations - \textbf{Minimal} adjustment set.} 
        \label{fig:simulation-toy-example-overall-results}
\end{figure}

To investigate empirically how $\hat \tau_{n,m}$ behaves, we run simulations following the Data Generative Process (DGP) described in Section~\ref{subsection_toy_example} and represented in Figure~\ref{fig:toy_example_response_level}.  Figure~\ref{fig:toy_example_simplest_expe} shows the different estimators in action, showing that the re-weighted trial compensates for the distribution shift as expected. 

Figure~\ref{fig:toy_example_simplest_expe} also shows that estimating $\pi$ from the data and plugging it in Equation~\ref{eq:toy-estimator-oracle} leads to a clear gain in variance. 
     This phenomenon is linked to seminal works in causal inference, 
     and is further demonstrated in Section~\ref{subsec:also-estimating-pi}. 
Finally, Figure~\ref{fig:toy_example_2_asympt} shows that 
if $m$ remains small compared to $n$ or if $n$ remains small compared to $m$, then the asymptotic variance regime differs (see Corollary~\ref{cor_asympt_completely_estimated} for a formal statement, and Figure~\ref{fig:corollary3} for an illustration of the theoretical results).\\

For correct trial generalization, all shifted treatment effect modifier baseline covariates (see Definition~\ref{def:V-is-not-treat-effect-modifier} and \ref{def:V-is-not-shifted}, Section~\ref{subsec:extended_adjustement_set}), such as the genetic mutation $X$, are necessary \citep{stuart2011use}. 
But, in practice \emph{one may be tempted to add as many covariates $V$ as available to account for all possible sources of external validity bias}. 
Doing so, we may add covariates $V$ that are not needed to properly estimates the weights. This is the case if \textit{(i)} $V$ is shifted between the two data sets, but in reality is not a treatment effect modifier or if \textit{(ii)} $V$ is a treatment effect modifier, but not shifted between the two data sets.
Figure~\ref{fig:toy_example_shifted_covariates} shows that
in $(i)$, 
the covariate $V$ should not be added,
as it can considerably inflate the variance and therefore damage the precision (see Corollary~\ref{proposition:adding-shifted-covariates} for a formal statement); 
while in $(ii)$, Figure~\ref{fig:toy_precision_covariates} highlights that 
the covariate $V$ should be added as 
 the precision can be augmented by adding such covariates (see Corollary~\ref{proposition:adding-treat-effect-modifier-covariates} for a formal statement).\\ 
\begin{figure}[!h]
        \centering
        \begin{subfigure}[b]{0.53\textwidth}
            \centering
            \includegraphics[width=0.85\textwidth]{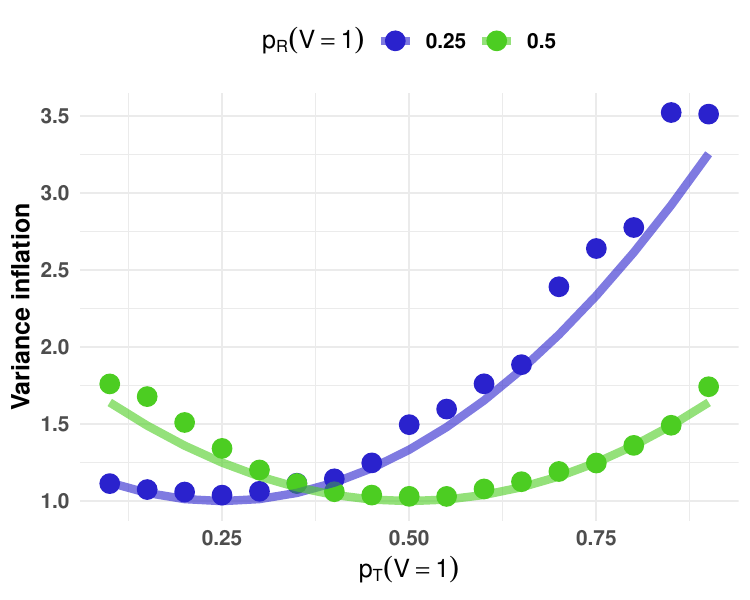}
            \caption[Network2]%
            {{\small \textbf{Adding shifted covariate that is not a treatment effect modifier} leads to a variance inflation. Simulation represents the situation of a binary shifted covariate $V$ added or not in the adjustment set. The $y$-axis represents how much the variance with the minimal set is multiplied compared to a situation with this additional shifted covariate. The plain lines comes from the Theory (see Corollary~\ref{proposition:adding-shifted-covariates}) while dots are empirical variance (obtained from $1,000$ repetitions with $n=150$ and $m=1,000$). The more shifted the covariate, the higher the inflation. The phenomenon is amplified if the covariate is \textcolor{BlueViolet}{\textbf{imbalanced}} in the trial (in opposition with a \textcolor{LimeGreen}{\textbf{balanced}}).}}    
            \label{fig:toy_example_shifted_covariates}
        \end{subfigure}
        \hspace{0.5cm}
        \begin{subfigure}[b]{0.3\textwidth}  
            \centering 
            \includegraphics[width=0.92\textwidth]{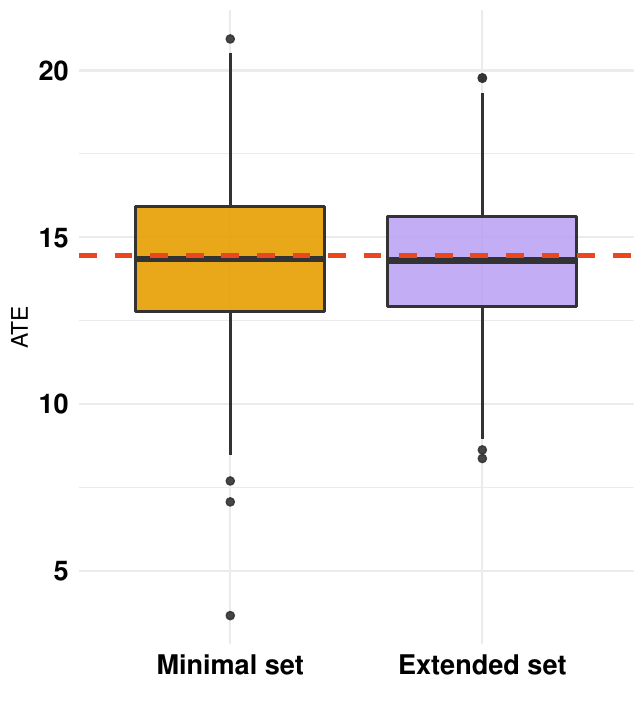}
            \caption[]%
            {{\small \textbf{Adding non-shifted treatment effect modifier} leads to a gain in precision compared to a situation with only the necessary covariate. In this plot DGP from Figure~\ref{fig:toy_example_response_level} is adapted to add one non-shifted treatment effect modifier. Adding such covariate (\textcolor{Orchid}{\textbf{extended set}}) compared to an adjustment set with only $X$ (\textcolor{YellowOrange}{\textbf{minimal set}}) lowers the variance.}}
            \label{fig:toy_precision_covariates}
        \end{subfigure}
        \caption{Toy example's simulations - \textbf{Extended} adjustment set.}
        \label{fig:simulation-toy-example-additional-covariates}
\end{figure}

In Section~\ref{section:theoretical-results}, we prove these phenomenons, deriving explicit finite sample and asymptotic results to characterize the re-weighting process.


\subsection{Related work}\label{subsec:related-work}
\bc{ajouter le papier de Kara Rudolph}
The estimator $\hat{\tau}_{n,m}$ introduced in the toy example (Equation \ref{eq:toy-estimator-oracle})  is an exact implementation of the so-called \textit{Inverse Propensity of Sampling Weighting} (IPSW) where the word \textit{sampling} comes from the popular habit of modeling the problem as the one of a randomized trial suffering from selection bias \citep{cole2010generalizing, Bareinboim2012Controlling, tipton2013improving, dahabreh2019generalizing}. 
Note that the estimator introduced in Equation \ref{eq:toy-estimator-oracle} can also be linked to post-stratification \citep{imbens2011experimental, miratrix2013adjusting}, where post-stratification belongs to the family of adjustment methods on a single RCT.
Note that beyond trial re-weighting, other estimation strategies can be chosen when it comes to generalization, for example stratification \citep{tipton2013improving, Muircheartaigh2014GeneralizingApproach}, modeling the response (G-formula or Outcome Modeling) \citep{kern2016assessing, dahabreh2019generalizing}, using both strategies in a so-called doubly-robust approach (AIPSW) \citep{dahabreh2019generalizing, dahabreh2020extending}, or entropy balancing \citep{josey2021transporting, dong2020integrative}. 

\paragraph{Link with IPW} The IPSW can be related -- to a certain extent -- to the well-known \textit{Inverse Propensity Weighting} (IPW) estimator in the context of a single observational data set \citep{Hirano2003Efficient}. Indeed, this corresponds to a mirroring situation, where the weights are no longer the probability ratio, but the probability to be treated  \citep[propensity score,][]{rosenbaum1983centralrolepropensity}. \cite{Robins1992EstimatingEE, Hahn1998efficiencybound, Hirano2003Efficient} showed that IPW is more efficient when weights are estimated, rather than relying on oracle weights. 
This curious phenomenon can even be found in other areas of statistics \citep{Efron1978ObservedVsExpected}.
Beyond efficient estimation with a minimal adjustment set, it is known that additional and non-necessary baseline covariates in the adjustment set of the IPW can either increase the variance (the so-called instruments) \citep{Velentgas2013DevelopingAP, Schnitzer2015VariableSelection, Wooldridge2016Instrument}, while another class of covariates (the ones linked only to the outcome -- and also called \textit{outcome-related covariates} or \textit{risk factors} or \textit{precision covariates}) improves precision \citep{Hahn2004FunctionalRestriction, Lunceford04stratificationand, brookhart2006variable, Lefebvre2008Mispecification, witte2018selection}. A recent crash-course about good and bad controls recalls this phenomenon \citep{Cinelli2020CrashCourse}.
Finally, another very recent line of research consists in determining -- given a Directed Acyclic Graph (DAG) --  the asymptotically-efficient adjustment set for ATE estimation. This is also named ‘optimal’ valid adjustment set (O-set), corresponding to the adjustment set ensuring the smallest limiting variance compared to other adjustment sets. \cite{Henckel2019GraphicalAdjs} propose a result for linear model, and \cite{Rotnitzky2020Efficient} extend this work for any non-parametrically adjusted estimator. Such methods are meant for complex DAGs where several possible adjustment sets can be used.\\

\paragraph{Theoretical results on IPSW} Expression of the variance has been proposed for an estimator related to the IPSW: the stratification estimator \citep{Muircheartaigh2014GeneralizingApproach, tipton2013improving}. These results only consider the situation of an infinite target sample. 
Similar 
expressions can also be found in \cite{Rothman2000ModernEpidemiology}, also assuming an infinite target sample compared to the trial sample size.
\cite{buchanan2018generalizing} propose theoretical properties such as limiting variance of a variant of IPSW under a parametric model, using M-estimation methods for the proof \citep{Stefanski2002Mestimation}. \textit{Why a variant?} Because their proof is under the situation of a so-called nested design, that is a trial embedded in a larger observational population, so that there is only one single data set to consider and not two. 
In addition, we have found no discussion - neither empirical nor theoretical - about the impact of adding non-necessary covariates 
on the IPSW (or any other generalization's estimator) properties (e.g., bias, variance). 
\cite{Egami2021CovariateSelection} propose a method to estimate a separating set -- \emph{i.e.} a set of variables affecting both the sampling mechanism and treatment effect heterogeneity – and in particular when the trial contains many more covariates than the target population sample. However, their work focuses on identification. 
\cite{Huitfeldt2019EffectHeterogeneity} also consider covariate selection for generalization, but analyze whether the necessary number of covariates can be reduced by considering a specific effect measure (ratio, difference, other) instead of the entire counterfactual distribution. 
\cite{Yang2020DoublyRI} addresses a similar problem (for non-probability sample and mean estimation), where they advocate selecting all variables, even instrumental variables, for robustness, although it may come at the cost of drop in efficiency.
Note that some existing practical recommendations advocate to add as many covariates as possible \citep{stuart2017CaseStudyDifficulties}.\\

\paragraph{Contributions}
This work considers several variants of the IPSW estimator, whether or not the weights are oracle, semi-oracle, or estimated.
In this context, we derive the limiting variance of all the variants of IPSW and we show that several asymptotic regimes exist, depending on the relative size of the RCT compared to the target sample. 
We also provide finite sample expression of the bias and variance for all the IPSW variants introduced, allowing to bound the risk on this estimator for any samples sizes (trial and target population).
From these theoretical results, we explain why the addition of some additional but non-necessary covariates in the adjustment set has a large impact on precision, for the best or the worst.
Indeed, while non-shifted treatment effect modifiers improve precision by lowering the variance, adding shifted covariates that are not predictive of the outcome considerably reduces the statistical power of the analysis by inflating the variance.
For this latter situation, we provide an explicit formula of the variance inflation when the additional covariate set is independent of the necessary one.
These results have important consequences for practitioners because they allow to give precise recommendations about how to select covariates.
Note that we link our work to seminal works in causal inference, showing that semi-oracle estimation outperforms a completely oracle estimation, while the exact result on IPW on efficient estimation can not be completely extended to the case of generalization. \\

All our results assume neither a parametric form of the outcome nor the sampling process, but are established at the cost of restricting the scope to categorical covariates for adjustment. Within the medical domain, scores or categories are often used to characterize individuals, which justifies this approach.

\newpage
\section{Notations and assumptions for causal identifiability}\label{section:notations-and-assumptions}

\subsection{Notations}

\subsubsection{Problem setting}\label{subsec:model}

The notations and assumptions used in this work are grounded in the potential outcome framework \citep{imbens2015causal}.
We assume to have at hand two data sets:
\begin{description}
    \item[A randomized controlled trial] denoted $\mathcal{R}$ (for randomized), assessing the efficacy of a binary treatment $A$ on an outcome $Y$ (ordinal, binary, or continuous) conducted on $n$ \textit{iid} observations. Each observation $i$ is labelled from $1$ to $n$ and can be modelled as sampled from a distribution $P_\text{\tiny R}(X,Y^{(1)}, Y^{(0)}, A) \in \mathds{X} \times \mathbb{R}^2 \times\{0,1\}$, where $\mathds{X}$ is a categorical support. For any observation $i$, $A_i$ denotes the binary treatment assignment (with $A_i=0$ if no treatment and $A_i=1$ if treated), and $Y_i^{(a)}$ is the outcome had the subject been given treatment $a$ (for $a\in\{0,1\}$), which is assumed to be squared integrable.
    $Y_i$ denotes the observed outcome, defined as $Y_i = A_i \, Y_i^{(1)} +  (1-A_i) \, Y_i^{(0)}$. In addition, this trial is assumed to be a Bernoulli trial with a constant probability of treatment assignment for all units and independence of treatment allocation between units (see in appendix Definition~\ref{def:bernoulli-trial})\footnote{For a review of trial designs, in particular explaining the difference between a Bernoulli and a completely randomized design, we refer the reader to Chapter 2 of \cite{imbens2015causal}.}. We denote $\mathbb{P}_\text{\tiny R}\left[A_i = 1 \right] = \pi$. $X_i$ is a $p$-dimensional vector of categorical covariates accounting for individual characteristics on the observation $i$;
    \item[A sample of the target population of interest] denoted $\mathcal{T}$ (for target), containing $m$ \textit{iid} individuals samples drawn from a distribution $P_\text{\tiny T}(X,Y^{(1)}, Y^{(0)}, A) \in \mathds{X} \times \mathbb{R}^2 \times\{0,1\}$, labelled from $n+1$ to $n+m$. In this data set, we only observe individual categorical characteristics $X_i$. For simplicity, we further use the notation $P_\text{\tiny T}(X)$ for the marginal of $X$ on distribution $P_\text{\tiny T}$.
\end{description}

Finally, the probability of $X$ in the target population (resp. trial population) is denoted $p_{\text{\tiny T}}(x)$ (resp. $p_{\text{\tiny R}}(x)$).
Mathematically, a covariate shift between the two populations occurs when there exists $x \in \mathds{X}$ such that $p_{\text{\tiny R}}(x) \neq p_{\text{\tiny T}}(x)$. The setting and notations are summarized on Figure~\ref{fig:notations-helper}.
 
 \begin{figure}[!h]
    \begin{minipage}{.38\textwidth}
	\caption{\textbf{Summary of the data at hand}: on the left, a randomized controlled trial $\mathcal{R}$ of size $n$ sampled according to $P_\text{\tiny R}$ and informing about the effect of a treatment $A$ on the outcome $Y$. On the right, a sample $\mathcal{T}$ of size $m$ sampled from the target population of interest $P_\text{\tiny T}$, containing only information on covariates $X$. As suggested on the drawing, $n$ is often smaller than $m$, as trials are usually of limited size compared to large national data base or cohort.}
     \label{fig:notations-helper}
    \end{minipage}%
    \hfill%
    \begin{minipage}{.6\textwidth}
	\includegraphics[width=\linewidth]{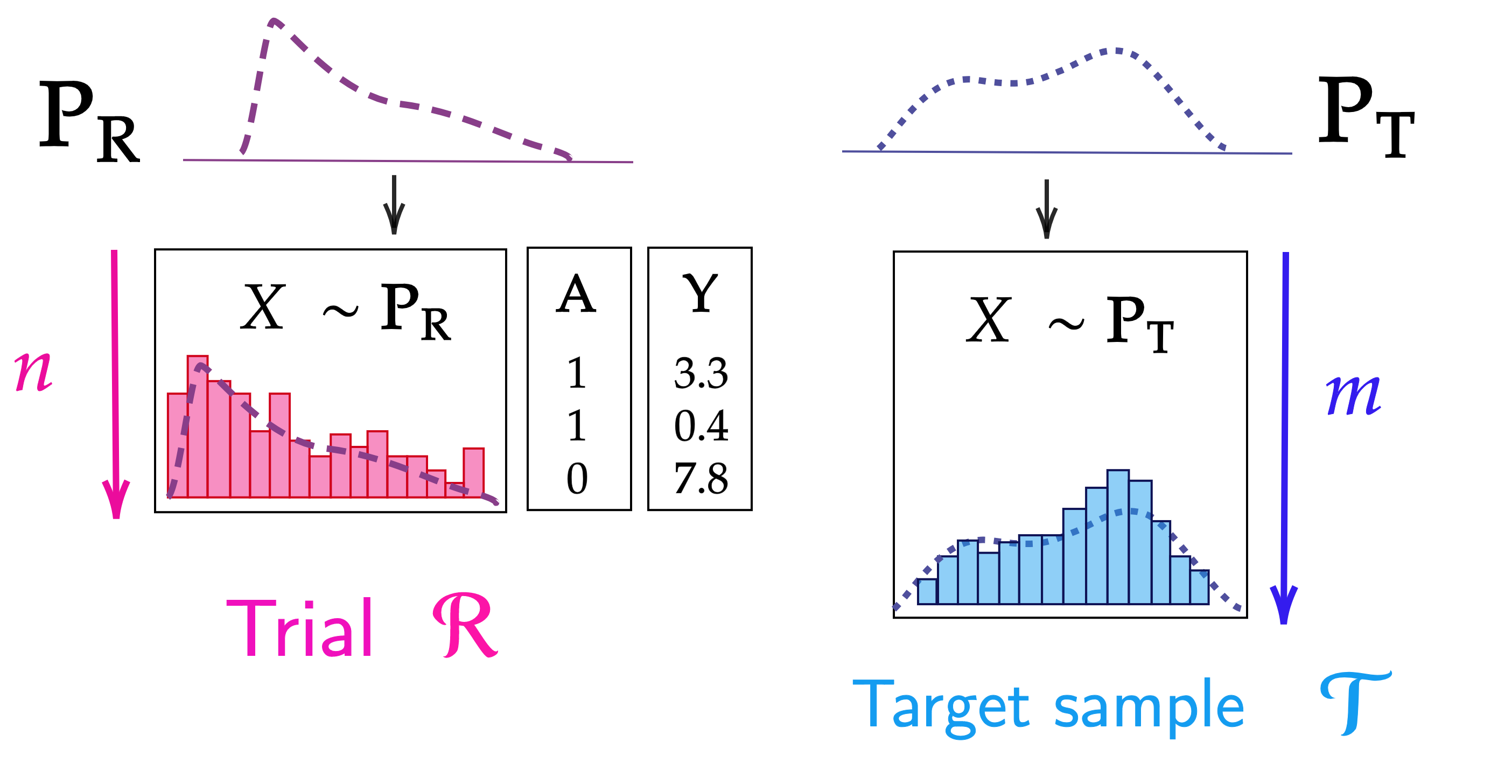}%
    \end{minipage}
 \end{figure}

 \paragraph{Comments on the notations}
 Note that a large part of the literature models the problem with a sampling mechanism from a super population. Doing so, the target and the trial samples are assumed sampled from this super population, with different mechanisms leading to a distributional shift of the trial \cite[e.g. the framing in ][]{stuart2011use, hartman2021inbook}. Still, as soon as we are not working with a nested trial (that is a trial embedded in the target sample) and if only baseline covariates are considered for adjustment, the framing with a sampling model is equivalent to the problem setting introduced above \citep{Colnet2020review, Westreich2017IOSW}.
 Note that the literature is increasing adopting the framing that we use here
\citep{kern2016assessing, nie2021covariate, Chattopadhyay2022OneStep}.
 
 \subsubsection{Target quantity of interest}

Recall that two distributions, indexed by $\text{R}$ and $\text{T}$ are involved in our problem setting (Section~\ref{subsec:model}). Therefore, we will use these indices to denote quantities (expectations, probabilities) taken with respect to these distributions, 
for example $\mathbb{E}_{\text{\tiny R}}\left[ .\right]$ (resp. $\mathbb{E}_{\text{\tiny T}}\left[ .\right]$) for an expectation over $P_\text{\tiny R}$ (resp. $P_\text{\tiny T}$).

We define the target population average treatment effect ATE (sometimes called \textbf{T}ATE for \textbf{T}arget): 
\begin{align}\label{eq:target-quantity}
 \tau := \mathbb{E}_{\text{\tiny T}}\left[Y^{(1)} - Y^{(0)}\right].  
\end{align}
Because the randomized controlled data $\mathcal{R}$ are not sampled from the target population of interest, the sample average treatment effect $\tau_{\text{\tiny R}}$ (sometimes called \textbf{S}ATE for \textbf{S}ample) estimated from this population,
\begin{equation*}\label{eq:key-issue-equation}
   \tau_{\text{\tiny R}}:=  \mathbb{E}_{\text{\tiny R}}\left[Y^{(1)} - Y^{(0)}\right],
\end{equation*}
may be biased, that is $\tau_{\text{\tiny R}} \neq \tau$. While not being the target quantity of interest, we also introduce the so-called Conditional Average Treatment Effect (CATE), as

\begin{equation*}
    \forall x \in \mathds{X},\, \tau(x):= \tau_{\text{\tiny T}}(x):= \mathbb{E}_{\text{\tiny T}}\left[ Y^{(1)} - Y^{(0)} \mid X =x \right].
\end{equation*}


\subsection{Identification assumptions}\label{subsec:assumptions}

Assumptions are needed
to be able to generalize the findings from the population data $P_\text{\tiny R}$ toward the population $P_\text{\tiny T}$. 

\paragraph{Assumptions on the trial}
We first need validity of the trial, also called \textit{internal validity}. These assumptions are the usual ones formulated in causal inference, and in particular for randomized controlled trials within the potential outcomes framework \citep{imbens2015causal, hernan2020book}. 

\begin{assumption}[Representativity of the randomized data]\label{a:repres-rct} For all $i \in \mathcal{R}, X_i \sim P_\text{\tiny R}(X)$ where $P_\text{\tiny R}$ is the population distribution from which the RCT was sampled. 
\end{assumption}

\begin{assumption}[Trial's internal validity]\label{a:trial-internal-validity} The RCT at hand $\mathcal{R}$ is assumed to be internaly valid, such that 
\begin{enumerate}
    \item[(i)] Consistency and no interference hold, that is:  $\forall i\in \mathcal{R},\, Y_i = A_i \, Y_i^{(1)}  + (1-A_i) \, Y_i^{(0)}$
    --an assumption often termed SUTVA (stable unit treatment value);
    \item[(ii)] Treatment randomization holds, that is: $\forall i\in \mathcal{R},\, \left\{Y^{(1)}_{i}, Y^{(0)}_{i}\right\} \perp A_{i}$;  
    \item[(iii)] Positivity of trial treatment assignment holds, that is: $0 < \pi < 1$ (usually $\pi = 0.5$). 
\end{enumerate}
\end{assumption}

\paragraph{Assumptions for generalization} 
The two following assumptions are specific to generalization or transportability. Let us define the CATE $ \tau_{\text{\tiny R}}$ on the RCT as $ \forall x \in \mathds{X}$,
\begin{equation*}
   \tau_{\text{\tiny R}}(x):= \mathbb{E}_{\text{\tiny R}}\left[ Y^{(1)} - Y^{(0)} \mid X =x \right].
\end{equation*}

\begin{assumption}[Transportability]
\label{a:cate-indep-s-knowing-X}
We have, $\forall x \in \mathds{X}, \tau_{\text{\tiny R}}(x) = \tau_{\text{\tiny T}}(x)$.
\end{assumption}

The transportability assumption \citep{stuart2011use, pearl2011transportability}, also called \textit{sample
ignorability for treatment effects} \citep{kern2016assessing} or \textit{Conditional Ignorability} \citep{hartman2021inbook}, is probably the most important assumption to generalize or transport the trial findings to the target population, as this requires to have access to all shifted covariates being treatment modifiers. 
In other words, it assumes that all the systematic variations in the treatment effect are captured by the covariates $X$ \citep{Muircheartaigh2014GeneralizingApproach}. 
The covariates $X$ are usually named the adjustment or separating set.
Note that the concept of treatment effect modifiers depends on the causal measure chosen; in this paper, we only consider the absolute difference most common for a continuous outcome as detailed in Equation~\ref{eq:target-quantity}. 
Would we have chosen the risk-ratio, for instance, then the covariates being treatment effect modifiers could be different.
Finally, note that \cite{pearl2011transportability} introduces \textit{selection} diagram to formalize this assumption relying on causal diagrams. \cite{pearl2015findings} details why diagrams can contain more identification scenarii. But in this work, we only consider baseline covariates for the transportability assumption (i.e no \textit{front-door} adjustment). 
\begin{assumption}[Support inclusion]\label{a:pos} $\forall x \in \mathds{X}, \; p_{\text{\tiny R}}(x) > 0$, and \hspace{0.05cm}
$\operatorname{supp}(P_T(X)) \subset \operatorname{supp}(P_R(X))$.

\end{assumption} 
Note that this last assumption is sometimes referred as the positivity of trial participation 
and can also be viewed as a sampling process with non-zero probability for all individuals. 

\subsection{Estimators}

In this work, we denote any estimator targeting a quantity $\tau$ as $\hat \tau_{n,m}$ where the the index $n$ or $m$ is employed to characterise which data were used in the estimation strategy. For example, an estimator $\hat \tau_{n}$ (resp. $\hat \tau_{m}$) only uses the trial data (resp. observational data) whereas $\hat \tau_{n,m}$ uses both data sets.

\subsubsection{Within-trial estimators of ATE}

Two classical estimators targeting $\tau_{\text{\tiny R}}$ from trial data are the Horvitz-Thomson and Difference-in-means estimators.

\begin{definition}[Horvitz-Thomson - \cite{HorvitzThompson1952seminal}]\label{def:HT} The Horvitz-Thomson estimator is denoted $\hat \tau_{\text{\tiny HT},n}$ and defined as,
\begin{equation*}
 \hat{\tau}_{\text {\tiny HT,}n} = \frac{1}{n} \sum_{i=1}^{n}\left(\frac{A_{i} Y_{i}}{\pi}-\frac{\left(1-A_{i}\right) Y_{i}}{1-\pi}\right)   .
\end{equation*}

\end{definition}
Under a Bernoulli design (constant and independent probability to be treated $\pi$) the Horvitz-Thomson estimator $\hat \tau_{\text{\tiny HT},n}$ is an unbiased and consistent estimator of $\tau_{\text{\tiny R}}$, and its variance satisfies, for all $n$,  
\begin{equation}\label{eq:variance-for-HT}
     n \operatorname{Var}\left[ \hat \tau_{\text{\tiny HT},n} \right] =   \mathbb{E}_{\text{\tiny R}}\left[ \frac{\left( Y^{(1)} \right)^2}{\pi} \right]  + \mathbb{E}_{\text{\tiny R}}\left[ \frac{\left( Y^{(0)} \right)^2}{1-\pi} \right]  - \tau_{ \text{\tiny R}}^2:= V_{ \text{\tiny HT}}.
\end{equation}

\begin{definition}[Difference-in-means - Neyman (1923) and its English translation \cite{SplawaNeyman1990Translation}]\label{def:difference-in-means} The Difference-in-means estimator is denoted $  \hat{\tau}_{\text{\tiny DM,}n}$ and defined as the difference between two terms 
\begin{align}
\hat{\tau}_{\text{\tiny DM,}n} & = \hat{\mu}_{1,n} - \hat{\mu}_{0,n},    
\end{align}
where, for all $a \in \{0,1\}$, 
\begin{align}
    \hat{\mu}_{a,n} = \left\lbrace
    \begin{array}{ll}
           \frac{1}{n_a} \sum_{i=1}^n Y_i \mathds{1}_{A_i = a} & \textrm{if } n_a > 0\\
          0 & \textrm{otherwise}
    \end{array}
    \right.
\end{align}
with $n_{a}= \sum_{i=1}^n \mathbbm{1}_{A_i = a}$.
\end{definition}

The Difference-in-means is also referred to as the \textit{simple difference estimator}, e.g. in \cite{miratrix2013adjusting}, or \textit{difference in the sample means of the observed outcome variable between the treated and control groups}, e.g. in \cite{Imai2008Misunderstanding}.
Under a Bernoulli design, the difference-in-means estimator is a consistent estimator of $\tau_{\text{\tiny R}}$, and its finite sample variance is bounded by
\begin{equation}\label{eq:finite-sample-variance-for-DM}
      n \operatorname{Var}\left[ \hat{\tau}_{\text{\tiny DM},n} \right] \le  \frac{ \operatorname{Var}\left[Y^{(1)}\right] }{\pi}  + \frac{ \operatorname{Var}\left[Y^{(0)}\right] }{1-\pi}+ \mathcal{O}\left( n^{-1/2}\right),
\end{equation}
and its large sample variance satisfies,
\begin{equation}\label{eq:variance-for-DM}
   \lim_{n\to \infty}   n \operatorname{Var}\left[ \hat \tau_{\text{\tiny DM},n} \right] = \frac{\operatorname{Var}\left[Y_i^{(1)}\right]}{\pi} + \frac{\operatorname{Var}\left[Y_i^{(0)}\right]}{1-\pi}:=  V_{\text{\tiny DM},\infty}.
\end{equation}

An explicit expression of the finite sample bias and variance of $ \hat{\tau}_{\text{\tiny DM},n} $ are given in appendix (see Lemma~\ref{lemma:DM-bias-and-variance}).
Note that the Difference-in-Means estimator is known to be unbiased. However, this is due do the fact that  a slightly different version of the estimator is often used, which is undefined if all units are either treated or control. While the unconditional estimator we consider (\Cref{def:difference-in-means}) is biased, the estimator, conditionally on the fact that there exist at least one control and one treated unit, is unbiased  (see also \cite{miratrix2013adjusting} for a discussion on this topic).

What will be used later on, is the fact that the Difference-in-Means estimator can be viewed as a variant of the Horvitz-Thomson estimator, where the probability to be treated $\pi$ (or propensity score) is estimated, that is,
\begin{equation*}
     \hat{\tau}_{\text{\tiny DM,}n}= \frac{1}{n} \sum_{i=1}^n \left( \frac{A_i\,Y_i}{\hat \pi} - \frac{(1-A_i)\,Y_i}{1-\hat \pi} \right), \quad \text{where } \hat \pi = \frac{\sum_{i=1}^n A_i}{n}.
\end{equation*}

Counter-intuitively, the benefit of estimating $\pi$ is to lower the variance. 
Even if the true probability is $\pi = 0.5$, the actual treatment allocation in the sample can be different (e.g., $\hat \pi = 0.48$), and using $\hat \pi$ rather than $\pi$ leads to a smaller large sample variance by adjusting to the exact observed probability to be treated in the trial. In particular, it is possible to be convinced of this phenomenon when comparing the two variances, 
\begin{equation}\label{eq:ineq-dm-ht}
      V_{ \text{\tiny DM},\infty}= V_{ \text{\tiny HT}} - \left( \sqrt{\frac{1-\pi}{\pi}} \mathbb{E}_{ \text{\tiny R}}[Y^{(1)}] + \sqrt{\frac{\pi}{1-\pi}} \mathbb{E}_{\text{\tiny  R}}[Y^{(0)}]\right)^2 \le V_{ \text{\tiny HT}}. 
\end{equation}
Appendix~\ref{appendix:useful-results-rct} recalls
derivations to obtain \eqref{eq:variance-for-HT} to \eqref{eq:ineq-dm-ht}.
Other estimators of $\tau_{\text{\tiny R}}$ exist, and rely on prognostic covariates (also called adjustement) such as outcome-modeling or post-stratification. Below (Section~\ref{subsec:also-estimating-pi}), we introduce the post-stratification estimator, corresponding to the Horvitz-Thomson estimator where $\pi$ is estimated according to different stratum.
 
\subsubsection{Re-weighting estimator for generalizing the trial findings}

As mentioned in Subsection~\ref{subsec:related-work}, in this work we  focus on the reweighting strategy, that is the \textit{Inverse Propensity of Sampling Weighting} (IPSW) 
estimator \citep{cole2010generalizing, stuart2011use}.

\begin{definition}[Completely oracle IPSW]\label{def:ipsw-oracle} The completely oracle IPSW estimator is denoted $\hat \tau_{\pi, \text{\tiny T, R}, n}^*$, and defined as
\begin{equation}
\hat \tau_{\pi, \text{\tiny T,R}, n}^*= \frac{1}{n} \sum_{i=1}^{n}  \frac{p_{\text{\tiny T}}(X_i)}{p_{\text{\tiny R}}(X_i)}Y_i \left( \frac{A_i}{\pi} - \frac{1-A_i}{1- \pi} \right)\,,
\end{equation}
where $\frac{p_{\text{\tiny T}}(X_i)}{p_{\text{\tiny R}}(X_i)}$ are called the weights or the nuisance components.
\end{definition}
Definition~\ref{def:ipsw-oracle} corresponds to a completely oracle IPSW, where $p_{\text{\tiny T}}$, $p_{\text{\tiny R}}$, and the trial allocation probability $\pi$ are known.

\subsubsection{Probability ratio estimation}

In practice neither $p_\text{\tiny R}$ nor $p_\text{\tiny T}$ are known, and therefore one needs to estimate these probabilities. As explained in Subsection~\ref{subsec:model}, we consider the case where $X$ is composed of categorial covariates only. In such a situation, a practical IPSW estimator can be built from Definition~\ref{def:ipsw-oracle} by estimating each probability $p_{\text{\tiny T}} $ and  $p_{\text{\tiny R}}$ by their empirical counterpart (that is counting how many observations fall in each categories in the trial and target samples). 

\begin{definition}[Probability estimation]\label{def:procedure-for-densities} 
Under the setting defined in Subsection~\ref{subsec:model},
\begin{equation*}
   \forall x \in \mathcal{X},\;\; \hat p_{\text{\tiny T},m} (x):=  \frac{1}{m}\sum_{i \in \mathcal{T}} \mathbbm{1}_{X_i = x}\;\; \text{ and,    }\,\hat p_{\text{\tiny R},n}(x) := \frac{1}{n}\sum_{i \in \mathcal{R}} \mathbbm{1}_{X_i = x}.
\end{equation*}
\end{definition}

Having defined a method for probability estimation, one can build practical IPSW variants.

\begin{definition}[Semi-oracle IPSW]\label{def:ipsw-semi-oracle} The semi-oracle IPSW estimator $\hat \tau_{\pi, \text{\tiny T}, n}^*$ is defined as
\begin{equation}
\hat \tau_{\pi, \text{\tiny T}, n}^*= \frac{1}{n} \sum_{i=1}^{n}  \frac{p_{\text{\tiny T}}(X_i)}{\hat p_{\text{\tiny R},n}(X_i)}Y_i \left( \frac{A_i}{\pi} - \frac{1-A_i}{1- \pi} \right)\,,
\end{equation}
where $\hat p_{\text{\tiny R},n}$ is estimated according to Definition~\ref{def:procedure-for-densities}.
\end{definition}
Note that this semi-oracle estimator corresponds to the so-called standardization procedure described in \cite{Rothman2000ModernEpidemiology}.

\begin{definition}[IPSW]\label{def:ipsw} The (estimated) IPSW estimator $\hat \tau_{\pi,n,m}$ is defined as
\begin{equation}
\hat \tau_{\pi, n, m}= \frac{1}{n} \sum_{i=1}^{n}  \frac{\hat p_{\text{\tiny T},m}(X_i)}{\hat p_{\text{\tiny R},n}(X_i)}Y_i \left( \frac{A_i}{\pi} - \frac{1-A_i}{1- \pi} \right)\,,
\end{equation}
where $\hat p_{\text{\tiny R},n}$ and $\hat p_{\text{\tiny T},m}$ are estimated according to Definition~\ref{def:procedure-for-densities}.
\end{definition}

Definition~\ref{def:ipsw} corresponds to the classical implementation of the IPSW since, practically, the probabilities $\hat p_{\text{\tiny R},n}$ and $\hat p_{\text{\tiny T},m}$ are not known and must be estimated.

\paragraph{Another interpretation of IPSW}
Note that the IPSW can be understood differently, thanks to the fact that covariates used to adjust are categorical. Indeed, it is possible to re-write the IPSW estimator from Definition~\ref{def:ipsw} as,
\begin{align*}
    \hat \tau_{\pi, n, m} = \sum_{x \in \mathds{X}} \frac{m_x}{m} \sum_{i=1}^n \mathds{1}_{X_i = x}\frac{1}{n_x}\left(\frac{ A_i Y_i^{(1)}}{\pi} - \frac{ (1-A_i) Y_i^{(1)}}{1-\pi}\right) = \sum_{x \in \mathds{X}} \frac{m_x}{m} \hat \tau_{\text{\tiny HT},n_x},
\end{align*}

where $m_x = \sum_{i=n+1}^m \mathds{1}_{X_i=x}$ and $n_x = \sum_{i=1}^n \mathds{1}_{X_i=x}$. This corresponds to a procedure where stratum average treatment effects are estimated with an Horvitz-Thomson procedure, and then aggregated with weights corresponding to the target sample proportions. \cite{miratrix2013adjusting} also discusses a similar approach in their section 5, but where the sample proportions corresponds to the true target population of interest. In a way, our work extends this situation to a more general case, considering the noise due to the sampling process from two populations.

\paragraph{Comment about oracle and semi-oracle interest}
The completely-oracle and the semi-oracle estimators are not used in practice, as usually none of the true probabilities are known. Still, they both correspond to some asymptotic situations that are of interest to understand the IPSW. For instance:

\begin{itemize}
    \item Studying $\hat \tau_{\pi, \text{\tiny T}, \text{\tiny R}, n}^*$ allows us to observe the effect of averaging over the trial sample $\mathcal{R}$, without the variability due to covariates probabilities estimation ($\hat p_{\text{\tiny R},n}$ and $\hat p_{\text{\tiny T},m}$); 
    \item Studying $\hat \tau_{\pi, \text{\tiny T}, n}^*$ allows to understand the situation where the target sample $\mathcal{T}$ is infinite ($m \rightarrow \infty$). 
\end{itemize} 
In addition, studying these estimators allows us to link our results with seminal works in causal inference showing that the estimated propensity score can lead to better properties than an oracle one \citep{Robins1992EstimatingEE, Hahn1998efficiencybound, Hirano2003Efficient}.
Note that we could introduce another semi-oracle estimator, where $p_\text{\tiny R}$ is known but not $p_\text{\tiny T}$. This specific estimator does not correspond to a limit situation helping to figuring out the results, as it is as if the covariates probabilities in the trial are learned on a infinite data sample, but where the treatment effect estimate is still averaged on a finite sample. Finally, since all covariates are assumed to be categorical in our framework, trial and observational densities (continuous covariates) turn into trial and observational probabilities (categorical covariates). Oracles and semi-oracles will be different when considering continuous covariates as the weights will be replaced by density estimation or estimation of the probability of being in the target population (instead of the experimental sample) \citep[e.g. see][]{kern2016assessing, nie2021covariate}), sometimes directly estimating the ratio by binding the two data sources and therefore making the notion of semi-oracle outdated.  

\section{Theoretical results}\label{section:theoretical-results}

In this section, we establish upper bounds on the bias, variance and quadratic risk of several variants of IPSW estimates. The exact derivation of these quantities, together with the proofs of the results, are deferred to Appendix~\ref{proof:completely-oracle}. As a by-product of our analysis, all variants of IPSW considered here are $L_2$ consistent, as their risk tends to zero as the sample size increases. 

\subsection{Bias and variance of IPSW variants in finite-sample regime}

In this section, we expose our main theoretical results on the three variants of the IPSW estimator (Definition~\ref{def:ipsw-oracle}, \ref{def:ipsw-semi-oracle} and \ref{def:ipsw}).
The following results rely on the variance of the Horvitz-Thomson estimator on a given strata $x$ (see Definition~\ref{def:HT}), denoted $V_{ \text{\tiny HT}}(x)$, and defined as ,
\begin{equation}\label{eq:conditional-VHT}
    V_{ \text{\tiny HT}}(x) := \mathbb{E}_{\text{\tiny R}}\left[ \frac{\left( Y^{(1)} \right)^2}{\pi} \mid X = x\right]  + \mathbb{E}_{\text{\tiny R}}\left[ \frac{\left( Y^{(0)} \right)^2}{1-\pi} \mid X = x\right]  - \tau(x)^2.
\end{equation}

In this equation, we removed the index $R$ of $\tau(x)$ as $\tau_{\text{\tiny R}}(x) = \tau_{\text{\tiny T}}(x) = \tau(x)$, thanks to Assumption~\ref{a:cate-indep-s-knowing-X}. 
Removing the index on the two conditional expectations would require to go beyond the classical transportability assumption, by assuming that \\ 
$$\forall a \in \{0,1\}, P_{\text{\tiny R}}(Y^{(a)} \mid X =x) = P_{\text{\tiny T}}(Y^{(a)} \mid X =x),$$
\textit{i.e.} $X$ contains all the covariates being shifted and predictive of the outcome, which is stronger than Assumption~\ref{a:cate-indep-s-knowing-X}. Besides, according to the law of total variance, the following expression holds
\begin{align}
V_{ \text{\tiny HT}} =  \operatorname{Var}_\text{\tiny R}\left[ \tau(X) \right] + \mathbb{E}_\text{\tiny R} \left[  V_{ \text{\tiny HT}}(X) \right]. \label{eq_variance_HT_law_total_variance}   
\end{align}

\subsubsection{Properties of the completely oracle IPSW}

The following result establishes consistency and finite sample bias and variance for the oracle IPSW, which extends the preceding results from \cite{Egami2021CovariateSelection} (see their appendix, Section SM-2).
\begin{theorem}[Properties of the completely oracle IPSW]\label{thm:completely-oracle}

Under the general setting defined in Subsection~\ref{subsec:model}, granting Assumptions~\ref{a:repres-rct}-\ref{a:pos}, the completely oracle IPSW is unbiased and has an explicit variance expression, that is, for all $n$, 
\begin{align*}
\mathbb{E}\left[ \hat \tau_{\pi, \text{\tiny T}, \text{\tiny R}, n}^*\right] &=  \tau,\quad \text{and} \quad \operatorname{Var}\left[\hat \tau_{\pi, \text{\tiny T,R}, n}^* \right] =\frac{V_o}{n}, \quad  \text{where}\quad
V_o:=  \operatorname{Var}_\text{\tiny R}\left[ \frac{p_\text{\tiny T}(X)}{p_\text{\tiny R}(X)} \tau(X) \right] +  \mathbb{E}_\text{\tiny R}\left[ \left(\frac{p_\text{\tiny T}(X)}{p_\text{\tiny R}(X)} \right)^2V_{ \text{\tiny HT}}(X)\right].
\end{align*}
\end{theorem}
The proof, to be found in Appendix~\ref{proof:completely-oracle}, sheds light on the technical tools used for more complex IPSW variants, and also establishes the quadratic risk and the consistency of the completely oracle IPSW.
The finite-sample  variance $V_o$ depends on the probability ratio, the amplitude of the heterogeneity of treatment effect (through $\tau(x)$), and variances of the potential outcomes. In particular, if for some category $x$, the values $p_{\text{\tiny T}}(x)$ and $p_{\text{\tiny R}}(x)$ are very different, a large variance will be obtained when generalizing the trial's findings. Note that the variance converges at the classical rate $1/n$. 
Although it is not our main contribution, Theorem~\ref{thm:completely-oracle} is of primary importance for comparing the impact of sample sizes on the performances of the different IPSW variants. Note that, if there is no distribution shift between the two population (that is $p_\text{\tiny T}=p_\text{\tiny R}$), then the variance $V_o$ is simply the Horvitz-Thomson variance, as established in equation~\eqref{eq_variance_HT_law_total_variance}. 

\subsubsection{Properties of the semi-oracle IPSW}

In this section, we study the behaviour of the semi-oracle IPSW (Definition~\ref{def:ipsw-semi-oracle}), for which the probability $p_{\text{\tiny T}}$ is known but the probability $p_{\text{\tiny R}}$ is estimated. One can obtain for a certain $x$, $\hat p_{\text{\tiny R},n}(x) = 0$ for some $x \in \mathbb{X}$, even if the true probability is non-negative $p_{\text{\tiny R}}(x)>0$. This phenomenon, occurring when no observations in the trial correspond to the covariate vector $x$, induces a finite sample bias of the IPSW estimate. 


\begin{proposition}
\label{prop_semi_oracle_bias_variance}
Under the general setting defined in Subsection~\ref{subsec:model}, granting Assumptions~\ref{a:repres-rct}-\ref{a:pos}, the bias of the semi-oracle IPSW satisfies, for all $n$,  
\begin{align*}
\left| \mathbb{E}\left[\hat \tau_{\pi,\text{\tiny T}, n}^*  \right] - \tau \right| \leq \left(1 - \min_x p_{\text{\tiny R}}(x)\right)^n \mathbb{E}_{\text{\tiny T}} \left[ \left| \tau(X) \right| \right].
\end{align*}
Moreover, the variance of the semi-oracle IPSW satisfies, for all $n$, 
\begin{align*}
\operatorname{Var}\left[   \hat \tau_{\pi, \text{\tiny T}, n}^* \right] \le &\,\frac{2 V_{so}}{n+1}  +  \left( 1 - \min_{x \in \mathbb{X}} p_\text{\tiny R}(x)\right)^n \left( \mathbb{E}_{\text{\tiny T}} \left[ |\tau(X)| \right]\right)^2, 
\end{align*}
with
\begin{align*}
    V_{\text{so}}:= &\,\mathbb{E}_\text{\tiny R}\left[ \left(\frac{p_\text{\tiny T}(X)}{p_\text{\tiny R}(X)} \right)^2V_{ \text{\tiny HT}}(X)\right].
\end{align*}
\end{proposition}

Closed-form expressions of bias and variance are derived in \Cref{prop_app_semioracleipsw} in Appendix~\ref{proof:explicit-bias-variance-and-bounds-semi-oracle}, with the proof of Proposition~\ref{prop_semi_oracle_bias_variance}.
Unlike the completely oracle IPSW, the semi-oracle IPSW is biased for small trials (\textit{i.e.} small $n$), which can be understood by undercoverage of some categories in the trial. Indeed, for small trials, the probability that a category is not represented at all in the RCT may not be negligible. Fortunately, as shown in Proposition~\ref{prop_semi_oracle_bias_variance}, this bias converges to zero exponentially with the trial size $n$. 
Corollary~\ref{cor_asympt_semi_oracle} provides asymptotic results for the bias and variance of the semi-oracle IPSW. 

\begin{corollary}[Asymptotics]\label{cor_asympt_semi_oracle}
Under the same assumptions as in Proposition~\ref{prop_semi_oracle_bias_variance}, the semi-oracle IPSW is asymptotically unbiased, and its limiting variance satisfies, 
\begin{equation*}
    \lim_{n\to\infty} \mathbb{E}\left[   \hat \tau_{\pi,\text{\tiny T}, n}^* \right] = \tau, 
\,\quad \text{and}\quad   \lim_{n\to\infty} n\operatorname{Var}\left[   \hat \tau_{\pi, \text{\tiny T}, n}^* \right] = V_{\text{so}}.
\end{equation*}
\end{corollary}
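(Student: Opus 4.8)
The plan is to obtain both limits directly from the exact finite-sample expressions in Proposition~\ref{prop_semi_oracle_bias_variance}, which I treat as given. \emph{Asymptotic unbiasedness} is immediate from the stated bias bound: since $\mathds{X}$ is finite with $p_\text{\tiny R}(x)>0$ for every $x$ (Assumption~\ref{a:pos}), the factor $(1-\min_{x}p_\text{\tiny R}(x))^n$ tends to $0$, while $\mathbb{E}_\text{\tiny T}[|\tau(X)|]\le (\mathbb{E}_\text{\tiny T}[\tau(X)^2])^{1/2}<\infty$ because the potential outcomes are square integrable. Hence $\mathbb{E}[\hat\tau_{\pi,\text{\tiny T},n}^*]\to\tau$.

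For the variance I would treat separately the two terms of the exact formula $n\operatorname{Var}[\hat\tau_{\pi,\text{\tiny T},n}^*]=\sum_{x}p_\text{\tiny T}(x)^2 V_{\text{\tiny HT}}(x)\,\mathbb{E}_\text{\tiny R}[\mathbbm{1}_{Z_n(x)>0}/\hat p_{\text{\tiny R},n}(x)] + n\operatorname{Var}[\mathbb{E}_\text{\tiny T}[\tau(X)\mathds{1}_{Z_n(X)=0}\mid\mathbf{X}_n]]$. The second term vanishes: conditioning on the trial covariates turns the inner expectation into $\sum_x p_\text{\tiny T}(x)\tau(x)\mathds{1}_{Z_n(x)=0}$, a function of $\mathbf{X}_n$ whose standard deviation, by Minkowski's inequality, is at most $\sum_x p_\text{\tiny T}(x)|\tau(x)|\,(\operatorname{Var}[\mathds{1}_{Z_n(x)=0}])^{1/2}\le \mathbb{E}_\text{\tiny T}[|\tau(X)|]\,(1-\min_x p_\text{\tiny R}(x))^{n/2}$; squaring and multiplying by $n$ still gives a quantity that decays to $0$ exponentially.

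The crux is the first term, where I must show $\mathbb{E}_\text{\tiny R}[\mathbbm{1}_{Z_n(x)>0}/\hat p_{\text{\tiny R},n}(x)]=\mathbb{E}[n\,\mathbbm{1}_{Z_n(x)>0}/Z_n(x)]\to 1/p_\text{\tiny R}(x)$, where $Z_n(x)\sim\mathrm{Binomial}(n,p_\text{\tiny R}(x))$; term-by-term convergence over the finite support then yields $\sum_x p_\text{\tiny T}(x)^2 V_{\text{\tiny HT}}(x)/p_\text{\tiny R}(x)=V_{\text{so}}$. I would sandwich $n\,\mathbb{E}[\mathbbm{1}_{Z_n>0}/Z_n]$ using the exact identity $\mathbb{E}[1/(Z_n+1)]=(1-(1-p)^{n+1})/((n+1)p)$ (with $p=p_\text{\tiny R}(x)$). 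The lower bound comes from $\mathbbm{1}_{Z>0}/Z\ge 1/(Z+1)-\mathds{1}_{Z=0}$, giving $n\,\mathbb{E}[\mathbbm{1}_{Z_n>0}/Z_n]\ge \tfrac{n}{n+1}\tfrac{1-(1-p)^{n+1}}{p}-n(1-p)^n\to 1/p$. For the upper bound I would write $\mathbbm{1}_{Z>0}/Z=1/(Z+1)+\mathbbm{1}_{Z>0}/(Z(Z+1))$ and argue the remainder is $o(1/n)$: the analogous identity for $\mathbb{E}[1/((Z+1)(Z+2))]$ shows it is $O(1/n^2)$, and the single boundary value $Z=1$ contributes only $\tfrac12 np(1-p)^{n-1}$, which is exponentially small. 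Combining the two bounds gives the limit $1/p_\text{\tiny R}(x)$.

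The main obstacle is precisely sharpening this last step to the \emph{correct constant}. The crude bound $\mathbbm{1}_{Z>0}/Z\le 2/(Z+1)$ used to prove Proposition~\ref{prop_semi_oracle_bias_variance} only delivers $\limsup n\,\mathbb{E}[\mathbbm{1}_{Z_n>0}/Z_n]\le 2/p_\text{\tiny R}(x)$, i.e. the factor-$2$ slack visible in the finite-sample bound $2V_{\text{so}}/(n+1)$; obtaining the exact asymptotic variance $V_{\text{so}}$ (not $2V_{\text{so}}$) requires the refined second-order control above, equivalently a uniform-integrability argument showing $n/Z_n\to 1/p_\text{\tiny R}(x)$ in $L^1$ rather than merely almost surely.
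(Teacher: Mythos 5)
Your proof is correct, and for the central step it takes a genuinely different route from the paper. Both proofs handle asymptotic unbiasedness identically (exponential decay of the bias term), and both dispose of the term $n\operatorname{Var}\left[\mathbb{E}_{\text{\tiny T}}\left[\tau(X)\mathds{1}_{Z_n(X)=0}\mid \mathbf{X}_n\right]\right]$ by an exponential bound (you via Minkowski's inequality in $L^2$, the paper via Jensen; both work). The difference lies in proving $n\,\mathbb{E}\left[\mathbbm{1}_{Z_n(x)>0}/Z_n(x)\right]\to 1/p_{\text{\tiny R}}(x)$. The paper argues probabilistically: it splits on the event $\left|Z_n(x)/n - p_{\text{\tiny R}}(x)\right|\geq\varepsilon$, kills the bad event with Chernoff's inequality (the integrand is at most $n$ there), and on the good event uses boundedness plus convergence in probability to pass to the limit. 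You instead sandwich the expectation between exact binomial inverse-moment identities: the lower bound via $\mathbbm{1}_{Z>0}/Z\geq 1/(Z+1)-\mathds{1}_{Z=0}$ and the closed form for $\mathbb{E}\left[1/(Z_n+1)\right]$, the upper bound via $\mathbbm{1}_{Z>0}/Z=\mathbbm{1}_{Z>0}\left(1/(Z+1)+1/(Z(Z+1))\right)$, controlling the remainder by $\mathbb{E}\left[1/((Z_n+1)(Z_n+2))\right]=O(1/n^2)$ after splitting off the $Z=1$ atom (your comparison $1/(Z(Z+1))\leq 2/((Z+1)(Z+2))$ indeed requires $Z\geq 2$, so isolating $Z=1$ is necessary and its contribution $\tfrac12 np(1-p)^{n-1}$ is exponentially small, as you say). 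What each buys: the paper's concentration argument is shorter and conceptually lighter, but purely asymptotic; your algebraic sandwich yields explicit non-asymptotic rates and, as you correctly observe, is exactly the kind of refinement needed to replace the factor $2$ in the finite-sample bound $2V_{so}/(n+1)$ by $1+o(1)$ — a sharpening the paper only alludes to (via its Lemma on $\mathbb{E}\left[\mathds{1}_{\hat\pi>0}/\hat\pi\right]$, which uses the Chernoff route to reach a comparable $\left(1+Cn^{-\alpha}\right)/\pi$ bound).
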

The proof is detailed in Subsection~\ref{proof:asympt-bias-variance-semi-oracle}.
The quantity $V_{so}$ already exists in the literature, for example in \cite{Rothman2000ModernEpidemiology}, where a form of semi-oracle IPSW was introduced under the name \textit{standardization}. Here, we clarify the fact that this formula is valid only for large sample and we provide detailed derivations (see \Cref{prop_app_semioracleipsw} in Appendix~\ref{proof:explicit-bias-variance-and-bounds-semi-oracle}). Therefore, Corollary~\ref{cor_asympt_semi_oracle} is the first theoretical result establishing the limiting variance of the semi-oracle IPSW. 
One can observe from the explicit derivations that the semi-oracle estimator $\hat \tau_{\pi,\text{\tiny T}, n}^*$  has a lower asymptotic variance than the oracle IPSW  $\hat \tau_{\pi, \text{\tiny T,R}, n}^*$ recalled in Theorem~\ref{prop_semi_oracle_bias_variance}. In particular,
    \begin{equation*}
 V_{so} = V_o - \tikzmarknode{amp}{\highlight{Bittersweet}{\color{black} $\operatorname{Var}_\text{\tiny R}\left[ \frac{p_\text{\tiny T}(X)}{p_\text{\tiny R}(X)} \tau(X) \right]$  }}. 
\end{equation*}
\vspace*{0.5\baselineskip}
\begin{tikzpicture}[overlay,remember picture,>=stealth,nodes={align=left,inner ysep=1pt},<-]
\path (amp.north) ++ (-1.1,-2.8em) node[anchor=north west,color=Bittersweet] (sotext){\textsf{\footnotesize Always positive}};
\end{tikzpicture}

This phenomenon has similar explanations\footnote{In fact, similar considerations appear outside causal inference, for example \cite{Efron1978ObservedVsExpected} argued that the observed information rather than the expected Fisher information should be used to characterize the distribution of maximum-likelihood estimates.} with the common (and often surprising) result stating that an estimated propensity score lowers the variance when re-weighting observational data compared to an estimator relying on oracle propensity score \citep[see][regarding IPW estimator]{Robins1992EstimatingEE, Hahn1998efficiencybound, Hirano2003Efficient, Lunceford04stratificationand}. 
Intuitively, 
we only need to generalize from the actual sample to the target population, and not from a source trial population to a target population.\\ 

The semi-oracle estimate has a lower limiting variance compared to the completely oracle IPSW  but is also biased. One can thus wonder how the risk of the two estimates compare. Theorem~\ref{thm:semi-oracle} upper bounds the risk of the semi-oracle estimate.

\begin{theorem}[Properties of the semi-oracle IPSW]\label{thm:semi-oracle}
Under the general setting defined in Subsection~\ref{subsec:model}, granting Assumptions~\ref{a:repres-rct}-\ref{a:pos}, the quadratic risk of the semi-oracle IPSW satisfies,
\begin{equation*}
\mathbb{E}\left[ \left( \hat \tau_{\pi,\text{\tiny T}, n}^* - \tau \right)^2\right] \,\le\, \frac{2 V_{so}}{n+1} \,+\, 2 \left(1 - \min_x  p_\text{\tiny R} \right)^{n} \left( \mathbb{E}_{\text{\tiny T}} \left[ |\tau(X)| \right]\right)^2.
\end{equation*}
\end{theorem}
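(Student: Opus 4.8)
The plan is to use the standard bias--variance decomposition of the quadratic risk and then invoke the finite-sample bias and variance bounds already established in Proposition~\ref{prop_semi_oracle_bias_variance}. Writing $b_n := \mathbb{E}[\hat\tau_{\pi,\text{\tiny T},n}^*] - \tau$ for the bias, I would start from
\begin{equation*}
\mathbb{E}\left[\left(\hat\tau_{\pi,\text{\tiny T},n}^* - \tau\right)^2\right] = \operatorname{Var}\left[\hat\tau_{\pi,\text{\tiny T},n}^*\right] + b_n^2,
\end{equation*}
so that the two error sources can be controlled separately and then recombined.

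For the variance term, I simply plug in the upper bound from Proposition~\ref{prop_semi_oracle_bias_variance}, namely $\operatorname{Var}[\hat\tau_{\pi,\text{\tiny T},n}^*] \le \frac{2V_{so}}{n+1} + (1 - \min_x p_\text{\tiny R}(x))^n \,\mathbb{E}_{\text{\tiny T}}[\tau(X)^2]$. For the squared bias, I use the bias bound $|b_n| \le (1-\min_x p_\text{\tiny R}(x))^n \,\mathbb{E}_{\text{\tiny T}}[|\tau(X)|]$ from the same proposition. Squaring and applying Jensen's inequality $\left(\mathbb{E}_{\text{\tiny T}}[|\tau(X)|]\right)^2 \le \mathbb{E}_{\text{\tiny T}}[\tau(X)^2]$, together with the elementary observation that $(1-\min_x p_\text{\tiny R}(x))^{2n} \le (1-\min_x p_\text{\tiny R}(x))^n$ since the base lies in $[0,1)$, yields $b_n^2 \le (1-\min_x p_\text{\tiny R}(x))^n \,\mathbb{E}_{\text{\tiny T}}[\tau(X)^2]$. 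Summing the two contributions produces exactly the claimed bound $\frac{2V_{so}}{n+1} + 2\,(1-\min_x p_\text{\tiny R})^n \,\mathbb{E}_{\text{\tiny T}}[\tau(X)^2]$.

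For the $L^2$-consistency claim, I would argue that each term in the bound vanishes as $n\to\infty$. Because the covariate support $\mathds{X}$ is finite (categorical covariates) and Assumption~\ref{a:pos} guarantees $p_\text{\tiny R}(x)>0$ for every $x$, the minimum $\min_x p_\text{\tiny R}(x)$ is strictly positive, so $0 \le 1-\min_x p_\text{\tiny R}(x) < 1$ and the second term decays geometrically in $n$; the squared-integrability of the potential outcomes ensures that $V_{so}$ and $\mathbb{E}_{\text{\tiny T}}[\tau(X)^2]$ are finite, so the first term is of order $1/n$. Hence the risk tends to $0$, which is precisely $L^2$-convergence to $\tau$.

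Given that Proposition~\ref{prop_semi_oracle_bias_variance} already does the heavy lifting, I do not anticipate a genuine obstacle here; the only points requiring care are the Jensen step for the squared bias and confirming the finiteness and positivity conditions --- namely $\min_x p_\text{\tiny R}(x)>0$, which follows from the finite support combined with Assumption~\ref{a:pos}, and the finiteness of $V_{so}$, which follows from square-integrability --- that make the decay argument rigorous.
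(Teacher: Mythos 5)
Your proof is correct and follows essentially the same route as the paper: the bias--variance decomposition of the quadratic risk, plugging in the bias and variance bounds from Proposition~\ref{prop_semi_oracle_bias_variance}, then combining via $\left(\mathbb{E}_{\text{\tiny T}}[|\tau(X)|]\right)^2 \le \mathbb{E}_{\text{\tiny T}}[\tau(X)^2]$ (the paper phrases this as nonnegativity of $\operatorname{Var}_{\text{\tiny T}}[|\tau(X)|]$, which is equivalent to your Jensen step) and $(1-\min_x p_{\text{\tiny R}}(x))^{2n} \le (1-\min_x p_{\text{\tiny R}}(x))^n$. Your explicit verification that $\min_x p_{\text{\tiny R}}(x)>0$ and that $V_{so}$ is finite for the consistency conclusion is slightly more careful than the paper's, but the argument is the same.
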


Appendix~\ref{proof:risk-and-bound-semi-oracle} details the proof.
The second term in the upper bound of Theorem~\ref{thm:semi-oracle} decreases exponentially with $n$, whereas the first term decreases at rate $1/n$. At first, it is not easy to compare this upper bound to the risk of the completely oracle IPSW, due to the factor two before $V_{so}$. Close inspection of the proof of Theorem~\ref{thm:semi-oracle} reveals that the factor $2$ can be replaced by $(1+\varepsilon)$, for all $\varepsilon$, assuming that $n$ is large enough (see Lemma~\ref{lemma:ineq-binomial-pi-hat}). The bound presented here is valid for all $n$ and can be improved if $n$ is taken large enough. Therefore, for all $n$ large enough, the first term in the upper bound is close to $V_{so}/(n+1)$ which is smaller than $V_o/(n+1)$ (see above), which makes the risk of the semi-oracle smaller than that of the completely oracle, for $n$ large enough. This bound 
opens the doors to guarantees even on small sample size. 
Also note that unlike $V_o$, $V_{so}$ can be estimated with the data.

\subsubsection{Properties of the (estimated) IPSW}

Previous results on IPSW are valid when the size of the target population goes to infinity. 
In this subsection, we establish theoretical guarantees for the estimated IPSW in a more complex setting: we consider finite trial and target population datasets and establish bounds depending on both sample sizes ($n$ and $m$). 

\begin{proposition}
\label{prop_completely_estimated}
Under the general setting defined in Subsection~\ref{subsec:model}, granting Assumptions~\ref{a:repres-rct}-\ref{a:pos}, the bias of the estimated IPSW satisfies, for all $n, m$,  
\begin{align*}
  \left| \mathbb{E}\left[\hat \tau_{\pi, n,m}  \right] - \tau \right| & \leq \left(1 - \min_x p_{\text{\tiny R}}(x)\right)^n \mathbb{E}_{\text{\tiny T}} \left[ \left| \tau(X) \right| \right].
\end{align*}
Moreover, the variance of the estimated IPSW satisfies, for all $n,m$, 

\begin{align}
     \operatorname{Var}\left[   \hat \tau_{\pi, n,m} \right]  &\le  \frac{2V_{so}}{n+1}  +  \frac{ \operatorname{Var}_{\text{\tiny T}} \left[ \tau(X)\right]}{m}  +  \frac{2}{m\left(n+1\right)}\mathbb{E}_{\text{\tiny R}}\left[ \frac{p_\text{\tiny T}\left( X \right)(1-p_\text{\tiny T}\left( X \right))}{p_\text{\tiny R}\left( X \right)^2} V_{\text{\tiny HT}}(X)\right]   \nonumber \\
     & \qquad +  \left(1 - \min_x p_{\text{\tiny R}}(x) \right)^{n/2}\mathbb{E}_{\text{\tiny T}} \left[   \tau(X)^2 \right]\left( 1 +  \frac{4}{m} \right).
     \label{eq:thm_completely_estimated_upper_bound_variance}
\end{align}
\end{proposition}

Closed-form expressions of bias and variance as well as proofs are derived in \Cref{prop:appendix_soIPSW} in Appendix~\ref{proof_prop_completely_estimated}.
\Cref{prop_completely_estimated} is the first result to upper bound the bias and variance of the estimated IPSW in a finite-sample setting. 
A first observation is that the bias of the (estimated) IPSW is the same as that of the semi-oracle, showing that only a limited trial sample size can explain a finite sample bias (see \Cref{prop:appendix_soIPSW} for details).
On the other side, the variance terms differ, due to the additional estimation of the target probability $p_{\text{\tiny T}}$ in the estimated IPSW. 
Therefore, all additional terms compared to the variance of the semi-oracle $\hat \tau_{\text{\tiny T},\pi, n}$ depend on $m$.
The variance upper bound tends to zero as $n$ and $m$ go to infinity. In this setting, the variance is dominated by the first two terms in inequality~\eqref{eq:thm_completely_estimated_upper_bound_variance}. If $m \gg n$, the variance is dominated by the first term, which is the dominant term of the semi-oracle variance. Following this idea,  Corollary~\ref{cor_asympt_completely_estimated} establishes the asymptotic bias and variance of the estimated IPSW in different sample size regimes.

\begin{corollary}\label{cor_asympt_completely_estimated}
Under the same assumptions as in Proposition~\ref{prop_completely_estimated}, the estimated IPSW is asymptotically unbiased when $n$ tends to infinity, that is 
\begin{equation*}
    \lim_{n\to\infty} \mathbb{E}\left[   \hat \tau_{\pi, n, m} \right] = \tau. 
\end{equation*}
Besides, letting $ \lim\limits_{n,m\to\infty} m/n = \lambda \in [0,\infty]$, the limiting variance of the estimated IPSW satisfies
\begin{equation*}
 \lim\limits_{n,m\to\infty} \min(n,m) \operatorname{Var}\left[   \hat \tau_{\pi, n,m} \right] = \min(1, \lambda) \left( \frac{\operatorname{Var}\left[ \tau(X) \right]}{\lambda} + V_{so} \right).
\end{equation*}
\end{corollary}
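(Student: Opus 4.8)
The plan is to read both claims off the finite-sample expressions in Proposition~\ref{prop_completely_estimated}, combined with the semi-oracle asymptotics of Corollary~\ref{cor_asympt_semi_oracle}, and then to split the analysis according to the value of $\lambda$. The bias statement is immediate: the exact formula $\mathbb{E}[\hat\tau_{\pi,n,m}]-\tau = -\sum_{x\in\mathds{X}}p_{\text{\tiny T}}(x)(1-p_{\text{\tiny R}}(x))^n\tau(x)$ does not involve $m$, and since $\mathds{X}$ is finite with $p_{\text{\tiny R}}(x)>0$ for every $x$ (Assumption~\ref{a:pos}), each factor $(1-p_{\text{\tiny R}}(x))^n\to 0$, so the finite sum tends to $0$ as $n\to\infty$, giving asymptotic unbiasedness.

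For the variance, I would start from the exact three-term decomposition in Proposition~\ref{prop_completely_estimated}, writing it schematically as $\operatorname{Var}[\hat\tau_{\pi,n,m}] = T_1 + \tfrac{1}{m}T_2 + \tfrac{1}{nm}T_3$, where $T_1 = \operatorname{Var}[\hat\tau_{\pi,\text{\tiny T},n}^*]$, $T_2$ is the bracketed difference of variances, and $T_3$ is the stratum sum. The first step is to pin down the rate of each term. By Corollary~\ref{cor_asympt_semi_oracle}, $nT_1\to V_{so}$. For $T_2$, the event $\{Z_n(x)=0\}$ has probability $(1-p_{\text{\tiny R}}(x))^n\to 0$, so $\mathbbm{1}_{Z_n(X)\neq 0}\to 1$ and, by dominated convergence using squared-integrability of $\tau(X)$, $\operatorname{Var}_{\text{\tiny T}}[\tau(X)\mathbbm{1}_{Z_n(X)\neq 0}]\to\operatorname{Var}[\tau(X)]$ (Assumption~\ref{a:cate-indep-s-knowing-X}), while the subtracted variance involves $\mathbbm{1}_{Z_n(X)=0}$ and vanishes; hence $T_2\to\operatorname{Var}[\tau(X)]$. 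For $T_3$, the factor $\mathbb{E}[\mathbbm{1}_{Z_n(x)\neq 0}/\hat p_{\text{\tiny R},n}(x)]$ is bounded uniformly in $n$ (it converges to $1/p_{\text{\tiny R}}(x)$), so $T_3 = O(1)$ and $\tfrac{1}{nm}T_3 = O(1/(nm))$.

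The second step is to multiply through by $\min(n,m)$ and pass to the limit under $m/n\to\lambda$. Writing $\min(n,m)\operatorname{Var}[\hat\tau_{\pi,n,m}] = \tfrac{\min(n,m)}{n}(nT_1) + \tfrac{\min(n,m)}{m}T_2 + \tfrac{\min(n,m)}{nm}T_3$, the last term tends to $0$ because $\min(n,m)/(nm) = 1/\max(n,m)\to 0$. It then remains to evaluate the first two coefficients: when $\lambda\geq 1$, eventually $\min(n,m)=n$, giving coefficients $1$ and $n/m\to 1/\lambda$, whereas when $\lambda<1$, eventually $\min(n,m)=m$, giving coefficients $m/n\to\lambda$ and $1$. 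Substituting $nT_1\to V_{so}$ and $T_2\to\operatorname{Var}[\tau(X)]$ yields $V_{so}+\operatorname{Var}[\tau(X)]/\lambda$ when $\lambda\geq 1$ and $\lambda V_{so}+\operatorname{Var}[\tau(X)]$ when $\lambda<1$, both of which equal $\min(1,\lambda)\left(\operatorname{Var}[\tau(X)]/\lambda + V_{so}\right)$.

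The main obstacle I anticipate is the bookkeeping at the boundaries $\lambda\in\{0,\infty\}$, where the compact formula $\min(1,\lambda)(\operatorname{Var}[\tau(X)]/\lambda + V_{so})$ is a $0\cdot\infty$ indeterminate and must be read through the already-simplified per-regime limits ($\operatorname{Var}[\tau(X)]$ at $\lambda=0$, $V_{so}$ at $\lambda=\infty$). A secondary technical point is rigorously justifying the convergence of $T_2$ -- an interchange of limit and variance driven by $\mathbbm{1}_{Z_n(X)\neq 0}\to 1$, which relies on squared-integrability of the potential outcomes to dominate $\tau(X)^2$ -- together with the uniform-in-$n$ control of $T_3$ so that it indeed drops out after multiplication by $\min(n,m)$.
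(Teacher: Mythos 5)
Your proposal is correct and follows essentially the same route as the paper's proof: both start from the exact finite-sample variance decomposition of Proposition~\ref{prop_completely_estimated}, use Corollary~\ref{cor_asympt_semi_oracle} (via the Chernoff argument) to get $n\operatorname{Var}[\hat\tau^*_{\pi,\text{\tiny T},n}]\to V_{so}$, show the indicator-driven terms vanish and the stratum sum stays bounded, and then resolve the $\lambda$-dependence by a case analysis after multiplying by $\min(n,m)$. Your only stylistic difference is handling both regimes uniformly through the coefficients $\min(n,m)/n\to\min(1,\lambda)$ and $\min(n,m)/m\to\min(1,1/\lambda)$, where the paper instead treats $\lambda\geq 1$ and $\lambda\leq 1$ as two separate computations; the content is the same.
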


\begin{wrapfigure}[18]{r}{0.3\textwidth} 
    \centering
    \includegraphics[width=0.3\textwidth]{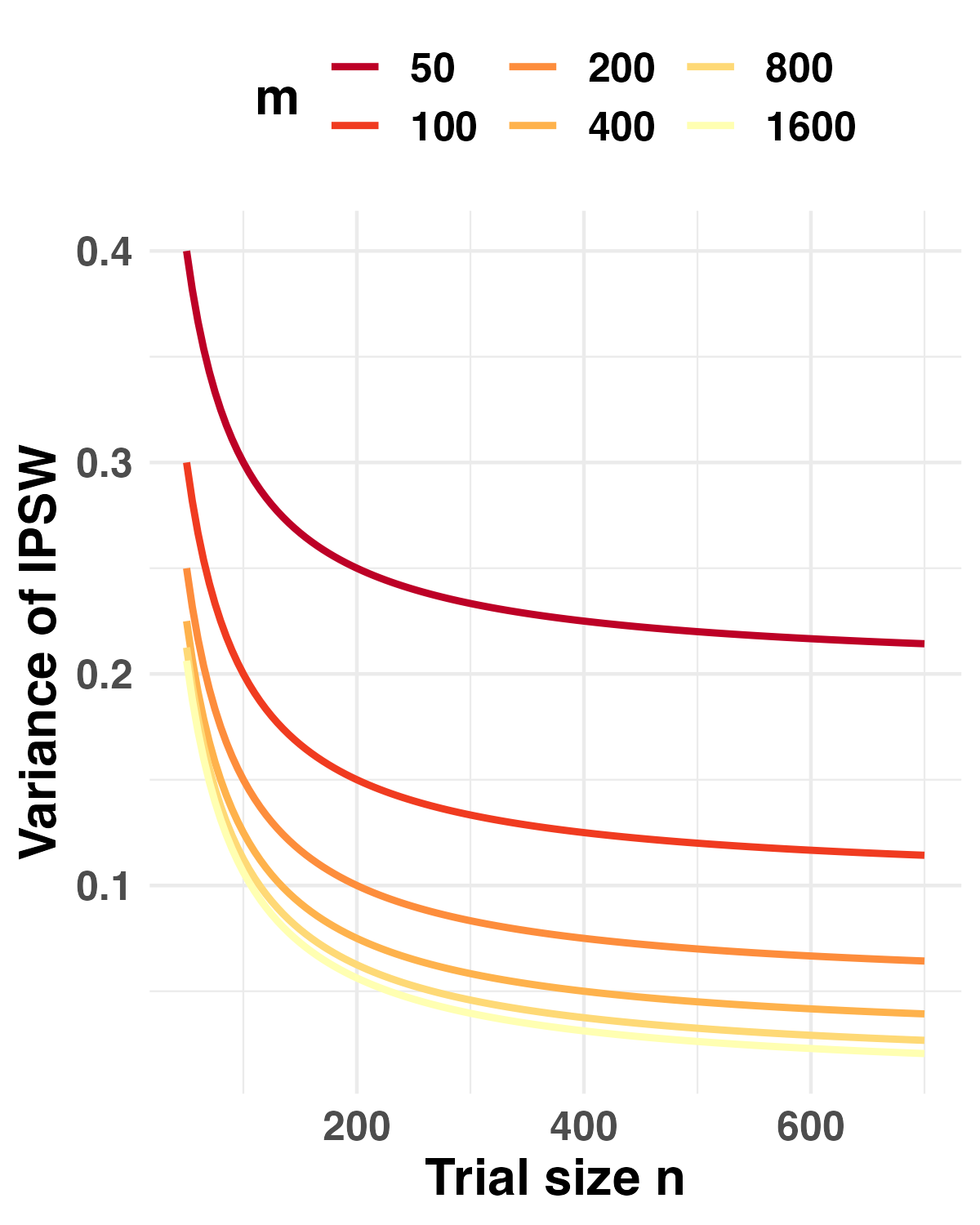}
    \caption{\textbf{Illustration of Corollary~\ref{cor_asympt_completely_estimated}}}
    \label{fig:corollary3}
\end{wrapfigure} 
A proof is detailed in Subsection~\ref{proof_asympt_completely_estimated}.\\

As highlighted in Corollary~\ref{cor_asympt_completely_estimated}, there is not a unique limiting variance for the estimated IPSW. Its limiting variance depends on how the sample sizes $n$ and $m$ compare to each other asymptotically. 
For example, 
\begin{itemize}
    \item If $m/n \to \infty$, (i.e., $\lambda=\infty$) then the limiting variance of the estimated IPSW corresponds to the semi-oracle's one;
    \item If we consider an asymptotic regime where the observational sample is about ten times bigger than the trial ($\lambda = 10$), then the asymptotic variance is equal to $\lim\limits_{n,m\to\infty} n \operatorname{Var}\left[   \hat \tau_{\pi, n,m} \right] = \operatorname{Var}\left[ \tau(X) \right]/10 + V_{so} >  V_{so} $;
    \item Finally, if $m/n \to 0$, (i.e., $\lambda=0$) then the limiting variance of the estimated IPSW has no more link to that of the semi-oracle IPSW, and $  \lim\limits_{n,m\to\infty} m \operatorname{Var}\left[   \hat \tau_{\pi, n,m} \right] = \operatorname{Var}\left[ \tau(X) \right]$.
\end{itemize}

This formula can be used to guide data collection. For example, and using the formula, one could say that at some point gathering $N$ additional individuals information in the target population (which has a cost) could lead to less gain in precision than gathering a bit more data on the trial (if possible). This phenomenon is illustrated on Figure~\ref{fig:corollary3}. \\

Upper bound on the risk of the estimated IPSW can be established, based on Proposition~\ref{prop_completely_estimated}.

\begin{theorem}[Properties of the IPSW]\label{thm:ipsw}
Under the general setting defined in Subsection~\ref{subsec:model}, granting Assumptions~\ref{a:repres-rct}-\ref{a:pos}, the quadratic risk of the estimated IPSW satisfies,
 \begin{align}
\mathbb{E}\left[ \left( \hat \tau_{\pi, n, m} - \tau \right)^2\right] & \le \frac{2V_{so}}{n+1} + \frac{\operatorname{Var}\left[\tau(X) \right]}{m} + \frac{2}{m(n+1)}  \mathbb{E}_{\text{\tiny R}}\left[ \frac{p_\text{\tiny T}\left( X \right)(1-p_\text{\tiny T}\left( X \right))}{p_\text{\tiny R}\left( X \right)^2} V_{\text{\tiny HT}}(X)\right] \nonumber \\
& \quad  + 2 \left(1 - \min_x p_{\text{\tiny R}}\left(x\right)\right)^{n} \mathbb{E}_{\text{\tiny T}}[\tau(X)^2] \left( 1 + \frac{2}{m} \right). \label{eq_theorem_completely_est}
\end{align}
\end{theorem}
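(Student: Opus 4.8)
The plan is to apply the bias--variance decomposition of the quadratic risk, $\mathbb{E}[(\hat \tau_{\pi, n, m} - \tau)^2] = \operatorname{Var}[\hat \tau_{\pi, n, m}] + (\mathbb{E}[\hat \tau_{\pi, n, m}] - \tau)^2$, and then to bound the two summands separately using the exact expressions already obtained in Proposition~\ref{prop_completely_estimated}. The variance term is controlled directly by inequality~\eqref{eq:thm_completely_estimated_upper_bound_variance}, so the only genuinely new quantity to estimate is the squared bias; the substance of the theorem is therefore a careful recombination of objects whose hard analysis was carried out in establishing the proposition.

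For the squared bias, I would start from the closed-form bias of Proposition~\ref{prop_completely_estimated} and read it as an expectation under $P_\text{\tiny T}$, namely $\mathbb{E}[\hat \tau_{\pi, n, m}] - \tau = -\mathbb{E}_\text{\tiny T}[(1 - p_\text{\tiny R}(X))^n \tau(X)]$. Squaring and applying Jensen's inequality (equivalently Cauchy--Schwarz against the probability measure $P_\text{\tiny T}(X)$) yields $(\mathbb{E}[\hat \tau_{\pi, n, m}] - \tau)^2 \le \mathbb{E}_\text{\tiny T}[(1 - p_\text{\tiny R}(X))^{2n}\tau(X)^2]$. Since Assumption~\ref{a:pos} guarantees $p_\text{\tiny R}(x) \ge \min_x p_\text{\tiny R}(x) > 0$ for every $x \in \mathds{X}$, the factor $(1 - p_\text{\tiny R}(X))^{2n}$ is bounded pointwise by its worst case $(1 - \min_x p_\text{\tiny R}(x))^{2n}$, giving a squared bias of order $(1 - \min_x p_\text{\tiny R}(x))^{2n}\,\mathbb{E}_\text{\tiny T}[\tau(X)^2]$.

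The combining step then merges the two exponentially small contributions. The variance bound already carries a term proportional to $(1 - \min_x p_\text{\tiny R}(x))^{n/2}$, whereas the squared bias is of order $(1 - \min_x p_\text{\tiny R}(x))^{2n}$; because the base lies in $[0,1)$, I would relax the latter to the slower rate $(1 - \min_x p_\text{\tiny R}(x))^{n/2}$ so that both sit at a common decay. Adding the factor $(1 + 4/m)$ from the variance to the factor $1$ from the squared bias then collapses the two into $2\,(1 + 2/m)\,(1 - \min_x p_\text{\tiny R}(x))^{n/2}\,\mathbb{E}_\text{\tiny T}[\tau(X)^2]$, which together with the three polynomially decaying terms reproduces the bound of the advertised form (using Assumption~\ref{a:cate-indep-s-knowing-X} to replace $\operatorname{Var}_\text{\tiny T}[\tau(X)]$ by $\operatorname{Var}[\tau(X)]$).

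Finally, $L^2$-consistency follows by inspection of the bound: square-integrability of the potential outcomes and Assumption~\ref{a:pos} (which forces $1 - \min_x p_\text{\tiny R}(x) < 1$) make $V_{so}$, $\operatorname{Var}[\tau(X)]$, $\mathbb{E}_\text{\tiny R}[\,\cdot\,]$ and $\mathbb{E}_\text{\tiny T}[\tau(X)^2]$ finite constants, so each term on the right-hand side vanishes as $n, m \to \infty$. I do not anticipate a genuine obstacle here, since Proposition~\ref{prop_completely_estimated} already absorbs the difficult variance computation; the only delicate point is aligning the exponents of the two exponential terms so that their prefactors collapse cleanly into $2(1 + 2/m)$.
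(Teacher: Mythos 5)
Your proposal is correct and follows essentially the same route as the paper's proof: the bias--variance decomposition, the variance bound of inequality~\eqref{eq:thm_completely_estimated_upper_bound_variance}, a Jensen-type bound $\left(\mathbb{E}[\hat\tau_{\pi,n,m}]-\tau\right)^2 \le \left(1-\min_x p_{\text{\tiny R}}(x)\right)^{2n}\,\mathbb{E}_{\text{\tiny T}}\!\left[\tau(X)^2\right]$ for the squared bias, and the merging of the two exponentially small terms at the common rate $\left(1-\min_x p_{\text{\tiny R}}(x)\right)^{n/2}$ with prefactor $2\left(1+\frac{2}{m}\right)$. The only caveat---shared with the paper's own proof, which likewise ends with exponent $n/2$---is that the theorem as printed displays exponent $n$ in the exponential term, so this discrepancy is internal to the paper rather than a gap in your argument.
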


Proof is detailed in Subsection~\ref{proof_thm_ipsw}.
The first and fourth terms in inequality~\eqref{eq_theorem_completely_est} correspond to the bound of the semi-oracle estimator (see Theorem~\ref{thm:semi-oracle}). Following the intuition, the bound on the risk of the estimated IPSW is larger than the one of the semi-oracle. This is due to the cost of estimating $p_{\text{\tiny T}}$ from a finite sample of size $m$. However, when $m \gg n$, the dominant terms in the risk of the estimated and semi-oracle IPSW are the same. 
Consistency of the (estimated) IPSW for continuous covariates has been proven in the literature, e.g.  \cite{buchanan2018generalizing} demonstrate consistency and asymptotic normality under a nested-design and assuming a parametric selection process. \cite{Colnet2021Sensitivity} demonstrate consistency assuming uniform convergence of the probability ratio under a cross-fitting procedure and no parametric assumption. Our results are the first to establish the bias and the variance of the estimated IPSW in finite and asymptotic regimes, with an explicit dependence on both sample sizes. 

\paragraph{What if the probability to be treated depends on $x$?} In some trials, the probability to receive treatment depends on the strata (e.g. for ethical reason). If so, all the previous results are kept unchanged, replacing $\pi$ by $\pi(x)$, and the proofs are written with $\pi(x)$, even if the main results are reported with a constant $\pi$ for briefness. In particular, all the covariates used to stratify the propensity to receive treatment in the trial should be used in the IPSW.
\subsection{Estimating the probability to be treated in the trial?}\label{subsec:also-estimating-pi}
So far, we have considered an estimation procedure where $\pi$, the probability to be treated in the trial, is plugged in the formula.
Still, one may want to estimate it for the purpose of precision. 
This idea follows the same spirit of what can be done with the Horvitz-Thomson (Definition~\ref{def:HT}) and the Difference-in-means (Definition~\ref{def:difference-in-means}), where the large-sample gain in variance is recalled in Equation~\eqref{eq:ineq-dm-ht}. 
To our knowledge, different version of IPSW are currently present in the literature, with or without an estimated $\pi$ (see Table~\ref{tab:non-exhaustive-review-IPSW} in appendix for a non-exhaustive review). 
In our work, we propose to estimate $\pi$ \underline{per strata}, and then adapt the semi-oracle IPSW (Definition~\ref{def:ipsw-semi-oracle}) and the estimated IPSW (Definition~\ref{def:ipsw}).


\begin{definition}[Estimation of $\hat \pi$ for each strata]\label{def:procedure-for-densities-and-pi}

Under the setting defined in Subsection~\ref{subsec:model}, 
\begin{equation*}
   \forall x\in \mathds{X},\, \hat \pi_n(x) =  \frac{\sum_{i \in \mathcal{R}} \mathbbm{1}_{X_i = x} \mathbbm{1}_{A_i = 1}}{\sum_{i \in \mathcal{R}} \mathbbm{1}_{X_i = x}}.
\end{equation*}

\end{definition}

Strange as it may seem, estimating $\pi$ per strata and not on the whole sample  can also be beneficial in RCTs to improve precision. \cite{imbens2011experimental,miratrix2013adjusting} introduce the post-stratification procedure, a technique aiming to use covariate information for precision when estimating the ATE from a single trial. These two research works detail why a so-called post-stratification estimator yields a lower variance compared to the Difference-in-Means -- and therefore a Horvitz-Thomson -- as soon as the covariates used for stratification are predictive of the outcome. More particularly, the post-stratification estimator on a single trial is defined as follows.

\begin{definition}[Post-stratification - \cite{imbens2011experimental, miratrix2013adjusting}]\label{def:post-stratification-estimator}

The post-stratification estimator is denoted $\hat \tau_{\text{\tiny PS},n}$ and defined as,
\begin{align*}
      \hat \tau_{\text{\tiny PS},n} &= \frac{1}{n} \sum_{i=1}^n \frac{A_iY_i}{\hat \pi_n(x)} -  \frac{(1-A_i)Y_i}{1-\hat \pi_n(x)},
\end{align*}
where $\pi$ is estimated according to Definition~\ref{def:procedure-for-densities-and-pi}.
\end{definition}
The different displays of the post-stratification estimator $\hat \tau_{\text{\tiny PS},n}$ in literature are recalled in Section~\ref{appendix:useful-results-rct}. The gain in efficiency of an IPSW version with estimated $\pi$ follows this intuition.

\begin{definition}[Semi-oracle IPSW with $\hat \pi$]\label{def:ipsw-semi-oraclewith-pi} The semi-oracle IPSW estimator $\hat \tau_{\text{\tiny T}, n}^*$ with estimated propensity scores $\hat{\pi}_n$ is defined as 
\begin{equation}
\hat \tau_{\text{\tiny T}, n}^*= \frac{1}{n} \sum_{i=1}^{n} \frac{p_{\text{\tiny T}}(X_i)}{\hat p_{\text{\tiny R}, n}(X_i)}  Y_i \left( \frac{A_i}{\hat \pi_n(X_i)} - \frac{1-A_i}{1-\hat \pi_n(X_i)} \right)\,,
\end{equation}
with $\hat p_{\text{\tiny R}, n}(x)$ and $\hat \pi_n(x)$ defined in Definitions~\ref{def:procedure-for-densities} and  \ref{def:procedure-for-densities-and-pi}.
\end{definition}

\begin{definition}[IPSW with $\hat \pi$]\label{def:ipsw-with-pi} The completely-estimated IPSW estimator $\hat \tau_{n,m}$ with estimated propensity scores $\hat{\pi}_n$ is defined as 
\begin{equation}
\hat \tau_{n, m}= \frac{1}{n} \sum_{i=1}^{n} \frac{\hat p_{\text{\tiny T}, m}(X_i)}{\hat p_{\text{\tiny R}, n}(X_i)} Y_i \left( \frac{A_i}{\hat \pi_n(X_i)} - \frac{1-A_i}{1-\hat \pi_n(X_i)} \right)\,,
\end{equation}
where $\hat p_{\text{\tiny T}, m}(x)$, $\hat p_{\text{\tiny R}, n}(x)$, and $\hat \pi_n(x)$  defined in Definitions~\ref{def:procedure-for-densities} and  \ref{def:procedure-for-densities-and-pi}.
\end{definition}

Before stating the formal results, and following the spirit of what was done with the variance of the Horvitz-Thomson per strata \eqref{eq:conditional-VHT}, we introduce $V_{ \text{\tiny DM},n}(x)$:
\begin{align}
\label{eq:conditional-VDM}
V_{\text{\tiny DM},n}(x)
 =  \frac{\mathds{1}_{Z_n(x) >0}}{Z_n(x)} \operatorname{Var} \left[    \sum_{i=1}^{n} \mathds{1}_{X_i=x}   \left( \frac{A_i Y_i^{(1)}}{\hat \pi_n(x)} - \frac{(1-A_i)Y_i^{(0)}}{1-\hat \pi_n(x)} \right) \mid \mathbf{X}_{n} \right].
\end{align}
The explicit variance of the Difference-in-Means under a Bernoulli design is provided in Appendix (see Lemma~\ref{lemma:DM-bias-and-variance}), and not displayed here for conciseness.


\begin{proposition}[IPSW's properties when also estimating $\pi$]\label{prop:when-estimating-pi}
Under the general setting defined in Subsection~\ref{subsec:model}, granting Assumptions~\ref{a:repres-rct}-\ref{a:pos}, the bias of the estimated IPSW with estimated $\hat \pi_n$ (see Definition~\ref{def:procedure-for-densities-and-pi}) satisfies, for all $n$, 
\begin{align*}
    \left| \mathbb{E} \left[ \hat \tau_{n, m} \right]- \tau \right| 
    &  \leq \left(  1- \min_x \left( (1 - \tilde \pi(x)) p_{\text{\tiny R}}(x) \right) \right)^n \left(   \mathbb{E}_\text{\tiny T} \left[ |  Y^{(1)}   | \right]+  \mathbb{E}_\text{\tiny T} \left[| Y^{(0)}  | \right]  \right),
\end{align*}
where $\tilde \pi(x) = \max ( \pi(x), 1 - \pi(x))$.
Besides, the variance of the estimated IPSW with estimated $\hat \pi_n$ satisfies, for all $n$
\begin{align*}
\operatorname{Var}\left[    \hat \tau_{n, m} \right]  \leq &\frac{2\,  \tilde V_{so}}{n+1}   + \frac{\operatorname{Var}\left[   \tau(X) \right] }{m}   +  \frac{2}{(n+1)m}  \mathbb{E}_{\text{\tiny R}}\left[ \frac{p_\text{\tiny T}\left( X \right)(1-p_\text{\tiny T}\left( X \right))}{p_\text{\tiny R}\left( X \right)^2} V_{\text{\tiny DM}, n}(X)\right] \\
&\quad + 2 \left( 1 + \frac{3}{m} \right) \left( 1 - \min_x \left( (1 - \tilde \pi(x)^2) p_{\text{\tiny R}}(x) \right) \right)^{n/2} \mathbb{E} \left[ (Y^{(1)})^2 + (Y^{(0)})^2 \right],
\end{align*}
where
\begin{align*}
\tilde V_{so}:=\mathbb{E}_{\text{\tiny R}}\left[ \left( \frac{p_\text{\tiny T}\left( X \right)}{p_\text{\tiny R}\left( X \right)}\right)^2 V_{\text{\tiny DM},n}(X)\right].
\end{align*}
\end{proposition}

Proof is detailed in Subsection~\ref{proof:cor-when-estimating-pi}. 
The bound on the variance of $\hat \tau_{n,m}$ is very close to the one of $\hat \tau_{\pi,n,m}$, and in particular for any fixed $m$,
\begin{align*}
    \operatorname{Var}\left[    \hat \tau_{n, m} \right]  & \leq \frac{2\,  \tilde V_{so}}{n+1}   + \frac{\operatorname{Var}\left[   \tau(X) \right] }{m}   +  \frac{2}{(n+1)m}  \mathbb{E}_{\text{\tiny R}}\left[ \frac{p_\text{\tiny T}\left( X \right)(1-p_\text{\tiny T}\left( X \right))}{p_\text{\tiny R}\left( X \right)^2} V_{\text{\tiny DM}, n}(X)\right] + o\left(\frac{1}{n}\right),
\end{align*}
where the main difference comes from $\tilde V_{so}$ that contains $V_{\text{\tiny DM},n}(X)$ rather than $V_{\text{\tiny HT}}(X)$. 
According to Lemma~\ref{lem_inequality_vdm_vht} in Appendix~\ref{proof:variance-inequality}, we have, for all $x$, 
\begin{align*}
& ~~ \mathbb{E} \left[ V_{\text{\tiny DM},n}(x) \mathds{1}_{Z_n(x) >0}\right]  \leq V_{ \text{\tiny HT}}(x) - \alpha(x)^2 + O\left(n^{-1/4}\right),
\end{align*}
which allows to conclude that for all $n$ large enough, the bound on the variance of $\hat \tau_{n,m}$ is tighter than the bound on the variance of $\hat \tau_{\pi,n,m}$. This can also be observed on the large sample variance.


\begin{corollary}
\label{cor_asympt_completely_estimated_pi_estimated}
Under the same assumptions as in Proposition~\ref{prop:when-estimating-pi}, the completely estimated IPSW is asymptotically unbiased when $n$ tends to infinity, that is 
\begin{equation*}
    \lim_{n\to\infty} \mathbb{E}\left[   \hat \tau_{n, m} \right] = \tau. 
\end{equation*}
Besides, letting $ \lim\limits_{n,m\to\infty} m/n = \lambda \in [0,\infty]$, the limiting variance of completely estimated IPSW satisfies
\begin{align*}
 \lim\limits_{n,m\to\infty} \min(n,m) \operatorname{Var}\left[   \hat \tau_{n,m} \right] = \min(1, \lambda) \left( \frac{\operatorname{Var}\left[ \tau(X) \right]}{\lambda} +  \tilde V_{so, \infty} \right),
\end{align*}
\begin{flalign*}
\text{where} &&
    \tilde V_{so, \infty}:= &\,\mathbb{E}_{\text{\tiny R}}\left[ \left( \frac{p_\text{\tiny T}\left( X \right)}{p_\text{\tiny R}\left( X \right)}\right)^2 V_{\text{\tiny DM},\infty}(X)\right],
&&\\
\text{and} &&
    V_{\text{\tiny DM}, \infty} (x) := & \,\frac{\operatorname{Var}_{\text{\tiny R}}\left[ Y^{(1)} \mid X = x \right]}{\pi} + \frac{\operatorname{Var}_{\text{\tiny R}}\left[ Y^{(0)} \mid X = x \right]}{1-\pi}.
&&
\end{flalign*}

\end{corollary}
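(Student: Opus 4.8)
The plan is to import the exact finite-sample bias and variance of $\hat \tau_{n,m}$ from Proposition~\ref{prop:when-estimating-pi} and pass to the joint limit term by term, mirroring the proof of Corollary~\ref{cor_asympt_completely_estimated} with $V_{\text{\tiny HT}}$ everywhere replaced by $V_{\text{\tiny DM},\infty}$ and $V_{so}$ by $\tilde V_{so,\infty}$.

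\textbf{Asymptotic unbiasedness.} I would start from the bias expression of Proposition~\ref{prop:when-estimating-pi}. Each summand carries a factor $(1-p_{\text{\tiny R}}(x)(1-\pi(x)))^n$ or $(1-p_{\text{\tiny R}}(x)\pi(x))^n$. Assumption~\ref{a:trial-internal-validity}(iii) gives $0<\pi(x)<1$ and Assumption~\ref{a:pos} gives $p_{\text{\tiny R}}(x)>0$, so both bases lie strictly in $[0,1)$. Since $\mathds{X}$ is finite, the bias is a finite sum of geometrically decaying terms and hence tends to $0$ as $n\to\infty$, uniformly in $m$.

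\textbf{Controlling the $C_n$ terms.} For the variance, I would first handle the two terms built from $C_n(X)$. Because $Z_n(x)\to\infty$ almost surely (the law of large numbers gives $Z_n(x)/n\to p_{\text{\tiny R}}(x)>0$) and $0<\pi(x)<1$, both $(1-\pi(x))^{Z_n(x)}$ and $\pi(x)^{Z_n(x)}$ tend to $0$ almost surely, so $C_n(X)\to 0$ a.s. The conditional means $\mathbb{E}[Y^{(a)}\mid X=x]$ are finite and $\mathds{X}$ is finite, hence $C_n$ is bounded and dominated convergence yields $\operatorname{Var}[\tau(X)-C_n(X)]\to\operatorname{Var}[\tau(X)]$ and $\operatorname{Var}[\mathbb{E}[C_n(X)\mid \mathbf{X}_n]]\to 0$. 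Consequently the first two terms of the variance contribute $\operatorname{Var}[\tau(X)]/m + o(1/m)$.

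\textbf{The stratified term.} The main technical ingredient for the last term is the inverse-binomial moment asymptotic $n\,\mathbb{E}[\mathbbm{1}_{Z_n(x)>0}/Z_n(x)]\to 1/p_{\text{\tiny R}}(x)$ (the same estimate used in Corollary~\ref{cor_asympt_completely_estimated}), combined with $V_{\text{\tiny DM},n}(x)\to V_{\text{\tiny DM},\infty}(x)$, which follows from Lemma~\ref{lemma:DM-bias-and-variance} together with Equation~\eqref{eq:variance-for-DM}. Keeping only the $p_{\text{\tiny T}}^2(x)$ part of the coefficient (the $p_{\text{\tiny T}}(x)(1-p_{\text{\tiny T}}(x))/m$ part is $O(1/(nm))$), this term equals $\frac{1}{n}\sum_{x\in\mathds{X}}\frac{p_{\text{\tiny T}}(x)^2}{p_{\text{\tiny R}}(x)}V_{\text{\tiny DM},\infty}(x)+o(1/n)=\tilde V_{so,\infty}/n+o(1/n)$. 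Collecting the three contributions gives $\operatorname{Var}[\hat \tau_{n,m}]=\operatorname{Var}[\tau(X)]/m+\tilde V_{so,\infty}/n+o(1/\min(n,m))$, after which multiplying by $\min(n,m)$ and splitting on whether $\lambda\ge 1$ (so eventually $\min(n,m)=n$) or $\lambda<1$ (so eventually $\min(n,m)=m$) recovers $\min(1,\lambda)(\operatorname{Var}[\tau(X)]/\lambda+\tilde V_{so,\infty})$ in both regimes, with the boundary values $\lambda\in\{0,\infty\}$ read off as the corresponding limits.

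The hard part will be making the term-by-term passage to the \emph{joint} limit rigorous: I must control the $o(\cdot)$ remainders uniformly enough that $n\to\infty$ and $m\to\infty$ may be taken simultaneously along $m/n\to\lambda$, rather than iterated. This is precisely where the uniform inverse-binomial estimate and the dominated-convergence bound on $C_n$ do the work, both relying on the finiteness of $\mathds{X}$ and the square-integrability of the potential outcomes.
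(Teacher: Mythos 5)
Your overall route is the same as the paper's: start from the exact bias and variance expressions of Proposition~\ref{prop:when-estimating-pi}, pass to the limit term by term using the inverse-binomial estimate $n\,\mathbb{E}\left[\mathbbm{1}_{Z_n(x)>0}/Z_n(x)\right]\to 1/p_{\text{\tiny R}}(x)$ together with $V_{\text{\tiny DM},n}(x)\to V_{\text{\tiny DM},\infty}(x)$, and then split on $\lambda\ge 1$ versus $\lambda\le 1$. Your treatment of the bias, of the first variance term $\frac{1}{m}\operatorname{Var}\left[\tau(X)-C_n(X)\right]$, and of the stratified term is correct and matches the paper: for those terms only convergence of each factor is required, so no rates are needed.

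The genuine gap is the second variance term, $\left(1-\tfrac{1}{m}\right)\operatorname{Var}\left[\mathbb{E}\left[C_n(X)\mid\mathbf{X}_n\right]\right]$. You claim it contributes $o(1/m)$ (and, implicitly, $o(1/\min(n,m))$) on the strength of dominated convergence, but dominated convergence only yields $\operatorname{Var}\left[\mathbb{E}\left[C_n(X)\mid\mathbf{X}_n\right]\right]\to 0$ with no rate, and an unrated $o(1)$ quantity multiplied by $\min(n,m)\to\infty$ need not vanish (consider a sequence decaying like $n^{-1/2}$). Since $\min(n,m)\le n$, what you actually need is $n\,\operatorname{Var}\left[\mathbb{E}\left[C_n(X)\mid\mathbf{X}_n\right]\right]\to 0$, which requires a quantitative bound; this is exactly where the paper invokes the exponential estimate established in the proof of Proposition~\ref{prop:when-estimating-pi}, obtained from the exact binomial moment $\mathbb{E}\left[\rho^{2Z_n(x)}\mid X=x\right]=\left(1-(1-\rho^2)p_{\text{\tiny R}}(x)\right)^n$ for $\rho\in\{\pi(x),1-\pi(x)\}$, namely
\begin{equation*}
\operatorname{Var}\left[\mathbb{E}\left[C_n(X)\mid\mathbf{X}_n\right]\right]\le \mathbb{E}\left[C_n(X)^2\right]\le 2\left(1-\min_x\left((1-\tilde\pi(x)^2)\,p_{\text{\tiny R}}(x)\right)\right)^n\mathbb{E}\left[(Y^{(1)})^2+(Y^{(0)})^2\right],
\end{equation*}
which decays exponentially in $n$ and therefore kills the factor $\min(n,m)$ in every regime — in particular when $\lambda=\infty$, where $m$ may grow arbitrarily fast relative to $n$ yet $\min(n,m)=n$, a case your argument cannot handle. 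Replacing your dominated-convergence step by this bound closes the gap; the rest of your argument then goes through exactly as in the paper.
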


Proof is detailed in Subsection~\ref{proof:cor_asympt_completely_estimated_pi_estimated}. 
Because $\forall x \in \mathds{X}$, $V_{ \text{\tiny DM},\infty}(x) \le V_{ \text{\tiny HT}}(x)$, then $  \tilde V_{so, \infty} \le V_{so} $, so that the large sample variance of the semi-oracle and completely estimated IPSW are smaller than with an oracle $\pi$, regardless of the regime at which $n$ and $m$ tend to infinity. 
Similarly to the result on $\hat \tau_{\pi,n,m}$, upper bound on the risk of the completely estimated IPSW can be established, based on Proposition~\ref{prop:when-estimating-pi}.

\begin{theorem}[Properties of the IPSW]\label{thm:ipsw_pi_est}
Under the general setting defined in Subsection~\ref{subsec:model}, granting Assumptions~\ref{a:repres-rct}-\ref{a:pos}, the quadratic risk of the completely estimated IPSW with estimated $\hat \pi$ satisfies,
 \begin{align}
\mathbb{E}\left[ \left( \hat \tau_{n, m} - \tau \right)^2\right] & \leq  \frac{2 \tilde V_{so}}{n+1}   + \frac{\operatorname{Var}\left[   \tau(X) \right] }{m} + \frac{2}{m(n+1)}  \mathbb{E}_{\text{\tiny R}}\left[ \frac{p_\text{\tiny T}\left( X \right)(1-p_\text{\tiny T}\left( X \right))}{p_\text{\tiny R}\left( X \right)^2} V_{\text{\tiny DM}, n}(X)\right]  \nonumber \\
& \quad   +  2 \left( 2 + \frac{3}{m} \right) \left( 1 - \min_x \left( (1 - \tilde \pi(x)) p_\text{\tiny R}(x) \right) \right)^{n/2} \mathbb{E} \left[ (Y^{(1)})^2 + (Y^{(0)})^2 \right]. \label{eq_theorem_completely_est_pi_est}
\end{align}
\end{theorem}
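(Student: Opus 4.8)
The plan is to obtain the risk bound from the bias--variance decomposition
\[
\mathbb{E}\left[ \left( \hat \tau_{n, m} - \tau \right)^2\right] = \operatorname{Var}\left[ \hat \tau_{n,m} \right] + \left( \mathbb{E}\left[ \hat \tau_{n,m} \right] - \tau \right)^2,
\]
controlling each piece with the exact finite-sample expressions already granted by Proposition~\ref{prop:when-estimating-pi}. The variance is handled directly: its upper bound from Proposition~\ref{prop:when-estimating-pi} already supplies the first three terms of \eqref{eq_theorem_completely_est_pi_est} together with an exponential remainder of the form $2(1+3/m)(1 - \min_x((1-\tilde\pi(x)^2)p_\text{\tiny R}(x)))^{n/2}\,\mathbb{E}[(Y^{(1)})^2 + (Y^{(0)})^2]$. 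First I would reconcile the base of this exponential with the one targeted in the theorem: since $1 - \tilde\pi(x)^2 = (1-\tilde\pi(x))(1+\tilde\pi(x)) \ge 1 - \tilde\pi(x)$ pointwise and $p_\text{\tiny R}(x)\ge 0$, one gets $\min_x((1-\tilde\pi(x)^2)p_\text{\tiny R}(x)) \ge \min_x((1-\tilde\pi(x))p_\text{\tiny R}(x))$, so the variance's exponential factor is dominated by $(1 - \min_x((1-\tilde\pi(x))p_\text{\tiny R}(x)))^{n/2}=:\rho^{n/2}$.

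Next I would bound the squared bias. Writing the bias from Proposition~\ref{prop:when-estimating-pi} as a difference of two $p_\text{\tiny T}$-weighted sums with bases $c_0(x) = 1 - p_\text{\tiny R}(x)(1-\pi(x))$ and $c_1(x) = 1 - p_\text{\tiny R}(x)\pi(x)$, the inequality $(a-b)^2 \le 2a^2 + 2b^2$ separates the two contributions. Because both $1-\pi(x)$ and $\pi(x)$ are at least $1 - \tilde\pi(x) = \min(\pi(x), 1-\pi(x))$, both bases obey $c_0(x), c_1(x) \le 1 - \min_x((1-\tilde\pi(x))p_\text{\tiny R}(x)) = \rho$. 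Applying Jensen's inequality with the probability weights $p_\text{\tiny T}(x)$ to each sum, followed by the conditional Jensen inequality $(\mathbb{E}[Y^{(a)}\mid X])^2 \le \mathbb{E}[(Y^{(a)})^2 \mid X]$, yields $(\mathbb{E}[\hat\tau_{n,m}] - \tau)^2 \le 2\rho^{2n}\mathbb{E}[(Y^{(1)})^2 + (Y^{(0)})^2]$ (the outer average understood under $p_\text{\tiny T}$, matching the statement's notation). Since $\rho \in [0,1]$ and $n \ge 1$ give $\rho^{2n} \le \rho^{n/2}$, the squared bias is at most $2\rho^{n/2}\mathbb{E}[(Y^{(1)})^2 + (Y^{(0)})^2]$.

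Adding the two exponential remainders then yields the coefficient $2(1+3/m) + 2 = 2(2 + 3/m)$ in front of $\rho^{n/2}\mathbb{E}[(Y^{(1)})^2 + (Y^{(0)})^2]$, which is exactly the last term of \eqref{eq_theorem_completely_est_pi_est}, while the first three terms are inherited unchanged from the variance bound. Finally, $L^2$-consistency follows by letting $n,m \to \infty$: the first term vanishes as $n \to \infty$, the second as $m \to \infty$, the third as $nm \to \infty$, and the exponential term vanishes as $n \to \infty$ because $\rho < 1$ --- this strict inequality being guaranteed by Assumption~\ref{a:pos} ($\min_x p_\text{\tiny R}(x) > 0$) together with Assumption~\ref{a:trial-internal-validity}(iii), which forces $1 - \tilde\pi(x) = \min(\pi(x),1-\pi(x)) > 0$. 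I expect the only delicate point to be the bookkeeping: matching the exponential bases (the $\tilde\pi^2$ versus $\tilde\pi$ discrepancy) and tracking the constants so that the two exponential remainders combine into the stated $2(2+3/m)$, rather than any genuinely new estimate --- the substantive work having already been carried out in Proposition~\ref{prop:when-estimating-pi}.
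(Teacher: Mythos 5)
Your proposal is correct and follows essentially the same route as the paper's proof: bias--variance decomposition, the bias and variance bounds of Proposition~\ref{prop:when-estimating-pi}, domination of the $\bigl(1-\min_x((1-\tilde\pi(x)^2)p_{\text{\tiny R}}(x))\bigr)^{n/2}$ factor by $\bigl(1-\min_x((1-\tilde\pi(x))p_{\text{\tiny R}}(x))\bigr)^{n/2}$, and the crude bound $\rho^{2n}\le\rho^{n/2}$ to merge the two exponential remainders into the stated coefficient $2(2+3/m)$. The only (immaterial) difference is bookkeeping: you bound the squared bias term-by-term via $(a-b)^2\le 2a^2+2b^2$ and Jensen, whereas the paper first bounds $\left|\mathbb{E}[\hat\tau_{n,m}]-\tau\right|$ by the triangle inequality and squares afterwards.
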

Proof is detailed in Subsection~\ref{proof:thm:ipsw_pi_est}.
For the risk, and for the same arguments than for the bound on the variance, it can be shown that for a reasonable $n$, the bound on the risk of $ \hat \tau_{n, m} $ is tighter than for $ \hat \tau_{\pi, n, m}$.
All the previous results establish theoretical guidance explaining why an estimator also estimating $\pi$ per strata should be preferred in practice, at least for some trial sample size $n$. To our knowledge, we have not found work explicitly stating that estimating $\pi$ in the IPSW should be preferred, even if \cite{dahabreh2020extending} uses a logistic regression to estimate the propensity to receive treatment in the trial. 

\subsection{Extended adjustment set: when using extra covariates}
\label{subsec:extended_adjustement_set}

In this section, we detail the impact of adding covariates that are not necessary for adjustment -- for example being only shifted or only treatment effect modifiers -- on the IPSW performances. 
Indeed, in the literature, one of the natural approach is to adjust on all shifted covariates, also named the \textit{sampling set} \citep{cole2010generalizing, tipton2013improving}. 
Another adjustment set is also possible, being the \textit{heterogeneity set} comprising all the treatment effect modifiers \citep{hartman2021inbook}, even if, knowing which covariate is treatment effect modifier is harder. 
As mentioned in the related work (Subsection~\ref{subsec:related-work}), there is an important literature about optimal adjustment set for precision in the causal inference literature, but to our knowledge the topic has not been tackled yet when it comes to efficiency in generalization. \cite{Egami2021CovariateSelection} discuss extensively the usage of these two sets for identification but do not study their impact on the asymptotic variance.\\

In this section the theoretical results hold for a specific regime, where the target sample is bigger than the trial sample, that is $m \gg n$. In other word, this situation is equivalent as considering the semi-oracle IPSW with estimated $\pi$ (Definition~\ref{def:ipsw-semi-oraclewith-pi}).

\paragraph{Formalization}

Consider that the user has at disposal an external set of baseline categorical covariates denoted $V$. We assume that Assumptions~\ref{a:cate-indep-s-knowing-X} and \ref{a:pos} are preserved when adding $V$ to the adjustment set $X$ previously considered\footnote{Note that if preserving transportability is pretty straitghforward as $V$ is a baseline covariate too (for e.g. no collider bias), the support inclusion's assumption can be more challenging when adding too many covariates (see \cite{Damour2017JournalOfEconometrics} for a discussion).}. As mentioned above, this external covariates set can be of two different natures.


\begin{definition}[$V$ is not a treatment effect modifier]\label{def:V-is-not-treat-effect-modifier} $V$ does not modulate treatment effect modifier, that is 
\begin{equation*} 
\forall v \in \mathds{V},\; \forall s \in \{ T, R\},\quad \mathbb{P}_{\text{s}}(Y^{(1)} - Y^{(0)} \mid X=x, V=v) = \mathbb{P}_{\text{s}}(Y^{(1)} - Y^{(0)} \mid X=x).
\end{equation*}
\end{definition}

\begin{definition}[$V$ is not shifted]\label{def:V-is-not-shifted} $V$ is not shifted, that is
\begin{equation*}
     \forall v \in \mathds{V}, \quad p_{\text{\tiny T}}(v) = p_{\text{\tiny R}}(v).
\end{equation*}
\end{definition}

To distinguish estimator using the set $X$ or the extended set $X,V$, we denote $\hat \tau (X)$ and $\hat \tau (X,V)$ the two estimations strategies. 
One can show that adding only shifted covariates $V$ leads to a loss of precision, when the set $V$ is independent of the set $X$.

\begin{corollary}[Adding shifted and independent covariates]\label{proposition:adding-shifted-covariates}
Consider the semi-oracle IPSW estimator $\hat \tau_{\text{\tiny T},n}^*$ (Definition~\ref{def:ipsw-semi-oraclewith-pi}), and a set of additional shifted covariates $V$ (Definition~\ref{def:V-is-not-treat-effect-modifier}) independent of $X$, which are not treatment effect modifiers. Then,
\begin{equation*}
 \lim_{n\to\infty} n \operatorname{Var}_\text{\tiny R}\left[\hat \tau_{\text{\tiny T},n}^*(X,V) \right] =  \left( \sum_{v \in \mathcal{V}} \frac{p_{\text{\tiny T}}(v)^2}{p_{\text{\tiny R}}(v)} \right)  \lim_{n\to\infty} n \operatorname{Var}_\text{\tiny R}\left[\hat \tau_{\text{\tiny T},n}^*(X)\right].
\end{equation*}
\end{corollary}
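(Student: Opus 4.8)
The plan is to reduce the statement to a comparison of two asymptotic variances and then to factorize one of them. Since $\hat \tau_{\text{\tiny T},n}^*$ (Definition~\ref{def:ipsw-semi-oraclewith-pi}) is the $m\to\infty$ limit of the fully estimated IPSW with $\hat \pi$, its large-sample variance is the $\lambda=\infty$ specialization of Corollary~\ref{cor_asympt_completely_estimated_pi_estimated} (equivalently, letting $m,n\to\infty$ in Proposition~\ref{prop:when-estimating-pi}). This yields $\lim_{n\to\infty} n\operatorname{Var}_{\text{\tiny R}}[\hat \tau_{\text{\tiny T},n}^*(X)] = \mathbb{E}_{\text{\tiny R}}[(p_{\text{\tiny T}}(X)/p_{\text{\tiny R}}(X))^2 V_{\text{\tiny DM},\infty}(X)]$, and the same expression with $X$ replaced by the pair $(X,V)$ for the extended estimator $\hat \tau_{\text{\tiny T},n}^*(X,V)$. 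Hence the entire claim is equivalent to the scalar identity relating these two expectations, and I would state this reduction first so that the rest is a moment computation over the finite category set.

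The core computation proceeds as follows. Because $V$ is independent of $X$ in both populations, the joint probabilities factorize, $p_{\text{\tiny s}}(x,v)=p_{\text{\tiny s}}(x)\,p_{\text{\tiny s}}(v)$ for $s\in\{\text{\tiny T},\text{\tiny R}\}$, so the squared weight splits as $(p_{\text{\tiny T}}(x,v)/p_{\text{\tiny R}}(x,v))^2 = (p_{\text{\tiny T}}(x)/p_{\text{\tiny R}}(x))^2 (p_{\text{\tiny T}}(v)/p_{\text{\tiny R}}(v))^2$. Writing the expectation as a finite double sum over categories $(x,v)$ with cell weight $p_{\text{\tiny R}}(x)p_{\text{\tiny R}}(v)$, the $X$-part and the $V$-part decouple \emph{provided} the per-stratum variance $V_{\text{\tiny DM},\infty}(x,v)$ does not depend on $v$. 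Granting $V_{\text{\tiny DM},\infty}(x,v)=V_{\text{\tiny DM},\infty}(x)$, the double sum factorizes into $\left(\sum_v p_{\text{\tiny R}}(v)\,(p_{\text{\tiny T}}(v)/p_{\text{\tiny R}}(v))^2\right) \mathbb{E}_{\text{\tiny R}}[(p_{\text{\tiny T}}(X)/p_{\text{\tiny R}}(X))^2 V_{\text{\tiny DM},\infty}(X)]$, and since $\sum_v p_{\text{\tiny R}}(v)\,(p_{\text{\tiny T}}(v)/p_{\text{\tiny R}}(v))^2 = \sum_v p_{\text{\tiny T}}(v)^2/p_{\text{\tiny R}}(v)$ this is exactly the claimed multiplicative factor times the minimal-set asymptotic variance.

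The main obstacle is justifying the invariance $V_{\text{\tiny DM},\infty}(x,v)=V_{\text{\tiny DM},\infty}(x)$, which is the one place where the "not a treatment effect modifier" hypothesis must be used carefully. The difficulty is that $V_{\text{\tiny DM},\infty}$ is built from the conditional variances $\operatorname{Var}_{\text{\tiny R}}[Y^{(a)}\mid X,V]$ of \emph{each} potential outcome separately, whereas Definition~\ref{def:V-is-not-treat-effect-modifier} only fixes the conditional law of the difference $Y^{(1)}-Y^{(0)}$. I would therefore make explicit which conditional moments enter $V_{\text{\tiny DM},\infty}(x,v)$ and argue that, in this stratified regime, the non-modifier condition guarantees $\tau(x,v)=\tau(x)$ and that $V$ contributes no additional information on the relevant outcome moments given $X$, so each per-stratum Difference-in-means variance inherits the invariance; this is the step to spell out rather than the algebraic factorization.

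Finally, as a consistency check that matches the announced variance inflation, the multiplier is always at least one: by Cauchy--Schwarz, $1=\left(\sum_v p_{\text{\tiny T}}(v)\right)^2 \le \left(\sum_v p_{\text{\tiny R}}(v)\right)\left(\sum_v p_{\text{\tiny T}}(v)^2/p_{\text{\tiny R}}(v)\right) = \sum_v p_{\text{\tiny T}}(v)^2/p_{\text{\tiny R}}(v)$, with equality if and only if $p_{\text{\tiny T}}(v)=p_{\text{\tiny R}}(v)$ for all $v$, i.e. if and only if $V$ is not shifted (Definition~\ref{def:V-is-not-shifted}). Thus adding an independent, shifted, non-modifier covariate strictly inflates the asymptotic variance, which is the practical takeaway of the corollary.
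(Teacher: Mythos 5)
Your proposal is correct and, at its core, coincides with the paper's own proof (Subsection~\ref{proof_adding-shifted-covariates}): both write the limiting variance of $\hat\tau_{\text{\tiny T},n}^*(X,V)$ as a sum over the product categories, use the factorization $p_{\text{s}}(x,v)=p_{\text{s}}(x)\,p_{\text{s}}(v)$ for $s\in\{\text{T},\text{R}\}$ to split the squared weight, pull out the factor $\sum_{v\in\mathcal{V}} p_{\text{\tiny T}}(v)^2/p_{\text{\tiny R}}(v)$ once the per-stratum variance is known to be constant in $v$, and close with a Jensen/Cauchy--Schwarz argument showing this factor is at least one, with equality exactly when $V$ is not shifted. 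One genuine difference in route is worth noting: you anchor the asymptotic variance in the estimated-$\pi$ results (Proposition~\ref{prop:when-estimating-pi} and Corollary~\ref{cor_asympt_completely_estimated_pi_estimated} specialized to $\lambda=\infty$), so your per-stratum kernel is $V_{\text{\tiny DM},\infty}$; the paper instead invokes Corollary~\ref{cor_asympt_semi_oracle}, whose kernel is $V_{\text{\tiny HT}}$ and which is proved for the oracle-$\pi$ estimator of Definition~\ref{def:ipsw-semi-oracle}, not the $\hat\pi$ estimator of Definition~\ref{def:ipsw-semi-oraclewith-pi} actually named in the corollary's statement. Your choice is the more faithful one to the statement as written; since the factorization argument is agnostic to which kernel is used, both routes yield the same multiplicative factor.

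The step you flag as the real obstacle --- establishing $V_{\text{\tiny DM},\infty}(x,v)=V_{\text{\tiny DM},\infty}(x)$ --- is indeed the crux, and you should know that the paper does not close it either: it simply asserts the analogous identity $V_{\text{\tiny HT}}(x,v)=V_{\text{\tiny HT}}(x)$ as following from Definition~\ref{def:V-is-not-treat-effect-modifier} (its equation~\eqref{eq_proof_add_set_shifted2}). Your caution is warranted, because the justification you sketch (``$V$ contributes no additional information on the relevant outcome moments given $X$'') does not follow from that definition: the definition constrains only the conditional law of the difference $Y^{(1)}-Y^{(0)}$, whereas the kernel involves the conditional second moments (or variances) of each potential outcome separately. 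Concretely, if $Y^{(1)}=Y^{(0)}+\tau(X)$ with $\operatorname{Var}\left[Y^{(0)}\mid X,V\right]$ genuinely depending on $V$, then Definition~\ref{def:V-is-not-treat-effect-modifier} holds but the per-stratum variance varies with $v$, and the clean factorization --- hence the stated identity --- fails. A complete proof therefore needs a strengthened hypothesis, for instance $V \indep \left(Y^{(0)},Y^{(1)}\right)$ conditionally on $X$ in both populations, under which the invariance is immediate; spelling that out, as you propose to do, would repair both your argument and the paper's.
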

Proof is detailed in Subsection~\ref{proof_adding-shifted-covariates}.
This results states that the asymptotic variance of the semi-oracle estimator is always bigger if an additional independent shifted covariate set $V$ is added in the adjustment. Moreover, the stronger the shift, the bigger the variance inflation. 
Note that this specific rule was retrieved in the toy example, where the plain line (corresponding to Corollary~\ref{proposition:adding-shifted-covariates}) matches the empirical dots on Figure~\ref{fig:toy_example_shifted_covariates}.\\

On the contrary, adding an additional treatment effect modifier covariate set leads to a gain in precision. 

\begin{corollary}[Adding non-shifted treatment effect modifiers]\label{proposition:adding-treat-effect-modifier-covariates}

Consider the semi-oracle IPSW estimator $\hat \tau_{\text{\tiny T},n}^*$ (Definition~\ref{def:ipsw-semi-oraclewith-pi}). Consider an additional non-shifted treatment effect modifier set (Definition~\ref{def:V-is-not-shifted})  independent of $X$. Then,
\begin{align*}
\lim_{n\to\infty}  n \operatorname{Var}_\text{\tiny R}\left[\hat \tau_{\text{\tiny T},n}^*(X,V)\right]  \,=\,  \lim_{n\to\infty} n \operatorname{Var}_\text{\tiny R}\left[\hat \tau_{\text{\tiny T},n}^*(X) \right] - 
    \mathbb{E}_\text{\tiny R} \left[ \frac{p_\text{\tiny T}(X)}{p_\text{\tiny R}(X)} \operatorname{Var}\left[ \tau(X,V) \mid X \right] \right].
\end{align*}
\begin{flalign*}
\text{In particular,}&&
\lim_{n\to\infty}  n \operatorname{Var}_\text{\tiny R}\left[\hat \tau_{\text{\tiny T},n}^*(X,V)\right]  \le   \lim_{n\to\infty} n  \operatorname{Var}_\text{\tiny R}\left[ \hat \tau_{\text{\tiny T},n}^*(X) \right].
&&
\end{flalign*}
\end{corollary}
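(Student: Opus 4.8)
The plan is to reduce the whole statement to one application of the law of total variance. Since the corollary concerns the semi-oracle estimator $\hat\tau_{\text{\tiny T},n}^*$ with estimated $\pi$ in the regime $m \gg n$, I would start from the asymptotic variance already available in the excerpt: this estimator is precisely the $\lambda=\infty$ specialization of Corollary~\ref{cor_asympt_completely_estimated_pi_estimated}, so that $\lim_{n\to\infty} n\operatorname{Var}_\text{\tiny R}[\hat\tau_{\text{\tiny T},n}^*] = \mathbb{E}_\text{\tiny R}[(p_\text{\tiny T}(X)/p_\text{\tiny R}(X))^2\, V_{\text{\tiny DM},\infty}(X)]$. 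Applying this formula once with the adjustment set $X$ and once with the enlarged set $(X,V)$ turns the corollary into a comparison of two such expectations, one built on the $X$-strata and one on the finer $(X,V)$-strata.

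The first step is to check that enlarging the adjustment set leaves the weights unchanged. Because $V$ is not shifted (Definition~\ref{def:V-is-not-shifted}) we have $p_\text{\tiny T}(v)=p_\text{\tiny R}(v)$, and because $V$ is independent of $X$ the joint probabilities factorize, $p_s(x,v)=p_s(x)\,p_s(v)$ for $s\in\{\text{\tiny T},\text{\tiny R}\}$. Hence the probability ratio on the finer strata collapses, $p_\text{\tiny T}(X,V)/p_\text{\tiny R}(X,V)=p_\text{\tiny T}(X)/p_\text{\tiny R}(X)$, and moreover the conditional law of $V$ given $X$ is common to both populations (and equal to the marginal law of $V$). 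The only difference between the two asymptotic variances is therefore that $V_{\text{\tiny DM},\infty}$ is evaluated on $(X,V)$ rather than on $X$, carrying the same weight $p_\text{\tiny T}(X)/p_\text{\tiny R}(X)$ in front; after a tower step over $V\mid X$ the difference becomes $\mathbb{E}_\text{\tiny R}[(p_\text{\tiny T}(X)/p_\text{\tiny R}(X))^2(V_{\text{\tiny DM},\infty}(X)-\mathbb{E}[V_{\text{\tiny DM},\infty}(X,V)\mid X])]$.

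The core of the argument is to evaluate $V_{\text{\tiny DM},\infty}(X)-\mathbb{E}[V_{\text{\tiny DM},\infty}(X,V)\mid X]$ by the law of total variance. Writing $\mu_a(\cdot)=\mathbb{E}_\text{\tiny R}[Y^{(a)}\mid\cdot]$ and conditioning first on $(X,V)$, each conditional outcome variance $\operatorname{Var}_\text{\tiny R}[Y^{(a)}\mid X]$ splits into an average within-$(X,V)$ variance plus the across-$V$ variation $\operatorname{Var}[\mu_a(X,V)\mid X]$; the within-$(X,V)$ pieces cancel in the difference. Using that $\tau(X)=\mathbb{E}[\tau(X,V)\mid X]$ (the law of $V\mid X$ being shared), the residual reassembles into the conditional variance of the CATE, producing the reduction term $\mathbb{E}_\text{\tiny R}[(p_\text{\tiny T}(X)/p_\text{\tiny R}(X))\operatorname{Var}[\tau(X,V)\mid X]]$ and hence the stated equality. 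Since a variance is nonnegative, the subtracted term is $\geq 0$, which yields the inequality; it is strictly positive exactly when $V$ is a genuine treatment effect modifier, i.e. when $\tau(X,V)$ really depends on $V$.

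I expect the main obstacle to be the bookkeeping that ties the refined conditional outcome variances back to $\operatorname{Var}[\tau(X,V)\mid X]$: one must carry the $1/\pi$ and $1/(1-\pi)$ factors attached to $Y^{(1)}$ and $Y^{(0)}$ through the total-variance decomposition and verify that, under the independence and non-shift assumptions, the pieces recombine into the variance of the \emph{conditional treatment effect} rather than into the variances of the two response surfaces separately (this is cleanest when the second-moment terms of $V_{\text{\tiny HT}}$ are seen to be preserved under refinement, so that only the $-\tau(\cdot)^2$ term changes). A secondary point worth flagging is that these cancellations are special to an independent, non-shifted $V$: if $V$ were shifted or correlated with $X$, the ratio $p_\text{\tiny T}(X,V)/p_\text{\tiny R}(X,V)$ would no longer reduce to $p_\text{\tiny T}(X)/p_\text{\tiny R}(X)$, the two asymptotic variances would not differ by a single clean term, and the monotonicity could fail --- which is exactly the contrast with the shifted case of Corollary~\ref{proposition:adding-shifted-covariates}.
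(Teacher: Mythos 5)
Your skeleton is the same as the paper's (large-sample variance formula, weight collapse $p_{\text{\tiny T}}(x,v)/p_{\text{\tiny R}}(x,v)=p_{\text{\tiny T}}(x)/p_{\text{\tiny R}}(x)$ from independence plus non-shift, then a per-stratum law of total variance), but the core step fails as you have set it up. You start from the $\lambda=\infty$ variance of Corollary~\ref{cor_asympt_completely_estimated_pi_estimated}, which is built on $V_{\text{\tiny DM},\infty}$, and you decompose each arm separately. Writing $\mu_a(x,v)=\mathbb{E}_{\text{\tiny R}}[Y^{(a)}\mid X=x,V=v]$, the arm-wise law of total variance indeed cancels the within-$(X,V)$ pieces and leaves
\begin{align*}
V_{\text{\tiny DM},\infty}(x)-\mathbb{E}\left[V_{\text{\tiny DM},\infty}(x,V)\right]
=\frac{\operatorname{Var}\left[\mu_1(x,V)\right]}{\pi}+\frac{\operatorname{Var}\left[\mu_0(x,V)\right]}{1-\pi},
\end{align*}
but this residual does \emph{not} reassemble into $\operatorname{Var}[\tau(x,V)]$: one has the identity
\begin{align*}
\frac{\operatorname{Var}\left[\mu_1(x,V)\right]}{\pi}+\frac{\operatorname{Var}\left[\mu_0(x,V)\right]}{1-\pi}
=\operatorname{Var}\left[\tau(x,V)\right]
+\operatorname{Var}\left[\sqrt{\tfrac{1-\pi}{\pi}}\,\mu_1(x,V)+\sqrt{\tfrac{\pi}{1-\pi}}\,\mu_0(x,V)\right],
\end{align*}
and the extra term is strictly positive whenever the weighted sum of the two response surfaces genuinely varies with $V$ (e.g.\ $\pi=1/2$, $\mu_1=2f(V)$, $\mu_0=f(V)$ gives $10\operatorname{Var}[f]$ versus $\operatorname{Var}[f]$). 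The obstruction is the covariance $\operatorname{Cov}[\mu_1(x,V),\mu_0(x,V)]$, which enters $\operatorname{Var}[\tau(x,V)]$ with a minus sign but never appears in the arm-wise decomposition. So your main argument proves the inequality (the variance drops by at least the stated amount) but not the stated equality.

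The fix is precisely what your parenthetical hedge gestures at, and it is what the paper actually does: apply the law of total variance \emph{once} to the full Horvitz--Thompson contrast $Z=\frac{AY}{\pi}-\frac{(1-A)Y}{1-\pi}$, conditioning on $X$ versus $(X,V)$. Since $\mathbb{E}_{\text{\tiny R}}[Z\mid X,V]=\tau(X,V)$ and $\operatorname{Var}_{\text{\tiny R}}[Z\mid X,V]=V_{\text{\tiny HT}}(X,V)$, this gives in one line
\begin{align*}
V_{\text{\tiny HT}}(x)=\mathbb{E}\left[V_{\text{\tiny HT}}(x,V)\right]+\operatorname{Var}\left[\tau(x,V)\right],
\end{align*}
equivalently: the second-moment terms of $V_{\text{\tiny HT}}$ are preserved under refinement and only the $-\tau(\cdot)^2$ term changes. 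The argument must therefore be run with the oracle-$\pi$ asymptotic variance $V_{so}=\mathbb{E}_{\text{\tiny R}}[(p_{\text{\tiny T}}(X)/p_{\text{\tiny R}}(X))^2V_{\text{\tiny HT}}(X)]$ of Corollary~\ref{cor_asympt_semi_oracle}, not with the $V_{\text{\tiny DM},\infty}$-based variance; with the latter the claimed identity is simply false, so promoting your side remark to the main argument is not a cosmetic choice but the whole proof. A secondary point to tighten: after the weight collapse, the reduction term carries the weight $(p_{\text{\tiny T}}(X)/p_{\text{\tiny R}}(X))^2$ under $\mathbb{E}_{\text{\tiny R}}$ (equivalently $p_{\text{\tiny T}}(X)/p_{\text{\tiny R}}(X)$ under $\mathbb{E}_{\text{\tiny T}}$), so your intermediate display and your concluding sentence are not consistent with each other on this power.
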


Proof is detailed in Subsection~\ref{proof_adding-treat-effect-modifier-covariates}.
This result follows a similar spirit as \cite{Rotnitzky2020Efficient} due to the comparison of two limiting variances, even though the context and the theoretical tools are different.

\section{Synthetic and semi-synthetic simulations}\label{section:semi-synthetic-simulations}

In this section, one additional analysis based on the toy example is provided to illustrate the different asymptotic regimes from Section~\ref{section:theoretical-results}. In addition, results are also illustrated on a semi-synthetic simulation aiming to mimic a medical scenario. The code to reproduce the simulations and the different figures is available on Github\footnote{\texttt{BenedicteColnet/IPSW-categorical}.}.

\subsection{Synthetic: additional experiment from the toy example}

While most of the results are illustrated at the beginning of the article through the toy example, here we more thoroughly investigate empirically the different asymptotic regimes of the IPSW and its variants. 
In particular we complete Figure~\ref{fig:toy_example_2_asympt} that highlights the phenomenon of different asymptotic regimes, with a complete visualization of risks and variances allowing to more precisely illustrate the theoretical results, and in particular Corollary~\ref{cor_asympt_completely_estimated}. More precisely, the quadratic risk is depicted in Figure~\ref{fig:toy.example.risks}, while the variance via $min(n,m) \operatorname{Var}\left[\hat \tau_{n,m} \right]$ is displayed in Figure~\ref{fig:toy.example.regimes}. In both figures, different estimators (oracle or not) are considered with different regimes for $m$, as $n$ grows to infinity.  
In particular, this simulation confirms that 
\begin{itemize}
    \item[\textit{(i)}] all IPSW variants are consistent, even though their convergence speeds depend on the regime (Figure~\ref{fig:toy.example.risks}),
    
    \item[\textit{(ii)}]  the completely oracle IPSW has a bigger variance than the semi-oracle IPSW (Figure~\ref{fig:toy.example.regimes}),
    
    \item[\textit{(iii)}] the limiting variance depends on the asymptotic regime (Figure~\ref{fig:toy.example.regimes}),
    
    \item[\textit{(iv)}] the completely estimated IPSW reaches the variance of the semi-oracle one if the target population sample is bigger than the trial (Figure~\ref{fig:toy.example.regimes}). 
\end{itemize}   
\begin{figure}[!h]
        \centering
        \begin{subfigure}[b]{0.48\textwidth}
            \centering

            \includegraphics[width=0.98\textwidth]{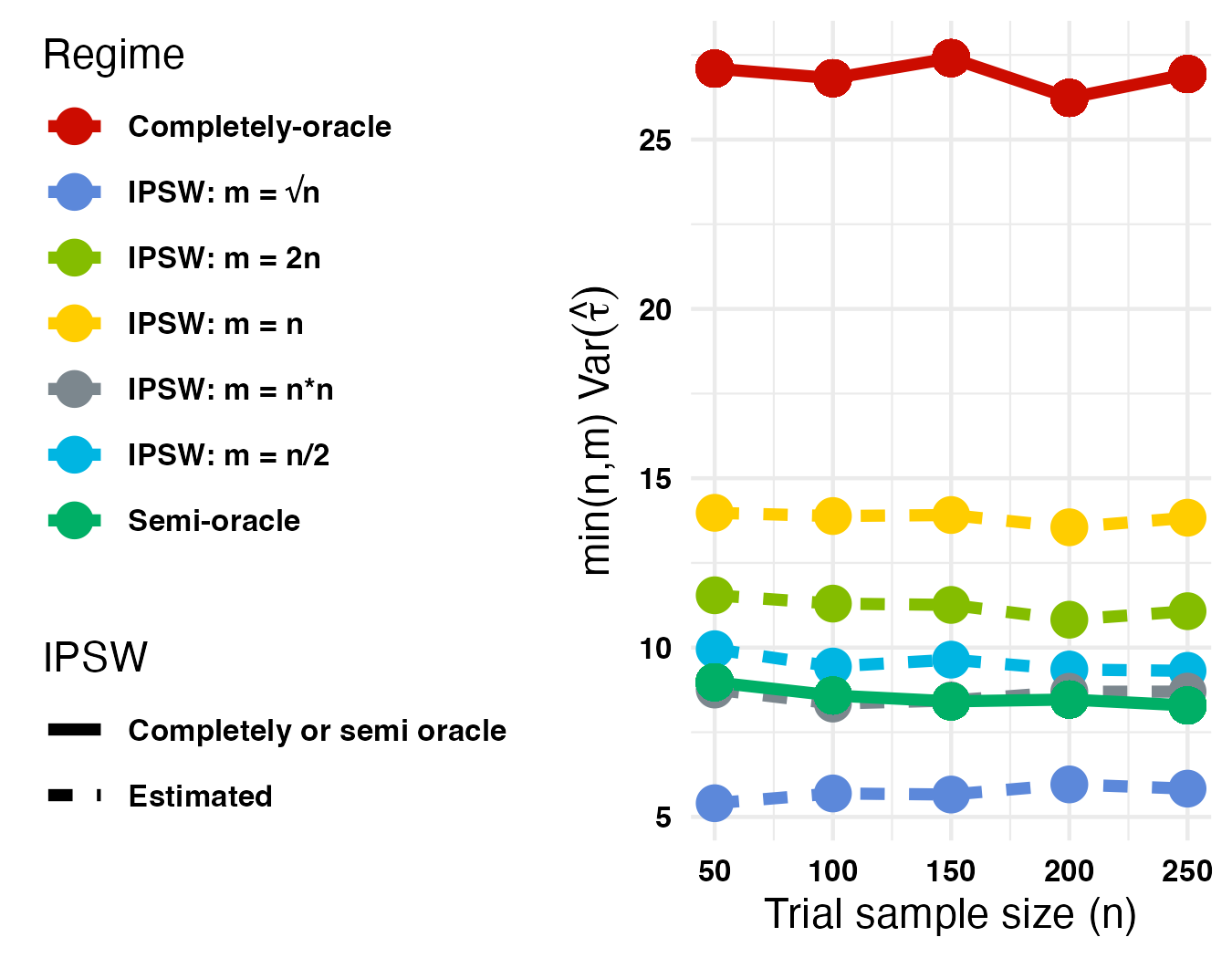}
            \caption[]%
           {{\small \textbf{Asymptotic variance}}}    
           \label{fig:toy.example.regimes}
           
        \end{subfigure}
        \hspace{0.1cm}
        \begin{subfigure}[b]{0.48\textwidth}  
            \centering 
        \includegraphics[width=0.98\textwidth]{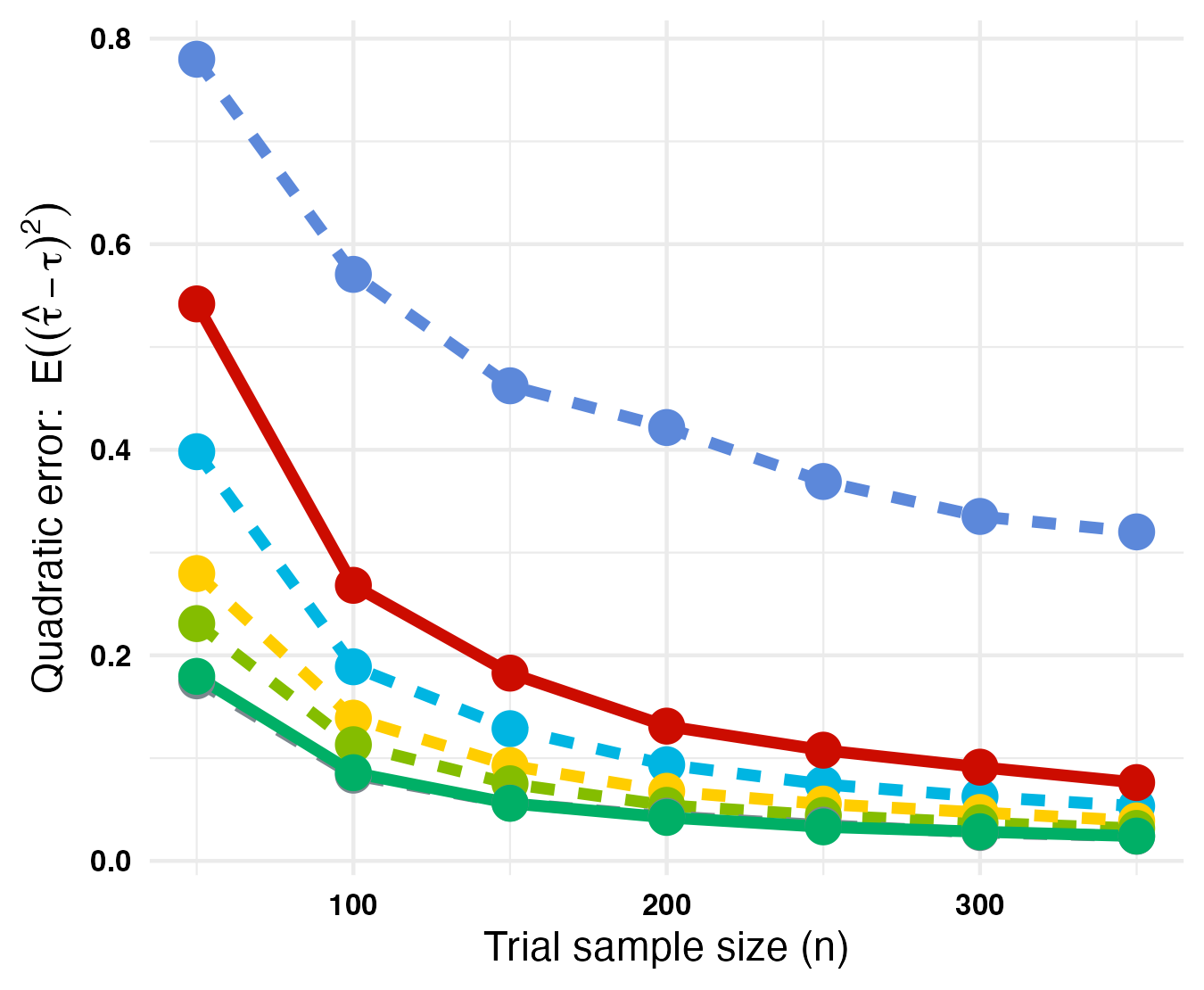}
            \caption[Network2]%
            {{\small \textbf{Quadratic-error or Risk}}}    
             \label{fig:toy.example.risks}
        \end{subfigure}
        \caption{\small \textbf{Risks and different asymptotic regimes}: Based on the toy example simulation (see Section~\ref{sec:toy-example-generalization} and data-generative process from Figure~\ref{fig:toy_example_response_level}) where empirical variance from either the completely oracle (Definition~\ref{def:ipsw-oracle}), the semi-oracle (Definition~\ref{def:ipsw-semi-oracle}) or the estimated IPSW (Definition~\ref{def:ipsw}) are estimated repeating $6,000$ times each simulation for each trial sample size ($x$-axis). Simulations cover different regimes of size $n$ and $m$. On the $y$-axis the empirical variance $min(n,m) \operatorname{Var}\left[\hat \tau_{n,m} \right]$ is plotted (with the exception of $min(n,m)=n$ for completely- and semi- oracle variants, represented in plain lines). Each color represents one specific estimator and regime.}
        \label{fig:stoy.example.risks.regimes}
\end{figure}
\subsection{Semi-synthetic}

In the semi-synthetic simulation, the data are taken from an application in critical care medicine, and only the outcome generative model is simulated, such that the covariate distribution and in particular the distribution shift between populations is inherited from a real situation.

\subsubsection{Design}
Two data-sets are used to generate two sources:
\begin{enumerate}
    \item A randomized controlled trial (RCT), called CRASH-3 \citep{crash3protocol}, aiming to measure the effect of Tranexamic Acide (TXA) to prevent death from Traumatic Brain Injury (TBI). A total of 175 hospitals in 29 different countries participated to the RCT, where adults with TBI suffering from intracranial bleeding were randomly administrated TXA \citep{crash32019}. The inclusion criteria of the trial are patients with a Glasgow Coma Scale (GCS)\footnote{The Glasgow Coma Scale (GCS) is a neurological scale which aims to assess a person's consciousness. The lower the score, the higher the gravity of the trauma.} score of 12 or lower or any intracranial bleeding on CT scan, and no major extracranial bleeding.
    \item An observational cohort, called Traumabase, comprising 23 French Trauma centers, collects detailed clinical data from the scene of the accident to the release from the hospital. The resulting database, called the Traumabase, comprises 23,000 trauma admissions to date, and is continually updated, representing a fair, almost-exhaustive data base about actual individuals taken in charge in France and suffering from trauma.
\end{enumerate}

These two data sources are turned into two source populations representing a real-world situation with six covariates so that the distribution structure and, in particular, the distributional shift mimics a real-world situation. The six covariates kept in common are:  GCS (categorical), gender (categorical), pupil reactivity (categorical), age (continuous), systolic blood pressure (continuous), and time-to-treatment (TTT) (continuous). The continuous covariates are then turned into categories. Additional details about data preparation are available in Appendix (see Section~\ref{appendix:additional-info-semi-synthetic-simulation}). In this semi-synthetic simulation, only the outcome model is completely synthetic, and follows 

\begin{equation}\label{equation:semi-synthetic-outcome-model}
  \small  Y := f(\texttt{GCS},\texttt{Gender})  + A\, \tau(\texttt{TTT}, \texttt{Blood Pressure}) + \epsilon_{\texttt{TTT}},
\end{equation}

where $f$ and $\tau$ are two functions of the covariates, and $\epsilon_{\texttt{TTT}}$ is a gaussian noise such that $\mathbb{E}[\epsilon_{\texttt{TTT}} \mid X] = 0$, but where heteroscedasticity is observed along the covariate $\texttt{TTT}$. The higher the time-to-treatment, the higher $\operatorname{Var}\left[ \epsilon_{\texttt{TTT}} \mid \text{TTT} \right]$, and so the noise on $Y$ (see Section~\ref{appendix:additional-info-semi-synthetic-simulation} for the detailed generated function).
This outcome model is such that only time-to-treatment (TTT) and blood pressure are effect modifiers, while other covariates only affects the baseline value or have no impact. 
Each time a simulation is conducted observations are sampled from the two populations with replacement, and the outcome is created following equation~\eqref{equation:semi-synthetic-outcome-model}. The trial is such that $\pi=0.5$.

\subsubsection{Results}

\paragraph{Minimal adjustment set is sufficient to generalize}
The minimal adjustment set to generalize the trial results is constituted of the time-to-treatment(\texttt{TTT}) and the systolic blood pressure (\texttt{blood}).
Using only these two covariates, the simulations illustrate how the re-weighting procedure allows to correct for the population shift between the trial and the target population as presented on Figure~\ref{fig:semi_synth_pi_hat_or_not} ($1,000$ repetitions). 
\begin{figure}[tbh!]
 \begin{minipage}{.28\linewidth}
  \caption{\textbf{IPSW estimating $\pi$ or not}: Simulations with $n=500$, $m=10\,000$ where the IPSW estimator from Definitions~\ref{def:ipsw} and \ref{def:ipsw-with-pi} are compared to the estimates of the non-reweighted trials (Definitions~\ref{def:HT} and \ref{def:difference-in-means}) showing that the IPSW allow to recover the true ATE on the target population represented by the red dashed line (illustrating consistency from Theorem~\ref{thm:ipsw}). Estimating $\pi$ leads to a lower variance as expected (Corollary~\ref{cor_asympt_completely_estimated_pi_estimated}).
     \label{fig:semi_synth_pi_hat_or_not}
    }%
    \end{minipage}%
    \hfill%
    \begin{minipage}{.7\linewidth}
    \begin{center}
           \includegraphics[width=0.9\textwidth]{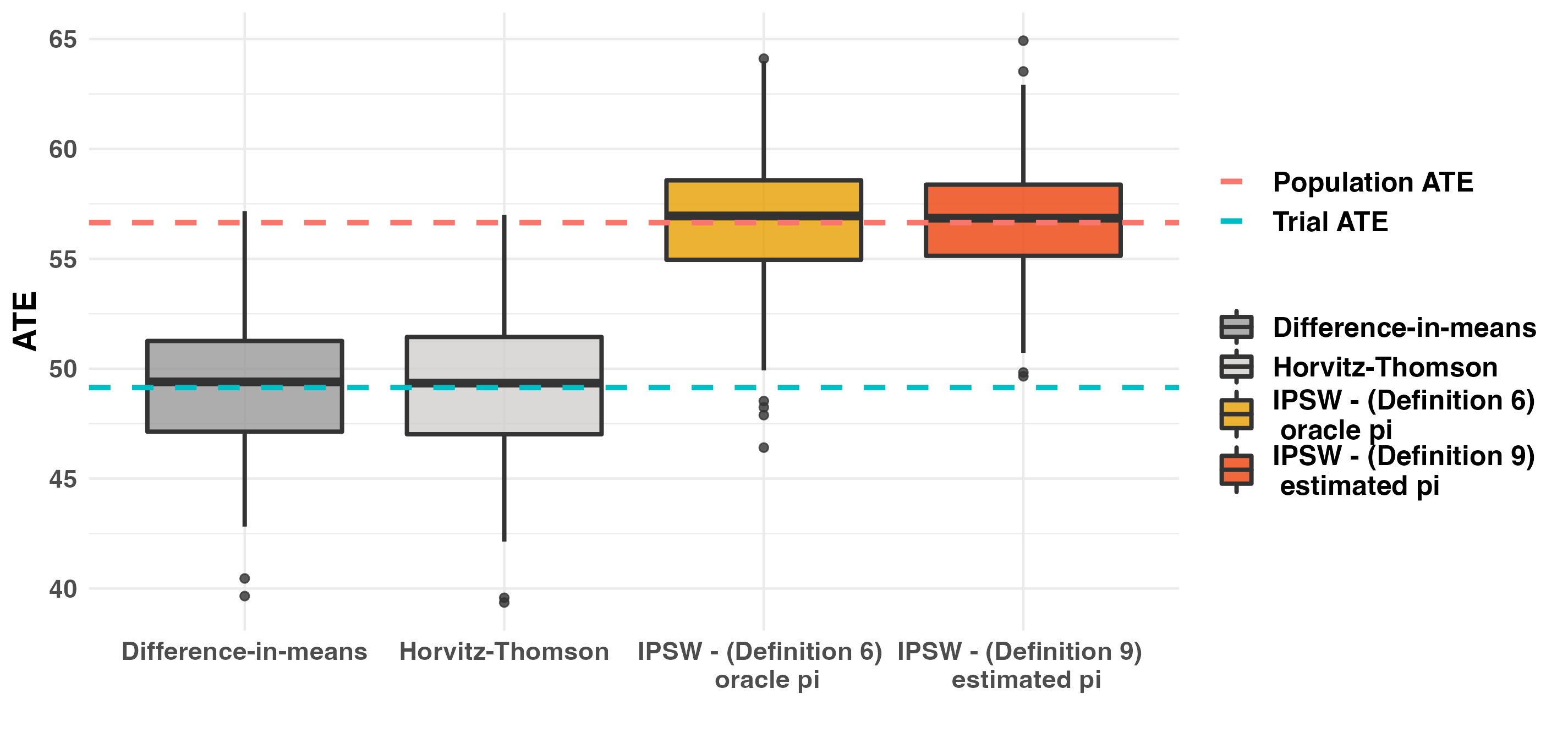}
     \end{center}
    \end{minipage}
\end{figure}
\paragraph{Estimating $\pi$ lowers the variance}
Simulations also illustrate the fact that estimating $\pi$ (Definition~\ref{def:ipsw-with-pi}) compared to not estimating it (Definition~\ref{def:ipsw}) lowers the variance, as shown on Figure~\ref{fig:semi_synth_pi_hat_or_not}. 
This is expected from Corollary~\ref{cor_asympt_completely_estimated_pi_estimated}.
\paragraph{The generalized (or re-weighted) estimate is not necessarily noisier than the trial's estimate}
Note that the variance of the IPSW - with estimation of $\pi$ or not - has a similar variance as the estimates coming from the RCT only (Horvitz-Thomson or difference-in-means). This is due to the presence of heteroscedasticity in the generative model (see equation~\eqref{equation:semi-synthetic-outcome-model}). 
Indeed, we would like to emphasize that re-weighting the trial does not necessarily lead to wider confidence intervals.
This somehow challenges a common and intuitive idea present in the literature and stating that a re-weighted trial always has a larger variance than the trial itself \citep{Stuart2017ChapterBook, Ling2022CriticalReview}.
This intuition comes from the multiplication of weights that can take large values (in particular if, for some $x$, $p_{\text{\tiny R}}(x) \ll p_{\text{\tiny T}}(x)$), making this idea valid as soon as the outcome noise is homoscedastic.
However, the asymptotic variance of the semi-oracle IPSW from Corollary~\ref{cor_asympt_semi_oracle} highlights that this intuitive and reasonable idea is not necessarily true, as soon as there is heteroscedascity, which occurs if some categories for which potential outcomes have higher uncertainty (larger noise) are more represented in the trial than in the target population: 

\begin{equation*}
V_{so} =  \sum_{x\in \mathcal{X}} \tikzmarknode{we}{\highlight{Purple}{ \color{black}  $ \frac{p_{\text{\tiny T}}^2(x) }{p_{\text{\tiny R}}(x) }$ }} \, \left( \tikzmarknode{amp}{\highlight{Bittersweet}{\color{black}  $\frac{\operatorname{Var}\left[ Y^{(1)} \mid X=x \right]}{\pi} + \frac{\operatorname{Var}\left[ Y^{(0)} \mid X=x \right]}{1-\pi}$}} \right)
\end{equation*}
\vspace*{0.5\baselineskip}
\begin{tikzpicture}[overlay,remember picture,>=stealth,nodes={align=left,inner ysep=1pt},<-]
\path (we.south) ++ (-0.65,+2.5em) node[anchor=south west,color=Purple!85] (sotext){\textsf{\footnotesize Weights}};
\path (amp.north) ++ (-2.7,-2.5em) node[anchor=north west,color=Bittersweet] (sotext){\textsf{\footnotesize Can be small for some $x$ with high weights $\frac{p_{\text{\tiny T}}^2(x) }{p_{\text{\tiny R}}(x) }$}};

\end{tikzpicture}

In particular in this simulation, having a variance of the IPSW estimate smaller than that of the treatment effect estimator on the trial is possible because individuals treated earlier have less uncertainty in the response than individuals with high $\texttt{TTT}$ (encoded in $\epsilon_{\texttt{TTT}}$), and the simulation is made such that in the target population such individuals are more present than in the trial. 

\paragraph{Shifted and not treatment effect modifier covariate increases variance: the example of Glasgow score (\texttt{GCS})}
It is possible to illustrate the results from Section~\ref{subsec:extended_adjustement_set} with the semi-synthetic simulation. 
For example, the Glasgow score (\texttt{GCS}) can be added to the minimal adjustment set previously used (see Figure~\ref{fig:semi_synth_pi_hat_or_not}), and leads to a loss of precision as this covariate is relatively strongly shifted between the two data sets and is not a treatment effect modifier (even if in the simulation this covariate has an impact on the outcome).
The increase in variance can be observed on Figure~\ref{fig:semi_synth_add_covariates}, where the green boxplot on the left represents such situation.

\begin{figure}[tbh!]
 \begin{minipage}{.33\linewidth}
  \caption{\textbf{Effect of non-necessary covariates on the variance}: IPSW (Definition~\ref{def:ipsw-with-pi}) with $n=3\,000$ and $m=10\,000$ showing that the addition of the covariate \texttt{GCS} (\textcolor{LimeGreen}{\textbf{shifted covariate not being a treatment effect modifier}}) increases the variance of the IPSW, while the addition of a \textcolor{Orchid}{\textbf{non-shifted treatment effect modifier}} (here simulated as no covariates from the actual data base where not shited) leads to an improvement in variance, compared to the \textcolor{YellowOrange}{\textbf{minimal set}}. Simulations are repeated $1,000$ times.}
     \label{fig:semi_synth_add_covariates}
    \end{minipage}%
    \hfill%
    \begin{minipage}{.6\linewidth}
    \begin{center}
           \includegraphics[width=0.9\textwidth]{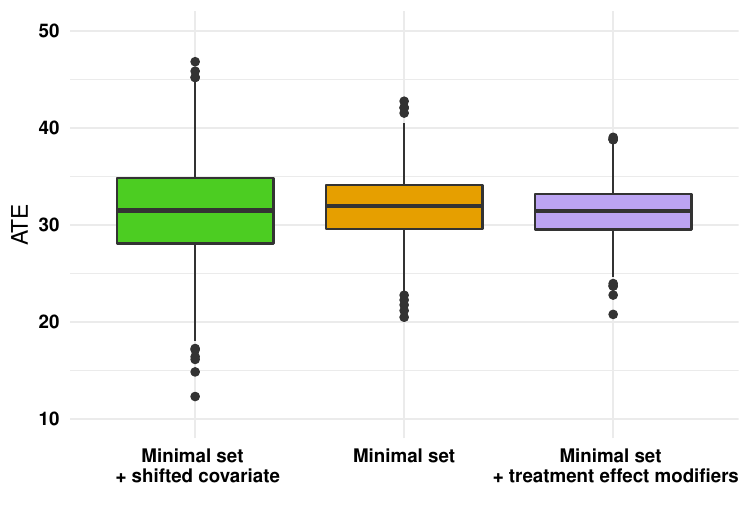}
     \end{center}
    \end{minipage}
\end{figure}

\paragraph{While a non-shifted but treatment effect modifier lowers the variance}
To illustrate a gain in precision due to the addition of a non-shifted treatment effect modifier, it was not possible to use the \textit{natural} covariates from the two original data sets as a distributional shift was always present in all covariates. To model such a situation, we added a categorical covariate \texttt{X\_sup} (5 levels), independent with all other covariates and without shift, in the data generative model to represent such a situation:
\begin{equation}\label{equation:semi-synthetic-outcome-model-transformed}
  \small  Y := f(\texttt{GCS},\texttt{Gender})  + A\, \tau(\texttt{TTT}, \texttt{Blood Pressure}, \texttt{X\_sup}) + \epsilon_{\texttt{TTT}}.
\end{equation}

Doing so, it is possible to illustrate that adding \texttt{X\_sup} in the adjustment set allows to lower the variance, and Figure~\ref{fig:semi_synth_add_covariates} presents such situation with the purple boxplot on the right.



\section{Conclusion and future work}\label{sec:conclusion-discussion}


In this work, we establish finite-sample and asymptotic results on different versions of the so-called \textit{Inverse Propensity Sampling Weights} estimator, when the adjustment set is constituted of categorical covariates. We give the explicit expressions of the biases and variances for all estimates, together with their quadratic risk. Our detailed analysis allows us to compare these different estimates in several finite-sample regimes. 
Indeed, to the best of our knowledge, our work is the first to study the impact of finite trial and observational data sets on IPSW performance in the context of generalization, by providing rate of convergence for several IPSW estimates.
By doing so, we link our work with previous results in epidemiology where one data source was considered infinite, and also explain how certain observations can be seen through the eyes of seminal work in causal inference (efficient estimation with IPW).

\paragraph{Which covariate to include?}
This work also reveals that care should be taken when selecting the covariates to generalize. 
From applied literature, we have noticed that practitioners usually select almost all available covariates to build the weights, which is encouraged by the fear of missing an important shifted treatment effect modifier.
We show that inclusion of many covariates comes with the risk of adjusting on shifted covariates that are not treatment effect modifiers, which can drastically damage the precision.
On the contrary, even though adding some non-shifted covariates may sound counterintuitive, we show that such practice improves asymptotic precision, as soon as the non-shifted additional covariate set modulates treatment effect. 
Still, adding too many covariates endangers overlap and therefore can lead to finite sample bias. 
In light of these theoretical results, we believe that physicians and epidemiologists have an important role to play in selecting a limited number of covariates when generalizing trial's findings. 

\paragraph{Future work}

Studying only categorical covariates is probably the main restriction of this work, as data can be hybrid and composed of continuous and categorical information.
However, even when facing a hybrid set of covariates - continuous and categorical - the user can still create bins for continuous covariates.
Even if such data-processing is not necessarily recommended, for a limited number of covariates this should allow to extend the analysis.
Indeed, binning covariates leads to within-stratum confounding, that is residual confounding due to rough bins, and therefore to an asymptotic bias due to factors that are poorly controlled on. 
To avoid within-stratum residual confounding, it is desirable to create more bins and split the data into more strata, but stratifying too finely with a finite sample may lead to \textit{(i)} a variance inflation and \textit{(ii)} the support inclusion assumption's invalidity.
Indeed, the performances of the IPSW in a high-dimensional setting can be limited. For example, if all input variables are binary, the finite-sample bias and variance can be rewritten as a function of $n/2^d$ (where $d$ is the number of input variables) and can thus spin out of control if the sample sizes are too small compared to the dimension of the problem.
Future work should investigate how our conclusion on the different asymptotic regimes and the covariates selection's impact on variance can be extended to settings with mixed-type covariates (for e.g. a smoother version of IPSW with density ratio estimation). \\

In practice, the limitation due to categorical covariates is balanced by the fact that within the medical field, clinical indicators and covariates are often scores and categories. 
For example, \cite{Berkowitz2018GeneralizingBlood} apply the IPSW to generalize the effect of blood pressure control relying on many categorical covariates such as health insurance status (insured, uninsured), tobacco smoking status (never, current, former), and so on.
When facing continuous covariates in practice, and having in mind the current theoretical understanding of the different generalization estimators, this IPSW version has interests.
A solution would be found at the crossroads between identification bias (due to imprecise bins) and variance inflation or finite sample bias (due to numerous bins). 
Quantifying such a tradeoff in specific settings would definitely help the practitioners by providing clear guidelines.

\section*{Acknowledgment} 
Part of this work was done while Bénédicte Colnet was visiting the Simons Institute for the Theory of Computing.
Also, part of this work was done during a visiting scholar period in the Stanford's Statistics Department. 
Therefore, we would like to thank the department for their welcoming and all the helpful and inspiring discussions, in particular with Prof. Trevor \textsc{Hastie}. Besides, we would like to thank the acting editor and the two referees for their work and their numerous comments, which helped us to improve a lot the quality of the manuscript.

\section*{Funding and conflict of interest}
Authors are all funded by their respective employer (\textsc{Inria} or École polytechnique) and have declared no conflict of interest.

\bibliographystyle{chicago}
\bibliography{biblio}

\begin{thebibliography}{}

\bibitem[\protect\citeauthoryear{Arnould, Boyer, and Scornet}{Arnould
  et~al.}{2021}]{Arnould2021Analyzing}
Arnould, L., C.~Boyer, and E.~Scornet (2021, 18--24 Jul).
\newblock Analyzing the tree-layer structure of deep forests.
\newblock In M.~Meila and T.~Zhang (Eds.), {\em Proceedings of the 38th
  International Conference on Machine Learning}, Volume 139 of {\em Proceedings
  of Machine Learning Research}, pp.\  342--350. PMLR.

\bibitem[\protect\citeauthoryear{Athey, Chetty, and Imbens}{Athey
  et~al.}{2020}]{athey2020combining}
Athey, S., R.~Chetty, and G.~Imbens (2020).
\newblock Combining experimental and observational data to estimate treatment
  effects on long term outcomes.
\newblock {\em arXiv preprint arXiv:2006.09676\/}.

\bibitem[\protect\citeauthoryear{Bareinboim and Pearl}{Bareinboim and
  Pearl}{2012a}]{Bareinboim2012Controlling}
Bareinboim, E. and J.~Pearl (2012a).
\newblock Controlling selection bias in causal inference.
\newblock In N.~D. Lawrence and M.~Girolami (Eds.), {\em Proceedings of the
  Fifteenth International Conference on Artificial Intelligence and
  Statistics}, Volume~22 of {\em Proceedings of Machine Learning Research}, La
  Palma, Canary Islands, pp.\  100--108. PMLR.

\bibitem[\protect\citeauthoryear{Bareinboim and Pearl}{Bareinboim and
  Pearl}{2012b}]{bareinboim2012completeness}
Bareinboim, E. and J.~Pearl (2012b).
\newblock Transportability of causal effects: Completeness results.
\newblock In {\em Proceedings of the Twenty-Sixth AAAI Conference on Artificial
  Intelligence}, AAAI'12, pp.\  698–704. AAAI Press.

\bibitem[\protect\citeauthoryear{Berkowitz, Sussman, Jonas, and Basu}{Berkowitz
  et~al.}{2018}]{Berkowitz2018GeneralizingBlood}
Berkowitz, S.~A., J.~B. Sussman, D.~E. Jonas, and S.~Basu (2018).
\newblock Generalizing intensive blood pressure treatment to adults with
  diabetes mellitus.
\newblock {\em Journal of the American College of Cardiology\/}~{\em 72\/}(11),
  1214--1223.
\newblock SPECIAL FOCUS ISSUE: BLOOD PRESSURE.

\bibitem[\protect\citeauthoryear{Biau}{Biau}{2012}]{Biau2012AnalysisRandomForests}
Biau, G. (2012).
\newblock Analysis of a random forests model.
\newblock {\em Journal of Machine Learning Research\/}~{\em 13\/}(38),
  1063--1095.

\bibitem[\protect\citeauthoryear{Brookhart, Schneeweiss, Rothman, Glynn, Avorn,
  and St{\"u}rmer}{Brookhart et~al.}{2006}]{brookhart2006variable}
Brookhart, M.~A., S.~Schneeweiss, K.~J. Rothman, R.~J. Glynn, J.~Avorn, and
  T.~St{\"u}rmer (2006).
\newblock Variable selection for propensity score models.
\newblock {\em American journal of epidemiology\/}~{\em 163\/}(12), 1149--1156.

\bibitem[\protect\citeauthoryear{Buchanan, Hudgens, Cole, Mollan, Sax, Daar,
  Adimora, Eron, and Mugavero}{Buchanan
  et~al.}{2018}]{buchanan2018generalizing}
Buchanan, A.~L., M.~G. Hudgens, S.~R. Cole, K.~R. Mollan, P.~E. Sax, E.~S.
  Daar, A.~A. Adimora, J.~J. Eron, and M.~J. Mugavero (2018).
\newblock Generalizing evidence from randomized trials using inverse
  probability of sampling weights.
\newblock {\em Journal of the Royal Statistical Society: Series A (Statistics
  in Society)\/}~{\em 181}, 1193--1209.

\bibitem[\protect\citeauthoryear{Chattopadhyay, Cohn, and
  Zubizarreta}{Chattopadhyay et~al.}{2022}]{Chattopadhyay2022OneStep}
Chattopadhyay, A., E.~R. Cohn, and J.~R. Zubizarreta (2022).
\newblock One-step weighting to generalize and transport treatment effect
  estimates to a target population.

\bibitem[\protect\citeauthoryear{Cinelli, Forney, and Pearl}{Cinelli
  et~al.}{2022}]{Cinelli2020CrashCourse}
Cinelli, C., A.~Forney, and J.~Pearl (2022).
\newblock A crash course in good and bad controls.
\newblock {\em Sociological Methods \& Research\/}.

\bibitem[\protect\citeauthoryear{Cole and Stuart}{Cole and
  Stuart}{2010}]{cole2010generalizing}
Cole, S.~R. and E.~A. Stuart (2010).
\newblock Generalizing evidence from randomized clinical trials to target
  populations: The actg 320 trial.
\newblock {\em American Journal of Epidemiology\/}~{\em 172}, 107--115.

\bibitem[\protect\citeauthoryear{Colnet, Josse, Scornet, and Varoquaux}{Colnet
  et~al.}{2021}]{Colnet2021Sensitivity}
Colnet, B., J.~Josse, E.~Scornet, and G.~Varoquaux (2021).
\newblock Causal effect on a target population: a sensitivity analysis to
  handle missing covariates.
\newblock {\em Accepted in Journal of Causal Inference\/}.

\bibitem[\protect\citeauthoryear{Colnet, Mayer, Chen, Dieng, Li, Varoquaux,
  Vert, Josse, and Yang}{Colnet et~al.}{2020}]{Colnet2020review}
Colnet, B., I.~Mayer, G.~Chen, A.~Dieng, R.~Li, G.~Varoquaux, J.-P. Vert,
  J.~Josse, and S.~Yang (2020).
\newblock Causal inference methods for combining randomized trials and
  observational studies: a review.

\bibitem[\protect\citeauthoryear{CRASH-3}{CRASH-3}{2019}]{crash32019}
CRASH-3 (2019).
\newblock Effects of tranexamic acid on death, disability, vascular occlusive
  events and other morbidities in patients with acute traumatic brain injury
  ({CRASH}-3): a randomised, placebo-controlled trial.
\newblock {\em The Lancet\/}~{\em 394\/}(10210), 1713--1723.

\bibitem[\protect\citeauthoryear{Dahabreh, Robertson, Steingrimsson, Stuart,
  and Hern{\'a}n}{Dahabreh et~al.}{2020}]{dahabreh2020extending}
Dahabreh, I.~J., S.~E. Robertson, J.~A. Steingrimsson, E.~A. Stuart, and M.~A.
  Hern{\'a}n (2020).
\newblock Extending inferences from a randomized trial to a new target
  population.
\newblock {\em Statistics in Medicine\/}~{\em 39\/}(14), 1999--2014.

\bibitem[\protect\citeauthoryear{Dahabreh, Robins, Haneuse, and
  Hern{\'a}n}{Dahabreh et~al.}{2019}]{dahabreh2019generalizing}
Dahabreh, I.~J., J.~M. Robins, S.~J. Haneuse, and M.~A. Hern{\'a}n (2019).
\newblock Generalizing causal inferences from randomized trials: counterfactual
  and graphical identification.
\newblock {\em arXiv preprint arXiv:1906.10792\/}.

\bibitem[\protect\citeauthoryear{D'Amour, Ding, Feller, Lei, and
  Sekhon}{D'Amour et~al.}{2017}]{Damour2017JournalOfEconometrics}
D'Amour, A., P.~Ding, A.~Feller, L.~Lei, and J.~Sekhon (2017, 11).
\newblock Overlap in observational studies with high-dimensional covariates.
\newblock {\em Journal of Econometrics\/}~{\em 221}.

\bibitem[\protect\citeauthoryear{Deaton and Cartwright}{Deaton and
  Cartwright}{2018}]{Deaton2018Misunderstanding}
Deaton, A. and N.~Cartwright (2018).
\newblock Understanding and misunderstanding randomized controlled trials.
\newblock {\em Social Science \& Medicine\/}~{\em 210}, 2--21.
\newblock Randomized Controlled Trials and Evidence-based Policy: A
  Multidisciplinary Dialogue.

\bibitem[\protect\citeauthoryear{Degtiar and Rose}{Degtiar and
  Rose}{2022}]{Degtiar2021Generalizability}
Degtiar, I. and S.~Rose (2022).
\newblock A review of generalizability and transportability.
\newblock {\em Annual Review of Statistics and Its Application\/}.

\bibitem[\protect\citeauthoryear{Dewan, Komolafe, Mej\`ia-Mantilla, Perel,
  Roberts, and Shakur-Still}{Dewan et~al.}{2012}]{crash3protocol}
Dewan, Y., E.~Komolafe, J.~Mej\`ia-Mantilla, P.~Perel, I.~Roberts, and
  H.~Shakur-Still (2012, 06).
\newblock {CRASH-3}: Tranexamic acid for the treatment of significant traumatic
  brain injury: study protocol for an international randomized, double-blind,
  placebo-controlled trial.
\newblock {\em Trials\/}~{\em 13}, 87.

\bibitem[\protect\citeauthoryear{Efron and Hinkley}{Efron and
  Hinkley}{1978}]{Efron1978ObservedVsExpected}
Efron, B. and D.~V. Hinkley (1978).
\newblock Assessing the accuracy of the maximum likelihood estimator: Observed
  versus expected fisher information.
\newblock {\em Biometrika\/}~{\em 65\/}(3), 457--482.

\bibitem[\protect\citeauthoryear{Egami and Hartman}{Egami and
  Hartman}{2021}]{Egami2021CovariateSelection}
Egami, N. and E.~Hartman (2021, 08).
\newblock Covariate selection for generalizing experimental results:
  Application to a large‐scale development program in uganda*.
\newblock {\em Journal of the Royal Statistical Society: Series A (Statistics
  in Society)\/}~{\em 184}.

\bibitem[\protect\citeauthoryear{Gatsonis and Sally}{Gatsonis and
  Sally}{2017}]{Stuart2017ChapterBook}
Gatsonis, C. and M.~C. Sally (2017).
\newblock {\em Methods in Comparative Effectiveness Research}, pp.\  177--199.
\newblock Chapman \& Hall.

\bibitem[\protect\citeauthoryear{Hahn}{Hahn}{1998}]{Hahn1998efficiencybound}
Hahn, J. (1998).
\newblock On the role of the propensity score in efficient semiparametric
  estimation of average treatment effects.
\newblock {\em Econometrica\/}~{\em 66\/}(2), 315--332.

\bibitem[\protect\citeauthoryear{Hahn}{Hahn}{2004}]{Hahn2004FunctionalRestriction}
Hahn, J. (2004, 02).
\newblock Functional restriction and efficiency in causal inference.
\newblock {\em The Review of Economics and Statistics\/}~{\em 86}, 73--76.

\bibitem[\protect\citeauthoryear{Harshaw, Middleton, and Sävje}{Harshaw
  et~al.}{2021}]{Harshaw2021VarianceEstimation}
Harshaw, C., J.~A. Middleton, and F.~Sävje (2021).
\newblock Optimized variance estimation under interference and complex
  experimental designs.

\bibitem[\protect\citeauthoryear{Hartman}{Hartman}{2021}]{hartman2021inbook}
Hartman, E. (2021).
\newblock {\em Generalizing Experimental Results}, pp.\  385–410.
\newblock Cambridge University Press.

\bibitem[\protect\citeauthoryear{Hartman, Grieve, Ramsahai, and Sekhon}{Hartman
  et~al.}{2015}]{Hartman2015FromSATE}
Hartman, E., R.~Grieve, R.~Ramsahai, and J.~S. Sekhon (2015).
\newblock From sample average treatment effect to population average treatment
  effect on the treated: combining experimental with observational studies to
  estimate population treatment effects.
\newblock {\em Journal of the Royal Statistical Society: Series A (Statistics
  in Society)\/}~{\em 178\/}(3), 757--778.

\bibitem[\protect\citeauthoryear{Henckel, Perković, and Maathuis}{Henckel
  et~al.}{2022}]{Henckel2019GraphicalAdjs}
Henckel, L., E.~Perković, and M.~H. Maathuis (2022, April).
\newblock {Graphical criteria for efficient total effect estimation via
  adjustment in causal linear models}.
\newblock {\em Journal of the Royal Statistical Society Series B\/}~{\em
  84\/}(2), 579--599.

\bibitem[\protect\citeauthoryear{Hernan}{Hernan}{2020}]{hernan2020book}
Hernan, MA~Robins, J. (2020).
\newblock {\em Causal Inference: What If.}
\newblock Boca Raton: Chapman \& Hall/CRC.

\bibitem[\protect\citeauthoryear{Hirano, Imbens, and Ridder}{Hirano
  et~al.}{2003}]{Hirano2003Efficient}
Hirano, K., G.~Imbens, and G.~Ridder (2003, 02).
\newblock Efficient estimation of average treatment effects using the estimated
  propensity score.
\newblock {\em Econometrica\/}~{\em 71}, 1161--1189.

\bibitem[\protect\citeauthoryear{Horvitz and Thompson}{Horvitz and
  Thompson}{1952}]{HorvitzThompson1952seminal}
Horvitz, D.~G. and D.~J. Thompson (1952).
\newblock A generalization of sampling without replacement from a finite
  universe.
\newblock {\em Journal of the American Statistical Association\/}~{\em
  47\/}(260), 663--685.

\bibitem[\protect\citeauthoryear{Huang}{Huang}{2022}]{Huang2022Sensitivity}
Huang, M. (2022).
\newblock Sensitivity analysis in the generalization of experimental results.

\bibitem[\protect\citeauthoryear{Huitfeldt, Swanson, Stensrud, and
  Suzuki}{Huitfeldt et~al.}{2019}]{Huitfeldt2019EffectHeterogeneity}
Huitfeldt, A., S.~Swanson, M.~Stensrud, and E.~Suzuki (2019, 12).
\newblock Effect heterogeneity and variable selection for standardizing causal
  effects to a target population.
\newblock {\em European Journal of Epidemiology\/}~{\em 34}.

\bibitem[\protect\citeauthoryear{Imai, King, and Stuart}{Imai
  et~al.}{2008}]{Imai2008Misunderstanding}
Imai, K., G.~King, and E.~A. Stuart (2008).
\newblock Misunderstandings between experimentalists and observationalists
  about causal inference.
\newblock {\em Journal of the Royal Statistical Society: Series A (Statistics
  in Society)\/}~{\em 171\/}(2), 481--502.

\bibitem[\protect\citeauthoryear{Imbens}{Imbens}{2011}]{imbens2011experimental}
Imbens, G.~W. (2011).
\newblock Experimental design for unit and cluster randomid trials.
\newblock {\em International Initiative for Impact Evaluation Paper\/}.

\bibitem[\protect\citeauthoryear{Imbens and Rubin}{Imbens and
  Rubin}{2015}]{imbens2015causal}
Imbens, G.~W. and D.~B. Rubin (2015).
\newblock {\em {Causal Inference in Statistics, Social, and Biomedical
  Sciences}}.
\newblock Cambridge UK: Cambridge University Press.

\bibitem[\protect\citeauthoryear{Josey, Berkowitz, Ghosh, and Raghavan}{Josey
  et~al.}{2021}]{josey2021transporting}
Josey, K.~P., S.~A. Berkowitz, D.~Ghosh, and S.~Raghavan (2021).
\newblock Transporting experimental results with entropy balancing.
\newblock {\em Statistics in Medicine\/}~{\em 40\/}(19), 4310--4326.

\bibitem[\protect\citeauthoryear{Kallus, Puli, and Shalit}{Kallus
  et~al.}{2018}]{kallus2018removing}
Kallus, N., A.~M. Puli, and U.~Shalit (2018).
\newblock Removing hidden confounding by experimental grounding.
\newblock In {\em Advances in neural information processing systems}, pp.\
  10888--10897.

\bibitem[\protect\citeauthoryear{Kern, Stuart, Hill, and Green}{Kern
  et~al.}{2016}]{kern2016assessing}
Kern, H.~L., E.~A. Stuart, J.~Hill, and D.~P. Green (2016).
\newblock Assessing methods for generalizing experimental impact estimates to
  target populations.
\newblock {\em Journal of research on educational effectiveness\/}~{\em
  9\/}(1), 103--127.

\bibitem[\protect\citeauthoryear{Lee, Yang, Dong, Wang, Zeng, and Cai}{Lee
  et~al.}{2021}]{dong2020integrative}
Lee, D., S.~Yang, L.~Dong, X.~Wang, D.~Zeng, and J.~Cai (2021, 12).
\newblock Improving trial generalizability using observational studies.
\newblock {\em Biometrics\/}.

\bibitem[\protect\citeauthoryear{Lefebvre, Delaney, and Platt}{Lefebvre
  et~al.}{2008}]{Lefebvre2008Mispecification}
Lefebvre, G., J.~Delaney, and R.~Platt (2008, 08).
\newblock Impact of mis‐specification of the treatment model on estimates
  from a marginal structural model.
\newblock {\em Statistics in medicine\/}~{\em 27}, 3629--42.

\bibitem[\protect\citeauthoryear{Ling, Montez-Rath, Carita, Chandross, Lucats,
  Meng, Sebastien, Kapphahn, and Desai}{Ling
  et~al.}{2022}]{Ling2022CriticalReview}
Ling, A.~Y., M.~E. Montez-Rath, P.~Carita, K.~Chandross, L.~Lucats, Z.~Meng,
  B.~Sebastien, K.~Kapphahn, and M.~Desai (2022).
\newblock A critical review of methods for real-world applications to
  generalize or transport clinical trial findings to target populations of
  interest.

\bibitem[\protect\citeauthoryear{Liu, Rizzo, Whipple, Pal, Lopez~Pineda, Lu,
  Arnieri, Lu, Capra, Copping, and Zou}{Liu
  et~al.}{2021}]{Liu2021TrialPathFinder}
Liu, R., S.~Rizzo, S.~Whipple, N.~Pal, A.~Lopez~Pineda, M.~Lu, B.~Arnieri,
  Y.~Lu, W.~Capra, R.~Copping, and J.~Zou (2021, 04).
\newblock Evaluating eligibility criteria of oncology trials using real-world
  data and ai.
\newblock {\em Nature\/}~{\em 592}.

\bibitem[\protect\citeauthoryear{Lunceford and Davidian}{Lunceford and
  Davidian}{2004}]{Lunceford04stratificationand}
Lunceford, J.~K. and M.~Davidian (2004).
\newblock Stratification and weighting via the propensity score in estimation
  of causal treatment effects: A comparative study.
\newblock In {\em Statistics in Medicine}, pp.\  2937--2960.

\bibitem[\protect\citeauthoryear{Miratrix, Sekhon, and Yu}{Miratrix
  et~al.}{2013}]{miratrix2013adjusting}
Miratrix, L.~W., J.~S. Sekhon, and B.~Yu (2013).
\newblock {Adjusting treatment effect estimates by post-stratification in
  randomized experiments}.
\newblock {\em Journal of the Royal Statistical Society Series B\/}~{\em 75},
  369{\textendash}396.

\bibitem[\protect\citeauthoryear{Nie, Imbens, and Wager}{Nie
  et~al.}{2021}]{nie2021covariate}
Nie, X., G.~Imbens, and S.~Wager (2021).
\newblock Covariate balancing sensitivity analysis for extrapolating randomized
  trials across locations.

\bibitem[\protect\citeauthoryear{O'Muircheartaigh and Hedges}{O'Muircheartaigh
  and Hedges}{2013}]{Muircheartaigh2014GeneralizingApproach}
O'Muircheartaigh, C. and L.~Hedges (2013, 11).
\newblock Generalizing from unrepresentative experiments: A stratified
  propensity score approach.
\newblock {\em Journal of the Royal Statistical Society: Series C (Applied
  Statistics)\/}~{\em 63}.

\bibitem[\protect\citeauthoryear{Pearl}{Pearl}{2015}]{pearl2015findings}
Pearl, J. (2015).
\newblock Generalizing experimental findings.
\newblock {\em Journal of Causal Inference\/}~{\em 3\/}(2), 259--266.

\bibitem[\protect\citeauthoryear{Pearl and Bareinboim}{Pearl and
  Bareinboim}{2011}]{pearl2011transportability}
Pearl, J. and E.~Bareinboim (2011).
\newblock Transportability of causal and statistical relations: A formal
  approach.
\newblock In {\em Proceedings of the Twenty-Fifth AAAI Conference on Artificial
  Intelligence}, AAAI'11, pp.\  247–254. AAAI Press.

\bibitem[\protect\citeauthoryear{Robins, Mark, and Newey}{Robins
  et~al.}{1992}]{Robins1992EstimatingEE}
Robins, J.~M., S.~D. Mark, and W.~Newey (1992).
\newblock Estimating exposure effects by modelling the expectation of exposure
  conditional on confounders.
\newblock {\em Biometrics\/}~{\em 48 2}, 479--95.

\bibitem[\protect\citeauthoryear{Rosenbaum and Rubin}{Rosenbaum and
  Rubin}{1983}]{rosenbaum1983centralrolepropensity}
Rosenbaum, P.~R. and D.~B. Rubin (1983).
\newblock The central role of the propensity score in observational studies for
  causal effects.
\newblock {\em Biometrika\/}~{\em 70\/}(1), 41--55.

\bibitem[\protect\citeauthoryear{Rothman}{Rothman}{2011}]{Rothman2011bookEpidemiologyIntrod}
Rothman, K.~J. (2011).
\newblock {\em Epidemiology: an introduction\/} (2 ed.).
\newblock Oxford University Press.

\bibitem[\protect\citeauthoryear{Rothman and Greenland}{Rothman and
  Greenland}{2000}]{Rothman2000ModernEpidemiology}
Rothman, K.~J. and S.~Greenland (2000).
\newblock {\em Modern Epidemiology\/} (2 ed.).
\newblock Lippincott Williams and Wilkins.

\bibitem[\protect\citeauthoryear{Rothwell}{Rothwell}{2007}]{Rothwell2007ToWhom}
Rothwell, P. (2007, 01).
\newblock External validity of randomised controlled trials: “to whom do the
  results of this trial apply?”.
\newblock {\em Lancet\/}~{\em 365}, 82--93.

\bibitem[\protect\citeauthoryear{Rotnitzky and Smucler}{Rotnitzky and
  Smucler}{2020}]{Rotnitzky2020Efficient}
Rotnitzky, A. and E.~Smucler (2020).
\newblock Efficient adjustment sets for population average causal treatment
  effect estimation in graphical models.
\newblock {\em Journal of Machine Learning Research\/}~{\em 21\/}(188), 1--86.

\bibitem[\protect\citeauthoryear{Schnitzer, Lok, and Gruber}{Schnitzer
  et~al.}{2015}]{Schnitzer2015VariableSelection}
Schnitzer, M., J.~Lok, and S.~Gruber (2015, 07).
\newblock Variable selection for confounder control, flexible modeling and
  collaborative targeted minimum loss-based estimation in causal inference.
\newblock {\em The international journal of biostatistics\/}~{\em 12}.

\bibitem[\protect\citeauthoryear{Splawa-Neyman, Dabrowska, and
  Speed}{Splawa-Neyman et~al.}{1990}]{SplawaNeyman1990Translation}
Splawa-Neyman, J., D.~M. Dabrowska, and T.~P. Speed (1990).
\newblock {On the Application of Probability Theory to Agricultural
  Experiments. Essay on Principles. Section 9}.
\newblock {\em Statistical Science\/}~{\em 5\/}(4), 465 -- 472.

\bibitem[\protect\citeauthoryear{Stefanski and Boos}{Stefanski and
  Boos}{2002}]{Stefanski2002Mestimation}
Stefanski, L.~A. and D.~D. Boos (2002).
\newblock The calculus of m-estimation.
\newblock {\em The American Statistician\/}~{\em 56\/}(1), 29--38.

\bibitem[\protect\citeauthoryear{Stuart, Cole, Bradshaw, and Leaf}{Stuart
  et~al.}{2011}]{stuart2011use}
Stuart, E.~A., S.~R. Cole, C.~P. Bradshaw, and P.~J. Leaf (2011).
\newblock The use of propensity scores to assess the generalizability of
  results from randomized trials.
\newblock {\em Journal of the Royal Statistical Society: Series A (Statistics
  in Society)\/}~{\em 174}, 369--386.

\bibitem[\protect\citeauthoryear{Stuart and Rhodes}{Stuart and
  Rhodes}{2017}]{stuart2017CaseStudyDifficulties}
Stuart, E.~A. and A.~Rhodes (2017).
\newblock Generalizing treatment effect estimates from sample to population: A
  case study in the difficulties of finding sufficient data.
\newblock {\em Evaluation Review\/}~{\em 41\/}(4), 357--388.
\newblock PMID: 27491758.

\bibitem[\protect\citeauthoryear{Tipton}{Tipton}{2013}]{tipton2013improving}
Tipton, E. (2013).
\newblock Improving generalizations from experiments using propensity score
  subclassification: Assumptions, properties, and contexts.
\newblock {\em Journal of Educational and Behavioral Statistics\/}~{\em 38},
  239--266.

\bibitem[\protect\citeauthoryear{Tipton, Hallberg, Hedges, and Chan}{Tipton
  et~al.}{2016}]{tipton2016smallsample}
Tipton, E., K.~Hallberg, L.~Hedges, and W.~Chan (2016, 07).
\newblock Implications of small samples for generalization: Adjustments and
  rules of thumb.
\newblock {\em Evaluation Review\/}~{\em 41}.

\bibitem[\protect\citeauthoryear{Velentgas, Dreyer, Nourjah, Smith, and
  Torchia}{Velentgas et~al.}{2013}]{Velentgas2013DevelopingAP}
Velentgas, P., N.~A. Dreyer, P.~Nourjah, S.~R. Smith, and M.~Torchia (2013).
\newblock Developing a protocol for observational comparative effectiveness
  research: A user's guide.

\bibitem[\protect\citeauthoryear{Westreich, Edwards, Lesko, Stuart, and
  Cole}{Westreich et~al.}{2017}]{Westreich2017IOSW}
Westreich, D., J.~Edwards, C.~Lesko, E.~Stuart, and S.~Cole (2017, 05).
\newblock Transportability of trial results using inverse odds of sampling
  weights.
\newblock {\em American journal of epidemiology\/}~{\em 186}.

\bibitem[\protect\citeauthoryear{Witte and Didelez}{Witte and
  Didelez}{2018}]{witte2018selection}
Witte, J. and V.~Didelez (2018, 10).
\newblock Covariate selection strategies for causal inference: Classification
  and comparison.
\newblock {\em Biometrical Journal\/}~{\em 61}.

\bibitem[\protect\citeauthoryear{Wooldridge}{Wooldridge}{2016}]{Wooldridge2016Instrument}
Wooldridge, J. (2016).
\newblock Should instrumental variables be used as matching variables?
\newblock {\em Research in Economics\/}~{\em 70\/}(2), 232--237.

\bibitem[\protect\citeauthoryear{Yang, Kim, and Song}{Yang
  et~al.}{2020}]{Yang2020DoublyRI}
Yang, S., J.~K. Kim, and R.~Song (2020).
\newblock Doubly robust inference when combining probability and
  non‐probability samples with high dimensional data.
\newblock {\em Journal of the Royal Statistical Society: Series B (Statistical
  Methodology)\/}~{\em 82}.

\end{thebibliography}

\newpage

\begin{center}
    {\Large APPENDIX}
\end{center}

\appendix

\section{Main proofs}\label{appendix:original-proofs}

\subsection{Proof of Theorem~\ref{thm:completely-oracle} - Completely oracle estimator $\hat \tau_{\pi, \text{\tiny T,R}, n}^*$}\label{proof:completely-oracle}

We prove Theorem~\ref{thm:completely-oracle} and the following corollary. 

\begin{corollary}
Under Assumptions of Theorem~\ref{thm:completely-oracle},  for all $n$, the quadratic risk of the completely oracle IPSW is given by,
\begin{equation*}
\mathbb{E}\left[ \left( \hat  \tau_{\pi, \text{\tiny T}, \text{\tiny R}, n}^* - \tau \right)^2\right] = \frac{V_o}{n},
\end{equation*}
which implies its $L^2$-consistency as $n$ tends to infinity, that is,
\begin{equation*}
\hat \tau_{\pi,\text{\tiny T,R}, n}^* \stackrel{L^2}{\longrightarrow} \tau.
\end{equation*}
\end{corollary}

We first recall the expression of the completely oracle estimator  introduced in Definition~\ref{def:ipsw-oracle},
\begin{equation*}
    \hat \tau_{\pi, \text{\tiny T,R}, n}^* =   \frac{1}{n} \sum_{i = 1}^n \frac{p_\text{\tiny T}\left( X_i \right)}{p_\text{\tiny R}\left( X_i \right)}\left(\frac{ Y_i A_i}{\pi} - \frac{ Y_i (1-A_i)}{1-\pi} \right).
\end{equation*}

This estimator can be rewritten as,
\begin{equation*}
     \hat \tau_{\pi, \text{\tiny T,R}, n}^* = \sum_{x\in\mathds{X}} \frac{p_\text{\tiny T}(x)}{p_\text{\tiny R}(x)} \left( \frac{1}{n}\sum_{i =1}^n \mathbbm{1}_{X_i = x} \left(\frac{ Y_i A_i}{\pi} - \frac{ Y_i (1-A_i)}{1-\pi} \right)  \right),
\end{equation*}
since $X_i$ take values in a categorical set $\mathbb{X}$. 
This rewriting is extensively used in the proof.\\

\textbf{Bias}\\

Recall that, for all $x\in \mathds{X}$, $p_\text{\tiny R}(x)$ and $p_\text{\tiny T}(x)$ are not random variables. We have
\begin{align*}
    \mathbb{E}\left[  \hat \tau_{\pi, \text{\tiny T,R}, n}^*\right] &=   \mathbb{E} \left[\sum_{x\in \mathds{X}}\frac{p_\text{\tiny T}(x)}{p_\text{\tiny R}(x)} \frac{1}{n}\sum_{i =1}^n \mathbbm{1}_{X_i = x} \left(\frac{ Y_i A_i}{\pi} - \frac{ Y_i (1-A_i)}{1-\pi} \right) \right] &&\text{By definition} \\
     &= \sum_{x\in \mathds{X}} \mathbb{E}\left[\frac{p_\text{\tiny T}(x)}{p_\text{\tiny R}(x)} \frac{1}{n}\sum_{i =1}^n \mathbbm{1}_{X_i = x} \left(\frac{ Y_i A_i}{\pi} - \frac{ Y_i (1-A_i)}{1-\pi} \right)\right] && \text{Linearity of $\mathbb{E}[.]$}\\
    &= \sum_{x\in \mathds{X}} \frac{p_\text{\tiny T}(x)}{p_\text{\tiny R}(x)} \mathbb{E}\left[\frac{1}{n}\sum_{i =1}^n \mathbbm{1}_{X_i = x} \left(\frac{ Y_i A_i}{\pi} - \frac{ Y_i (1-A_i)}{1-\pi} \right)\right] && \text{$p_\text{\tiny R}(x)$ and $p_\text{\tiny T}(x)$ are not random}\\
    &= \sum_{x\in \mathds{X}}\frac{p_\text{\tiny T}(x)}{p_\text{\tiny R}(x)}  \mathbb{E}_\text{\tiny R}\left[ \mathbbm{1}_{X_i = x} \left(\frac{ Y_i A_i}{\pi} - \frac{ Y_i (1-A_i)}{1-\pi} \right) \right]&& \text{Linearity \& \textit{iid} trial}\\
    &= \sum_{x\in \mathds{X}} \frac{p_\text{\tiny T}(x)}{p_\text{\tiny R}(x)}  \mathbb{E}_\text{\tiny R}\left[ \mathbbm{1}_{X_i = x} \left(\frac{ Y_i^{(1)} A_i}{\pi} - \frac{ Y_i^{(0)} (1-A_i)}{1-\pi} \right) \right] && \text{SUTVA (see Assumption~\ref{a:trial-internal-validity}).}
\end{align*}

Noting that,

\begin{equation*}
    p_\text{\tiny R}(x) = \mathbb{P}_\text{\tiny R}[X=x]=\mathbb{P}_\text{\tiny R}[X_i=x]=\mathbb{E}_\text{\tiny R}\left[ \mathbbm{1}_{X_i = x}\right],
\end{equation*}

one can condition on the random variable $X_i$, yielding
\begin{equation*}
    \mathbb{E}_\text{\tiny R}\left[ \mathbbm{1}_{X_i = x} \left(\frac{ Y_i^{(1)} A_i}{\pi} - \frac{ Y_i^{(0)} (1-A_i)}{1-\pi} \right) \right] = \mathbb{E}_\text{\tiny R}\left[  \frac{ Y_i^{(1)} A_i}{\pi} - \frac{ Y_i^{(0)} (1-A_i)}{1-\pi}  \mid X_i = x\right]\underbrace{\mathbb{E_\text{\tiny R}}\left[ \mathbbm{1}_{X_i = x}\right]}_{= p_\text{\tiny R}(x)}.
\end{equation*}
Then,
\begin{align*}
     \mathbb{E}\left[  \hat \tau_{\pi, \text{\tiny T,R}, n}^*\right]  &= \sum_{x\in \mathds{X}}  p_\text{\tiny T}(x) \mathbb{E}_\text{\tiny R}\left[  \frac{ Y_i^{(1)} A_i}{\pi} - \frac{ Y_i^{(0)} (1-A_i)}{1-\pi}  \mid X_i = x\right] && \text{From previous derivations} \\
    &= \sum_{x\in \mathds{X}}  p_\text{\tiny T}(x)   \left( \frac{\mathbb{E}_\text{\tiny R}\left[  Y_i^{(1)} A_i \mid X_i = x\right] }{\pi}- \frac{\mathbb{E}_\text{\tiny R}\left[  Y_i^{(0)} (1-A_i)  \mid X_i = x\right]}{1-\pi} \right) && \text{Linearity of $\mathbb{E}[.]$ and $\pi$ is constant} \\
    &= \sum_{x\in \mathds{X}}  p_\text{\tiny T}(x) \Bigg(  \frac{\mathbb{E}_\text{\tiny R}\left[  Y_i^{(1)} \mid X_i = x\right] \mathbb{E}_\text{\tiny R}\left[  A_i \mid X_i = x\right] }{\pi} \\
    &\qquad - \frac{\mathbb{E}_\text{\tiny R}\left[  Y_i^{(0)} \mid X_i = x\right]\mathbb{E}_\text{\tiny R}\left[  (1-A_i)  \mid X_i = x\right]}{1-\pi}\Bigg) && \text{Randomization (see Assumption~\ref{a:trial-internal-validity})} \\
    &= \sum_{x\in \mathds{X}}  p_\text{\tiny T}(x)   \left(\mathbb{E}_\text{\tiny R}\left[  Y_i^{(1)}  \mid X_i = x\right]- \mathbb{E}_\text{\tiny R}\left[  Y_i^{(0)} \mid X_i = x\right] \right)&& \text{$\mathbb{E}_\text{\tiny R}\left[  A_i  \mid X_i = x\right] = \pi$} \\
    &= \sum_{x\in \mathds{X}}  p_\text{\tiny T}(x)   \mathbb{E}_\text{\tiny R}\left[  Y_i^{(1)} -  Y_i^{(0)} \mid X_i = x\right]&& \text{Linearity of $\mathbb{E}[.]$} \\
    &= \sum_{x\in \mathds{X}}  p_\text{\tiny T}(x)   \mathbb{E}_\text{\tiny T}\left[  Y_i^{(1)} -  Y_i^{(0)} \mid X_i = x\right]&& \text{Transportability (see Assumption~\ref{a:cate-indep-s-knowing-X})} \\
     &= \tau, && \text{Law of total probability} \\
\end{align*}
which concludes the first part of the proof.\\ 

Note that the previous derivations, relying on \textit{iid}, Assumption~\ref{a:trial-internal-validity} (Trial internal validity with SUTVA, definition of $\pi$, and randomization), Assumption~\ref{a:cate-indep-s-knowing-X}, and the law of total probability, lead to the following intermediary result,

\begin{equation}\label{eq:intermediary-result-proof-cate}
    \mathbb{E}_\text{\tiny R}\left[  \frac{ Y_i^{(1)} A_i}{\pi} - \frac{ Y_i^{(0)} (1-A_i)}{1-\pi}  \mid X_i = x\right] = \mathbb{E}_\text{\tiny T}\left[  Y_i^{(1)} -  Y_i^{(0)} \mid X_i = x\right] = \tau(x).
\end{equation}

\eqref{eq:intermediary-result-proof-cate} will be used in other proofs.

\textbf{Variance}\\


To shorten notation, we denote by $\mathbf{X}_{n}\in \mathds{X}^{n}$ the vector composed of the $n$ observations in the trial. We then use the law of total variance, conditioning on $\mathbf{X}_{n}$, 
\begin{equation}
    \operatorname{Var}\left[   \hat \tau_{\pi, \text{\tiny T,R}, n}^* \right] =  \operatorname{Var}\left[ \mathbb{E}\left[ \hat \tau_{\pi, \text{\tiny T,R}, n}^*\mid \mathbf{X}_{n} \right] \right] + \mathbb{E} \left[ \operatorname{Var}\left[ \hat \tau_{\pi, \text{\tiny T,R}, n}^* \mid \mathbf{X}_{n}  \right] \right]. \label{eq_proof_compl_oracle_var}
\end{equation}
  
 
Considering the first term in the right-hand side of \eqref{eq_proof_compl_oracle_var}, 
 \begin{align*}
     \mathbb{E}\left[ \hat \tau_{\pi, \text{\tiny T,R}, n}^*  \mid \mathbf{X}_n  \right]  &=    \mathbb{E}\left[ \sum_{x\in\mathds{X}} \frac{p_\text{\tiny T}(x)}{p_\text{\tiny R}(x)} \frac{1}{n}\sum_{i =1}^n \mathbbm{1}_{X_i = x} \left(\frac{ Y_i^{(1)} A_i}{\pi} - \frac{ Y_i^{(0)} (1-A_i)}{1-\pi} \right)\mid  \mathbf{X}_n  \right] && \text{By definition (and SUTVA)} \\
     &= \sum_{x\in\mathds{X}} \frac{p_\text{\tiny T}(x)}{p_\text{\tiny R}(x)} \frac{1}{n}\mathbb{E}\left[\sum_{i =1}^n \mathbbm{1}_{X_i = x} \left(\frac{ Y_i^{(1)} A_i}{\pi} - \frac{ Y_i^{(0)} (1-A_i)}{1-\pi} \right)\mid  \mathbf{X}_n  \right]. && \text{Linearity of $\mathbb{E}[.]$} 
     \end{align*}
     
Note that this last derivation also uses the fact that neither $p_\text{\tiny T}(x)$ nor $p_\text{\tiny R}(x)$ are random variables. 

\begin{align*}
     \mathbb{E} \left[ \hat \tau_{\pi, \text{\tiny T,R}, n}^*  \mid \mathbf{X}_n  \right] 
     &=  \sum_{x\in\mathds{X}} \frac{p_\text{\tiny T}(x)}{p_\text{\tiny R}(x)} \sum_{i =1}^n\frac{ \mathbbm{1}_{X_i = x}}{n}  \mathbb{E} \left[ \frac{ Y_i^{(1)} A_i}{\pi} - \frac{ Y_i^{(0)} (1-A_i)}{1-\pi} \mid X_i \right] && \text{\textit{iid} individuals}\\
     &=   \sum_{x\in\mathds{X}}  \frac{p_\text{\tiny T}(x)}{p_\text{\tiny R}(x)}  \sum_{i =1}^n \frac{\mathbbm{1}_{X_i = x} }{n}  \tau(X_i) && \\
     &= \frac{1 }{n}   \sum_{x\in\mathds{X}}  \frac{p_\text{\tiny T}(x)}{p_\text{\tiny R}(x)} \tau(x) \sum_{i =1}^n \mathbbm{1}_{X_i = x}  && \text{Transportability (see Assumption~\ref{a:cate-indep-s-knowing-X})}
 \end{align*}

Now, this last term can be written as a unique sum on $i \in \{1, \hdots, n\}$, that is,
\begin{equation*}
    \frac{1 }{n}  \sum_{x\in\mathds{X}}  \frac{p_\text{\tiny T}(x)}{p_\text{\tiny R}(x)} \tau(x) \sum_{i =1}^n \mathbbm{1}_{X_i = x}  = \frac{1}{n} \sum_{i=1}^n \frac{p_\text{\tiny T}(X_i)}{p_\text{\tiny R}(X_i)} \tau(X_i).
\end{equation*}

Taking the variance of this term leads to,
\begin{align}
   \operatorname{Var}\left[ \mathbb{E}_\text{\tiny R}\left[ \hat \tau_{\pi, \text{\tiny T,R}, n}^* \mid \mathbf{X}_n \right] \right] &=   \operatorname{Var}\left[\frac{1}{n} \sum_{i=1}^n \frac{p_\text{\tiny T}(X_i)}{p_\text{\tiny R}(X_i)} \tau(X_i) \right] \nonumber \\
   &= \frac{1}{n}  \operatorname{Var}_\text{\tiny R}\left[ \frac{p_\text{\tiny T}(X)}{p_\text{\tiny R}(X)} \tau(X) \right].  && \text{\textit{iid} observations on trial (Assumption~\ref{a:trial-internal-validity})} \label{eq_proof_compl_oracle_var2}
\end{align}

 
Regarding the second term, 
 \begin{align}
      \operatorname{Var}\left[ \hat \tau_{\pi, \text{\tiny T,R}, n}^* \mid  \mathbf{X}_n  \right] &=       \operatorname{Var}_{\text{\tiny R}}\left[ \frac{1}{n} \sum_{i = 1}^n \frac{p_\text{\tiny T}\left( X_i \right)}{p_\text{\tiny R}\left( X_i \right)}\left(\frac{ Y_i A_i}{\pi} - \frac{ Y_i (1-A_i)}{1-\pi} \right) \mid  \mathbf{X}_n  \right] \nonumber  \\ 
      &=  \frac{1}{n^2} \sum_{i = 1}^n \left( \frac{p_\text{\tiny T}\left( X_i \right)}{p_\text{\tiny R}\left( X_i \right)}\right)^2 \operatorname{Var}_{\text{\tiny R}}\left[\left(\frac{ Y_i A_i}{\pi} - \frac{ Y_i (1-A_i)}{1-\pi} \right)  \mid  \mathbf{X}_n \right] \nonumber 
\\
&=  \frac{1}{n^2} \sum_{i = 1}^n \left( \frac{p_\text{\tiny T}\left( X_i \right)}{p_\text{\tiny R}\left( X_i \right)}\right)^2 \operatorname{Var}_{\text{\tiny R}}\left[\left(\frac{ Y_i A_i}{\pi} - \frac{ Y_i (1-A_i)}{1-\pi} \right)  \mid  X_i \right]. \label{eq_variance_oracle_intermediary_before_VHT(x)}
      \end{align}

Recall that the variance of the Horvitz-Thomson estimator (see Definition~\ref{def:HT}) conditioned on $X_i$ is given by 
\begin{equation}\label{eq_variance_random_variable_like_HT}
           \operatorname{Var}_{\text{\tiny R}}\left[ \hat \tau_{\text{\tiny HT},n} \mid X_i \right] = \frac{1}{n} \operatorname{Var}_{\text{\tiny R}}\left[\left(\frac{ Y_i A_i}{\pi} - \frac{ Y_i (1-A_i)}{1-\pi} \right)  \mid  X_i \right].
      \end{equation}


Then, one can use principles used for the proof of Lemma~\ref{lemma:HT-unbiased-and-variance} (see Section~\ref{appendix:useful-results-rct}) to have

\begin{align}\label{eq_explanation_for_g(x)_VHTx}
      n\operatorname{Var}\left[ \hat{\tau}_{\text{\tiny HT},n}  \mid X_i \right] 
      &= \mathbb{E}_{\text{\tiny R}}\left[ \frac{\left( Y^{(1)} \right)^2}{\pi}  \mid X_i \right]  + \mathbb{E}_{\text{\tiny R}}\left[ \frac{\left( Y^{(0)} \right)^2}{1-\pi}  \mid X_i  \right]  - \tau(X_i)^2 := V_{ \text{\tiny HT}}(X_i)  .
\end{align}

Then, coming back to \eqref{eq_variance_oracle_intermediary_before_VHT(x)},
      
   \begin{align}
       \mathbb{E}_\text{\tiny R}\left[\operatorname{Var}\left[ \hat \tau_{\pi, \text{\tiny T,R}, n}^* \mid  \mathbf{X}_n  \right] \right]&=\mathbb{E}_\text{\tiny R}\left[\frac{1}{n^2} \sum_{i = 1}^n \left( \frac{p_\text{\tiny T}\left( X_i \right)}{p_\text{\tiny R}\left( X_i \right)}\right)^2 V_{ \text{\tiny HT}}(X_i) \right] \nonumber \\
       &=\mathbb{E}_\text{\tiny R}\left[\frac{1}{n^2} \sum_{i = 1}^n \left( \sum_{x\in\mathds{X}}  \mathbbm{1}_{X_i = x} \right) \left(\frac{p_\text{\tiny T}\left( X_i \right)}{p_\text{\tiny R}\left( X_i \right)}\right)^2 V_{ \text{\tiny HT}}(X_i) \right]\nonumber  \\
        &=\mathbb{E}_\text{\tiny R}\left[ \sum_{x\in\mathds{X}}   \frac{1}{n^2}  \left(  \frac{p_\text{\tiny T}\left( x \right)}{p_\text{\tiny R}\left( x \right)}\right)^2 V_{ \text{\tiny HT}}(x) \sum_{i = 1}^n \mathbbm{1}_{X_i = x}  \right] \nonumber  \\
        &= \sum_{x\in\mathds{X}}   \frac{1}{n^2}  \left(  \frac{p_\text{\tiny T}\left( x \right)}{p_\text{\tiny R}\left( x \right)}\right)^2 V_{ \text{\tiny HT}}(x)  \mathbb{E}_\text{\tiny R}\left[\sum_{i = 1}^n \mathbbm{1}_{X_i = x}  \right]\nonumber  \\
           &= \sum_{x\in\mathds{X}}   \frac{1}{n}  \left(  \frac{p_\text{\tiny T}\left( x \right)}{p_\text{\tiny R}\left( x \right)}\right)^2 V_{ \text{\tiny HT}}(x)  \mathbb{E}_\text{\tiny R}\left[ \frac{\sum_{i = 1}^n \mathbbm{1}_{X_i = x} }{n} \right] \nonumber  \\
    &= \sum_{x\in\mathds{X}}   \frac{1}{n}  \left(  \frac{p_\text{\tiny T}\left( x \right)}{p_\text{\tiny R}\left( x \right)}\right)^2 V_{ \text{\tiny HT}}(x)  p_\text{\tiny R}\left( x \right)  && \text{Assumption~\ref{a:repres-rct}} \nonumber  \\
    &= \sum_{x\in\mathds{X}}   \frac{1}{n}   \frac{p^2_\text{\tiny T}\left( x \right)}{p_\text{\tiny R}\left( x \right)} V_{ \text{\tiny HT}}(x) \nonumber \\
    &= \frac{1}{n}   \sum_{x\in\mathds{X}}    \frac{p^2_\text{\tiny T}\left( x \right)}{p_\text{\tiny R}\left( x \right)} \left( \mathbb{E}_{\text{\tiny R}}\left[ \frac{\left(Y^{(1)}\right)^2}{\pi}   \mid X = x \right] + \mathbb{E}_{\text{\tiny R}}\left[ \frac{\left(Y^{(0)}\right)^2}{1-\pi}  \mid X = x\right] - \tau(x)^2 \right), \label{eq_proof_compl_oracle_var1}\\
   \end{align}

Combining \eqref{eq_proof_compl_oracle_var1} and \eqref{eq_proof_compl_oracle_var2} into \eqref{eq_proof_compl_oracle_var} leads to, for all $n$, 
\begin{equation*}
 \operatorname{Var}\left[   \hat \tau_{\pi, \text{\tiny T,R}, n}^* \right] =  \frac{V_o}{n}
 \end{equation*}
 where
 \begin{equation*}
 V_o =     
 \operatorname{Var}\left[ \frac{p_\text{\tiny T}(X_i)}{p_\text{\tiny R}(X_i)} \tau(X_i) \right] +   \sum_{x\in\mathds{X}}    \frac{p^2_\text{\tiny T}\left( x \right)}{p_\text{\tiny R}\left( x \right)} V_{\tiny \text{\tiny HT}}(x).\\
 \end{equation*}

Note that it is also possible to write the result such as,

\begin{equation*}
     V_o =     
 \operatorname{Var}\left[ \frac{p_\text{\tiny T}(X)}{p_\text{\tiny R}(X)} \tau(X) \right] + \mathbb{E}_\text{\tiny R}\left[  \frac{p^2_\text{\tiny T}\left( X \right)}{p^2_\text{\tiny R}\left( X \right)} V_{\tiny \text{\tiny HT}}(X) \right],
\end{equation*}

noting that 

\begin{equation*}
     \sum_{x\in\mathds{X}}    \frac{p^2_\text{\tiny T}\left( x \right)}{p_\text{\tiny R}\left( x \right)} V_{\tiny \text{\tiny HT}}(x)=  \mathbb{E}_\text{\tiny R}\left[  \frac{p^2_\text{\tiny T}\left( X \right)}{p^2_\text{\tiny R}\left( X \right)} V_{\tiny \text{\tiny HT}}(X) \right]
\end{equation*}

\textbf{Quadratic risk and consistency}\\

For any estimate $\hat{\tau}$, we have 
\begin{equation*}
\mathbb{E}\left[ \left( \hat  \tau - \tau \right)^2\right] = \left( \mathbb{E}\left[ \hat  \tau \right] - \tau \right)^2 + \operatorname{Var}\left[   \hat  \tau \right].
\end{equation*}
Therefore, the risk of the completely oracle IPSW estimate satisfies
\begin{equation*}
\mathbb{E}\left[ \left( \hat  \tau - \tau \right)^2\right] = \frac{V_o}{n}.
\end{equation*}
The $L^2$ consistency holds by letting $n$ tend to infinity.

\subsection{Proofs for the semi-oracle IPSW $\hat \tau_{\pi, \text{\tiny T}, n}^*$}\label{proof:semi-oracle}

\subsubsection{Proof of Proposition~\ref{prop_semi_oracle_bias_variance}}\label{proof:explicit-bias-variance-and-bounds-semi-oracle}

We prove the following more detailed Proposition  where $Z_n(x) = \sum_{i =1}^n \mathbbm{1}_{X_i = x}$.

\begin{proposition}
\label{prop_app_semioracleipsw}
Under the general setting defined in Subsection~\ref{subsec:model}, granting Assumptions~\ref{a:repres-rct}-\ref{a:pos}, the bias of the semi-oracle IPSW satisfies, for all $n$,  
\begin{flalign*}
&&  \mathbb{E}\left[\hat \tau_{\pi, \text{\tiny T}, n}^*  \right] - \tau = - \sum_{x\in\mathds{X}} p_{\text{\tiny T}}(x)\left(1 - p_{\text{\tiny R}}(x)\right)^n \tau(x),&&
\\
\text{and} &&
 \biggl|\mathbb{E}\left[\hat \tau_{\pi,\text{\tiny T}, n}^*  \right] - \tau \biggr| \leq \left(1 - \min_x p_{\text{\tiny R}}(x)\right)^n \mathbb{E}_{\text{\tiny T}} \left[ \left| \tau(X) \right| \right].&&
\end{flalign*}
Moreover, under the same set of assumptions, the variance of the semi-oracle IPSW satisfies, for all $n$, 
\begin{flalign*}
&&    n \operatorname{Var}\left[ \hat \tau_{\pi, \text{\tiny T}, n}^*\right]  =&  \sum_{x\in\mathds{X}}  p_\text{\tiny T}\left( x \right)^2 V_{\text{\tiny HT}}(x) \mathbb{E}_\text{\tiny R}\left[ \frac{\mathbbm{1}_{Z_n(x)>0}}{ \hat p_{\text{\tiny R},n}(x) } \right] + n \operatorname{Var}\left[ \mathbb{E}_{\text{\tiny T}} \left[ \tau(X) \mathds{1}_{Z_n(X)=0} | \mathbf{X}_n \right] \right],&&
\\
\text{and}
&&
\operatorname{Var}\left[   \hat \tau_{\pi, \text{\tiny T}, n}^* \right] \le &\,\frac{2 V_{so}}{n+1}  +  \left( 1 - \min_{x \in \mathbb{X}} p_\text{\tiny R}(x)\right)^n \left( \mathbb{E}_{\text{\tiny T}} \left[ |\tau(X)| \right]\right)^2,  
&&
\\
\text{with} &&
    V_{\text{so}}:= &\,\mathbb{E}_\text{\tiny R}\left[ \left(\frac{p_\text{\tiny T}(X)}{p_\text{\tiny R}(X)} \right)^2V_{ \text{\tiny HT}}(X)\right].
&&
\end{flalign*}
\end{proposition}

Note that, as soon as $\tau(x)$ is of constant sign, the sign of the bias is known and opposite to that of $\tau(x)$. In fact, because of potentially empty categories in the trial, the expectation of the semi-oracle IPSW estimate $\mathbb{E}\left[\hat \tau_{\pi, \text{\tiny T}, n}^*  \right]$ is pushed toward zero, if $\tau(x)$ is of constant sign.

\begin{proof}
We first recall the definition of the semi-oracle estimator introduced in Definition~\ref{def:ipsw-semi-oracle}:

\begin{equation*}
    \hat \tau_{\pi, \text{\tiny T}, n}^* =   \frac{1}{n} \sum_{i = 1}^n \frac{p_\text{\tiny T}\left( X_i \right)}{\hat  p_{\text{\tiny R},n} (X_i)}\left(\frac{ Y_i A_i}{\pi} - \frac{ Y_i (1-A_i)}{1-\pi} \right),
\end{equation*}

where, for all $x\in \mathds{X}$, \begin{equation}\label{eq-proof:dep-hat-pr}
    \hat p_{\text{\tiny R},n}\left( x \right)= \frac{\sum_{i=1}^n \mathbbm{1}_{X_i = x}}{n}.
\end{equation}

Similarly to the completely oracle estimator, the semi-oracle estimator can be written as,

\begin{equation*}
     \hat \tau_{\pi, \text{\tiny T}, n}^* = \sum_{x\in\mathds{X}} \frac{p_\text{\tiny T}(x) \mathds{1}_{Z_n(x) >0}}{\hat  p_{\text{\tiny R},n}(x)} \left( \frac{1}{n}\sum_{i =1}^n \mathbbm{1}_{X_i = x} \left(\frac{ Y_i A_i}{\pi} - \frac{ Y_i (1-A_i)}{1-\pi} \right)  \right) ,
\end{equation*}
where $Z_n(x) = \sum_{i =1}^n \mathbbm{1}_{X_i = x}$,
since $X_i$ take values in a categorical set $\mathbb{X}$. From now on, we use the convention that $\mathds{1}_{Z_n(x) >0} / \hat  p_{\text{\tiny R},n}(x) = 0$ if $Z_n(x) = 0$.\\



\textbf{Bias}\\

To shorten notation, we denote the full vector of covariates $\mathbf{X}_n\in \mathds{X}^n$, comprising the $n$ observations $X_1, X_2, \dots X_n \in \mathds{X}$ in the trial. We have

\begin{align*}
    \mathbb{E}\left[\hat \tau_{\pi, \text{\tiny T}, n}^*  \right] &= \mathbb{E}\left[ \sum_{x\in\mathds{X}} \frac{p_\text{\tiny T}(x)\mathds{1}_{Z_n(x) >0}}{\hat  p_{\text{\tiny R},n}(x)} \left( \frac{1}{n}\sum_{i =1}^n \mathbbm{1}_{X_i = x} \left(\frac{ Y_i A_i}{\pi} - \frac{ Y_i (1-A_i)}{1-\pi} \right)  \right)\right]   && \text{By definition} \\
    &=  \sum_{x\in\mathds{X}}\mathbb{E}\left[ \frac{p_\text{\tiny T}(x)\mathds{1}_{Z_n(x) >0}}{\hat  p_{\text{\tiny R},n}(x)} \left( \frac{1}{n}\sum_{i =1}^n \mathbbm{1}_{X_i = x} \left(\frac{ Y_i^{(1)} A_i}{\pi} - \frac{ Y_i^{(0)} (1-A_i)}{1-\pi} \right)  \right)\right]  && \text{Linearity and SUTVA} \\
    &= \sum_{x\in\mathds{X}}\mathbb{E}\left[\mathbb{E}\left[ \frac{p_\text{\tiny T}(x)\mathds{1}_{Z_n(x) >0}}{\hat  p_{\text{\tiny R},n}(x)} \left( \frac{1}{n}\sum_{i =1}^n \mathbbm{1}_{X_i = x} \left(\frac{ Y_i^{(1)} A_i}{\pi} - \frac{ Y_i^{(0)} (1-A_i)}{1-\pi} \right)  \right)\mid \mathbf{X}_n  \right] \right] && \text{Law of total expect.} \\
   &= \sum_{x\in\mathds{X}}\mathbb{E} \left[ p_\text{\tiny T}(x)  \mathbb{E}\left[ \frac{\mathds{1}_{Z_n(x) >0}}{\hat  p_{\text{\tiny R},n}(x)} \left( \frac{1}{n}\sum_{i =1}^n \mathbbm{1}_{X_i = x} \left(\frac{ Y_i^{(1)} A_i}{\pi} - \frac{ Y_i^{(0)} (1-A_i)}{1-\pi} \right)  \right)\mid \mathbf{X}_n  \right] \right] && \text{$p_\text{\tiny T}(x)$ is deterministic} \\
   &= \sum_{x\in\mathds{X}}\mathbb{E} \left[ \frac{p_\text{\tiny T}(x)\mathds{1}_{Z_n(x) >0}}{\hat  p_{\text{\tiny R},n}(x)} \mathbb{E}\left[  \left( \frac{1}{n}\sum_{i =1}^n \mathbbm{1}_{X_i = x} \left(\frac{ Y_i^{(1)} A_i}{\pi} - \frac{ Y_i^{(0)} (1-A_i)}{1-\pi} \right)  \right)\mid \mathbf{X}_n  \right] \right]  \\
    &= \sum_{x\in\mathds{X}}\mathbb{E}\left[ \frac{p_\text{\tiny T}(x)\mathds{1}_{Z_n(x) >0}}{\hat  p_{\text{\tiny R},n}(x)}  \frac{1}{n} \sum_{i =1}^n\mathbbm{1}_{X_i = x} \mathbb{E}\left[ \frac{ Y_i^{(1)} A_i}{\pi} - \frac{ Y_i^{(0)} (1-A_i)}{1-\pi} \mid \mathbf{X}_n  \right] \right]
\end{align*}

This last line uses the fact that $\frac{\sum_{i =1}^n \mathbbm{1}_{X_i = x}}{n}$ is measurable with respect to $\mathbf{X}_n$. 
Then, note that,

\begin{align*}
      \mathbbm{1}_{X_i = x} \mathbb{E}\left[  \frac{ Y_i^{(1)} A_i}{\pi} - \frac{ Y_i^{(0)} (1-A_i)}{1-\pi} \mid \mathbf{X}_n   \right] &=\mathbbm{1}_{X_i = x} \mathbb{E}\left[  \frac{ Y_i^{(1)} A_i}{\pi} - \frac{ Y_i^{(0)} (1-A_i)}{1-\pi} \mid X_i \right] && \text{\textit{iid} observations.} 
\end{align*}

Then, recall from the proof in Subsection~\ref{proof:completely-oracle}, and in particular from \eqref{eq:intermediary-result-proof-cate} that

\begin{align*}
    \mathbbm{1}_{X_i = x} \mathbb{E}\left[  \frac{ Y_i^{(1)} A_i}{\pi} - \frac{ Y_i^{(0)} (1-A_i)}{1-\pi} \mid \mathbf{X}_n   \right]
     &= \mathbbm{1}_{X_i = x} \mathbb{E}\left[\frac{ Y_i^{(1)} A_i}{\pi} - \frac{ Y_i^{(0)} (1-A_i)}{1-\pi} \mid X = x \right] && \text{Indicator \textit{forcing} $X=x$.} \\
     &=  \mathbbm{1}_{X_i = x} \tau(x) && \text{Transportability.}
\end{align*}

Therefore,
\begin{align*}
  \mathbb{E}\left[\hat \tau_{\pi, \text{\tiny T}, n}^*  \right] &= \sum_{x\in\mathds{X}}\mathbb{E}\left[ \frac{p_\text{\tiny T}(x)\mathds{1}_{Z_n(x) >0}}{\hat  p_{\text{\tiny R},n}(x)} \frac{\sum_{i =1}^n \mathbbm{1}_{X_i = x}}{n}\tau(x) \right] \\
     &= \sum_{x\in\mathds{X}}\mathbb{E}\left[ \frac{p_\text{\tiny T}(x)\mathds{1}_{Z_n(x) >0}}{\frac{\sum_{i =1}^n \mathbbm{1}_{X_i = x}}{n}} \frac{\sum_{i =1}^n \mathbbm{1}_{X_i = x}}{n}\tau(x) \right] && \text{Estimation procedure - Equation~\ref{eq-proof:dep-hat-pr}}\\  &= \sum_{x\in\mathds{X}}\mathbb{E}\left[ p_\text{\tiny T}(x) \tau(x) \mathds{1}_{Z_n(x) >0}  \right].
\end{align*}
Note that $Z_n(x) = \sum_{i =1}^n \mathbbm{1}_{X_i = x}$ is distributed as $\mathfrak{B}(n, p_{\text{\tiny R}}(x))$. This leads to the following equality, 
\begin{align*}
  \mathbb{E}\left[\hat \tau_{\pi, \text{\tiny T}, n}^*  \right] &= \sum_{x\in\mathds{X}}\mathbb{E}\left[p_\text{\tiny T}(x) \tau(x) \mathbbm{1}_{Z_n(x)>0} \right] \\
     &= \sum_{x\in\mathds{X}} p_\text{\tiny T}(x) \tau(x) \mathbb{E}\left[ \mathbbm{1}_{Z_n(x)>0} \right] \\
     &= \sum_{x\in\mathds{X}} p_\text{\tiny T}(x) \tau(x) \left(1 - \left(1 - p_{\text{\tiny R}}(x)\right)^n \right). \\
\end{align*}

\textbf{Upper bound of the bias}.\\

If $p_{\text{\tiny R}}(x) = 0$, then $p_{\text{\tiny T}}(x) = 0$ (due to the support inclusion assumption, see Assumption~\ref{a:pos}). Therefore, for all $x \in \mathds{X}, p_{\text{\tiny R}}(x) > 0$. Then, it is possible to bound the bias for any sample size $n$, noting that,
\begin{align*}
 |\mathbb{E}\left[\hat \tau_{\pi, \text{\tiny T}, n}^*  \right] - \tau| & =  \left| \sum_{x\in\mathds{X}} p_\text{\tiny T}(x) \tau(x) \left(1 - (1 - p_{\text{\tiny R}}(x)\right)^n)  - \tau \right|\\
  & =   \left| \sum_{x\in\mathds{X}} p_\text{\tiny T}(x) \tau(x) \left(1 - (1 - p_{\text{\tiny R}}(x))^n \right)  -  \sum_{x\in\mathds{X}} p_\text{\tiny T}(x) \tau(x)  \right|\\
 & =   \left|\sum_{x\in\mathds{X}} p_\text{\tiny T}(x) \tau(x)  \left(1 - p_{\text{\tiny R}}(x)\right)^n  \right|\\
 & \leq     \left(1 - \min_x p_{\text{\tiny R}}(x)\right)^n \sum_{x\in\mathds{X}} p_\text{\tiny T}(x) \left|\tau(x)\right|\\
 & \leq  \left(1 - \min_x p_{\text{\tiny R}}(x) \right)^n \mathbb{E}_{\text{\tiny T}} \left[ | \tau(X) | \right].
\end{align*}

\textbf{Variance}\\

The proof follows the same track as that of the completely oracle IPSW, conditioning on $\mathbf{X}_n$, and using the law of total variance,

\begin{equation}\label{eq:total-variance-semi-oracle}
    \operatorname{Var}\left[   \hat \tau_{\pi, \text{\tiny T}, n}^* \right] =  \operatorname{Var}\left[ \mathbb{E}\left[ \hat \tau_{\pi, \text{\tiny T}, n}^*\mid \mathbf{X}_n \right] \right] + \mathbb{E}\left[ \operatorname{Var}\left[ \hat \tau_{\pi, \text{\tiny T}, n}^* \mid \mathbf{X}_n  \right] \right].
\end{equation}

For the first inside term,
 \begin{align*}
     \mathbb{E}\left[ \hat \tau_{\pi, \text{\tiny T}, n}^*  \mid \mathbf{X}_n  \right]  &=    \mathbb{E}\left[ \sum_{x\in\mathds{X}} \frac{p_\text{\tiny T}(x) \mathds{1}_{Z_n(x)>0}}{\hat p_{\text{\tiny R},n}(x)} \frac{1}{n}\sum_{i =1}^n \mathbbm{1}_{X_i = x} \left(\frac{ Y_i^{(1)} A_i}{\pi} - \frac{ Y_i^{(0)} (1-A_i)}{1-\pi} \right)\mid  \mathbf{X}_n  \right] && \text{By definition (and SUTVA)} \\
     &= \sum_{x\in\mathds{X}} \frac{p_\text{\tiny T}(x)\mathds{1}_{Z_n(x)>0}}{\hat p_{\text{\tiny R},n}(x)} \frac{1}{n}  \sum_{i =1}^n \mathbbm{1}_{X_i = x}\mathbb{E}\left[  \left(\frac{ Y_i^{(1)} A_i}{\pi} - \frac{ Y_i^{(0)} (1-A_i)}{1-\pi} \right)\mid  \mathbf{X}_n  \right] && \text{Linearity of $\mathbb{E}[.]$}  \\
     &= \sum_{x\in\mathds{X}} \frac{p_\text{\tiny T}(x)\mathds{1}_{Z_n(x)>0}}{\hat p_{\text{\tiny R},n}(x)} \frac{1}{n}  \sum_{i =1}^n \mathbbm{1}_{X_i = x} \tau(X_i) &&    \\
      &= \sum_{x\in\mathds{X}} p_\text{\tiny T}(x) \tau(x) \mathds{1}_{Z_n(x) > 0} && \text{Equation~\ref{eq-proof:dep-hat-pr}} \\
      &= \mathbb{E}_{\text{\tiny T}} \left[ \tau(X) \mathds{1}_{Z_n(X) >0} | \mathbf{X}_n \right] && \text{Re-writing the sum as expectancy}.
     \end{align*}
where $X$ is nothing but a generic random variable, independent of $X_1, \hdots , X_n$ and distributed according to the density $p_T$. 
     
\begin{align}
\operatorname{Var} \left[ \mathbb{E}_{\text{\tiny T}}  \left[   \tau(X) \mathbbm{1}_{Z_n(X) > 0} | \mathbf{X}_n \right] \right] &= 
\operatorname{Var} \left[ \mathbb{E}_{\text{\tiny T}}  \left[   \tau(X) \right] - \mathbb{E}_{\text{\tiny T}}  \left[   \tau(X)\mathbbm{1}_{Z_n(X) = 0} | \mathbf{X}_n \right] \right] \nonumber \\
&= \operatorname{Var} \left[\tau - \mathbb{E}_{\text{\tiny T}}  \left[  \tau(X) \mathbbm{1}_{Z_n(X) = 0} | \mathbf{X}_n \right] \right] \nonumber \\
& = \operatorname{Var} \left[ \mathbb{E}_{\text{\tiny T}}  \left[  \tau(X) \mathbbm{1}_{Z_n(X) = 0} | \mathbf{X}_n \right] \right], \label{eq_proof_var_est_tau_indic}
\end{align}
as the only source of randomness comes from $ \mathbb{E}_{\text{\tiny T}}  \left[ \tau(X)  \mathbbm{1}_{Z_n(X) = 0} | \mathbf{X}_n \right] $. 
Therefore, the first inside term of \eqref{eq:total-variance-semi-oracle} corresponds to,
     \begin{equation}\label{eq:proof-semi-oracle-inter-eq-null}
         \operatorname{Var}\left[ \mathbb{E} \left[ \hat \tau_{\pi,\text{\tiny T}, n}^*\mid \mathbf{X}_n \right] \right]
          = \operatorname{Var}\left[ \mathbb{E}_{\text{\tiny T}} \left[ \tau(X) \mathds{1}_{Z_n(X)=0} | \mathbf{X}_n \right] \right].
     \end{equation}

On the other hand, 

 \begin{align*}
      \operatorname{Var}\left[ \hat \tau_{\pi,\text{\tiny T}, n}^* \mid  \mathbf{X}_n  \right] &=       \operatorname{Var}\left[ \frac{1}{n} \sum_{i = 1}^n \frac{p_\text{\tiny T}\left( X_i \right)}{\hat p_{\text{\tiny R},n}\left( X_i \right)}\left(\frac{ Y_i A_i}{\pi} - \frac{ Y_i (1-A_i)}{1-\pi} \right) \mid  \mathbf{X}_n  \right] \\
      &=  \frac{1}{n^2} \sum_{i = 1}^n \left( \frac{p_\text{\tiny T}\left( X_i \right)}{\hat p_{\text{\tiny R},n}\left( X_i \right)}\right)^2 \operatorname{Var}\left[\left(\frac{ Y_i A_i}{\pi} - \frac{ Y_i (1-A_i)}{1-\pi} \right)  \mid  \mathbf{X}_n \right] 
\\
&=  \frac{1}{n^2} \sum_{i = 1}^n \left( \frac{p_\text{\tiny T}\left( X_i \right)}{\hat p_{\text{\tiny R},n}\left( X_i \right)}\right)^2 \operatorname{Var}\left[\left(\frac{ Y_i A_i}{\pi} - \frac{ Y_i (1-A_i)}{1-\pi} \right)  \mid  X_i \right] \\
&=  \frac{1}{n^2} \sum_{i = 1}^n \left( \frac{p_\text{\tiny T}\left( X_i \right)}{\hat p_{\text{\tiny R},n}\left( X_i \right)}\right)^2 V_{\text{\tiny HT}}(X_i),
      \end{align*}
      where the last line comes from intermediary results in the completely oracle proof (see equation~\eqref{eq_explanation_for_g(x)_VHTx}), with

   \begin{equation*}
     V_{ \text{\tiny HT}}(x) := \mathbb{E}_{\text{\tiny R}}\left[ \frac{\left( Y^{(1)} \right)^2}{\pi}  \mid X_i \right]  + \mathbb{E}_{\text{\tiny R}}\left[ \frac{\left( Y^{(0)} \right)^2}{1-\pi}  \mid X_i  \right]  - \tau(X)^2.
\end{equation*}

      Then,
      
   \begin{align*}
       \mathbb{E}\left[ \operatorname{Var}\left[ \hat \tau_{\pi,\text{\tiny T}, n}^* \mid  \mathbf{X}_n  \right]  \right]&=\mathbb{E}\left[\frac{1}{n^2} \sum_{i = 1}^n \mathds{1}_{Z_n(x)=0} \left( \frac{p_\text{\tiny T}\left( X_i \right)}{\hat p_{\text{\tiny R},n}\left( X_i \right)}\right)^2 V_{\text{\tiny HT}}(X_i) \right] && \text{From previous derivations}\\
       &=\mathbb{E}\left[ \sum_{x\in\mathds{X}} \mathds{1}_{Z_n(x)=0} \left(  \frac{1}{n^2} \sum_{i = 1}^n  \mathbbm{1}_{X_i = x}  \left( \frac{p_\text{\tiny T}\left( X_i \right)}{\hat p_{\text{\tiny R},n}\left( X_i \right)}\right)^2 V_{\text{\tiny HT}}(X_i) \right) \right] && \text{Categorical $X$}\\
       &= \mathbb{E}\left[ \sum_{x\in\mathds{X}}  \frac{\mathds{1}_{Z_n(x)=0}}{n^2} \left( \frac{p_\text{\tiny T}\left( x \right)}{\hat p_{\text{\tiny R},n}\left( x \right)}\right)^2 V_{\text{\tiny HT}}(x) \left( \sum_{i = 1}^n  \mathbbm{1}_{X_i = x} \right) \right] \\
       &=   \sum_{x\in\mathds{X}}  \frac{1}{n^2} p_\text{\tiny T}\left( x \right)^2 V_{\text{\tiny HT}}(x)  \mathbb{E}\left[ \left( \frac{\mathds{1}_{Z_n(x)=0}}{\hat p_{\text{\tiny R},n}\left( x \right)}\right)^2  \left( \sum_{i = 1}^n  \mathbbm{1}_{X_i = x} \right) \right].
   \end{align*}

 Replacing $\hat p_{\text{\tiny R},n}\left( x \right)$ by its explicit expression,
   \begin{align}\label{eq:ineq-proof-so}
          \mathbb{E}\left[ \operatorname{Var}\left[ \hat \tau_{\pi,\text{\tiny T}, n}^* \mid  \mathbf{X}_n  \right] \right] & =  \frac{1}{n} \sum_{x\in\mathds{X}}   p_\text{\tiny T}\left( x \right)^2 V_{\text{\tiny HT}}(x) \mathbb{E}\left[ \left( \frac{\mathds{1}_{Z_n(x)=0}}{ \frac{1}{n}\sum_{i = 1}^n  \mathbbm{1}_{X_i = x} }\right)^2  \left( \frac{1}{n}\sum_{i = 1}^n  \mathbbm{1}_{X_i = x} \right) \right] \\
          & =  \frac{1}{n} \sum_{x\in\mathds{X}}   p_\text{\tiny T}\left( x \right)^2 V_{\text{\tiny HT}}(x) \mathbb{E}\left[  \frac{\mathds{1}_{Z_n(x)=0}}{ \frac{1}{n}\sum_{i = 1}^n  \mathbbm{1}_{X_i = x} }  \right].
   \end{align}

Recalling \eqref{eq:total-variance-semi-oracle} and \eqref{eq:proof-semi-oracle-inter-eq-null},
we have
\begin{align}
 \operatorname{Var}\left[ \hat \tau_{\pi, \text{\tiny T}, n}^*\right] & = \operatorname{Var}\left[ \mathbb{E}\left[ \hat \tau_{\pi,\text{\tiny T}, n}^*\mid \mathbf{X}_n \right] \right] +   \mathbb{E}\left[ \operatorname{Var}\left[ \hat \tau_{\pi, \text{\tiny T}, n}^*\mid  \mathbf{X}_n  \right]  \right] \nonumber \\
 & = \operatorname{Var}\left[ \mathbb{E}_{\text{\tiny T}} \left[ \tau(X) \mathds{1}_{Z_n(X)=0} | \mathbf{X}_n \right] \right] + \frac{1}{n} \sum_{x\in\mathds{X}}   p_\text{\tiny T}\left( x \right)^2 V_{\text{\tiny HT}}(x) \mathbb{E}\left[ \frac{\mathbbm{1}_{Z_n(x)>0}}{  \frac{1}{n} \sum_{i = 1}^n  \mathbbm{1}_{X_i = x} } \right]. \label{eq:inter-semi-oracle-before-convergence-variance}
\end{align}



\textbf{Upper bound on the variance}\\

According to \cite{Arnould2021Analyzing} (see Lemma S5, Supplementary Material, page 27), since $Z_{n}(x)$ is distributed as $\mathfrak{B}(n, p_{\text{\tiny R}}(x))$, we have

\begin{equation*}
    \forall x\in \mathds{X},\, 
    \mathbb{E}\left[\frac{\mathbbm{1}_{Z_{n}(x)\neq 0}}{Z_{n}(x)}\right] \leq \frac{2}{(n+1) p_\text{\tiny R}\left( x \right)}.
\end{equation*}
Besides, 
\begin{align}
\operatorname{Var}\left[ \mathbb{E}_{\text{\tiny T}} \left[ \tau(X) \mathds{1}_{Z_n(X)=0} | \mathbf{X}_n \right] \right] & =     \operatorname{Var}\left[ 
\sum_{x \in \mathds{X}} \tau(x) p_{\text{\tiny T}}(x) \mathds{1}_{Z_n(x) =0} \right] \\
& = \operatorname{Var}\left[ \begin{pmatrix} \tau(x_1) p_{\text{\tiny T}}(x_1) \\ \tau(x_2) p_{\text{\tiny T}}(x_2) \\ \vdots  \end{pmatrix}^\top \begin{pmatrix} \mathds{1}_{Z_n(x_1) =0} \\ \mathds{1}_{Z_n(x_2) =0} \\ \vdots  \end{pmatrix} \right] \\
& =  \begin{pmatrix} \tau(x_1) p_{\text{\tiny T}}(x_1) \\ \tau(x_2) p_{\text{\tiny T}}(x_2) \\ \vdots  \end{pmatrix}^\top \operatorname{Var}\left[ \begin{pmatrix} \mathds{1}_{Z_n(x_1) =0} \\ \mathds{1}_{Z_n(x_2) =0} \\ \vdots  \end{pmatrix} \right] \begin{pmatrix} \tau(x_1) p_{\text{\tiny T}}(x_1) \\ \tau(x_2) p_{\text{\tiny T}}(x_2) \\ \vdots  \end{pmatrix},
\end{align}
where the term in the middle is the covariance matrix of the vector $(\mathds{1}_{Z_n(x_1) =0}, \hdots, \mathds{1}_{Z_n(x_p) =0})$, where $p$ is the cardinality of $\mathds{X}$. Direct calculations and Cauchy-Schwarz inequality show that each term of the covariance matrix is upper bounded by 
\begin{align}
\alpha = \left( 1 - \min_{x \in \mathbb{X}} p_\text{\tiny R}(x)\right)^n 
\end{align}
Thus, letting $J = \mathbf{1}_{d \times d}$ the matrix of size $d \times d$ containing ones only, we have 
\begin{align}
\operatorname{Var}\left[ \mathbb{E}_{\text{\tiny T}} \left[ \tau(X) \mathds{1}_{Z_n(X)=0} | \mathbf{X}_n \right] \right] 
& \leq  \alpha \begin{pmatrix} |\tau(x_1)| p_{\text{\tiny T}}(x_1) \\ |\tau(x_2)| p_{\text{\tiny T}}(x_2) \\ \vdots  \end{pmatrix}^\top J \begin{pmatrix} |\tau(x_1) |p_{\text{\tiny T}}(x_1) \\ |\tau(x_2)| p_{\text{\tiny T}}(x_2) \\ \vdots  \end{pmatrix}\\
& \leq \alpha \left( \sum_{x \in \mathds{X}} |\tau(x) | p_{\text{\tiny T}}(x) \right)^2\\
& \leq \left( 1 - \min_{x \in \mathbb{X}} p_\text{\tiny R}(x)\right)^n \left(\mathds{E}_{\text{\tiny T}} \left[ |\tau(X)| \right]\right)^2.
\end{align}

Combining these inequalities with \eqref{eq:inter-semi-oracle-before-convergence-variance} yields, for all $n$,
%

\begin{align*}
\operatorname{Var}\left[   \hat \tau_{\pi, \text{\tiny T}, n}^* \right] \le \left(\mathds{E}_{\text{\tiny T}} \left[ |\tau(X)| \right]\right)^2 \left( 1 - \min_{x \in \mathbb{X}} p_\text{\tiny R}(x)\right)^n + \frac{2}{n+1} \sum_{x\in\mathds{X}}   \frac{p_\text{\tiny T}\left( x \right)^2 }{p_\text{\tiny R}\left( x \right)}V_{ \text{\tiny HT}}(x).
\end{align*}

This expression can be further simplified in,

\begin{align*}
\operatorname{Var}\left[   \hat \tau_{\pi, \text{\tiny T}, n}^* \right] \le \frac{2 V_{so}}{n+1}  +  \left( 1 - \min_{x \in \mathbb{X}} p_\text{\tiny R}(x)\right)^n \left(\mathds{E}_{\text{\tiny T}} \left[ |\tau(X)| \right]\right)^2,
\end{align*}

where 
\begin{align*}
    V_{so}:= \sum_{x\in\mathds{X}}   \frac{p_\text{\tiny T}\left( x \right)^2 }{p_\text{\tiny R}\left( x \right)}V_{ \text{\tiny HT}}(x) =  \mathbb{E}_{\text{\tiny T}}\left[ \left( \frac{p_\text{\tiny T}(X)}{p_\text{\tiny R}(X)} \right)^2  V_{ \text{\tiny HT}}(X)\right].
\end{align*}




\end{proof}

\subsubsection{Proof of Corollary~\ref{cor_asympt_semi_oracle}}\label{proof:asympt-bias-variance-semi-oracle}

\begin{proof}

 \textbf{Asymptotically unbiased}\\

 Recall the expression of the semi-oracle IPSW bias from Proposition~\ref{prop_semi_oracle_bias_variance}.
 
 \begin{align*}
\mathbb{E}\left[\hat \tau_{\pi, \text{\tiny T}, n}^*  \right]
&= \sum_{x\in\mathds{X}} p_\text{\tiny T}(x) \tau(x) \left(1 - (1 - p_{\text{\tiny R}}(x)\right)^n). 
\end{align*}

According to Assumption~\ref{a:pos}, we have 
$\forall x \in \mathds{X}$, $0 < p_{\text{\tiny R}}(x) < 1$. As a consequence,
\begin{equation*}
     \lim_{n\to\infty}  \left(1 - (1 - p_{\text{\tiny R}}(x)\right)^n = 1,
\end{equation*}

which leads to

\begin{equation*}
    \lim_{n\to\infty}  \mathbb{E}\left[\hat \tau_{\pi, \text{\tiny T}, n}^*  \right]  = \tau.
\end{equation*}

 \textbf{Asymptotic variance}\\
 
 Recall the expression of the variance of the semi-oracle IPSW from Proposition~\ref{prop_semi_oracle_bias_variance}:
 
 \begin{equation}
    n \operatorname{Var}\left[ \hat \tau_{\pi, \text{\tiny T}, n}^*\right]  = n \operatorname{Var}\left[ \mathbb{E}\left[ \hat \tau_{\pi,\text{\tiny T}, n}^*\mid \mathbf{X}_n \right] \right] +  \sum_{x\in\mathds{X}} p_\text{\tiny T}\left( x \right)^2 V_{\text{\tiny HT}}(x) \mathbb{E}_\text{\tiny R}\left[ \frac{\mathbbm{1}_{Z_n(x)>0}}{  \frac{1}{n} \sum_{i = 1}^n  \mathbbm{1}_{X_i = x} } \right].
\end{equation}
Note that the first term tends to zero since 
\begin{align*}
0 \leq n \operatorname{Var}\left[ \mathbb{E}\left[ \hat \tau_{\pi,\text{\tiny T}, n}^*\mid \mathbf{X}_n \right] \right] \leq \left(\mathds{E}_{\text{\tiny T}} \left[ |\tau(X)| \right]\right)^2  \left( 1 - \min_{x \in \mathbb{X}} p_\text{\tiny R}(x)\right)^n.
\end{align*}
Therefore, 
 \begin{equation}\label{eq:semi-oracle-before-convergence-variance-recall}
    \lim\limits_{n \to \infty} n \operatorname{Var}\left[ \hat \tau_{\pi, \text{\tiny T}, n}^*\right]  = \lim\limits_{n \to \infty}  \sum_{x\in\mathds{X}} p_\text{\tiny T}\left( x \right)^2 V_{\text{\tiny HT}}(x) \mathbb{E}_\text{\tiny R}\left[ \frac{\mathbbm{1}_{Z_n(x)>0}}{  \frac{1}{n} \sum_{i = 1}^n  \mathbbm{1}_{X_i = x} } \right].
\end{equation}
The next part of the proof consists in characterizing how the term $\mathbb{E}_\text{\tiny R}\left[ \frac{\mathbbm{1}_{Z_n(x)>0}}{  Z_n(x)/n } \right] $ converges. Let $\varepsilon>0$. Since, for all $x$, $p_{\text{\tiny R}}(x)>0$, we have 
 \begin{align}
 \mathbb{E}\left[ \frac{\mathbbm{1}_{Z_n(x)>0}}{  \frac{Z_n(x)}{n} } \right] & =  \mathbb{E}\left[ \frac{\mathbbm{1}_{Z_n(x)>0}}{  \frac{Z_n(x)}{n} } \mathbbm{1}_{|\frac{Z_n(x)}{n} - p_{\text{\tiny R}}(x)| \geq \varepsilon}\right] +   \mathbb{E}\left[ \frac{\mathbbm{1}_{Z_n(x)>0}}{  \frac{Z_n(x)}{n} } \mathbbm{1}_{| \frac{Z_n(x)}{n} - p_{\text{\tiny R}}(x)|< \varepsilon}\right].  \label{eq_proof_Chernoff}
 \end{align}
Regarding the first term in \eqref{eq_proof_Chernoff}, we have
\begin{align*}
\mathbb{E}\left[ \frac{\mathbbm{1}_{Z_n(x)>0}}{  \frac{Z_n(x)}{n} } \mathbbm{1}_{|\frac{Z_n(x)}{n} - p_{\text{\tiny R}}(x)| \geq \varepsilon}\right]
& \leq n \mathbb{P}\left[ |\frac{Z_n(x)}{n} - p_{\text{\tiny R}}(x)| \geq \varepsilon\right],
\end{align*}
since, on the event $Z_n(x) >0$, $Z_n(x) \geq 1$. Now, by Chernoff's inequality, 
\begin{align*}
\mathbb{P}\left[ |\frac{Z_n(x)}{n} - p_{\text{\tiny R}}(x)| \geq \varepsilon \right] \leq 2 \exp\left( - 2 \varepsilon^2 n \right),    
\end{align*}
which yields
\begin{align}
 \mathbb{E}\left[ \frac{\mathbbm{1}_{Z_n(x)>0}}{  \frac{Z_n(x)}{n} } \mathbbm{1}_{|\frac{Z_n(x)}{n} - p_{\text{\tiny R}}(x)| \geq \varepsilon}\right] \leq  2 n \exp\left( - 2 \varepsilon^2 n \right).  \label{eq_proof_Chernoff1}
\end{align}
Regarding the second term in equation \eqref{eq_proof_Chernoff}, since 
\begin{align*}
 \frac{\mathbbm{1}_{Z_n(x)>0}}{  \frac{Z_n(x)}{n} } \mathbbm{1}_{|\frac{Z_n(x)}{n} - p_{\text{\tiny R}}(x)|< \varepsilon}   
\end{align*}
is bounded above, for $\varepsilon < p_{\text{\tiny R}}(x)/2$ and converges in probability to $1/p_{\text{\tiny R}}(x)$, we have 
\begin{align}
\mathbb{E}\left[ \frac{\mathbbm{1}_{Z_n(x)>0}}{ \frac{Z_n(x)}{n} } \mathbbm{1}_{|\frac{Z_n(x)}{n}  - p_{\text{\tiny R}}(x)|< \varepsilon}\right] \to \frac{1}{p_{\text{\tiny R}}(x)}, \quad \textrm{as}~n \to \infty. \label{eq_proof_Chernoff2}
\end{align}
Combining \eqref{eq_proof_Chernoff1} and \eqref{eq_proof_Chernoff2}, we have 
\begin{align*}
\mathbb{E}\left[ \frac{\mathbbm{1}_{Z_n(x)>0}}{  Z_n(x)/n } \right] \to \frac{1}{p_{\text{\tiny R}}(x)}, \quad \textrm{as}~n \to \infty.    
\end{align*}

Using equation \eqref{eq:semi-oracle-before-convergence-variance-recall}, we finally obtain
   \begin{equation*}
       \lim_{n\to\infty}  n  \operatorname{Var}\left[ \hat \tau_{\pi, \text{\tiny T}, n}^*\right] = \sum_{x\in\mathds{X}} \frac{p_\text{\tiny T}\left( x \right)^2}{p_\text{\tiny R}\left( x \right)}V_{\text{\tiny HT}}(x) = \mathbb{E} \left[ \left( \frac{p_\text{\tiny T}\left( X \right)}{p_\text{\tiny R}\left( X \right)}\right)^2 V_{\text{\tiny HT}}(X)\right] := V_{\text{\tiny so}}.
   \end{equation*}
\end{proof}

 \subsubsection{Proof of Theorem~\ref{thm:semi-oracle}}\label{proof:risk-and-bound-semi-oracle}
 
 \begin{proof}
 For any estimate $\hat{\tau}$, we have 
\begin{equation*}
\mathbb{E}\left[ \left( \hat  \tau - \tau \right)^2\right] = \left( \mathbb{E}\left[ \hat  \tau \right] - \tau \right)^2 + \operatorname{Var}\left[   \hat  \tau \right].
\end{equation*}
Therefore, the risk of the semi-oracle IPSW estimate can be bounded using results from Subsection~\ref{proof:explicit-bias-variance-and-bounds-semi-oracle} (or Proposition~\ref{prop_semi_oracle_bias_variance}), and in particular the bounds on the variance and the bias,
\begin{align*}
\mathbb{E}\left[ \left( \hat  \tau - \tau \right)^2\right] & \leq \left(1 - \min_x p_\text{\tiny R}(x) \right)^{2n} \mathbb{E}_{\text{\tiny T}} \left[  |\tau(X)|\right]^2 + \frac{2 V_{so}}{n+1} +  \left( 1 - \min_{x \in \mathbb{X}} p_\text{\tiny R}(x)\right)^n \left(\mathds{E}_{\text{\tiny T}} \left[ |\tau(X)| \right]\right)^2 \\
& \leq  \frac{2 V_{so}}{n+1} + 2  \left( 1 - \min_{x \in \mathbb{X}} p_\text{\tiny R}(x)\right)^n \left(\mathds{E}_{\text{\tiny T}} \left[ |\tau(X)| \right]\right)^2,
\end{align*}
The $L^2$ consistency holds by letting $n$ tend to infinity. 

 \end{proof}

\subsection{Proofs for (estimated) IPSW $\hat \tau_{\pi,n,m}$}\label{proof:ipsw}

We first recall the definition of a fully estimated estimator introduced in Definition~\ref{def:ipsw}.

\begin{equation*}
    \hat \tau_{\pi, n,m} =   \frac{1}{n} \sum_{i = 1}^n \frac{\hat p_{\text{\tiny T},m }\left( X_i \right)}{\hat  p_{\text{\tiny R},n} (X_i)}\left(\frac{ Y_i A_i}{\pi} - \frac{ Y_i (1-A_i)}{1-\pi} \right),
\end{equation*}

where, for all $x\in \mathds{X}$, \begin{equation}\label{eq-proof:dep-hat-pr-pt}
    \hat p_{\text{\tiny R},n}\left( x \right)= \frac{\sum_{i=1}^n \mathbbm{1}_{X_i = x}}{n},\qquad  \text{and}\qquad \hat p_{\text{\tiny T},m}\left( x \right)= \frac{\sum_{i=n+1}^{n+m} \mathbbm{1}_{X_i = x}}{m}.
\end{equation}

Similar to the completely oracle estimator, this estimated IPSW can be written as,

\begin{equation*}
     \hat \tau_{\pi, n,m} = \sum_{x\in\mathds{X}} \frac{\hat p_{\text{\tiny T},m }\left( x \right) \mathbbm{1}_{Z_{n}(x) > 0} }{\hat  p_{\text{\tiny R},n}(x)} \left( \frac{1}{n}\sum_{i =1}^n \mathbbm{1}_{X_i = x} \left(\frac{ Y_i A_i}{\pi} - \frac{ Y_i (1-A_i)}{1-\pi} \right)  \right).
\end{equation*}

All the proofs below rely on this decomposition.

\subsubsection{Proof of Proposition~\ref{prop_completely_estimated}}\label{proof_prop_completely_estimated}

We prove the following, more general, proposition. 

\begin{proposition}
\label{prop:appendix_soIPSW}
Under the general setting defined in Subsection~\ref{subsec:model}, granting Assumptions~\ref{a:repres-rct}-\ref{a:pos}, the bias of the estimated IPSW is the same as that of the semi-oracle IPSW, that is,  for all $n, m$,  
\begin{align*}
  \mathbb{E}\left[\hat \tau_{\pi, n,m}  \right] - \tau \; &= \; - \sum_{x\in\mathds{X}} p_{\text{\tiny T}}(x)\, (1 - p_{\text{\tiny R}}(x))^n \,\tau(x).
\end{align*}
Moreover, under the same set of assumptions, the variance of the estimated IPSW satisfies, for all $n,m$, 
\begin{multline*}
    \operatorname{Var}\left[   \hat \tau_{\pi, n,m} \right] \,=\, \operatorname{Var}\left[\hat \tau_{\pi, \text{\tiny T},n}^* \right]  + \frac{1}{m}\left(  \operatorname{Var}_{\text{\tiny T}}\left[  \tau(X) \mathbbm{1}_{Z_n(X) \neq 0} \right] - \operatorname{Var} \left[ \mathbb{E}_{\text{\tiny T}} \left[ \tau(X) \mathbbm{1}_{Z_n(X) = 0} | \mathbf{X}_n \right] \right]  \right) \\
    +  \frac{1}{n\,m}  \sum_{x\in\mathds{X}}  V_{\text{\tiny HT}} (x) \, p_{\text{\tiny T}}(x) \,(1-p_{\text{\tiny T}}(x)) \,\mathbb{E} \left[  \frac{\mathbbm{1}_{Z_n(x) \neq 0} }{\hat p_{\text{\tiny R},n}\left( x \right)}  \right]
\end{multline*}
\begin{flalign}
\text{and}&&
     \operatorname{Var}\left[   \hat \tau_{\pi, n,m} \right]  &\le  \frac{2V_{so}}{n+1}  +  \frac{ \operatorname{Var}_{\text{\tiny T}} \left[ \tau(X)\right]}{m}  +  \frac{2}{m\left(n+1\right)}\mathbb{E}_{\text{\tiny R}}\left[ \frac{p_\text{\tiny T}\left( X \right)(1-p_\text{\tiny T}\left( X \right))}{p_\text{\tiny R}\left( X \right)^2} V_{\text{\tiny HT}}(X)\right]   \nonumber 
&&\\&&
     & \qquad +  \left(1 - \min_x p_{\text{\tiny R}}(x) \right)^{n/2}\mathbb{E}_{\text{\tiny T}} \left[   \tau(X)^2 \right]\left( 1 +  \frac{4}{m} \right).
&&
\end{flalign}
\end{proposition}

\begin{proof}

\textbf{Expression of the bias}\\

Using the exact same derivations as in Subsection~\ref{proof:explicit-bias-variance-and-bounds-semi-oracle} (Bias), \textbf{but} using the law of total expectation when conditioning on $\mathbf{X}_{n+m}\in \mathds{X}^{n+m}$ (i.e. comprising the $n$ and $m$ observations $X_1, X_2, \dots X_n, X_{n+1} \dots X_{n+m}  \in \mathds{X}$ in the trial and target population, one has, 
\begin{align*}
  \mathbb{E}\left[\hat \tau_{\pi, n,m}  \right] &= \sum_{x\in\mathds{X}}\mathbb{E}\left[ \frac{\hat p_{\text{\tiny T},m}(x) \mathbbm{1}_{Z_{n}(x) > 0}}{\hat  p_{\text{\tiny R},n}(x)} \frac{\sum_{i =1}^n \mathbbm{1}_{X_i = x}}{n}\tau(x) \right] \\
     &= \sum_{x\in\mathds{X}}\mathbb{E}\left[ \frac{\hat p_{\text{\tiny T},m}(x) \mathbbm{1}_{Z_{n}(x) > 0}}{\frac{\sum_{i =1}^n \mathbbm{1}_{X_i = x}}{n}} \frac{\sum_{i =1}^n \mathbbm{1}_{X_i = x}}{n}\tau(x) \right] && \text{Estimation procedure - Equation~\ref{eq-proof:dep-hat-pr-pt}} \\
     & = \sum_{x\in\mathds{X}}\mathbb{E} \left[\hat p_{\text{\tiny T},m}(x) \tau(x) \mathbbm{1}_{Z_{n}(x) > 0} \right].
\end{align*}


Note that $Z_n(x)$ only depend on the trial sample $\mathcal{R}$ and $\hat p_{\text{\tiny T},m}(x)$ on the observational sample. In addition, $\tau(x)$ is deterministic, therefore

\begin{align*}
  \mathbb{E}\left[\hat \tau_{\pi, n,m}  \right] &= \sum_{x\in\mathds{X}} \tau(x) \mathbb{E}\left[\hat p_{\text{\tiny T},m}(x) \right]  \mathbb{E}\left[\mathbbm{1}_{Z_{n}(x) > 0} \right].
\end{align*}

Note that $\mathbb{E} \left[ \hat p_{\text{\tiny T},m}(x) \right]  = p_{\text{\tiny T}}(x)$. Besides, according to the proof of the semi-oracle IPSW, 

%
%
\begin{align*}
    \mathbb{E} \left[\mathbbm{1}_{Z_{n}(x) > 0} \right] = \left(1 - (1 - p_{\text{\tiny R}}(x)\right)^n).
\end{align*}
Therefore, 
\begin{align*}
  \mathbb{E} \left[\hat \tau_{\pi, n,m}  \right] &= \sum_{x\in\mathds{X}} p_{\text{\tiny T}}(x)\tau(x) \left(1 - (1 - p_{\text{\tiny R}}(x)\right)^n),
\end{align*}
that is 
\begin{align*}
  \mathbb{E} \left[\hat \tau_{\pi, n,m}  \right] - \tau &= - \sum_{x\in\mathds{X}} p_{\text{\tiny T}}(x)\tau(x)  \left(1 - p_{\text{\tiny R}}(x)\right)^n.
\end{align*}

\textbf{Upper bound on the bias}\\

It is possible to bound the bias for any sample size $n$, using the exact same derivations than for the semi-oracle IPSW.\\


\textbf{Expression of the variance}\\

The proof follows a similar spirit as the proof for the completely oracle estimator, conditioning on all observations $\mathbf{X}_{n+m}$.

\begin{equation}
    \operatorname{Var}\left[   \hat \tau_{\pi, n,m} \right] =  \operatorname{Var}\left[ \mathbb{E}\left[\hat \tau_{\pi, n,m} \mid \mathbf{X}_{n+m} \right] \right] + \mathbb{E}\left[ \operatorname{Var}\left[ \hat \tau_{\pi, n,m}  \mid \mathbf{X}_{n+m}  \right] \right]. \label{eq_proof_comp_est_dec_var}
\end{equation}

\begin{align*}
    \mathbb{E}\left[\hat \tau_{\pi, n,m} \mid \mathbf{X}_{n+m} \right]  &= \mathbb{E} \left[\sum_{x\in\mathds{X}} \frac{\hat p_{\text{\tiny T},m }\left( x \right) \mathbbm{1}_{Z_{n}(x) > 0}}{\hat  p_{\text{\tiny R},n}(x)} \left( \frac{1}{n}\sum_{i =1}^n \mathbbm{1}_{X_i = x} \left(\frac{ Y_i A_i}{\pi} - \frac{ Y_i (1-A_i)}{1-\pi} \right)  \right)\mid \mathbf{X}_{n+m}  \right]  \\
    &= \sum_{x\in\mathds{X}} \mathbb{E} \left[\frac{\hat p_{\text{\tiny T},m }\left( x \right) \mathbbm{1}_{Z_{n}(x) > 0}}{\hat  p_{\text{\tiny R},n}(x)} \left( \frac{1}{n}\sum_{i =1}^n \mathbbm{1}_{X_i = x} \left(\frac{ Y_i A_i}{\pi} - \frac{ Y_i (1-A_i)}{1-\pi} \right)  \right) \mid \mathbf{X}_{n+m} \right] && \text{Linearity of $\mathbb{E}[.]$} \\
    &= \sum_{x\in\mathds{X}} \frac{\hat p_{\text{\tiny T},m }\left( x \right) \mathbbm{1}_{Z_{n}(x) > 0}}{\hat  p_{\text{\tiny R},n}(x)}  \mathbb{E} \left[ \frac{1}{n}\sum_{i =1}^n \mathbbm{1}_{X_i = x} \left(\frac{ Y_i A_i}{\pi} - \frac{ Y_i (1-A_i)}{1-\pi} \right)  \mid \mathbf{X}_{n+m}  \right].
\end{align*}

Indeed, $\hat p_{\text{\tiny R},n}(x)$  and $\hat p_{\text{\tiny T},m }(x)$ are measurable with respect to  $\mathbf{X}_{n+m}$. Pursuing the computation, we have

\begin{align*}
    \mathbb{E}\left[\hat \tau_{\pi, n,m} \mid \mathbf{X}_{n+m} \right]  &= \sum_{x\in\mathds{X}} \frac{\hat p_{\text{\tiny T},m }\left( x \right) \mathbbm{1}_{Z_{n}(x) > 0}}{\hat  p_{\text{\tiny R},n}(x)}  \mathbb{E} \left[ \frac{1}{n}\sum_{i =1}^n \mathbbm{1}_{X_i = x} \left(\frac{ Y_i A_i}{\pi} - \frac{ Y_i (1-A_i)}{1-\pi} \right)  \mid \mathbf{X}_{n+m}  \right] \\
    &= \sum_{x\in\mathds{X}} \frac{\hat p_{\text{\tiny T},m }\left( x \right) \mathbbm{1}_{Z_{n}(x) > 0}}{\hat  p_{\text{\tiny R},n}(x)}  \frac{1}{n}\sum_{i =1}^n  \mathbb{E} \left[ \mathbbm{1}_{X_i = x} \left(\frac{ Y_i A_i}{\pi} - \frac{ Y_i (1-A_i)}{1-\pi} \right)  \mid \mathbf{X}_{n+m}  \right] && \text{Linearity of $\mathbb{E}[.]$} \\
    &= \sum_{x\in\mathds{X}} \frac{\hat p_{\text{\tiny T},m }\left( x \right) \mathbbm{1}_{Z_{n}(x) > 0}}{\hat  p_{\text{\tiny R},n}(x)}  \frac{1}{n}\sum_{i =1}^n \mathbbm{1}_{X_i = x} \mathbb{E} \left[  \left(\frac{ Y_i A_i}{\pi} - \frac{ Y_i (1-A_i)}{1-\pi} \right)  \mid \mathbf{X}_{n+m}  \right] && \text{Conditioning on $\mathbf{X}_n$} \\
    &= \sum_{x\in\mathds{X}} \frac{\hat p_{\text{\tiny T},m }\left( x \right) \mathbbm{1}_{Z_{n}(x) > 0}}{\hat  p_{\text{\tiny R},n}(x)} \tau(x)  \frac{1}{n}\sum_{i =1}^n \mathbbm{1}_{X_i = x} && \text{Transportability}\\
    &= \sum_{x\in\mathds{X}}  \hat p_{\text{\tiny T},m }\left( x \right) \tau(x) \mathbbm{1}_{Z_{n}(x) > 0}, && 
\end{align*}
where $Z_n(x) = \sum_{i =1}^n \mathbbm{1}_{X_i = x}$. Then,

\begin{align*}
    \operatorname{Var}\left[ \mathbb{E}\left[\hat \tau_{\pi, n,m} \mid \mathbf{X}_{n+m} \right]  \right] &=  \operatorname{Var}\left[  \sum_{x\in\mathds{X}} \hat p_{\text{\tiny T},m }\left( x \right)\tau(x) \mathbbm{1}_{Z_{n}(x) > 0} \right] \\
    &=  \operatorname{Var}\left[  \sum_{x\in\mathds{X}} \frac{\sum_{i = n+1}^{n+m} \mathbbm{1}_{X_i = x} }{m}\tau(x) \mathbbm{1}_{Z_{n}(x) > 0} \right] \\
    &= \operatorname{Var}\left[  \frac{1}{m} \sum_{i = n+1}^{n+m}  \tau(X_i) \mathbbm{1}_{Z_{n}(x) > 0}  \right] 
\end{align*}

Note that, contrary to the semi-oracle IPSW, this term is non-null due to estimation of $\hat p_{\text{\tiny T},m}$. By the law of total variance, 
\begin{align*}
\operatorname{Var}\left[  \frac{1}{m} \sum_{i = n+1}^{n+m}  \tau(X_i) \mathbbm{1}_{Z_{n}(x) > 0}  \right] & = \mathbb{E} \left[ \operatorname{Var}\left[  \frac{1}{m} \sum_{i = n+1}^{n+m}  \tau(X_i) \mathbbm{1}_{Z_{n}(x) > 0} | \mathbf{X}_n \right] \right] \\
& \quad + \operatorname{Var} \left[ \mathbb{E} \left[  \frac{1}{m} \sum_{i = n+1}^{n+m}  \tau(X_i) \mathbbm{1}_{Z_{n}(x) > 0} | \mathbf{X}_n \right] \right] \\
& = \frac{1}{m} \mathbb{E} \left[ \operatorname{Var}\left[  \tau(X) \mathbbm{1}_{Z_{n}(x) > 0} | \mathbf{X}_n \right] \right] + \operatorname{Var} \left[ \mathbb{E} \left[   \tau(X) \mathbbm{1}_{Z_{n}(x) > 0} | \mathbf{X}_n \right] \right]\\
& = \frac{1}{m}  \operatorname{Var}\left[  \tau(X) \mathbbm{1}_{Z_{n}(x) > 0} \right] + \left( 1 - \frac{1}{m} \right) \operatorname{Var} \left[ \mathbb{E} \left[   \tau(X) \mathbbm{1}_{Z_{n}(x) > 0} | \mathbf{X}_n \right] \right],
\end{align*}
where the last line comes from the law of total variance applied to $\operatorname{Var}\left[  \tau(X) \mathbbm{1}_{Z_{n}(x) > 0}\right]$. Recalling similar derivations from the semi-oracle IPSW proof, and in particular \eqref{eq_proof_var_est_tau_indic}, one has
\begin{align*}
\operatorname{Var} \left[ \mathbb{E} \left[   \tau(X) \mathbbm{1}_{Z_{n}(x) > 0}| \mathbf{X}_n \right] \right] &=  \operatorname{Var} \left[ \mathbb{E}_{\text{\tiny T}} \left[  \tau(X) \mathbbm{1}_{Z_{n}(x) > 0} | \mathbf{X}_n \right] \right],
\end{align*}

so that
\begin{align}
    \operatorname{Var}\left[ \mathbb{E} \left[\hat \tau_{\pi, n,m} \mid \mathbf{X}_n \right]  \right] &=
    \frac{1}{m}  \operatorname{Var}\left[  \tau(X) \mathbbm{1}_{Z_{n}(x) > 0} \right] + \left( 1 - \frac{1}{m} \right) \operatorname{Var} \left[ \mathbb{E}_{\text{\tiny T}} \left[ \tau(X)  \mathbbm{1}_{Z_n(X) = 0} | \mathbf{X}_n \right] \right]. \label{eq_proof_comp_est_dec_var_term1}
\end{align}

For the other term of \eqref{eq_proof_comp_est_dec_var}, 
\begin{align*}
    \operatorname{Var}\left[ \hat \tau_{\pi, n,m}  \mid \mathbf{X}_{n+m}  \right]  &=   \operatorname{Var}\left[  \frac{1}{n} \sum_{i = 1}^n \frac{\hat p_{\text{\tiny T},m}\left( X_i \right)}{\hat p_{\text{\tiny R},n}\left( X_i \right)}\left(\frac{ Y_i A_i}{\pi} - \frac{ Y_i (1-A_i)}{1-\pi} \right)   \mid \mathbf{X}_{n+m} \right]  \\
    &= \frac{1}{n^2} \sum_{i = 1}^n \left( \frac{\hat p_{\text{\tiny T},m}\left( X_i \right)}{\hat p_{\text{\tiny R},n}\left( X_i \right)}\right)^2 V_{\text{\tiny HT}}(X_i).
      \end{align*}
   
Derivations are very similar to the semi-oracle estimator, using the fact that $\hat p_{\text{\tiny R},n}(x)$ and $\hat p_{\text{\tiny T},m }(x)$ are measurable with respect to $\mathbf{X}_{n+m}$. We have

   \begin{align*}
       \mathbb{E}\left[ \operatorname{Var}\left[ \hat \tau_{\pi, n,m}\mid  \mathbf{X}_{n+m}  \right]  \right]&=\mathbb{E} \left[\frac{1}{n^2} \sum_{i = 1}^n \left( \frac{\hat p_{\text{\tiny T},m}\left( X_i \right)}{\hat p_{\text{\tiny R},n}\left( X_i \right)}\right)^2 V_{\text{\tiny HT}}(X_i) \right] && \text{From previous derivations}\\
       &=\mathbb{E} \left[ \sum_{x\in\mathds{X}} \mathbbm{1}_{Z_{n}(x) > 0} \left(  \frac{1}{n^2} \sum_{i = 1}^n  \mathbbm{1}_{X_i = x}  \left( \frac{\hat p_{\text{\tiny T},m}\left( X_i \right)}{\hat p_{\text{\tiny R},n}\left( X_i \right)}\right)^2V_{\text{\tiny HT}}(X_i)\right) \right] && \text{Categorical $X$}\\
       &= \mathbb{E} \left[ \sum_{x\in\mathds{X}} \mathbbm{1}_{Z_{n}(x) > 0}  \frac{1}{n^2} \left( \frac{\hat p_{\text{\tiny T},m}\left( x \right)}{\hat p_{\text{\tiny R},n}\left( x \right)}\right)^2 V_{\text{\tiny HT}}(x)  \left( \sum_{i = 1}^n  \mathbbm{1}_{X_i = x} \right)  \right] \\
       &=   \sum_{x\in\mathds{X}}  \frac{1}{n^2}  V_{\text{\tiny HT}}(x) \mathbb{E} \left[ \mathbbm{1}_{Z_{n}(x) > 0}  \left( \frac{\hat p_{\text{\tiny T},m}\left( x \right)}{\hat p_{\text{\tiny R},n}\left( x \right)}\right)^2  \left( \sum_{i = 1}^n  \mathbbm{1}_{X_i = x} \right) \right] \\
       &=   \sum_{x\in\mathds{X}}  \frac{1}{n} V_{\text{\tiny HT}}(x) \mathbb{E}_\text{\tiny R}\left[ \mathbbm{1}_{Z_{n}(x) > 0} \frac{\left( \hat p_{\text{\tiny T},m}\left( x \right)\right)^2}{\hat p_{\text{\tiny R},n}\left( x \right)}   \right].
   \end{align*}

In particular, the last term can be simplified in
\begin{equation}\label{eq:ineq-proof-completely-estimated}
    \mathbb{E}\left[ \operatorname{Var}\left[ \hat \tau_{\pi, n,m}\mid  \mathbf{X}_{n+m} \right] \right] =  \sum_{x\in\mathds{X}}  \frac{1}{n}  V_{\text{\tiny HT}}(x) \mathbb{E} \left[  \left( \hat p_{\text{\tiny T},m}\left( x \right)\right)^2 \right] \mathbb{E} \left[  \frac{\mathbbm{1}_{Z_{n}(x) > 0} }{\hat p_{\text{\tiny R},n}\left( x \right)}  \right].
\end{equation}

This last derivation is possible because $\hat p_{\text{\tiny T},m}\left( x \right)$, which depends on $\mathcal{T}$, and $\hat p_{\text{\tiny R},n}\left( x \right)$, which depends on $\mathcal{R}$, are independent. The difference from the semi-oracle estimator comes from the term


\begin{align}
    \mathbb{E} \left[  \left( \hat p_{\text{\tiny T},m}\left( x \right)\right)^2 \right] &=  \mathbb{E} \left[ \left( \frac{ \sum_{i = n+1}^m  \mathbbm{1}_{X_i = x}}{m} \right)^2 \right] \nonumber \\
    &= \frac{1}{m^2}\mathbb{E} \left[ \left( \sum_{i = n+1}^m  \mathbbm{1}_{X_i = x} \right)^2  \right] \nonumber\\
    &= \frac{1}{m^2} \left( m p_{\text{\tiny T}}(x)(1-p_{\text{\tiny T}})(x) +  m^2 p_{\text{\tiny T}}^2 (x) \right) \nonumber \\
    &= \frac{p_{\text{\tiny T}}(x)(1-p_{\text{\tiny T}}(x))}{m} + p_{\text{\tiny T}}^2 (x). \label{eq_proof_comp_est_dec_var_term2}
\end{align}
Using \eqref{eq_proof_comp_est_dec_var_term1} and \eqref{eq_proof_comp_est_dec_var_term2} in \eqref{eq_proof_comp_est_dec_var}, we have
\begin{align}%
    \operatorname{Var}\left[   \hat \tau_{\pi, n,m} \right] & =  \operatorname{Var}\left[ \mathbb{E}\left[\hat \tau_{\pi, n,m} \mid \mathbf{X}_{n+m} \right] \right] + \mathbb{E}\left[ \operatorname{Var}\left[ \hat \tau_{\pi, n,m}  \mid \mathbf{X}_{n+m}  \right] \right]  \nonumber \\
    &= \frac{1}{m}  \operatorname{Var}_{\text{\tiny T}}\left[  \tau(X) \mathbbm{1}_{Z_{n}(x) > 0} \right] + \left( 1 - \frac{1}{m} \right) \operatorname{Var} \left[ \mathbb{E}_{\text{\tiny T}} \left[ \tau(X) \mathbbm{1}_{Z_n(X) = 0} | \mathbf{X}_n \right] \right]  \nonumber \\
    & \qquad +  \frac{1}{n}  \sum_{x\in\mathds{X}}  V_{\text{\tiny HT}} (x) \frac{p_{\text{\tiny T}}(x)(1-p_{\text{\tiny T}}(x))}{m}\mathbb{E} \left[  \frac{\mathbbm{1}_{Z_{n}(x) > 0} }{\hat p_{\text{\tiny R},n}\left( x \right)}  \right] + \frac{1}{n} \sum_{x\in\mathds{X}}  V_{\text{\tiny HT}} (x) p_{\text{\tiny T}}^2 (x)  \mathbb{E} \left[  \frac{\mathbbm{1}_{Z_{n}(x) > 0} }{\hat p_{\text{\tiny R},n}\left( x \right)}  \right]  \nonumber \\
    &= \frac{1}{m}\left(  \operatorname{Var}_{\text{\tiny T}}\left[  \tau(X) \mathbbm{1}_{Z_{n}(x) > 0} \right] - \operatorname{Var} \left[ \mathbb{E}_{\text{\tiny T}} \left[ \tau(X) \mathbbm{1}_{Z_n(X) = 0} | \mathbf{X}_n \right] \right]  \right)  + \operatorname{Var}\left[\hat \tau_{\pi, \text{\tiny T},n}^* \right] \nonumber \\
    & \qquad +  \frac{1}{nm}  \sum_{x\in\mathds{X}}  V_{\text{\tiny HT}} (x) p_{\text{\tiny T}}(x)(1-p_{\text{\tiny T}}(x)) \mathbb{E} \left[  \frac{\mathbbm{1}_{Z_{n}(x) > 0} }{\hat p_{\text{\tiny R},n}\left( x \right)}  \right]. \label{eq:explicit-variance-expression}
\end{align}

~\\

\textbf{Upper bound on the variance}.\\

We first bound \eqref{eq_proof_comp_est_dec_var_term1}, corresponding to

\begin{align*}
    \operatorname{Var}\left[ \mathbb{E} \left[\hat \tau_{\pi, n,m} \mid \mathbf{X}_{n+m} \right]  \right] &=
    \frac{1}{m}  \operatorname{Var}_{\text{\tiny T}}\left[  \tau(X) \mathbbm{1}_{Z_{n}(x) > 0} \right] + \left( 1 - \frac{1}{m} \right) \operatorname{Var} \left[ \mathbb{E} \left[ \tau(X) \mathbbm{1}_{Z_n(X) = 0} | \mathbf{X}_{n+m} \right] \right].
\end{align*}

We have
\begin{align*}
\operatorname{Var}_{\text{\tiny T}}\left[  \tau(X) \mathbbm{1}_{Z_n(X) \neq 0} \right]    & = \operatorname{Var}_{\text{\tiny T}}\left[  \tau(X)  -  \tau(X) \mathbbm{1}_{Z_n(X) = 0} \right] \\
& = \operatorname{Var}_{\text{\tiny T}} \left[ \tau(X)\right] - 2 \operatorname{Cov}_{\text{\tiny T}}(\tau(X), \tau(X) \mathbbm{1}_{Z_n(X) = 0}) + \operatorname{Var}_{\text{\tiny T}} \left[\tau(X) \mathbbm{1}_{Z_n(X) = 0}\right]\\
& \leq \operatorname{Var}_{\text{\tiny T}} \left[ \tau(X)\right] + 2 \left( \operatorname{Var}_{\text{\tiny T}}[\tau(X)] \operatorname{Var}_{\text{\tiny T}}\left[\tau(X) \mathbbm{1}_{Z_n(X) = 0} \right] \right)^{1/2} + \operatorname{Var}_{\text{\tiny T}} \left[ \tau(X)\mathbbm{1}_{Z_n(X) = 0}\right],
\end{align*}
with 
\begin{align*}
\operatorname{Var}_{\text{\tiny T}} \left[ \tau(X)\mathbbm{1}_{Z_n(X) = 0}\right] & \leq \mathbb{E} \left[ \tau(X)^2 \mathbbm{1}_{Z_n(X) = 0}\right] \\
& \leq \mathbb{E} \left[ \tau(X)^2 \mathbb{E} \left[\mathbbm{1}_{Z_n(X) = 0} \mid X \right]  \right]\\
& \leq \mathbb{E} \left[ \tau(X)^2 (1 - p_{\text{\tiny R}}(X))^n  \right]\\
& \leq \left(1 - \min_x p_{\text{\tiny R}}(x) \right)^n \mathbb{E}_{\text{\tiny T}} \left[ \tau(X)^2   \right].
\end{align*}
Consequently, 
\begin{align*}
\operatorname{Var}_{\text{\tiny T}}\left[  \tau(X) \mathbbm{1}_{Z_n(X) \neq 0} \right] & \leq \operatorname{Var}_{\text{\tiny T}} \left[ \tau(X)\right] + 2 \mathbb{E}_{\text{\tiny T}} \left[ \tau(X)^2 \right] \left(1 - \min_x p_{\text{\tiny R}}(x) \right)^{n/2} + \left(1 - \min_x p_{\text{\tiny R}}(x) \right)^n \mathbb{E}_{\text{\tiny T}} \left[ \tau(X)^2   \right] \\
& \leq \operatorname{Var}_{\text{\tiny T}} \left[ \tau(X)\right] + 4 \mathbb{E}_{\text{\tiny T}} \left[ \tau(X)^2 \right] \left(1 - \min_x p_{\text{\tiny R}}(x) \right)^{n/2}.
\end{align*}

One can also bound the other term of \eqref{eq_proof_comp_est_dec_var_term1} following the same derivations as the semi-oracle IPSW,

\begin{align}
     \operatorname{Var} \left[ \mathbb{E} \left[ \tau(X) \mathbbm{1}_{Z_n(X) = 0} | \mathbf{X}_{n+m} \right] \right] 
     &\le \left(\mathds{E}_{\text{\tiny T}} \left[ |\tau(X)| \right]\right)^2 \left(1 - \min_x p_{\text{\tiny R}}(x) \right)^{n}.
     \label{eq_proof_bound_var_exp_tau_ind}
\end{align}
Then, using the fact that $1-\frac{1}{m} \le 1$, 
\begin{align}
\operatorname{Var}\left[ \mathbb{E}_{\text{\tiny T}} \left[\hat \tau_{\pi, n,m} \mid \mathbf{X}_{n+m} \right]  \right] & \leq \frac{ \operatorname{Var}_{\text{\tiny T}} \left[ \tau(X)\right]}{m} + \frac{4 \mathbb{E}_{\text{\tiny T}} \left[ \tau(X)^2 \right]}{m} \left(1 - \min_x p_{\text{\tiny R}}(x) \right)^{n/2}  +  \left(\mathds{E}_{\text{\tiny T}} \left[ |\tau(X)| \right]\right)^2 \left(1 - \min_x p_{\text{\tiny R}}(x) \right)^{n} \nonumber\\
& \leq \frac{ \operatorname{Var}_{\text{\tiny T}} \left[ \tau(X)\right]}{m} +    \mathbb{E}_{\text{\tiny T}} \left[   \tau(X)^2 \right]\left( \frac{4}{m} +  1 \right) \left(1 - \min_x p_{\text{\tiny R}}(x) \right)^{n/2},\label{eq_proof_comp_est_dec_var_term1_bound}
\end{align}
using the fact that $\left(\mathds{E}_{\text{\tiny T}} \left[ |\tau(X)| \right]\right)^2 \leq  \mathds{E}_{\text{\tiny T}} \left[  \tau(X)^2 \right]$. Then, for the other term of the asymptotic variance, one can use the results from \cite{Arnould2021Analyzing} (see page 27) to bound the variance, which leads to 

\begin{align*}
         \mathbb{E} \left[ \operatorname{Var}\left[ \hat \tau_{\pi, n,m}\mid  \mathbf{X}_{n+m}  \right] \right] &=  \sum_{x\in\mathds{X}}  \frac{1}{n}  g(x) \mathbb{E} \left[  \left( \hat p_{\text{\tiny T},m}\left( x \right)\right)^2 \right] \mathbb{E} \left[  \frac{\mathbbm{1}_{Z_{n}(x) > 0} }{\frac{Z_n(x) }{n}} \right] \\
         & \le \sum_{x\in\mathds{X}}  g(x) \mathbb{E} \left[  \left( \hat p_{\text{\tiny T},m}\left( x \right)\right)^2 \right] \frac{2}{(n+1) p_\text{\tiny R}\left( x \right)} && \text{\cite{Arnould2021Analyzing} (p.27)} \\
         & = \sum_{x\in\mathds{X}}  g(x) \left( \frac{p_\text{\tiny T}\left( x \right) (1-p_\text{\tiny T}\left( x \right) ) }{m} +  p_\text{\tiny T}\left( x \right) ^2  \right)\frac{2}{(n+1) p_\text{\tiny R}\left( x \right)}
\end{align*}

Finally, using \eqref{eq_proof_comp_est_dec_var_term1_bound}, and \eqref{eq:explicit-variance-expression},

\begin{align}
     \operatorname{Var}\left[   \hat \tau_{\pi, n,m} \right]  &\le  \frac{ \operatorname{Var}_{\text{\tiny T}} \left[ \tau(X)\right]}{m} +    \mathbb{E}_{\text{\tiny T}} \left[   \tau(X)^2 \right]\left( \frac{4}{m} +  1 \right) \left(1 - \min_x p_{\text{\tiny R}}(x) \right)^{n/2}\nonumber \\
     & \qquad +  \frac{2}{n+1}\left(  \mathbb{E}_{\text{\tiny R}}\left[ \left( \frac{p_\text{\tiny T}\left( X \right)}{p_\text{\tiny R}\left( X \right)}\right)^2 V_{\text{\tiny HT}}(X)\right] + \frac{1}{m} \mathbb{E}_{\text{\tiny R}}\left[ \frac{p_\text{\tiny T}\left( X \right)(1-p_\text{\tiny T}\left( X \right))}{p_\text{\tiny R}\left( X \right)^2} V_{\text{\tiny HT}}(X)\right] \right). \label{eq_proof_bonus_ext_pi_est}
\end{align}

\end{proof}

\subsubsection{Proof of Corollary~\ref{cor_asympt_completely_estimated}}\label{proof_asympt_completely_estimated}

\textbf{Asymptotic bias}\\

The proof is exactly the same as for the semi-oracle IPSW, see Subsection~\ref{proof:asympt-bias-variance-semi-oracle}.\\

\textbf{Asymptotic variance}\\

We recall that the explicit expression of the variance is

\begin{align*}
    \operatorname{Var}\left[   \hat \tau_{\pi, n,m} \right] & = \frac{1}{m}  \operatorname{Var}_{\text{\tiny T}}\left[  \tau(X) \mathbbm{1}_{Z_{n}(x) > 0} \right] + \left( 1- \frac{1}{m} \right) \operatorname{Var} \left[ \mathbb{E} \left[ \tau(X)  \mathbbm{1}_{Z_n(X) = 0} | \mathbf{X}_n \right] \right] \\
    & \qquad +  \frac{1}{n}  \sum_{x\in\mathds{X}}  V_{\text{\tiny HT}} (x) \frac{p_{\text{\tiny T}}(x)(1-p_{\text{\tiny T}}(x))}{m}\mathbb{E} \left[  \frac{\mathbbm{1}_{Z_{n}(x) > 0} }{\hat p_{\text{\tiny R},n}\left( x \right)}  \right] + \operatorname{Var}\left[\hat \tau_{\pi, \text{\tiny T},n}^* \right].
\end{align*}

Let's consider a slightly different quantity, multiplying by $\min(n,m)$,

\begin{align*}
   \min(n,m)  \operatorname{Var}\left[   \hat \tau_{\pi, n,m} \right] & = \frac{\min(n,m)}{m}  \operatorname{Var}_{\text{\tiny T}}\left[  \tau(X) \mathbbm{1}_{Z_{n}(x) > 0} \right] + \min(n,m) \left( 1- \frac{1}{m} \right) \operatorname{Var} \left[ \mathbb{E} \left[ \tau(X)   \mathbbm{1}_{Z_n(X) = 0} | \mathbf{X}_n \right] \right] \\
    & \qquad +  \frac{\min(n,m) }{nm}  \sum_{x\in\mathds{X}}  V_{\text{\tiny HT}} (x) p_{\text{\tiny T}}(x)(1-p_{\text{\tiny T}}(x)) \mathbb{E} \left[  \frac{\mathbbm{1}_{Z_{n}(x) > 0} }{\hat p_{\text{\tiny R},n}\left( x \right)}  \right] + \min(n,m) \operatorname{Var}\left[\hat \tau_{\pi, \text{\tiny T},n}^* \right].
\end{align*}

Now, we study an asymptotic regime where $n$ and $m$ can grow toward infinity but at different paces. Let $  \lim\limits_{n,m\to\infty} \frac{m}{n} = \lambda \in [0,\infty]$,where $\lambda$ characterizes the regime.

\textbf{Case 1}:
If $\lambda \in [1, \infty]$, one can replace $\min(n,m) $ by $n$, so that

\begin{align*}
 \lim\limits_{n,m\to\infty}  n \operatorname{Var}\left[   \hat \tau_{\pi, n,m} \right] & = \lim\limits_{n,m\to\infty} \left( \underbrace{\frac{n}{m}}_{\frac{1}{\lambda}} \operatorname{Var}_{\text{\tiny T}}\left[  \tau(X) \mathbbm{1}_{Z_{n}(x) > 0} \right]+ n \left( 1 - \frac{1}{m} \right) \operatorname{Var} \left[ \mathbb{E} \left[ \tau(X)   \mathbbm{1}_{Z_n(X) = 0} | \mathbf{X}_n \right] \right] \right) \\
 &\qquad + \underbrace{\lim\limits_{n,m\to\infty} \left( \frac{1 }{m}  \sum_{x\in\mathds{X}}  V_{\text{\tiny HT}} (x) p_{\text{\tiny T}}(x)(1-p_{\text{\tiny T}}(x)) \mathbb{E} \left[  \frac{\mathbbm{1}_{Z_{n}(x) > 0}}{\hat p_{\text{\tiny R},n}\left( x \right)}  \right]\right)}_{=0} \\
 &\qquad + \underbrace{\lim\limits_{n,m\to\infty} \left( n \operatorname{Var}\left[\hat \tau_{\pi, \text{\tiny T},n}^* \right]\right)}_{ =V_{\text{so}}}, && \text{Corollary~\ref{cor_asympt_semi_oracle}}
\end{align*}

where we also used from former proof, \eqref{eq_proof_Chernoff1} and \eqref{eq_proof_Chernoff2} stating that
\begin{align*}
\mathbb{E} \left[ \frac{\mathbbm{1}_{Z_n(x)>0}}{  Z_n(x)/n } \right] \to \frac{1}{p_{\text{\tiny R}}(x)}, \quad \textrm{as}~n \to \infty. 
\end{align*}

Recalling \eqref{eq_proof_bound_var_exp_tau_ind},

\begin{equation*}
   0 \le   \operatorname{Var} \left[ \mathbb{E} \left[ \tau(X) \mathbbm{1}_{Z_n(X) = 0} | \mathbf{X}_n \right] \right]   \le \left(\mathds{E}_{\text{\tiny T}} \left[ |\tau(X)| \right]\right)^2  \left(1 - \min_x p_{\text{\tiny R}}(x) \right)^{n},
\end{equation*}

due to the exponential convergence one has,

\begin{align*}
     \lim\limits_{n \to\infty} n \operatorname{Var} \left[ \mathbb{E} \left[ \tau(X)  \mathbbm{1}_{Z_n(X) = 0} | \mathbf{X}_n \right] \right]   &= 0,
\end{align*}

and therefore,

\begin{align}\label{eq_convergence_exponential_term}
     \lim\limits_{n,m \to\infty} n \left( 1 - \frac{1}{m} \right)   \operatorname{Var} \left[ \mathbb{E} \left[ \tau(X) \mathbbm{1}_{Z_n(X) = 0} | \mathbf{X}_n \right] \right]   &= 0.
\end{align}

Besides, 
\begin{align*}
    \lim\limits_{n\to\infty} \operatorname{Var}_{\text{\tiny T}}\left[  \tau(X) \mathbbm{1}_{Z_n(X) \neq 0} \right] &= \operatorname{Var}_{\text{\tiny T}}\left[  \tau(X) \right],
\end{align*}
To summarize, if  $\lambda \in [1, \infty]$, one can conclude that 

\begin{align}\label{eq_lambda_above_1}
    \lim\limits_{n,m\to\infty}  n \operatorname{Var}\left[   \hat \tau_{\pi, n,m} \right] = \frac{\operatorname{Var}\left[  \tau(X) \right]}{\lambda} + V_{so}.
\end{align}
\\

\textbf{Case 2}: If $\lambda \in [0, 1]$, one can replace $\min(n,m) $ by $m$, so that 

\begin{align*}
   \lim\limits_{n,m\to\infty}  m  \operatorname{Var}\left[   \hat \tau_{\pi, n,m} \right] & =   \lim\limits_{n,m\to\infty}  \operatorname{Var}_{\text{\tiny T}}\left[  \tau(X) \mathbbm{1}_{Z_n(X) \neq 0} \right] +   \lim\limits_{n,m\to\infty}  m \left( 1 - \frac{1}{m} \right) \operatorname{Var} \left[ \mathbb{E} \left[ \tau(X)  \mathbbm{1}_{Z_n(X) = 0} | \mathbf{X}_n \right] \right] \\
    & \qquad +   \lim\limits_{n,m\to\infty}  \frac{1}{n}  \sum_{x\in\mathds{X}}  V_{\text{\tiny HT}} (x) p_{\text{\tiny T}}(x)(1-p_{\text{\tiny T}}(x)) \mathbb{E} \left[  \frac{\mathbbm{1}_{Z_n(x) \neq 0} }{\hat p_{\text{\tiny R},n}\left( x \right)}  \right] \\
    & \qquad \qquad +   \lim\limits_{n,m\to\infty} \lambda \sum_{x\in\mathds{X}}  V_{\text{\tiny HT}} (x) p_{\text{\tiny T}}^2 (x)  \mathbb{E} \left[  \frac{\mathbbm{1}_{Z_n(x) \neq 0} }{\hat p_{\text{\tiny R},n}\left( x \right)}  \right].
\end{align*}

In particular,

\begin{align*}
    \lim\limits_{n,m\to\infty}  \operatorname{Var}_{\text{\tiny T}}\left[  \tau(X) \mathbbm{1}_{Z_n(X) \neq 0} \right] &= \operatorname{Var}_{\text{\tiny T}}\left[  \tau(X) \right].
\end{align*}

As above, we have 
\begin{align*}
     \lim\limits_{n,m \to\infty} m \left( 1 - \frac{1}{m} \right)   \operatorname{Var} \left[ \mathbb{E} \left[ \tau(X)  \mathbbm{1}_{Z_n(X) = 0} | \mathbf{X}_n \right] \right]   &= 0,
\end{align*}

because,

\begin{align*}
      0 \le    m \left( 1 - \frac{1}{m} \right) \operatorname{Var} \left[ \mathbb{E} \left[ \tau(X) \mathbbm{1}_{Z_n(X) = 0} | \mathbf{X}_n \right] \right] \le n  \left( 1 - \frac{1}{m} \right) \operatorname{Var} \left[ \mathbb{E} \left[ \tau(X)  \mathbbm{1}_{Z_n(X) = 0} | \mathbf{X}_n \right] \right].
\end{align*}

In addition, \eqref{eq_proof_Chernoff1} and \eqref{eq_proof_Chernoff2} ensure that

\begin{align*}
    \lim\limits_{n,m\to\infty} \lambda \sum_{x\in\mathds{X}}  V_{\text{\tiny HT}} (x) p_{\text{\tiny T}}^2 (x)  \mathbb{E} \left[  \frac{\mathbbm{1}_{Z_n(x) \neq 0} }{\hat p_{\text{\tiny R},n}\left( x \right)}  \right] &= \lambda V_{so}.
\end{align*}

As an intermediary conclusion, if $\lambda \in [0, 1]$,

\begin{align}\label{eq_lambda_below_1}
     \lim\limits_{n,m\to\infty}  \min(n,m)  \operatorname{Var}\left[   \hat \tau_{\pi, n,m} \right] & = \operatorname{Var}\left[  \tau(X) \right] + \lambda V_{so}   
\end{align}
\\

\textbf{General conclusion}:
It is possible to gather equations \eqref{eq_lambda_above_1} and \eqref{eq_lambda_below_1} in one single conclusion. Therefore, letting $ \lim\limits_{n,m\to\infty} m/n = \lambda \in [0,\infty]$, the asymptotic variance of estimated IPSW satisfies
\begin{align*}
 \lim\limits_{n,m\to\infty} \min(n,m) \operatorname{Var}\left[   \hat \tau_{\pi, n,m} \right] = \min(1, \lambda) \left( \frac{\operatorname{Var}\left[ \tau(X) \right]}{\lambda} + V_{so} \right).
\end{align*}

\subsubsection{Proof of Theorem~\ref{thm:ipsw}}\label{proof_thm_ipsw}

 \begin{proof}
 For any estimate $\hat{\tau}$, we have 
\begin{equation*}
\mathbb{E}\left[ \left( \hat  \tau - \tau \right)^2\right] = \left( \mathbb{E}\left[ \hat  \tau \right] - \tau \right)^2 + \operatorname{Var}\left[   \hat  \tau \right].
\end{equation*}
Therefore, the risk of the (estimated) IPSW estimate can be bounded using results from Subsection~\ref{proof_prop_completely_estimated} (or Proposition~\ref{prop_completely_estimated}), and in particular the bounds on the variance and the bias,
\begin{align*}
\mathbb{E}\left[ \left( \hat  \tau - \tau \right)^2\right] &\leq \left(1 - \min_x p_{\text{\tiny R}}(x) \right)^{2n} \left(\mathbb{E}_{\text{\tiny T}}[|\tau(X)|]\right)^2 +   \frac{ \operatorname{Var}_{\text{\tiny T}} \left[ \tau(X)\right]}{m} +   \left(1 - \min_x p_R(x) \right)^{n/2} \left( \frac{4 \mathbb{E}_{\text{\tiny T}} \left[ \tau(X)^2 \right]}{m} + \tau \right) \nonumber \\
     & \qquad +  \frac{2}{n+1}\left(  \mathbb{E}_{\text{\tiny R}}\left[ \left( \frac{p_\text{\tiny T}\left( X \right)}{p_\text{\tiny R}\left( X \right)}\right)^2 V_{\text{\tiny HT}}(X)\right] + \frac{1}{m} \mathbb{E}_{\text{\tiny R}}\left[ \frac{p_\text{\tiny T}\left( X \right)(1-p_\text{\tiny T}\left( X \right))}{p_\text{\tiny R}\left( X \right)^2} V_{\text{\tiny HT}}(X)\right] \right)  \\
     & \le \frac{2V_{so}}{n+1} + \frac{\operatorname{Var}_{\text{\tiny T}}\left[\tau(X) \right]}{m} + \frac{2}{m(n+1)}  \mathbb{E}_{\text{\tiny R}}\left[ \frac{p_\text{\tiny T}\left( X \right)(1-p_\text{\tiny T}\left( X \right))}{p_\text{\tiny R}\left( X \right)^2} V_{\text{\tiny HT}}(X)\right] \\
& \quad  + \left(1 - \min_x p_{\text{\tiny R}}\left(x\right)\right)^{n/2}  \left(1 + \mathbb{E}_{\text{\tiny T}}[\tau(X)^2] + \frac{4 \mathbb{E}_{\text{\tiny T}} \left[ \tau(X)^2 \right]}{m} \right) .
\end{align*}



The $L^2$ consistency holds by letting $n$ \underline{and} $m$ tend to infinity. 

 \end{proof}

\subsection{Estimated IPSW with estimated $\hat{\pi}_n(x)$ }

\subsubsection{Proof of Proposition~\ref{prop:when-estimating-pi}}\label{proof:cor-when-estimating-pi}

We prove the more general following proposition.

\begin{proposition}[IPSW's properties when also estimating $\pi$]

Under the general setting defined in Subsection~\ref{subsec:model}, granting Assumptions~\ref{a:repres-rct}-\ref{a:pos}, the bias of the estimated IPSW with estimated $\hat \pi_n$ (see Definition~\ref{def:procedure-for-densities-and-pi}) is given by
\begin{align*}
    \mathbb{E} \left[ \hat \tau_{n, m} \right]- \tau \,&=\,\sum_{x\in\mathds{X}} p_{\text{\tiny T}}(x) \,\mathbb{E}\left[ Y^{(0)} \mid X=x \right]  \biggl( 1 - p_{\text{\tiny R}}(x) \left(1 - \pi(x)\right) \biggr)^n \\
    & \quad - \sum_{x\in\mathds{X}} p_{\text{\tiny T}}(x) \, \mathbb{E} \left[ Y^{(1)} \mid X = x \right]\bigl(  1- p_{\text{\tiny R}}(x)\,\pi(x)  \bigr)^n \\
    &  \leq \left(  1- \min_x \left( (1 - \tilde \pi(x)) p_{\text{\tiny R}}(x) \right) \right)^n \left(   \mathbb{E}_\text{\tiny T} \left[ |  Y^{(1)}   | \right]+  \mathbb{E}_\text{\tiny T} \left[| Y^{(0)}  | \right]  \right),
\end{align*}
where $\tilde \pi(x) = \max ( \pi(x), 1 - \pi(x))$.
Besides, the variance of the estimated IPSW with estimated $\hat \pi_n$ satisfies, for all $n$,
\begin{align*}
 \operatorname{Var}\left[    \hat \tau_{n, m} \right] &  =   \frac{1}{m}  \operatorname{Var}\left[   \tau(X)  - C_n(X) \right] + \left( 1 - \frac{1}{m} \right) \operatorname{Var} \left[ \mathbb{E} \left[ C_n(X) | \mathbf{X}_n \right] \right] \\
& \quad +    \sum_{x \in \mathbb{X}} \left( \frac{p_{\text{\tiny T}}(x)(1-p_{\text{\tiny T}}(x))}{m} + p_{\text{\tiny T}}^2 (x)  \right) \mathbb{E} \left[  \frac{\mathds{1}_{Z_n(x) > 0}}{Z_n(x)} \right] V_{\text{\tiny DM},n}(x), 
\end{align*}
\begin{flalign*}
\quad\qquad\text{where}&&
    C_n(X) = \mathbb{E} \left[ Y^{(1)} \mid X  \right] (1-\pi(X))^{Z_n(X)}  - \mathbb{E} \left[ Y^{(0)} \mid X \right]\pi(X)^{Z_n(X)}.&&
\end{flalign*}
\begin{flalign*}
\text{Furthermore,} &&
\operatorname{Var}\left[    \hat \tau_{n, m} \right]  \leq &\frac{2\,  \tilde V_{so}}{n+1}   + \frac{\operatorname{Var}\left[   \tau(X) \right] }{m}   +  \frac{2}{(n+1)m}  \mathbb{E}_{\text{\tiny R}}\left[ \frac{p_\text{\tiny T}\left( X \right)(1-p_\text{\tiny T}\left( X \right))}{p_\text{\tiny R}\left( X \right)^2} V_{\text{\tiny DM}}(X)\right] 
&&\\&&
&\quad + 2 \left( 1 + \frac{3}{m} \right) \left( 1 - \min_x \left( (1 - \tilde \pi(x)^2) p_{\text{\tiny R}}(x) \right) \right)^{n/2} \mathbb{E} \left[ (Y^{(1)})^2 + (Y^{(0)})^2 \right],&&
\\
\qquad\text{where} &&
\tilde \pi(x) = \max &\left(\pi(x), 1 - \pi(x) \right) \quad \text{and} \quad
    \tilde V_{so}:=\mathbb{E}_{\text{\tiny R}}\left[ \left( \frac{p_\text{\tiny T}\left( X \right)}{p_\text{\tiny R}\left( X \right)}\right)^2 V_{\text{\tiny DM},n}(X)\right].&&
\end{flalign*}
\end{proposition}

\begin{proof}

\textbf{Bias}\\

We start by computing the bias of 
\begin{align*}
 \mathbb{E} \left[ \hat \tau_{n, m} \right] & = \mathbb{E} \left[ \frac{1}{n} \sum_{i=1}^{n} \frac{\hat p_{\text{\tiny T}, m}(X_i)}{\hat p_{\text{\tiny R}, n}(X_i)} Y_i \left( \frac{A_i}{\hat \pi_n(X_i)} - \frac{1-A_i}{1-\hat \pi_n(X_i)} \right) \right] \\
 &= \frac{1}{n} \mathbb{E} \left[ \mathbb{E} \left[  \sum_{x \in \mathbb{X}}  \frac{ \hat p_{\text{\tiny T},m}(x) \mathbbm{1}_{Z_{n}(x) > 0}}{\hat p_{\text{\tiny R}, n}(x)}  \sum_{i=1}^n \mathds{1}_{X_i = x}  Y_i \left( \frac{A_i}{\hat \pi_n(x)} - \frac{1-A_i}{1-\hat \pi_n(x)} \right) \mid \mathbf{X}_n, \mathbf{A}_n, \mathbf{Y}_n\right]  \right]  \\
  &= \frac{1}{n} \mathbb{E} \left[  \sum_{x \in \mathbb{X}} \mathbbm{1}_{Z_{n}(x) > 0} \frac{  \mathbb{E} \left[ \hat p_{\text{\tiny T},m}(x) \mid \mathbf{X}_n, \mathbf{A}_n, \mathbf{Y}_n  \right] }{\hat p_{\text{\tiny R}, n}(x)}  \sum_{i=1}^n \mathds{1}_{X_i = x}  Y_i \left( \frac{A_i}{\hat \pi_n(x)} - \frac{1-A_i}{1-\hat \pi_n(x)} \right)  \right]  \\
 & = \frac{1}{n} \mathbb{E} \left[ \sum_{x \in \mathbb{X}}  \frac{ p_{\text{\tiny T}}(x) \mathbbm{1}_{Z_{n}(x) > 0}}{\hat p_{\text{\tiny R}, n}(x)}  \sum_{i=1}^n \mathds{1}_{X_i = x}  Y_i \left( \frac{A_i}{\hat \pi_n(x)} - \frac{1-A_i}{1-\hat \pi_n(x)} \right) \right].
\end{align*}

This derivation is possible as $\hat p_{\text{\tiny T}, m}$ is estimated on a different data set than the trial.

Using SUTVA (Assumption~\ref{a:trial-internal-validity}), one can replace observed outcomes by potential outcomes, and
\begin{align*}
 \mathbb{E} \left[ \hat \tau_{n, m} \right] & =   \frac{1}{n} \mathbb{E} \left[ \sum_{x \in \mathbb{X}}  \frac{ p_{\text{\tiny T}}(x)\mathbbm{1}_{Z_{n}(x) > 0}}{\hat p_{\text{\tiny R}, n}(x)}  \sum_{i=1}^n \mathds{1}_{X_i = x}  \left( \frac{ Y_i^{(1)} A_i}{\hat \pi_n(x)} - \frac{Y_i^{(0)}(1-A_i)}{1-\hat \pi_n(x)} \right) \right]\\
 & = \frac{1}{n} \mathbb{E} \left[ \sum_{x \in \mathbb{X}}  \frac{ p_{\text{\tiny T}}(x)\mathbbm{1}_{Z_{n}(x) > 0}}{\hat p_{\text{\tiny R}, n}(x)}  \sum_{i=1}^n \mathds{1}_{X_i = x}   \mathbb{E} \left[  \left( \frac{Y_i^{(1)} A_i}{\hat \pi_n(x)} - \frac{Y_i^{(0)} (1-A_i)}{1-\hat \pi_n(x)}\right) | \mathbf{X}_n, \mathbf{Y}_n^{(1)}, \mathbf{Y}_n^{(0)} \right]  \right].
 \end{align*}
Let us consider, for any fixed $x \in \mathbb{X}$, 
\begin{align*}
 \mathbb{E} \left[  \frac{Y_i^{(1)} A_i}{\hat \pi_n(x)}  | \mathbf{X}_n, \mathbf{Y}_n^{(1)}, \mathbf{Y}_n^{(0)}  \right] 
 =  Y_i^{(1)} \mathbb{E} \left[  \frac{A_i}{\hat \pi_n(x)}  | \mathbf{X}_n \right].
\end{align*}
Up to reordering the $X_i$'s, we have
\begin{align*}
\mathbb{E} \left[  \frac{A_i}{\hat \pi_n(x)}  | \mathbf{X}_n \right]  & =  \mathbb{E} \left[  \frac{A_i}{ \frac{\sum_{j=1}^{Z_n(x)} A_j}{Z_n(x)}  }  | \mathbf{X}_n \right]\\
 & = Z_n(x) \mathbb{E} \left[  \frac{A_i}{\sum_{j=1}^{Z_n(x)} A_j}  | \mathbf{X}_n \right]\\
& = Z_n(x) \pi(x) \mathbb{E} \left[  \frac{1}{1 + \sum_{j=2}^{Z_n(x)} A_j}  | \mathbf{X}_n \right].
\end{align*}
The last rows uses the law of total probability.
According to Lemma 11 $(i)$ in \cite{Biau2012AnalysisRandomForests}, and considering $B_n(x) \sim \mathfrak{B}(n, p)$, for any $x \in \mathds{X}$, 
\begin{align*}
\mathbb{E} \left[  \frac{1}{1 + B_n(x)} \right] = \frac{1}{(n+1)p} - \frac{(1-p)^{n+1}}{(n+1)p}.
\end{align*}
Since, conditional on $\mathbf{X}_n$, $\sum_{j=2}^{Z_n(x)} A_j$ is distributed as $ \mathfrak{B}(Z_n(x)-1, \pi(x))$, 
\begin{align*}
\mathbb{E} \left[  \frac{A_i}{\hat \pi_n(x)}  | \mathbf{X}_n \right] & = Z_n(x) \pi(x)  \left( \frac{1}{Z_n(x)\pi(x)} - \frac{(1-\pi(x))^{Z_n(x)}}{Z_n(x) \pi(x)} \right) \\
& =  1 - (1-\pi(x))^{Z_n(x)}.
\end{align*}
Similarly, 
\begin{align*}
 \mathbb{E} \left[  \frac{(1-A_i)}{1 - \hat \pi_n(x)}  | \mathbf{X}_n \right] = 1 - \pi(x)^{Z_n(x)}. 
\end{align*}
Consequently, 
\begin{align*}
\mathbb{E} \left[  \left( \frac{Y_i^{(1)} A_i}{\hat \pi_n(x)} - \frac{Y_i^{(0)} (1-A_i)}{1-\hat \pi_n(x)}\right) | \mathbf{X}_n, \mathbf{Y}_n^{(1)}, \mathbf{Y}_n^{(0)} \right] & =   Y_i^{(1)}\left(   1 - (1-\pi(x))^{Z_n(x)}\right) -  Y_i^{(0)}\left( 1 - \pi(x)^{Z_n(x)}\right) \\
&=  \left( Y_i^{(1)} - Y_i^{(0)} \right) - Y_i^{(1)}  (1-\pi(x))^{Z_n(x)} +  Y_i^{(0)}\pi(x)^{Z_n(x)}.
\end{align*}

Therefore,

\begin{align}
     \mathbb{E} \left[ \hat \tau_{n, m} \right] & =    \frac{1}{n} \mathbb{E} \left[ \sum_{x \in \mathbb{X}}  \frac{ p_{\text{\tiny T}}(x)\mathbbm{1}_{Z_{n}(x) > 0}}{\hat p_{\text{\tiny R}, n}(x)}  \sum_{i=1}^n \mathds{1}_{X_i = x}  ( Y_i^{(1)} - Y_i^{(0)})  \right] \label{eq_proof_completely_first_term} \\
     &\qquad  +  \frac{1}{n} \mathbb{E} \left[ \sum_{x \in \mathbb{X}}  \frac{ p_{\text{\tiny T}}(x)\mathbbm{1}_{Z_{n}(x) > 0}}{\hat p_{\text{\tiny R}, n}(x)}  \sum_{i=1}^n \mathds{1}_{X_i = x}  Y_i^{(0)}\pi(x)^{Z_n(x)} \right] \label{eq_proof_completely_second_term} \\
     &\qquad \qquad  -  \frac{1}{n} \mathbb{E} \left[ \sum_{x \in \mathbb{X}}  \frac{ p_{\text{\tiny T}}(x)\mathbbm{1}_{Z_{n}(x) > 0}}{\hat p_{\text{\tiny R}, n}(x)}  \sum_{i=1}^n \mathds{1}_{X_i = x}   Y_i^{(1)}  (1-\pi(x))^{Z_n(x)}  \right] .\label{eq_proof_completely_third_term}
\end{align}

On one hand, considering \eqref{eq_proof_completely_first_term},

\begin{align*}
      \frac{1}{n} \mathbb{E} \left[ \sum_{x \in \mathbb{X}}  \frac{ p_{\text{\tiny T}}(x)\mathbbm{1}_{Z_{n}(x) > 0}}{\hat p_{\text{\tiny R}, n}(x)}  \sum_{i=1}^n \mathds{1}_{X_i = x}  ( Y_i^{(1)} - Y_i^{(0)})  \right]&=   \frac{1}{n}\sum_{x \in \mathbb{X}}   \mathbb{E} \left[ \frac{ p_{\text{\tiny T}}(x)\mathbbm{1}_{Z_{n}(x) > 0}}{\hat p_{\text{\tiny R}, n}(x)}  \sum_{i=1}^n \mathds{1}_{X_i = x}  ( Y_i^{(1)} - Y_i^{(0)})  \right] \\
      &=  \sum_{x \in \mathbb{X}}   \mathbb{E} \left[ p_{\text{\tiny T}}(x)\mathds{1}_{Z_n(x) > 0} \tau(x)  \right],
\end{align*}

corresponding to the bias in the semi-oracle and the estimated IPSW. Indeed, we recall from the semi-oracle IPSW proof that,

\begin{align*}
     \sum_{x \in \mathbb{X}}   \mathbb{E} \left[ p_{\text{\tiny T}}(x)\mathds{1}_{Z_n(x) > 0} \tau(x)  \right]&=   \sum_{x \in \mathbb{X}}  p_{\text{\tiny T}}(x)(1-(1- p_{\text{\tiny R}}(x))^n)\tau(x).
\end{align*}

On the other hand, considering \eqref{eq_proof_completely_second_term},

\begin{align*}
    \frac{1}{n} \mathbb{E} \left[ \sum_{x \in \mathbb{X}}  \frac{ p_{\text{\tiny T}}(x)\mathbbm{1}_{Z_{n}(x) > 0}}{\hat p_{\text{\tiny R}, n}(x)}  \sum_{i=1}^n \mathds{1}_{X_i = x}  Y_i^{(0)}\pi(x)^{Z_n(x)} \right]  &=  \frac{1}{n} \mathbb{E} \left[ \sum_{x \in \mathbb{X}}  \frac{ p_{\text{\tiny T}}(x)\mathbbm{1}_{Z_{n}(x) > 0}}{\hat p_{\text{\tiny R}, n}(x)}  \sum_{i=1}^n \mathds{1}_{X_i = x}  \mathbb{E} \left[ Y_i^{(0)} \mid X_i = x\right] \pi(x)^{Z_n(x)} \right] \\
    &=\mathbb{E} \left[ \sum_{x \in \mathbb{X}}p_{\text{\tiny T}}(x) \mathds{1}_{Z_n(x) > 0}\mathbb{E} \left[ Y_i^{(0)} \mid X_i = x\right] \pi(x)^{Z_n(x)} \right]\\
    &=\sum_{x \in \mathbb{X}}p_{\text{\tiny T}}(x) \mathbb{E} \left[ Y_i^{(0)} \mid X_i = x\right]   \mathbb{E} \left[  \mathds{1}_{Z_n(x) > 0}\pi(x)^{Z_n(x)} \right]\\ 
    &=\sum_{x \in \mathbb{X}}p_{\text{\tiny T}}(x) \mathbb{E} \left[ Y_i^{(0)} \mid X_i = x\right]   \mathbb{E} \left[  \left( 1 - \mathds{1}_{Z_n(x) = 0} \right)\pi(x)^{Z_n(x)} \right]\\
    &=\sum_{x \in \mathbb{X}}p_{\text{\tiny T}}(x) \mathbb{E} \left[ Y_i^{(0)} \mid X_i = x\right] \left(   \mathbb{E} \left[ \pi(x)^{Z_n(x)} \right]  - \mathbb{E} \left[ \mathds{1}_{Z_n(x) = 0}\right]\right).\\
\end{align*}

Now, note that $\mathbb{P} \left[ Z_n(x)=0 \right] = (1 - p_{\text{\tiny R}}(x))^n$ and
\begin{align*}
  \mathbb{E} \left[  \pi(x)^{Z_n(x)}   \right] & = \prod_{j=1}^{n}   \mathbb{E} \left[ \pi(x)^{\mathds{1}_{X_i =x}}   \right]\\
  & = \left(\pi(x) p_{\text{\tiny R}}(x) + (1 - p_{\text{\tiny R}}(x)) \right)^n.\\
 & = \left( 1 - p_{\text{\tiny R}}(x) \left(1 - \pi(x)\right) \right)^n.
\end{align*}

Therefore,

\begin{align*}
    \frac{1}{n} \mathbb{E} \left[ \sum_{x \in \mathbb{X}}  \frac{ p_{\text{\tiny T}}(x) \mathbbm{1}_{Z_{n}(x) > 0}}{\hat p_{\text{\tiny R}, n}(x)}  \sum_{i=1}^n \mathds{1}_{X_i = x}  Y_i^{(0)}\pi(x)^{Z_n(x)} \right] 
    &=\sum_{x \in \mathbb{X}}p_{\text{\tiny T}}(x)    \mathbb{E} \left[ Y_i^{(0)} \mid X_i = x \right] \big(  \left( 1 - p_{\text{\tiny R}}(x) \left(1 - \pi(x)\right) \right)^n - \left(1 - p_{\text{\tiny R}}(x)\right)^n \big).\\
\end{align*}

Similarly, considering \eqref{eq_proof_completely_third_term},
\begin{align*}
     -  \frac{1}{n} \mathbb{E} \left[ \sum_{x \in \mathbb{X}}  \frac{ p_{\text{\tiny T}}(x) \mathbbm{1}_{Z_{n}(x) > 0}}{\hat p_{\text{\tiny R}, n}(x)}  \sum_{i=1}^n \mathds{1}_{X_i = x}   Y_i^{(1)}  (1-\pi)^{Z_n(x)}  \right] &= - \sum_{x \in \mathbb{X}}p_{\text{\tiny T}}(x) \mathbb{E} \left[ Y_i^{(1)} \mid X_i = x\right] \big( \left(  1- p_{\text{\tiny R}}(x)\pi(x)  \right)^n - (1 - p_{\text{\tiny R}}(x))^n\big).
\end{align*}

Finally, the bias of the estimated IPSW with estimated treatment proportion is given by 

\begin{align*}
\mathbb{E} \left[ \hat \tau_{n, m} \right]- \tau &=  - \sum_{x\in\mathds{X}} p_{\text{\tiny T}}(x)\tau(x)  \left(1 - p_{\text{\tiny R}}(x)\right)^n  \\ 
& \qquad +  \sum_{x\in\mathds{X}} p_{\text{\tiny T}}(x) \left(\mathbb{E}\left[ Y_i^{(1)} \mid X_i = x\right]  - \mathbb{E}\left[ Y_i^{(0)} \mid X_i = x\right]  \right) (1 - p_{\text{\tiny R}}(x))^n \\
& \qquad \qquad +  \sum_{x\in\mathds{X}} p_{\text{\tiny T}}(x) \mathbb{E}\left[ Y_i^{(0)} \mid X_i = x\right]  \left( 1 - p_{\text{\tiny R}}(x) \left(1 - \pi(x)\right) \right)^n \\
& \qquad \qquad \qquad - \sum_{x\in\mathds{X}} p_{\text{\tiny T}}(x) \mathbb{E}\left[ Y_i^{(1)} \mid X_i = x \right]\left(  1- p_{\text{\tiny R}}(x)\pi(x)  \right)^n ,
\end{align*}

such that,

\begin{align*}
    \mathbb{E} \left[ \hat \tau_{n, m} \right]- \tau &=\sum_{x\in\mathds{X}} p_{\text{\tiny T}}(x) \mathbb{E}\left[ Y_i^{(0)} \mid X_i = x\right]  \left( 1 - p_{\text{\tiny R}}(x) \left(1 - \pi(x)\right) \right)^n \\
& \quad  - \sum_{x\in\mathds{X}} p_{\text{\tiny T}}(x) \mathbb{E}\left[ Y_i^{(1)} \mid X_i = x \right]\left(  1- p_{\text{\tiny R}}(x)\pi(x)  \right)^n \,.
\end{align*}

Consequently, the bias of the IPSW estimator with estimated $\hat \pi_n$ can be upper bounded via
\begin{align*}
    \left| \mathbb{E} \left[ \hat \tau_{n, m} \right]- \tau \right| & \leq  \sum_{x\in\mathds{X}} p_{\text{\tiny T}}(x) \left|\mathbb{E}\left[ Y^{(0)} \mid X = x\right] \right|  \left( 1 - p_{\text{\tiny R}}(x) \left(1 - \pi(x)\right) \right)^n \\
& \quad + \sum_{x\in\mathds{X}} p_{\text{\tiny T}}(x) \left|\mathbb{E}\left[ Y^{(1)} \mid X = x \right] \right|\left(  1- p_{\text{\tiny R}}(x)\pi(x)  \right)^n \\
    & \leq \left(  1- \min_x \left( (1 - \tilde \pi(x)) p_{\text{\tiny R}}(x) \right) \right)^n  \mathbb{E}_\text{\tiny T} \left[ \left|\mathbb{E}\left[ Y^{(1)} \mid X \right] \right| + \left|\mathbb{E}\left[ Y^{(0)} \mid X \right] \right| \right],
\end{align*}
where $\tilde \pi(x) = \max ( \pi(x), 1 - \pi(x))$.

\textbf{Variance}\\

As above, we have
\begin{equation}
    \operatorname{Var}\left[    \hat \tau_{n, m} \right] =  \operatorname{Var}\left[ \mathbb{E} \left[ \hat \tau_{n, m} \mid \mathbf{X}_{m+n} \right] \right] + \mathbb{E} \left[ \operatorname{Var}\left[  \hat \tau_{n, m}  \mid \mathbf{X}_{m+n}  \right] \right]. \label{eq_proof_comp_est_probest_dec_var}
\end{equation}
Let us examine the first term. We have
\begin{align*}
\mathbb{E} \left[ \hat \tau_{n, m} \mid \mathbf{X}_{m+n} \right]  & = \mathbb{E} \left[ \frac{1}{n} \sum_{i=1}^{n} \frac{\hat p_{\text{\tiny T}, m}(X_i)}{\hat p_{\text{\tiny R}, n}(X_i)} Y_i \left( \frac{A_i}{\hat \pi_n(X_i)} - \frac{1-A_i}{1-\hat \pi_n(X_i)} \right) \mid \mathbf{X}_{m+n} \right]\\
& = \frac{1}{n} \sum_{i=1}^{n} \frac{\hat p_{\text{\tiny T}, m}(X_i)}{\hat p_{\text{\tiny R}, n}(X_i)} \mathbb{E} \left[    \frac{Y_i^{(1)} A_i}{\hat \pi_n(X_i)} - \frac{Y_i^{(0)} (1-A_i)}{1-\hat \pi_n(X_i)}  \mid \mathbf{X}_{m+n} \right]\\
& = \frac{1}{n} \sum_{i=1}^{n} \frac{\hat p_{\text{\tiny T}, m}(X_i)}{\hat p_{\text{\tiny R}, n}(X_i)} \mathbb{E} \left[   \mathbb{E} \left[ \frac{Y_i^{(1)} A_i}{\hat \pi_n(X_i)} - \frac{Y_i^{(0)} (1-A_i)}{1-\hat \pi_n(X_i)} \mid \mathbf{X}_{m+n}, \mathbf{Y}_{n}^{(1)}, \mathbf{Y}_{n}^{(0)} \right] \mid \mathbf{X}_{m+n} \right].
\end{align*}
A similar computation as the one used in the derivation of the bias above shows that
\begin{align*}
\mathbb{E} \left[ \frac{Y_i^{(1)} A_i}{\hat \pi_n(X_i)} - \frac{Y_i^{(0)} (1-A_i)}{1-\hat \pi_n(X_i)} \mid \mathbf{X}_{m+n}, \mathbf{Y}_{n}^{(1)}, \mathbf{Y}_{n}^{(0)} \right] =   \left( Y_i^{(1)} - Y_i^{(0)} \right) - Y_i^{(1)}  (1-\pi(X_i))^{Z_n(X_i)} +  Y_i^{(0)}\pi(X_i)^{Z_n(X_i)}, 
\end{align*}
which leads to 
\begin{align*}
\mathbb{E} \left[ \hat \tau_{n, m} \mid \mathbf{X}_{m+n} \right] &= \frac{1}{n} \sum_{i=1}^{n} \frac{\hat p_{\text{\tiny T}, m}(X_i)}{\hat p_{\text{\tiny R}, n}(X_i)} \mathbb{E} \left[  \left( Y_i^{(1)} - Y_i^{(0)} \right) - Y_i^{(1)}  (1-\pi(X_i))^{Z_n(X_i)} +  Y_i^{(0)}\pi(X_i)^{Z_n(X_i)}  \mid \mathbf{X}_{m+n} \right]\\
& = \frac{1}{n} \sum_{i=1}^{n} \frac{\hat p_{\text{\tiny T}, m}(X_i)}{\hat p_{\text{\tiny R}, n}(X_i)} \left( \tau(X_i) - \mathbb{E} \left[ Y_i^{(1)} \mid X_i \right] (1-\pi(X_i))^{Z_n(X_i)}  + \mathbb{E} \left[ Y_i^{(0)} \mid X_i \right]\pi(X_i)^{Z_n(X_i)} \right).
\end{align*}
Rewriting the previous sum yields
 \begin{align*}
\mathbb{E} \left[ \hat \tau_{n, m} \mid \mathbf{X}_{m+n} \right] 
& = \frac{1}{n} \sum_{x \in \mathbb{X}} \mathbbm{1}_{Z_{n}(x) > 0} \sum_{i=1}^n \mathds{1}_{X_i=x} \frac{\hat p_{\text{\tiny T}, m}(X_i)}{\hat p_{\text{\tiny R}, n}(X_i)}  \bigg( \tau(X_i) - \mathbb{E} \left[ Y_i^{(1)} \mid X_i \right] (1-\pi(X_i))^{Z_n(X_i)} \\
& \qquad + \mathbb{E} \left[ Y_i^{(0)} \mid X_i \right]\pi(X_i)^{Z_n(X_i)} \bigg)\\
& =  \sum_{x \in \mathbb{X}} \mathbbm{1}_{Z_{n}(x) > 0} \hat p_{\text{\tiny T}, m}(x) \left( \tau(x) - \mathbb{E} \left[ Y_i^{(1)} \mid X_i = x \right] (1-\pi(x))^{Z_n(x)}  + \mathbb{E} \left[ Y_i^{(0)} \mid X_i = x \right]\pi(x)^{Z_n(x)} \right)\\
& = \frac{1}{m}   \sum_{i=n+1}^{n+m}     U_n(X_i), 
\end{align*}
where 
\begin{align*}
U_n(X_i) := \left( \tau(X_i) - \mathbb{E} \left[ Y_i^{(1)} \mid X_i  \right] (1-\pi(X_i))^{Z_n(X_i)}  + \mathbb{E} \left[ Y_i^{(0)} \mid X_i \right]\pi(X_i)^{Z_n(X_i)} \right).
\end{align*}
By the law of total variance, 
\begin{align*}
& \operatorname{Var}\left[   \mathbb{E} \left[ \hat \tau_{n, m} \mid \mathbf{X}_{m+n} \right] \right] \\
 = & \operatorname{Var}\left[  \frac{1}{m}   \sum_{i=n+1}^{n+m}  U_n(X_i)  \right] \\
 = & \mathbb{E} \left[ \operatorname{Var}\left[  \frac{1}{m}   \sum_{i=n+1}^{n+m}  U_n(X_i) | \mathbf{X}_n \right] \right]   + \operatorname{Var} \left[ \mathbb{E} \left[  \frac{1}{m}   \sum_{i=n+1}^{n+m}  U_n(X_i) | \mathbf{X}_n \right] \right] \\
 = & \frac{1}{m} \mathbb{E} \left[ \operatorname{Var}\left[    U_n(X) | \mathbf{X}_n \right] \right] + \operatorname{Var} \left[ \mathbb{E} \left[    U_n(X) | \mathbf{X}_n \right] \right]\\
 = & \frac{1}{m}  \operatorname{Var}\left[    U_n(X) \right] + \left( 1 - \frac{1}{m} \right) \operatorname{Var} \left[ \mathbb{E} \left[    U_n(X) | \mathbf{X}_n \right] \right],
\end{align*}
where the last line comes from the law of total variance applied to $\operatorname{Var}\left[   U_n(X) \right]$. Since
\begin{align*}
& \operatorname{Var} \left[ \mathbb{E} \left[    U_n(X) | \mathbf{X}_n \right] \right] \\
= & \operatorname{Var} \left[ \mathbb{E} \left[  \left( \tau(X) - \mathbb{E} \left[ Y^{(1)} \mid X  \right] (1-\pi(X))^{Z_n(X)}  + \mathbb{E} \left[ Y^{(0)} \mid X \right]\pi(X)^{Z_n(X)} \right) | \mathbf{X}_n \right] \right] \\
= & \operatorname{Var} \left[ \mathbb{E} \left[  \mathbb{E} \left[ Y^{(0)} \mid X \right]\pi(X)^{Z_n(X)} - \mathbb{E} \left[ Y^{(1)} \mid X  \right] (1-\pi(X))^{Z_n(X)}  | \mathbf{X}_n \right] \right],
\end{align*}
as the only source of randomness comes from $ Z_n(X)$ (and not from $\tau(X)$),
we have
\begin{align}
\label{proof_first_var_pi_est}
    \operatorname{Var}\left[ \mathbb{E} \left[\hat \tau_{n,m} \mid \mathbf{X}_{m+n} \right]  \right] &=\frac{1}{m}  \operatorname{Var}\left[   \tau(X)  - C_n(X) \right] + \left( 1 - \frac{1}{m} \right) \operatorname{Var} \left[ \mathbb{E} \left[ C_n(X) | \mathbf{X}_n \right] \right], 
\end{align}
where 
\begin{align*}
    C_n(X) = \mathbb{E} \left[ Y^{(1)} \mid X  \right] (1-\pi(X))^{Z_n(X)}  - \mathbb{E} \left[ Y^{(0)} \mid X \right]\pi(X)^{Z_n(X)}.
\end{align*}

Regarding the other term, and first re-writing $ \hat \tau_{n, m}$,
\begin{align*}
 \hat \tau_{n, m} & =  \frac{1}{n} \sum_{i=1}^{n} \frac{\hat p_{\text{\tiny T}, m}(X_i)}{\hat p_{\text{\tiny R}, n}(X_i)}   \left( \frac{A_i Y_i^{(1)}}{\hat \pi_n(X_i)} - \frac{(1-A_i)Y_i^{(0)}}{1-\hat \pi_n(X_i)} \right)\\
 & = \sum_{i=1}^{n} \frac{\hat p_{\text{\tiny T}, m}(X_i)}{Z_n(X_i)}   \left( \frac{A_i Y_i^{(1)}}{\hat \pi_n(X_i)} - \frac{(1-A_i)Y_i^{(0)}}{1-\hat \pi_n(X_i)} \right)\\
 & = \sum_{x \in \mathbb{X}} \frac{\hat p_{\text{\tiny T}, m}(x) \mathbbm{1}_{Z_{n}(x) > 0}}{Z_n(x)}  \sum_{i=1}^{n} \mathds{1}_{X_i=x}   \left( \frac{A_i Y_i^{(1)}}{\hat \pi_n(x)} - \frac{(1-A_i)Y_i^{(0)}}{1-\hat \pi_n(x)} \right).
\end{align*}
Hence, 
\begin{align*}
 & \operatorname{Var} \left[ \hat \tau_{n, m} \mid \mathbf{X}_{n+m}\right] \\
 = &  \operatorname{Var} \left[ \sum_{x \in \mathbb{X}} \frac{\hat p_{\text{\tiny T}, m}(x) \mathbbm{1}_{Z_{n}(x) > 0}}{Z_n(x)}  \sum_{i=1}^{n} \mathds{1}_{X_i=x}   \left( \frac{A_i Y_i^{(1)}}{\hat \pi_n(x)} - \frac{(1-A_i)Y_i^{(0)}}{1-\hat \pi_n(x)} \right) \mid \mathbf{X}_{n+m}\right] \\
 = & \sum_{x \in \mathbb{X}} (\hat p_{\text{\tiny T}, m}(x))^2 \operatorname{Var} \left[  \frac{\mathbbm{1}_{Z_{n}(x) > 0}}{Z_n(x)}  \sum_{i=1}^{n} \mathds{1}_{X_i=x}   \left( \frac{A_i Y_i^{(1)}}{\hat \pi_n(x)} - \frac{(1-A_i)Y_i^{(0)}}{1-\hat \pi_n(x)} \right) \mid \mathbf{X}_{n+m}\right] \\
 & + \sum_{x, y \in \mathbb{X}, x \neq y} \operatorname{Cov} \bigg[  \frac{\hat p_{\text{\tiny T}, m}(x) \mathbbm{1}_{Z_{n}(x) > 0}}{Z_n(x)}  \sum_{i=1}^{n} \mathds{1}_{X_i=x}   \left( \frac{A_i Y_i^{(1)}}{\hat \pi_n(x)} - \frac{(1-A_i)Y_i^{(0)}}{1-\hat \pi_n(x)} \right), \\
 & \qquad \qquad \qquad \frac{\hat p_{\text{\tiny T}, m}(y) \mathbbm{1}_{Z_{n}(y) > 0}}{Z_n(y)}  \sum_{j=1}^{n} \mathds{1}_{X_j=y}   \left( \frac{A_j Y_j^{(1)}}{\hat \pi_n(y)} - \frac{(1-A_j)Y_j^{(0)}}{1-\hat \pi_n(y)} \right)  \mid \mathbf{X}_{n+m}\bigg].
\end{align*}
Note that the term 
\begin{align*}
& \operatorname{Var} \left[  \frac{\mathbbm{1}_{Z_{n}(x) > 0}}{Z_n(x)}  \sum_{i=1}^{n} \mathds{1}_{X_i=x}   \left( \frac{A_i Y_i^{(1)}}{\hat \pi_n(x)} - \frac{(1-A_i)Y_i^{(0)}}{1-\hat \pi_n(x)} \right) \mid \mathbf{X}_{n+m}\right],
\end{align*}
corresponds to the variance of the difference-in-means estimator on the strata $X=x$ (where $n$ is replaced by $Z_n(x)$) and therefore equals 
\begin{align*}
V_{\text{\tiny DM},n}(x)\mathds{1}_{Z_n(x) > 0}/Z_n(x), 
\end{align*}
where 
\begin{align*}
V_{\text{\tiny DM},n}(x)
 =  \frac{1}{Z_n(x)} \operatorname{Var} \left[    \sum_{i=1}^{n} \mathds{1}_{X_i=x}   \left( \frac{A_i Y_i^{(1)}}{\hat \pi_n(x)} - \frac{(1-A_i)Y_i^{(0)}}{1-\hat \pi_n(x)} \right) \mid \mathbf{X}_{n+m} \right].
\end{align*}

Consequently, 
\begin{align*}
 & \operatorname{Var} \left[ \hat \tau_{n, m} \mid \mathbf{X}_{n+m}\right] \\
 = & \sum_{x \in \mathbb{X}} \frac{(\hat p_{\text{\tiny T}, m}(x))^2 V_{\text{\tiny DM},n}(x) \mathds{1}_{Z_n(x) > 0}}{Z_n(x)} \\
 & + \sum_{x, y \in \mathbb{X}, x \neq y} \frac{\hat p_{\text{\tiny T}, m}(x)}{Z_n(x)} \frac{\hat p_{\text{\tiny T}, m}(y)}{Z_n(y)} \sum_{i, j} \mathds{1}_{X_i=x} \mathds{1}_{X_j=y} \operatorname{Cov} \left[        \left( \frac{A_i Y_i^{(1)}}{\hat \pi_n(x)} - \frac{(1-A_i)Y_i^{(0)}}{1-\hat \pi_n(x)} \right),      \left( \frac{A_j Y_j^{(1)}}{\hat \pi_n(y)} - \frac{(1-A_j)Y_j^{(0)}}{1-\hat \pi_n(y)} \right)  \mid \mathbf{X}_{n+m}\right].
\end{align*}
Note that for $x\neq y $, $\hat \pi_n(x) \indep \hat \pi_n(y)$. Consequently, for $i \neq j$,
\begin{align*}
 \left( \frac{A_i Y_i^{(1)}}{\hat \pi_n(x)} - \frac{(1-A_i)Y_i^{(0)}}{1-\hat \pi_n(x)} \right) \indep  \left( \frac{A_j Y_j^{(1)}}{\hat \pi_n(x)} - \frac{(1-A_j)Y_j^{(0)}}{1-\hat \pi_n(x)} \right).
\end{align*}
Consequently, 
\begin{align*}
 & \operatorname{Var} \left[ \hat \tau_{n, m} \mid \mathbf{X}_{n+m}\right] = \sum_{x \in \mathbb{X}} \frac{(\hat p_{\text{\tiny T}, m}(x))^2 V_{\text{\tiny DM},n}(x) \mathds{1}_{Z_n(x) > 0}}{Z_n(x)},
 \end{align*}
and, taking the expectation with respect to $\mathbf{X}_{n+m}$, we have
\begin{align}
\mathbb{E} \left[ \operatorname{Var} \left[ \hat \tau_{n, m} \mid \mathbf{X}_{n+m}\right] \right] & = \sum_{x \in \mathbb{X}} \mathbb{E} \left[ (\hat p_{\text{\tiny T}, m}(x))^2 \right] \mathbb{E} \left[  \frac{\mathds{1}_{Z_n(x) > 0}}{Z_n(x)} \right] V_{\text{\tiny DM},n}(x) \nonumber \\
& = \sum_{x \in \mathbb{X}} \left( \frac{p_{\text{\tiny T}}(x)(1-p_{\text{\tiny T}}(x))}{m} + p_{\text{\tiny T}}^2 (x)  \right) \mathbb{E} \left[  \frac{\mathds{1}_{Z_n(x) > 0}}{Z_n(x)} \right] V_{\text{\tiny DM},n}(x). \label{proof_sec_var_pi_est}
\end{align}

Gathering \eqref{proof_first_var_pi_est} and \eqref{proof_sec_var_pi_est}, we finally obtain, 
\begin{align*}
& \operatorname{Var}\left[    \hat \tau_{n, m} \right]\\
& =  \frac{1}{m}  \operatorname{Var}\left[   \tau(X)  - C_n(X) \right] + \left( 1 - \frac{1}{m} \right) \operatorname{Var} \left[ \mathbb{E} \left[ C_n(X) | \mathbf{X}_n \right] \right] \\
& \quad +    \sum_{x \in \mathbb{X}} \left( \frac{p_{\text{\tiny T}}(x)(1-p_{\text{\tiny T}}(x))}{m} + p_{\text{\tiny T}}^2 (x)  \right) \mathbb{E} \left[  \frac{\mathds{1}_{Z_n(x) > 0}}{Z_n(x)} \right] V_{\text{\tiny DM},n}(x),
\end{align*}
where 
\begin{align*}
    C_n(X) = \mathbb{E} \left[ Y^{(1)} \mid X  \right] (1-\pi(X))^{Z_n(X)}  - \mathbb{E} \left[ Y^{(0)} \mid X \right]\pi(X)^{Z_n(X)}.
\end{align*}
Note that, by Jensen's inequality,  
\begin{align*}
 & \quad \operatorname{Var} \left[ \mathbb{E} \left[  C_n(X)  | \mathbf{X}_n \right] \right] \\
 & \leq \mathbb{E} \left[  C_n(X)^2 \right]\\
 & \leq 2 \mathbb{E} \left[ \mathbb{E} \left[ Y^{(1)} \mid X  \right]^2 (1-\pi(X))^{2Z_n(X)} \right] + 2 \mathbb{E} \left[ \mathbb{E} \left[ Y^{(0)} \mid X \right]^2\pi(X)^{2Z_n(X)}\right] \\
 & \leq 2 \mathbb{E} \left[ \mathbb{E} \left[ Y^{(1)} \mid X  \right]^2 \mathbb{E} \left[ (1-\pi(X))^{2Z_n(X)} \mid X \right] \right] + 2 \mathbb{E} \left[ \mathbb{E} \left[ Y^{(0)} \mid X \right]^2 \mathbb{E} \left[ \pi(X)^{2Z_n(X)} \mid X \right] \right] \\
 & \leq 2 \mathbb{E} \left[ \mathbb{E} \left[ Y^{(1)} \mid X  \right]^2  \left( 1 - \left(1 - \pi(X)^2 \right) p_{\text{\tiny R}}(X) \right)^n \right] + 2 \mathbb{E} \left[ \mathbb{E} \left[ Y^{(0)} \mid X \right]^2 \left( 1 - \left(1 - (1-\pi(X))^2 \right) p_{\text{\tiny R}}(X) \right)^n \right] \\
 & \leq 2 \left( 1 - \min_x \left( (1 - \tilde \pi(x)^2) p_{\text{\tiny R}}(x) \right) \right)^n   \mathbb{E} \left[ (Y^{(1)})^2 + (Y^{(0)})^2 \right],
\end{align*}
where $\tilde \pi(x) = \max ( \pi(x), 1 - \pi(x) )$, and we have used the fact that 
\begin{align*}
\mathbb{E} \left[ \pi(X)^{2Z_n(X)} \mid X \right] & = \left( \pi(X)^2 p_{\text{\tiny R}}(X) + 1 - p_{\text{\tiny R}}(X) \right)^n \\
& =  \left( 1 - \left(1 - \pi(X)^2 \right) p_{\text{\tiny R}}(X) \right)^n.
\end{align*}

Besides, we have
\begin{align*}
 \operatorname{Var}\left[   \tau(X)  - C_n(X) \right] \leq \operatorname{Var}_{\text{\tiny T}} \left[ \tau(X)\right] + 2 \left( \operatorname{Var}_{\text{\tiny T}}[\tau(X)] \operatorname{Var}_{\text{\tiny T}}\left[C_n(X) \right] \right)^{1/2} + \operatorname{Var}_{\text{\tiny T}} \left[ C_n(X) \right],   
\end{align*}
where
\begin{align*}
\operatorname{Var}_{\text{\tiny T}} \left[ C_n(X) \right] & \leq \mathbb{E} \left[ C_n(X)^2 \right] \\
& \leq 2 \left( 1 - \min_x \left( (1 - \tilde \pi(x)^2) p_{\text{\tiny R}}(x) \right) \right)^n   \mathbb{E} \left[ (Y^{(1)})^2 + (Y^{(0)})^2 \right].
\end{align*}
Consequently, 
\begin{align*}
 \operatorname{Var}\left[   \tau(X)  - C_n(X) \right] & \leq \operatorname{Var}_{\text{\tiny T}} \left[ \tau(X)\right] + 4 \left( 1 - \min_x \left( (1 - \tilde \pi(x)^2) p_{\text{\tiny R}}(x) \right) \right)^{n/2} \mathbb{E} \left[ (Y^{(1)})^2 + (Y^{(0)})^2 \right] \\
 & \quad + 2 \left( 1 - \min_x \left( (1 - \tilde \pi(x)^2) p_{\text{\tiny R}}(x) \right) \right)^n   \mathbb{E} \left[ (Y^{(1)})^2 + (Y^{(0)})^2 \right]\\
 & \leq \operatorname{Var}_{\text{\tiny T}} \left[ \tau(X)\right] + 6 \left( 1 - \min_x \left( (1 - \tilde \pi(x)^2) p_{\text{\tiny R}}(x) \right) \right)^{n/2} \mathbb{E} \left[ (Y^{(1)})^2 + (Y^{(0)})^2 \right].
\end{align*}

Finally, 
\begin{align*}
\operatorname{Var}\left[    \hat \tau_{n, m} \right]  & \leq   \frac{2}{n+1}   \mathbb{E}_{\text{\tiny R}}\left[ \left( \frac{p_\text{\tiny T}\left( X \right)}{p_\text{\tiny R}\left( X \right)}\right)^2 V_{\text{\tiny DM}}(X)\right] + \frac{\operatorname{Var}\left[   \tau(X) \right] }{m} +  \frac{2}{(n+1)m}  \mathbb{E}_{\text{\tiny R}}\left[ \frac{p_\text{\tiny T}\left( X \right)(1-p_\text{\tiny T}\left( X \right))}{p_\text{\tiny R}\left( X \right)^2} V_{\text{\tiny DM},n}(X)\right]  \\
& \quad + 2 \left( 1 + \frac{3}{m} \right) \left( 1 - \min_x \left( (1 - \tilde \pi(x)^2) p_{\text{\tiny R}}(x) \right) \right)^{n/2} \mathbb{E} \left[ (Y^{(1)})^2 + (Y^{(0)})^2 \right] .
\end{align*}

\end{proof}

\subsubsection{Proof of Corollary~\ref{cor_asympt_completely_estimated_pi_estimated}}\label{proof:cor_asympt_completely_estimated_pi_estimated}
\begin{proof}

The proof follows exactly the same structure as that of the proof of Corollary~\ref{cor_asympt_completely_estimated}.

\begin{proof}
We recall the explicit expression of the variance of $\hat \tau_{n, m}$,
\begin{align*}
& \operatorname{Var}\left[    \hat \tau_{n, m} \right]\\
& =  \frac{1}{m}  \operatorname{Var}\left[   \tau(X)  - C_n(X) \right] + \left( 1 - \frac{1}{m} \right) \operatorname{Var} \left[ \mathbb{E} \left[ C_n(X) | \mathbf{X}_n \right] \right] \\
& \quad +    \sum_{x \in \mathbb{X}} \left( \frac{p_{\text{\tiny T}}(x)(1-p_{\text{\tiny T}}(x))}{m} + p_{\text{\tiny T}}^2 (x)  \right) \mathbb{E} \left[  \frac{\mathds{1}_{Z_n(x) > 0}}{Z_n(x)} \right] V_{\text{\tiny DM}}(x),
\end{align*}
where 
\begin{align*}
    C_n(X) = \mathbb{E} \left[ Y^{(1)} \mid X  \right] (1-\pi(X))^{Z_n(X)}  - \mathbb{E} \left[ Y^{(0)} \mid X \right]\pi(X)^{Z_n(X)}.
\end{align*}

Recall that using \eqref{eq_proof_Chernoff1} and \eqref{eq_proof_Chernoff2}, one has
\begin{align*}
\lim _{n \rightarrow \infty}  \mathbb{E} \left[ \frac{\mathbbm{1}_{Z_n(x)>0}}{  Z_n(x)/n } \right] = \frac{1}{p_{\text{\tiny R}}(x)},
\end{align*}

and we also have
\begin{equation*}
    \lim _{n \rightarrow \infty} \operatorname{Var}_{\text{\tiny T}}\left[\tau(X) - C_n(X)\right]=\operatorname{Var}_{\text{\tiny T}}[\tau(X)] = \operatorname{Var}[\tau(X)].
\end{equation*}

Finally, note that the term $\operatorname{Var} \left[ \mathbb{E} \left[ C_n(X) | \mathbf{X}_n \right] \right]$ can be bounded by a term proportional to $(1-\operatorname{min}(\pi, 1-\pi))^n$, so that the convergence toward $0$ it at an exponential pace with $n$.\\

Multiplying the explicit variance by $\operatorname{min}(n,m)$ one has,

\begin{align*}
& \operatorname{min}(n,m) \operatorname{Var}\left[    \hat \tau_{n, m} \right]\\
& =  \frac{\operatorname{min}(n,m)}{m}  \operatorname{Var}\left[   \tau(X)  - C_n(X) \right] + \operatorname{min}(n,m) \left( 1 - \frac{1}{m} \right) \operatorname{Var} \left[ \mathbb{E} \left[ C_n(X) | \mathbf{X}_n \right] \right] \\
& \quad +    \frac{\operatorname{min}(n,m)}{n} \sum_{x \in \mathbb{X}} \left( \frac{p_{\text{\tiny T}}(x)(1-p_{\text{\tiny T}}(x))}{m} + p_{\text{\tiny T}}^2 (x)  \right) \mathbb{E} \left[  \frac{\mathds{1}_{Z_n(x) > 0}}{Z_n(x)/n} \right] V_{\text{\tiny DM}}(x).
\end{align*}

Now, we study an asymptotic regime where $n$ and $m$ can grow toward infinity but at different paces. Let $  \lim\limits_{n,m\to\infty} \frac{m}{n} = \lambda \in [0,\infty]$,where $\lambda$ characterizes the regime.\\

\textbf{Case 1}:
If $\lambda \in [1, \infty]$, one can replace $\min(n,m) $ by $n$, so that 
\begin{align*}
\operatorname{min}(n,m) \operatorname{Var}\left[    \hat \tau_{n, m} \right] & =n \operatorname{Var}\left[    \hat \tau_{n, m} \right] \\
& =  \frac{1}{\lambda}  \operatorname{Var}\left[   \tau(X)  - C_n(X) \right] +  \left( n - \frac{1}{\lambda} \right) \operatorname{Var} \left[ \mathbb{E} \left[ C_n(X) | \mathbf{X}_n \right] \right] \\
& \quad +   \sum_{x \in \mathbb{X}} \left( \frac{p_{\text{\tiny T}}(x)(1-p_{\text{\tiny T}}(x))}{m} + p_{\text{\tiny T}}^2 (x)  \right) \mathbb{E} \left[  \frac{\mathds{1}_{Z_n(x) > 0}}{Z_n(x)} \right] V_{\text{\tiny DM}}(x),
\end{align*}

such that

\begin{align*}
     \lim _{n,m \rightarrow \infty} n \operatorname{Var}\left[    \hat \tau_{n, m} \right] &=  \frac{ \operatorname{Var}\left[   \tau(X) \right] }{\lambda} + \tilde V_{so},
\end{align*}

where

\begin{align*}
    \tilde V_{so}:= \mathbb{E}_{\text{\tiny R}}\left[ \left( \frac{p_{\text{\tiny T}}(X)}{p_{\text{\tiny R}}(X)} \right)^2V_{\text{\tiny DM}, \infty}(X) \right].
\end{align*}

\textbf{Case 2}:
If $\lambda \in [0, 1]$, one can replace $\min(n,m) $ by $m$, so that,

\begin{align*}
& \min(n,m) \operatorname{Var}\left[    \hat \tau_{n, m} \right] = m \operatorname{Var}\left[    \hat \tau_{n, m} \right] \\
& =   \operatorname{Var}\left[   \tau(X)  - C_n(X) \right] + m \left( 1 - \frac{1}{m} \right) \operatorname{Var} \left[ \mathbb{E} \left[ C_n(X) | \mathbf{X}_n \right] \right] \\
& \quad +  \lambda  \sum_{x \in \mathbb{X}} \left( \frac{p_{\text{\tiny T}}(x)(1-p_{\text{\tiny T}}(x))}{m} + p_{\text{\tiny T}}^2 (x)  \right) \mathbb{E} \left[  \frac{\mathds{1}_{Z_n(x) > 0}}{Z_n(x)/n} \right] V_{\text{\tiny DM}}(x).
\end{align*}

Because $m \le n$, then

\begin{align*}
    m \left( 1 - \frac{1}{m} \right) \operatorname{Var} \left[ \mathbb{E} \left[ C_n(X) | \mathbf{X}_n \right] \right] &\le  n \left( 1 - \frac{1}{m} \right) \operatorname{Var} \left[ \mathbb{E} \left[ C_n(X) | \mathbf{X}_n \right] \right] ,
\end{align*}

so that 
\begin{align*}
      \lim _{n,m \rightarrow \infty}  m \left( 1 - \frac{1}{m} \right) \operatorname{Var} \left[ \mathbb{E} \left[ C_n(X) | \mathbf{X}_n \right] \right] &\le  \lim _{n,m \rightarrow \infty}  n \left( 1 - \frac{1}{m} \right) \operatorname{Var} \left[ \mathbb{E} \left[ C_n(X) | \mathbf{X}_n \right] \right] =0.
\end{align*}

Finally,
\begin{align*}
 \lim _{n,m \rightarrow \infty} m \operatorname{Var}\left[    \hat \tau_{n, m} \right]
& =   \operatorname{Var}\left[   \tau(X)  \right] +  \lambda  \tilde V_{so}.
\end{align*}

\end{proof}

\end{proof}

\subsubsection{Proof of Theorem~\ref{thm:ipsw_pi_est}}\label{proof:thm:ipsw_pi_est}
\begin{proof}

According to Proposition~\ref{prop:when-estimating-pi}, the bias of the IPSW estimator with estimated $\hat \pi_n$ can be upper bounded via
\begin{align*}
    \left| \mathbb{E} \left[ \hat \tau_{n, m} \right]- \tau \right| & \leq  \sum_{x\in\mathds{X}} p_{\text{\tiny T}}(x) \left|\mathbb{E}\left[ Y^{(0)} \mid X = x\right] \right|  \left( 1 - p_{\text{\tiny R}}(x) \left(1 - \pi(x)\right) \right)^n \\
& \quad + \sum_{x\in\mathds{X}} p_{\text{\tiny T}}(x) \left|\mathbb{E}\left[ Y^{(1)} \mid X = x \right] \right|\left(  1- p_{\text{\tiny R}}(x)\pi(x)  \right)^n \\
    & \leq \left(  1- \min_x \left( (1 - \tilde \pi(x)) p_{\text{\tiny R}}(x) \right) \right)^n  \mathbb{E}_\text{\tiny T} \left[ \left|\mathbb{E}\left[ Y^{(1)} \mid X \right] \right| + \left|\mathbb{E}\left[ Y^{(0)} \mid X \right] \right| \right].
\end{align*}
Therefore, the risk of the (estimated) IPSW estimate with estimated $\hat \pi_n$ satisfies,
\begin{align*}
& \quad \mathbb{E}\left[ \left( \hat  \tau_{n,m} - \tau \right)^2\right] \\
&\leq \left(  1- \min_x \left( (1 - \tilde \pi(x)) p_{\text{\tiny R}}(x) \right) \right)^{2n}  \mathbb{E}_\text{\tiny T} \left[ \left|\mathbb{E}\left[ Y^{(1)} \mid X \right] \right| + \left|\mathbb{E}\left[ Y^{(0)} \mid X \right] \right| \right]^2 +  \frac{2}{n+1}   \mathbb{E}_{\text{\tiny R}}\left[ \left( \frac{p_\text{\tiny T}\left( X \right)}{p_\text{\tiny R}\left( X \right)}\right)^2 V_{\text{\tiny DM}}(X)\right]  \\
& \quad + \frac{\operatorname{Var}\left[   \tau(X) \right] }{m}  +  \frac{2}{(n+1)m}  \mathbb{E}_{\text{\tiny R}}\left[ \frac{p_\text{\tiny T}\left( X \right)(1-p_\text{\tiny T}\left( X \right))}{p_\text{\tiny R}\left( X \right)^2} V_{\text{\tiny DM}}(X)\right]  \\
& \quad +  2 \left( 1 + \frac{3}{m} \right) \left( 1 - \min_x \left( (1 - \tilde \pi(x)^2) p_{\text{\tiny R}}(x) \right) \right)^{n/2} \mathbb{E} \left[ (Y^{(1)})^2 + (Y^{(0)})^2 \right]\\
 & \le  \frac{2}{n+1}   \mathbb{E}_{\text{\tiny R}}\left[ \left( \frac{p_\text{\tiny T}\left( X \right)}{p_\text{\tiny R}\left( X \right)}\right)^2 V_{\text{\tiny DM}}(X)\right] + \frac{\operatorname{Var}\left[   \tau(X) \right] }{m} + \frac{2}{m(n+1)}  \mathbb{E}_{\text{\tiny R}}\left[ \frac{p_\text{\tiny T}\left( X \right)(1-p_\text{\tiny T}\left( X \right))}{p_\text{\tiny R}\left( X \right)^2} V_{\text{\tiny DM}}(X)\right]  \\
& \quad   +  2 \left( 2 + \frac{3}{m} \right) \left( 1 - \min_x \left( (1 - \tilde \pi(x)) p_{\text{\tiny R}}(x) \right) \right)^{n/2} \mathbb{E} \left[ (Y^{(1)})^2 + (Y^{(0)})^2 \right].
\end{align*}

\end{proof}

\section{Extended adjustment set}\label{appendix:extended-adjustment-set}

\subsection{Proof of Corollary~\ref{proposition:adding-shifted-covariates}}\label{proof_adding-shifted-covariates}

\begin{proof}

According to Corollary~\ref{cor_asympt_semi_oracle}, we have
\begin{align}
\lim_{n\to\infty} n \operatorname{Var}\left[\hat \tau_{\text{\tiny T},n}^*(X) \right] 
    &= V_{\text{so}}, \label{eq_proof_add_set_shifted1}
\end{align}
where
\begin{equation*}
    V_{\text{so}}:= \mathbb{E}_\text{\tiny R}\left[ \left(\frac{p_\text{\tiny T}(X)}{p_\text{\tiny R}(X)} \right)^2V_{ \text{\tiny HT}}(X)\right],
\end{equation*}
with
\begin{equation*}
    V_{ \text{\tiny HT}}(x) = \mathbb{E}_{\text{\tiny R}}\left[ \frac{\left( Y^{(1)} \right)^2}{\pi} \mid X = x\right]  + \mathbb{E}_{\text{\tiny R}}\left[ \frac{\left( Y^{(0)} \right)^2}{1-\pi} \mid X = x\right]  - \tau(x)^2.
\end{equation*}
Since, by assumption, $V$ is composed of covariates that are not treatment effect modifiers, using Definition~\ref{def:V-is-not-treat-effect-modifier}, we have, for all $(x,v)$,
\begin{align}
 V_{ \text{\tiny HT}}(x,v) = V_{ \text{\tiny HT}}(x).
 \label{eq_proof_add_set_shifted2}
\end{align}
Now, considering the set $(X,V)$ instead of $X$ in the expression~\eqref{eq_proof_add_set_shifted1} leads to  
\begin{align*}
    \lim_{n\to\infty} n \operatorname{Var}_\text{\tiny R}\left[\hat \tau_{\text{\tiny T},n}^*(X,V) \right] 
    &=   \mathbb{E}_\text{\tiny R}\left[ \left(\frac{p_\text{\tiny T}(X,V)}{p_\text{\tiny R}(X,V)} \right)^2V_{ \text{\tiny HT}}(X,V)\right] \\
    & = \sum_{x,v\in \mathcal{X}, \mathcal{V}} \frac{p_\text{\tiny T}^2(x,v)}{p_\text{\tiny R}(x,v)} 
    V_{ \text{\tiny HT}}(x,v)\\
     &=   \sum_{x,v\in \mathcal{X}, \mathcal{V}} \frac{p_\text{\tiny T}^2(x,v)}{p_\text{\tiny R}(x,v)} V_{ \text{\tiny HT}}(x) && \text{Equation.~\eqref{eq_proof_add_set_shifted2}} \\
     &=   \sum_{x,v\in \mathcal{X}, \mathcal{V}} \frac{p_\text{\tiny T}^2(x)p_\text{\tiny T}^2(v)}{p_\text{\tiny R}(x)p_\text{\tiny R}(v)} V_{ \text{\tiny HT}}(x)   && \text{$V \indep X$} \\
    &=  \left( \sum_{v \in \mathcal{V}} \frac{p_{\text{\tiny T}}(v)^2}{p_{\text{\tiny R}}(v)} \right)  \sum_{x\in \mathcal{X}} \frac{p_\text{\tiny T}^2(x)}{p_\text{\tiny R}(x)} V_{ \text{\tiny HT}}(x) \\
&=  \left( \sum_{v \in \mathcal{V}} \frac{p_{\text{\tiny T}}(v)^2}{p_{\text{\tiny R}}(v)} \right)   \lim_{n\to\infty} n \operatorname{Var}_\text{\tiny R} \left[\hat \tau_{\text{\tiny T},n}^*(X) \right],
    \end{align*} 
Now, note that 
\begin{align*}
\sum_{v \in \mathcal{V}} \frac{p_{\text{\tiny T}}(v)^2}{p_{\text{\tiny R}}(v)}  &= \mathbb{E}_{\text{\tiny R}} \left[ \frac{p_{\text{\tiny T}}(V)^2}{p_{\text{\tiny R}}(V)^2} \right] \\
& \geq \left( \mathbb{E}_{\text{\tiny R}} \left[ \frac{p_{\text{\tiny T}}(V)}{p_{\text{\tiny R}}(V)} \right] \right)^2 \\
& \geq \left( \sum_{v \in \mathcal{V}} p_{\text{\tiny T}}(v) \right)^2\\
& \geq 1,
\end{align*}
where the first inequality results from Jensen's inequality. Consequently, 
\begin{align*}
\lim_{n\to\infty} n \operatorname{Var}_\text{\tiny R}\left[\hat \tau_{\text{\tiny T},n}^*(X,V) \right] \geq \lim_{n\to\infty}n\operatorname{Var}_\text{\tiny R} \left[\hat \tau_{\text{\tiny T},n}^*(X) \right].
\end{align*}





\end{proof}

\subsection{Proof of Corollary~\ref{proposition:adding-treat-effect-modifier-covariates}}\label{proof_adding-treat-effect-modifier-covariates}

\begin{proof}

By the law of total variance, we have, for all $x$, 
\begin{align}
V_{ \text{\tiny DM}}(x)  =  \mathbb{E} \left[V_{ \text{\tiny DM}}(x,V) \right] +  \operatorname{Var}\left[\tau(x,V)\right].
\label{proof_eq_add_cov_unshifted_total_variance}
\end{align}
Indeed, according to the law of total variance, for all random variables $Z, X_1, X_2$, we have, a.s., 
\begin{align*}
\operatorname{Var}\left[ Z \mid X_1\right] = \mathbb{E} \left[ \operatorname{Var}\left[  Z \mid X_1, X_2 \right]\mid X_1 \right] +  \operatorname{Var} \left[ \mathbb{E}\left[  Z \mid X_1, X_2 \right]\mid X_1 \right].
\end{align*}
Letting $X_1 = X, X_2 = V$ and $Z = (YA/\pi) - (Y(1-A)/\pi)$ yields equation~\eqref{proof_eq_add_cov_unshifted_total_variance}.
Now, we can write
\begin{align*}
    \lim_{n\to\infty} n \operatorname{Var}_\text{\tiny R}\left[\hat \tau_{\text{\tiny T},n}^*(X,V) \right]
    &=   \mathbb{E}_\text{\tiny R}\left[ \left(\frac{p_\text{\tiny T}(X,V)}{p_\text{\tiny R}(X,V)} \right)^2V_{ \text{\tiny DM}}(X,V)\right] \\
    & = \sum_{x,v\in \mathcal{X}, \mathcal{V}} \frac{p_\text{\tiny T}^2(x,v)}{p_\text{\tiny R}(x,v)} 
    V_{ \text{\tiny DM}}(x,v)\\
     &=   \sum_{x,v\in \mathcal{X}, \mathcal{V}} \frac{p_\text{\tiny T}^2(x)p_\text{\tiny T}^2(v)}{p_\text{\tiny R}(x)p_\text{\tiny R}(v)} V_{ \text{\tiny DM}}(x,v)   && \text{$V \indep X$} \\
     &=   \sum_{x  \mathcal{X} } \frac{p_\text{\tiny T}^2(x)}{p_\text{\tiny R}(x)} \sum_{v\in  \mathcal{V}} \frac{p_\text{\tiny T}^2(v)}{p_\text{\tiny R}(v)} V_{ \text{\tiny DM}}(x,v)   &&   \\
     &=   \sum_{x  \mathcal{X} } \frac{p_\text{\tiny T}^2(x)}{p_\text{\tiny R}(x)} \sum_{v\in  \mathcal{V}} p_\text{\tiny T}(v) V_{ \text{\tiny HT}}(x,v)   &&  \text{by Definition~\ref{def:V-is-not-shifted}} \\
     &=   \sum_{x  \mathcal{X} } \frac{p_\text{\tiny T}^2(x)}{p_\text{\tiny R}(x)} \left( V_{ \text{\tiny DM}}(x) - \operatorname{Var}\left[\tau(x,V)\right] \right)   && \text{Equation~\eqref{proof_eq_add_cov_unshifted_total_variance}}   \\
    &= \lim_{n\to\infty} n \operatorname{Var}_\text{\tiny R}\left[\hat \tau_{\text{\tiny T},n}^*(X) \right] - 
    \mathbb{E}_\text{\tiny R} \left[ \frac{p_\text{\tiny T}(X)}{p_\text{\tiny R}(X)} \operatorname{Var}\left[ \tau(X,V) \mid X \right] \right], 
    \end{align*} 
which concludes the proof. 
\end{proof}

\section{Semi-synthetic simulation's data preparation}\label{appendix:additional-info-semi-synthetic-simulation}

\subsection{Context}
The semi-synthetic simulation is made of real world data, a trial called CRASH-3 \citep{crash3protocol, crash32019} and an observational data base called Traumabase. 
The covariates of both data sources are used to generate the true distribution from which the simulated data are generated. This part details the pre-treatment performed on the covariates, which is contained in the \texttt{R} notebook entitled \texttt{Prepare-semi-synthetic-simulation.Rmd}.
As explained in the main document, in this semi-synthetic simulation we only consider six baseline covariates:

\begin{itemize}
    \item Glasgow Coma Scale score\footnote{The GCS is a neurological scale which aims to assess a person's consciousness. The lower the score, the higher the severity of the trauma.} (GCS) (categorical);
    \item Gender (categorical);

    \item Pupil reactivity (categorical);
        \item Age (continuous);
    \item Systolic blood pressure (continuous);
    \item Time-to-treatment (continuous), being the time between the trauma and the administration of the treatment.
\end{itemize}

As three covariates out of 6 are continuous, we categorize them to obtain a completely categorical data. The time-to-treatment is categorized in 4 levels, systolic blood pressure in 3 levels, and age in 3 levels.
To further reduce the number of categories, and follow the CRASH-3 trial stratification, the Glasgow score is also gathered in 3 levels, from severe to moderately injured individuals, based on their Glasgow score.

\paragraph{CRASH-3 trial}

The CRASH-3 trial data contains information on $12,737$ individuals. Over the six covariates of interest and the $12,737$ individuals, $108$ values are missing. We imputed them using the \texttt{R} package \texttt{missRanger}.

\paragraph{Traumabase observational data}
The complete Traumabase data contains $20,037$ observations, but when keeping only the individuals suffering from Traumatic Brain Injury (TBI) as it is the case in the CRASH-3 trial, only $8,289$ observations could be kept. Many data are missing, in particular $2,660$ missing values for $8,289$ individuals and along $5$ baseline covariates considered. We impute them with the \texttt{R} package \texttt{missRanger}, using $35$ other available baseline covariates.
Because the time to treatment is not observed in the Traumabase this covariate is generated following a beta law, and considering a shifted distribution compared to the trial, in particular toward lower time-to-treatment values than in the trial.

\paragraph{Ensuring overlap}
When binding the two data sets, we had to ensure that the support inclusion assumption (Assumption~\ref{a:pos}) was verified. 
Out of the $586$ modalities present in the target data, only $192$ are also present in the trial data. Therefore only these observations are kept, such that the observational sample finally contains $8,058$ observations ($8,289$ at the beginning). All the observations in the trial are kept as there is no restriction for the trial to contain a larger support as presented in Assumption~\ref{a:pos}.

\subsubsection{Covariate shift vizualization}

For each of the six baseline and categorical considered, visualization of the covariate shift between the two data source is represented on Figures~\ref{fig:semi-synthetic-age}, \ref{fig:semi-synthetic-blood}, \ref{fig:semi-synthetic-gender}, \ref{fig:semi-synthetic-glasgow}, \ref{fig:semi-synthetic-eye}, and \ref{fig:semi-synthetic-ttt}.

\begin{figure}[!h]
    \centering
    \includegraphics[width=0.4\textwidth]{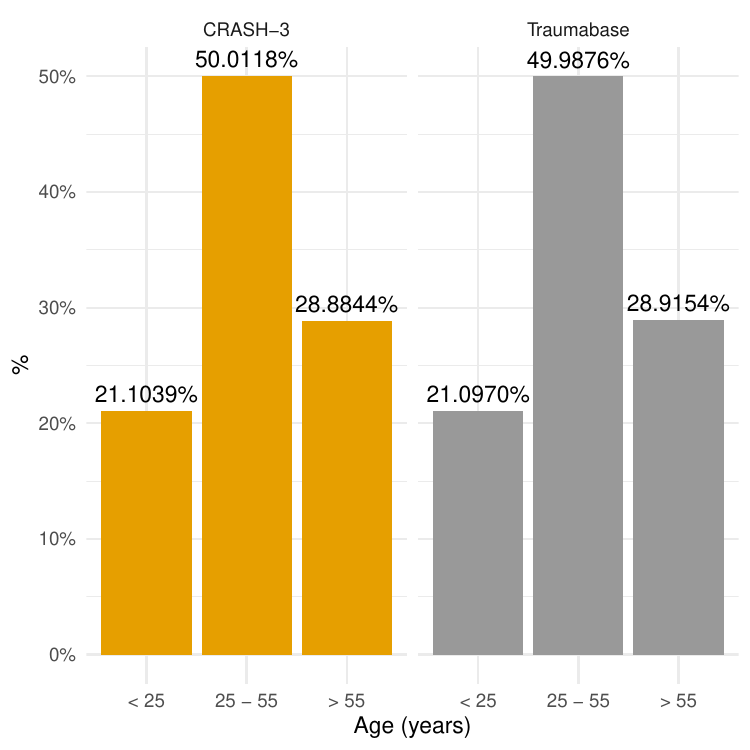}
    \caption{Bar plot of categorized age in the semi-synthetic simulation}
    \label{fig:semi-synthetic-age}
\end{figure}

\begin{figure}[!h]
    \centering
    \includegraphics[width=0.4\textwidth]{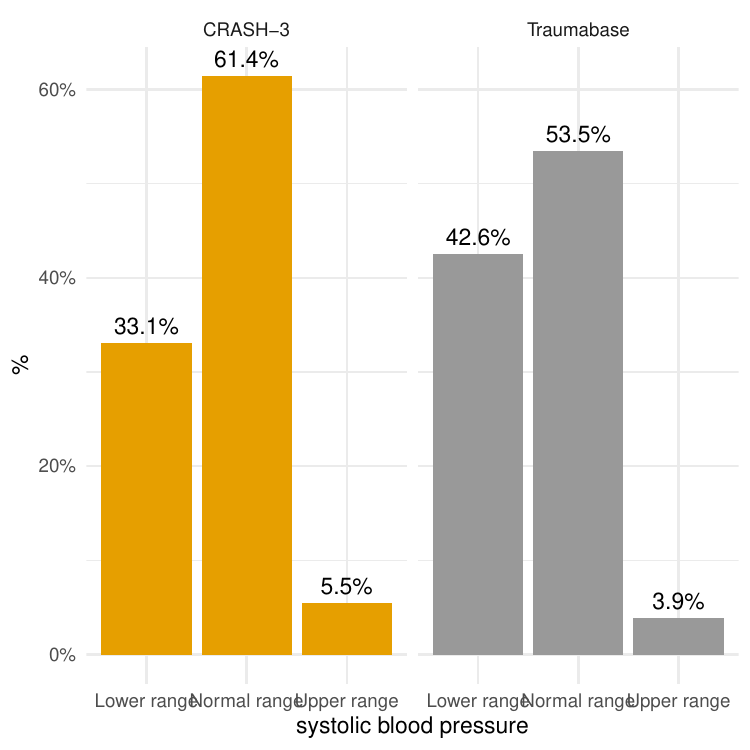}
    \caption{Bar plot of categorized systolic blood pressure in the semi-synthetic simulation}
    \label{fig:semi-synthetic-blood}
\end{figure}

\begin{figure}[!h]
    \centering
    \includegraphics[width=0.4\textwidth]{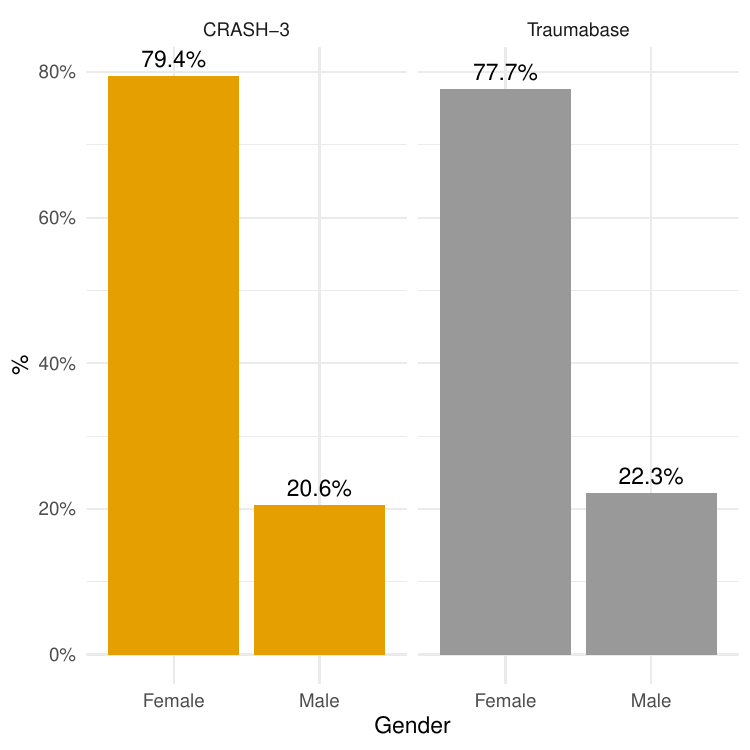}
    \caption{Bar plot of gender in the semi-synthetic simulation}
    \label{fig:semi-synthetic-gender}
\end{figure}

\begin{figure}[!h]
    \centering
    \includegraphics[width=0.4\textwidth]{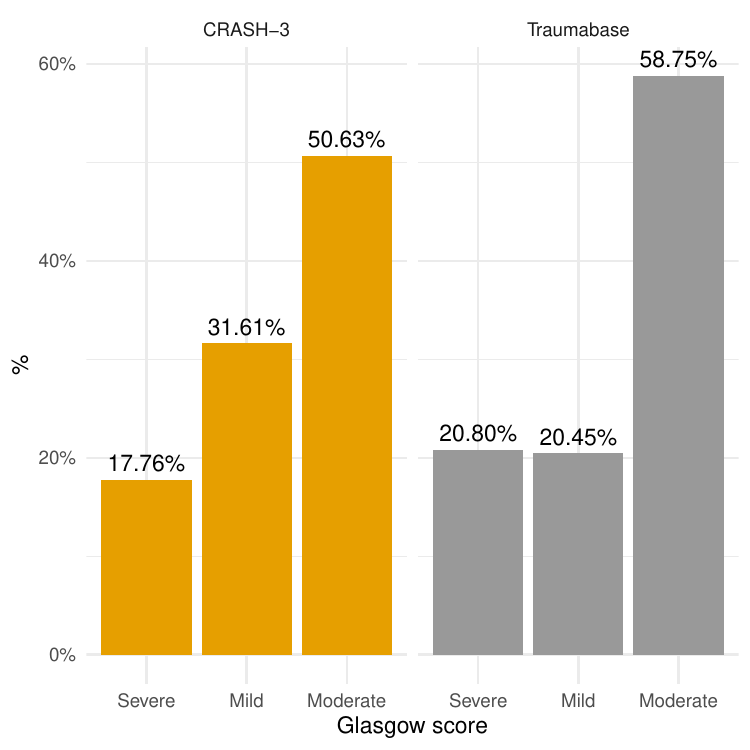}
    \caption{Bar plot of the glasgow score in the semi-synthetic simulation}
    \label{fig:semi-synthetic-glasgow}
\end{figure}

\begin{figure}[!h]
    \centering
    \includegraphics[width=0.4\textwidth]{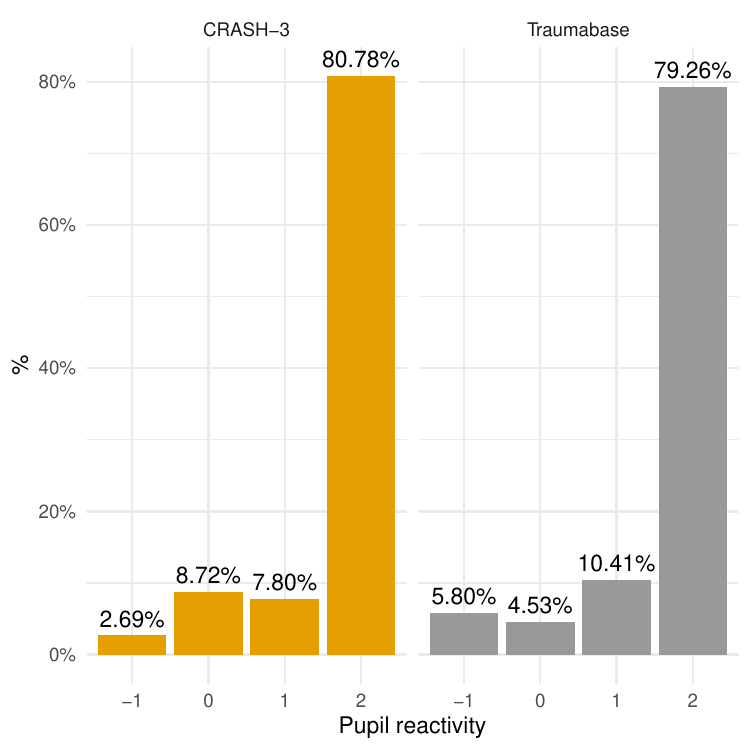}
    \caption{Bar plot of pupil reactivity ($-1$ encoding not able to measure) in the semi-synthetic simulation}
    \label{fig:semi-synthetic-eye}
\end{figure}

\begin{figure}[!h]
    \centering
    \includegraphics[width=0.4\textwidth]{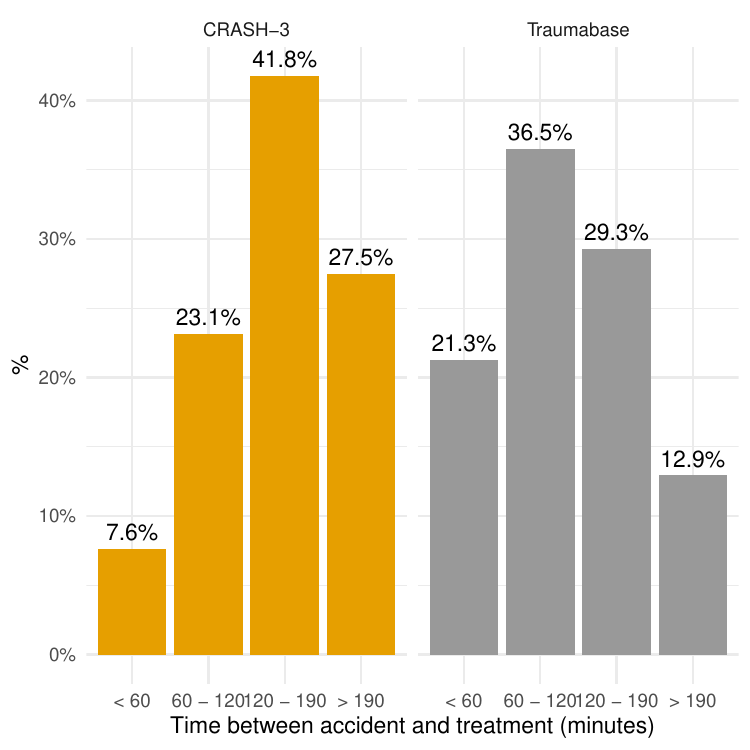}
    \caption{Bar plot of categorized time-to-treatment in the semi-synthetic simulation}
    \label{fig:semi-synthetic-ttt}
\end{figure}

\subsection{Synthetic outcome model}
As detailed above, for now the covariate support reflects a true situation, where only the time-to-treatment covariate was created as it is missing in the target population sample \citep{Colnet2021Sensitivity}.

For the purpose of simulation, the outcome model is completely synthetic, and for each strata a number is affected, from $1$ to the number of strata, starting to the lowest category (for example youngest strata, or lowest Glasgow score, or lower systolic blood pressure), to the highest one.

Doing so, the outcome model considered is such as,

\begin{align*}
    Y & = 10 -\texttt{Glasgow} +\left( \texttt{if Girl:}-5 \texttt{  else:} 0 \right)\\
    &\qquad + A \left( 15(6-\texttt{TTT}) + 3*(\texttt{Systolic.blood.pressure}-1)^2 \right) + \varepsilon_{\texttt{TTT}},
\end{align*}

where $\varepsilon_{\texttt{TTT}}$ is a random Gaussian noise with a standard deviation depending on the value of the covariate \texttt{TTT}. In particular if the treatment is given later, then the noise is stronger.

\newpage
\section{Useful results about RCTs under a Bernoulli design}\label{appendix:useful-results-rct}


Here we recall the definition of a Bernoulli trial (see Definition~\ref{def:bernoulli-trial}) and results such as variance expression of the Horvitz-Thomson and difference-in-means estimators under this design. We also provide details about variance inequality between the variance of the Horvitz-Thomson compared to the variance of the difference-in-means.
In the literature we have not found detailed derivations about the finite sample bias and variance of the difference-in-means \underline{under a Bernoulli design}. 
Extensively detailed derivations are available in Chapter two of \cite{imbens2015causal}, but for a completely randomized design.
Also note that in this work we assume a superpopulation framework, and a large part of the existing literature focuses on inference on a finite population. Indeed, when considering a finite sample, bias and variance of the Horvitz-Thomson and difference-in-means are not the same as when inferring the superpopulation treatment effect \citep{SplawaNeyman1990Translation, imbens2011experimental, miratrix2013adjusting, Harshaw2021VarianceEstimation}.\\

Note that all the results in this section considers one population, and not two populations with two distributions (target and randomized), therefore no index is placed on the expectation. When the following results on RCTs are used in the main paper and/or in the proofs, we use the index $R$ in the expectation as the trial in the main paper is sample according to $P_{\text{\tiny R}}$.

\subsection{Bernoulli trial}

A Bernoulli trial is a trial where the treatment assignment vector, being $\boldsymbol{A} = (A_1, \dots, A_n)$ follows a Bernoulli law with a constant probability. More formally,

\begin{definition}[Assignment mechanism for a Bernoulli Trial]\label{def:bernoulli-trial}
If the assignment mechanism is a Bernoulli trial with a probability $\pi$, then

$$\forall i,\, \mathbb{P}[A_i] = \pi,$$

and considering a sample for $n$ units,
$$\mathbb{P}\left[\mathbf{A} \mid i \in \mathcal{R}\right] = \prod_{i=1}^{n}\left[\pi^{A_{i}} \cdot\left(1-\pi\right)^{1-A_{i}}\right],$$
where $\mathbf{A}$ denotes the vector of treatment allocation for the trial sample $\mathcal{R}$.
\end{definition}

In this design the treatment allocation is independent of all other treatment allocations. 
A disadvantage of such design is the fact that there is always a small probability that all units receive the treatment or no treatment. 
This is why other designs are possible, such as the so-called completely randomized design, where the number of treated units is selected prior to treatment allocation (usually $n/2$ units are given treatment). The interest is to ensure a balanced group of treated and controls, and avoid a possible pathological case of high unbalance between the number of treated and control individuals. 

Mathematically, treating the situation of a completely randomized design is different than a Bernoulli design, as in the former the probability of treatement is not independent between units, for example

\begin{equation*}
  \forall i,j \in \mathcal{R},\,  \mathbb{P}_{\text{\tiny Comp. rand.}}\left[A_i = 1\mid A_j = 1 \right] \neq     \mathbb{P}_{\text{\tiny Comp. rand.}}\left[A_i = 1\right] = \pi.
\end{equation*}

\subsection{Horvitz-Thomson's}

The Horvitz-Thomson estimator is unbiased and has an explicit finite sample variance.

\begin{lemma}[Finite sample bias and variance of the Horvitz-Thomson estimator]\label{lemma:HT-unbiased-and-variance}
Assuming trial internal validity (Assumption~\ref{a:trial-internal-validity}), then 

\begin{equation*}
 \forall n, \quad    \mathbb{E}[\hat{\tau}_{\text{\tiny HT}}] - \tau = 0,
\end{equation*}
and 
\begin{equation*}
 \forall n, \quad  n\operatorname{Var}\left[ \hat{\tau}_{\text{\tiny HT},n}  \right] = \mathbb{E}\left[ \frac{\left( Y^{(1)} \right)^2}{\pi}\right]  + \mathbb{E}\left[ \frac{\left( Y^{(0)} \right)^2}{1-\pi}  \right]  - \tau^2.
\end{equation*}

\end{lemma}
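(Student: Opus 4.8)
The plan is to treat the bias and the variance separately, exploiting that under a Bernoulli design (Definition~\ref{def:bernoulli-trial}) the assignment indicators $A_1,\dots,A_n$ are mutually independent, so the summands in $\hat\tau_{\text{\tiny HT},n}$ form an \emph{iid} sequence. First I would introduce, for each unit $i$, the single-term random variable
\begin{equation*}
W_i := \frac{A_i Y_i}{\pi} - \frac{(1-A_i) Y_i}{1-\pi},
\end{equation*}
so that $\hat\tau_{\text{\tiny HT},n} = \frac1n\sum_{i=1}^n W_i$. Using consistency/SUTVA (Assumption~\ref{a:trial-internal-validity}$(i)$), I would replace the observed outcome by the relevant potential outcome in each term, noting $A_i Y_i = A_i Y_i^{(1)}$ and $(1-A_i)Y_i = (1-A_i)Y_i^{(0)}$.

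For the bias, I would take expectations termwise and invoke treatment randomization (Assumption~\ref{a:trial-internal-validity}$(ii)$), which gives $\mathbb{E}[A_i Y_i^{(1)}] = \pi\,\mathbb{E}[Y^{(1)}]$ and $\mathbb{E}[(1-A_i)Y_i^{(0)}] = (1-\pi)\,\mathbb{E}[Y^{(0)}]$, so that $\mathbb{E}[W_i] = \mathbb{E}[Y^{(1)}] - \mathbb{E}[Y^{(0)}] = \tau$. Averaging over the $n$ identically distributed terms yields $\mathbb{E}[\hat\tau_{\text{\tiny HT},n}] = \tau$, i.e. the estimator is unbiased for every $n$.

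For the variance, since the $W_i$ are \emph{iid} under the Bernoulli design, $\operatorname{Var}[\hat\tau_{\text{\tiny HT},n}] = \tfrac1n\operatorname{Var}[W_1]$, and it remains to compute $\operatorname{Var}[W_1] = \mathbb{E}[W_1^2] - \tau^2$. The key calculational point is that when squaring $W_1$ the cross term carries a factor $A_1(1-A_1) = 0$ and therefore vanishes, while $A_1^2 = A_1$ and $(1-A_1)^2 = (1-A_1)$; hence
\begin{equation*}
W_1^2 = \frac{A_1 (Y_1^{(1)})^2}{\pi^2} + \frac{(1-A_1)(Y_1^{(0)})^2}{(1-\pi)^2}.
\end{equation*}
Taking expectation and again using randomization to factor $\mathbb{E}[A_1 (Y_1^{(1)})^2] = \pi\,\mathbb{E}[(Y^{(1)})^2]$ (and similarly for the control term) gives $\mathbb{E}[W_1^2] = \mathbb{E}[(Y^{(1)})^2]/\pi + \mathbb{E}[(Y^{(0)})^2]/(1-\pi)$, so that
\begin{equation*}
n\operatorname{Var}[\hat\tau_{\text{\tiny HT},n}] = \operatorname{Var}[W_1] = \mathbb{E}\!\left[\frac{(Y^{(1)})^2}{\pi}\right] + \mathbb{E}\!\left[\frac{(Y^{(0)})^2}{1-\pi}\right] - \tau^2,
\end{equation*}
as claimed. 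The only real subtlety—and the step I would flag most carefully—is the independence of the summands: it is exactly the Bernoulli design that makes $\operatorname{Var}[\sum_i W_i] = \sum_i \operatorname{Var}[W_i]$ with no surviving cross-covariances, whereas under a completely randomized design the negative dependence among the $A_i$ would introduce extra covariance terms and the clean $1/n$ scaling would fail. Everything else is a short moment computation relying on the positivity $0<\pi<1$ (Assumption~\ref{a:trial-internal-validity}$(iii)$) and square-integrability of the potential outcomes.
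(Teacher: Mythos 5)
Your proof is correct and follows essentially the same route as the paper's: both exploit the \emph{iid} structure under the Bernoulli design, SUTVA, and randomization for the bias, and both reduce the variance to a per-unit moment computation in which the cross term vanishes because $A(1-A)=0$. The only difference is bookkeeping — you compute $\mathbb{E}[W_1^2]-\tau^2$ directly, while the paper expands $\operatorname{Var}[W_1]$ as a sum of two variances minus twice a covariance and evaluates each piece — and these are algebraically identical.
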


Note that the following proof can be extended to any $\pi(x)$ depending on baseline covariates, and therefore extends to the oracle IPW in the causal inference literature.

\begin{proof}

\textbf{Bias}\\

\begin{align*}
    \mathbb{E}[\hat{\tau}_{\text{\tiny HT}}] &= \frac{\mathbb{E}\left[A_iY_i^{(1)} \right]}{\pi} - \frac{\mathbb{E}\left[(1-A_i)Y_i^{(0)} \right]}{1-\pi} && \text{Linearity \& SUTVA}\\
     &= \frac{\mathbb{E}\left[A_i\right]\mathbb{E}\left[Y_i^{(1)} \right]}{\pi} - \frac{\mathbb{E}\left[(1-A_i)\right]\mathbb{E}\left[Y_i^{(0)} \right]}{1-\pi} && \text{Randomization}\\
        &= \frac{\pi\mathbb{E}\left[Y_i^{(1)} \right]}{\pi} - \frac{(1-\pi)\mathbb{E}\left[Y_i^{(0)} \right]}{1-\pi} && \text{Def. of $\pi$ - Bernoulli design}\\
    &= \tau, && \text{Linearity.}
\end{align*}

\textbf{Variance}\\

\begin{align}
    \operatorname{Var}\left[ \hat{\tau}_{\text{\tiny HT},n}\right] &=    \operatorname{Var}\left[\frac{1}{n} \sum_{i=1}^n \frac{A_i Y_i}{\pi} - \frac{(1-A_i) Y_i}{1-\pi}  \right]  \nonumber \\
    &= \frac{1}{n^2 }  \operatorname{Var}\left[  \sum_{i=1}^n \frac{A_i Y^{(1)}_i}{\pi} - \frac{(1-A_i) Y^{(0)}_i}{1-e}  \right]   && \text{Assumption~\ref{a:trial-internal-validity}} \nonumber \\
    &= \frac{1}{n}  \operatorname{Var}\left[  \frac{A Y^{(1)}}{\pi} - \frac{(1-A) Y^{(0)}}{1-\pi} \right].  && \text{\textit{iid}} \nonumber 
\end{align}

Then,

\begin{align}
    \operatorname{Var}\left[ \hat{\tau}_{\text{\tiny HT},n}  \right] &=  \frac{1}{n} \left(\operatorname{Var}\left[  \frac{A Y^{(1)}}{\pi}   \right] + \operatorname{Var}\left[  \frac{(1-A) Y^{(0)}}{1-\pi}  \right] -2\, \operatorname{Cov}\left[ \frac{A Y^{(1)}}{\pi} , \frac{(1-A) Y^{(0)}}{1-\pi}   \right]  \right). \label{proof_variance_oracle_IPW}
\end{align}

The first two terms can be simplified, noting that
    
    \begin{align*}
    \mathbb{E}\left[ \left( \frac{AY^{(1)}}{\pi} \right)^2  \right]
    &=\mathbb{E}\left[\mathds{1}_{\left\{A_{i}=1\right\}} \left(\frac{ Y^{(1)}}{\pi}  \right)^2\right] && \text{A is binary} \\
      &=\mathbb{E}\left[\frac{\left( Y^{(1)} \right)^2}{\pi^2}  \right]\mathbb{E}_{\text{\tiny R}}\left[\mathds{1}_{\left\{A_{i}=1\right\}}  \right] &&\text{Randomization of trial} \\
     &=\mathbb{E}\left[\frac{\left( Y^{(1)} \right)^2}{\pi}  \right] &&\text{Definition of $\pi$} 
\end{align*}

Similarly, $$\mathbb{E}\left[ \left( \frac{(1-A)Y^{(0)}}{1-\pi}  \right)^2 \right] =  \mathbb{E}\left[\frac{\left( Y^{(0)} \right)^2}{1-\pi}\right] .$$

So,

\begin{align*}
    \operatorname{Var}\left[  \frac{A Y^{(1)}}{\pi}  \right] &= \mathbb{E}\left[ \left( \frac{AY^{(1)}}{\pi} \right)^2   \right] -  \mathbb{E}\left[ \frac{A Y^{(1)}}{\pi}  \right]^2 \\
    &= \mathbb{E}\left[ \frac{\left( Y^{(1)} \right)^2}{\pi}    \right] -  \mathbb{E}\left[ Y^{(1)}  \right]^2 .
\end{align*}

Similarly,

$$\operatorname{Var}\left[  \frac{(1-A) Y^{(0)}}{1-\pi} \right] =\mathbb{E}\left[ \frac{\left( Y^{(0)} \right)^2}{1-\pi}   \right] -  \mathbb{E}\left[ Y^{(0)} \right]^2 .$$

The third term in equation \eqref{proof_variance_oracle_IPW} can also be decomposed, so that,

\begin{align*}
 \operatorname{Cov}\left[ \frac{A Y^{(1)}}{\pi} , \frac{(1-A) Y^{(0)}}{1-\pi}  \right] &= \mathbb{E}_{\text{\tiny R}}\left[ \left(\frac{A Y^{(1)}}{\pi} - \mathbb{E}\left[ Y^{(1)}   \right] \right) \left(\frac{(1-A) Y^{(0)}}{1-\pi} - \mathbb{E}_{\text{\tiny R}}\left[ Y^{(0)}   \right] \right)   \right]\\   
 &= \mathbb{E}_{\text{\tiny R}}\left[ \underbrace{\frac{A Y^{(1)}}{\pi} \frac{(1-A) Y^{(0)}}{1-\pi} }_\textrm{$=0$}   \right] - \mathbb{E}_{\text{\tiny R}}\left[ Y^{(0)}  \right]\mathbb{E}_{\text{\tiny R}}\left[ Y^{(1)}  \right].
\end{align*}

Finally,

\begin{align*}
      n\operatorname{Var}\left[ \hat{\tau}_{\text{\tiny HT},n}   \right] 
      &= \mathbb{E}_{\text{\tiny R}}\left[ \frac{\left( Y^{(1)} \right)^2}{\pi} \right]  + \mathbb{E}_{\text{\tiny R}}\left[ \frac{\left( Y^{(0)} \right)^2}{1-\pi}   \right]  - \tau^2 := V_{ \text{\tiny HT}}  .
\end{align*}

\end{proof}

\subsection{General results about the Difference-in-means}

First, note that the Difference-in-Means estimator (in Definition~\ref{def:difference-in-means}) can be re-written as,
\begin{equation*}
    \hat{\tau}_{\text{\tiny DM},n} = \frac{1}{n}\sum_{i=1}^n \frac{A_i Y_i}{\frac{\sum_{i=1}^n A_i}{n}} -  \frac{1}{n}\sum_{i=1}^n \frac{(1-A_i) Y_i}{\frac{\sum_{i=1}^n 1-A_i}{n}},
\end{equation*}

which corresponds to the Horvitz-Thomson where the probability to be treated is estimated with the data.

This estimator is always defined, even if due to the Bernoulli design it possible that all observations were allocated treatment or control. For example, if all units are given control, then

\begin{align*}
    \sum_{i=1}^n A_i &= 0,
\end{align*}

and because for all $i$, $A_i = 0$, the ratio $ \frac{1}{n}\sum_{i=1}^n \frac{A_i Y_i}{\frac{\sum_{i=1}^n A_i}{n}}$ is defined and equal to $\frac{0}{0}=0$ by convention.

\begin{lemma}[Finite sample and large sample properties of the difference-in-means estimator]\label{lemma:DM-bias-and-variance}
Assuming trial internal validity (Assumption~\ref{a:trial-internal-validity}), then 

\begin{align*}
     \forall n,\quad   \mathbb{E}\left[    \hat{\tau}_{\text{\tiny DM},n}\right] - \tau &=    \pi^n\mathbb{E}\left[  Y_i^{(0)}\right] - (1-\pi)^n\mathbb{E}\left[  Y_i^{(1)}\right],
\end{align*}
and 
\begin{align*}
      \forall n,\quad  \operatorname{Var}\left[ \hat{\tau}_{\text{\tiny DM},n} \right] &= \frac{1}{n} \left( \mathbb{E}\left[  \frac{\mathds{1}_{\hat \pi > 0} }{\hat \pi}\right] \operatorname{Var}\left[Y_i^{(1)}\right] +  \mathbb{E}\left[  \frac{\mathds{1}_{(1-\hat \pi) > 0} }{1-\hat \pi}\right]  \operatorname{Var}\left[Y_i^{(0)}\right] \right) + D_n,
\end{align*}

where $D_n = \mathbb{E}\left[  Y_i^{(1)}\right]^2 (1-\pi)^n +  \mathbb{E}\left[  Y_i^{(0)}\right]^2  \pi^n - \left(\mathbb{E}\left[  Y_i^{(1)}\right] (1-\pi)^n  +  \mathbb{E}\left[  Y_i^{(0)}\right]  \pi^n \right)^2 $.\\

Asymptotically, the difference-in-means is unbiased

\begin{align*}
     \lim_{n\to\infty} \mathbb{E}\left[    \hat{\tau}_{\text{\tiny DM},n}\right] = \tau,
\end{align*}

and has the following variance
\begin{align*}
\lim_{n\to\infty}  n\operatorname{Var}\left[ \hat{\tau}_{\text{\tiny DM},n} \right] =  \frac{  \operatorname{Var}\left[ Y^{(1)}\right] }{\pi}+  \frac{  \operatorname{Var}\left[ Y^{(0)}\right] }{1-\pi} :=  V_{ \text{\tiny DM}, \infty}.
\end{align*}

\end{lemma}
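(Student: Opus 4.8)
The plan is to treat $\hat\tau_{\text{\tiny DM},n}$ as a Horvitz--Thomson-type estimator with a \emph{random} denominator and to tame that randomness by conditioning on the assignment vector $\mathbf{A}_n=(A_1,\dots,A_n)$. Write $\hat\tau_{\text{\tiny DM},n}=\bar Y_1-\bar Y_0$, where $\bar Y_1=\frac{1}{n_1}\sum_{A_i=1}Y_i$ and $\bar Y_0=\frac{1}{n_0}\sum_{A_i=0}Y_i$, with the convention that an empty average is $0$ (so each term carries an implicit factor $\mathbbm{1}_{n_1>0}$, resp.\ $\mathbbm{1}_{n_0>0}$). Under a Bernoulli design (Definition~\ref{def:bernoulli-trial}) the count $n_1=\sum_i A_i$ is $\mathfrak{B}(n,\pi)$-distributed and $n_0=n-n_1$, so $\mathbb{P}[n_1=0]=(1-\pi)^n$ and $\mathbb{P}[n_0=0]=\pi^n$; these two ``undercoverage'' events are exactly what will produce the finite-sample bias and the exponentially small corrections in the variance.

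For the bias I would condition on $\mathbf{A}_n$. By SUTVA and treatment randomization (Assumption~\ref{a:trial-internal-validity}, (i)--(ii)) together with the \emph{iid} structure, on the event $\{n_1>0\}$ the average $\bar Y_1$ is the mean of $n_1$ independent copies of $Y^{(1)}$, whence $\mathbb{E}[\bar Y_1\mid \mathbf{A}_n]=\mathbb{E}[Y^{(1)}]\,\mathbbm{1}_{n_1>0}$ and symmetrically $\mathbb{E}[\bar Y_0\mid \mathbf{A}_n]=\mathbb{E}[Y^{(0)}]\,\mathbbm{1}_{n_0>0}$. Taking expectations and inserting $\mathbb{E}[\mathbbm{1}_{n_1>0}]=1-(1-\pi)^n$ and $\mathbb{E}[\mathbbm{1}_{n_0>0}]=1-\pi^n$ gives $\mathbb{E}[\hat\tau_{\text{\tiny DM},n}]=\mathbb{E}[Y^{(1)}](1-(1-\pi)^n)-\mathbb{E}[Y^{(0)}](1-\pi^n)$, which rearranges to $\tau+\pi^n\mathbb{E}[Y^{(0)}]-(1-\pi)^n\mathbb{E}[Y^{(1)}]$, i.e.\ the claimed bias.

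For the variance I would use the law of total variance with the same conditioning. Conditionally on $\mathbf{A}_n$ the two averages $\bar Y_1$ and $\bar Y_0$ involve disjoint groups of \emph{iid} units, hence are conditionally independent, so $\operatorname{Var}[\hat\tau_{\text{\tiny DM},n}\mid\mathbf{A}_n]=\frac{\operatorname{Var}[Y^{(1)}]}{n_1}\mathbbm{1}_{n_1>0}+\frac{\operatorname{Var}[Y^{(0)}]}{n_0}\mathbbm{1}_{n_0>0}$. Writing $n_1=n\hat\pi$, $n_0=n(1-\hat\pi)$ and taking expectations produces exactly the leading term $\frac1n\big(\mathbb{E}[\mathbbm{1}_{\hat\pi>0}/\hat\pi]\operatorname{Var}[Y^{(1)}]+\mathbb{E}[\mathbbm{1}_{(1-\hat\pi)>0}/(1-\hat\pi)]\operatorname{Var}[Y^{(0)}]\big)$. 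The second piece of the total-variance formula is $\operatorname{Var}[\mathbb{E}[Y^{(1)}]\mathbbm{1}_{n_1>0}-\mathbb{E}[Y^{(0)}]\mathbbm{1}_{n_0>0}]$, which I would expand into the indicator variances $\operatorname{Var}[\mathbbm{1}_{n_1>0}]=(1-(1-\pi)^n)(1-\pi)^n$, $\operatorname{Var}[\mathbbm{1}_{n_0>0}]=(1-\pi^n)\pi^n$, and the cross-covariance, which equals $-\pi^n(1-\pi)^n$ since $\mathbbm{1}_{n_1>0}\mathbbm{1}_{n_0>0}=1$ off the two disjoint all-treated / all-control events. Collecting these $O(\pi^n)$ and $O((1-\pi)^n)$ contributions yields $D_n$.

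Finally, the asymptotics follow by letting $n\to\infty$: since $0<\pi<1$, both $\pi^n$ and $(1-\pi)^n$ vanish, so the bias and $D_n$ tend to $0$, giving asymptotic unbiasedness. For the variance it remains to show $\mathbb{E}[\mathbbm{1}_{\hat\pi>0}/\hat\pi]\to 1/\pi$ and $\mathbb{E}[\mathbbm{1}_{(1-\hat\pi)>0}/(1-\hat\pi)]\to 1/(1-\pi)$; this is the one genuinely delicate step, since a naive pointwise limit ignores the small but nonzero mass near $\hat\pi=0$. I would reuse the Chernoff-splitting argument already carried out in the proof of Corollary~\ref{cor_asympt_semi_oracle} for $\mathbb{E}[\mathbbm{1}_{Z_n(x)>0}/(Z_n(x)/n)]\to 1/p_{\text{\tiny R}}(x)$, applied here with $Z_n=n\hat\pi\sim\mathfrak{B}(n,\pi)$: split on $\{|\hat\pi-\pi|\ge\varepsilon\}$ (bounded by $n\cdot 2e^{-2\varepsilon^2 n}\to 0$) and its complement (where the integrand is bounded for $\varepsilon<\pi/2$ and converges in probability to $1/\pi$). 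This gives $\lim_n n\operatorname{Var}[\hat\tau_{\text{\tiny DM},n}]=\operatorname{Var}[Y^{(1)}]/\pi+\operatorname{Var}[Y^{(0)}]/(1-\pi)=V_{\text{\tiny DM},\infty}$. The main obstacle throughout is the bookkeeping of the random, possibly-vanishing denominators $n_1,n_0$: everything exponentially small in $n$ is an artifact of those degenerate assignment events and disappears in the limit.
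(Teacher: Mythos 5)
Your proposal is correct and follows essentially the same route as the paper's proof: condition on the assignment vector $\mathbf{A}_n$, use the law of total expectation for the bias (the indicators $\mathbbm{1}_{n_1>0}$, $\mathbbm{1}_{n_0>0}$ produce the $\pi^n$ and $(1-\pi)^n$ terms), use the law of total variance for the variance (the conditional variance gives the $\frac{1}{n}\mathbb{E}[\mathbbm{1}_{\hat\pi>0}/\hat\pi]$ terms, the variance of the conditional expectation gives $D_n$), and handle the limit $\mathbb{E}[\mathbbm{1}_{\hat\pi>0}/\hat\pi]\to 1/\pi$ by the Chernoff-splitting argument; the paper asserts this last limit without detail, so your explicit appeal to the argument of Corollary~\ref{cor_asympt_semi_oracle} actually fills in the one step the paper glosses over.

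One remark on the exponentially small term: your evaluation of the cross-covariance as $\operatorname{Cov}[\mathbbm{1}_{n_1>0},\mathbbm{1}_{n_0>0}]=-\pi^n(1-\pi)^n$ is the correct one, but then collecting terms gives $\operatorname{Var}[\mathbb{E}[\hat\tau_{\text{\tiny DM},n}\mid\mathbf{A}_n]]=\mathbb{E}[Y^{(1)}]^2(1-\pi)^n+\mathbb{E}[Y^{(0)}]^2\pi^n-\left(\mathbb{E}[Y^{(1)}](1-\pi)^n-\mathbb{E}[Y^{(0)}]\pi^n\right)^2$, with a \emph{minus} inside the square — not the stated $D_n$, which has a plus. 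The paper's own derivation contains a sign slip when expanding $(1-(1-\pi)^n)(1-\pi^n)$, leading it to claim the covariance equals $+\pi^n(1-\pi)^n$ and hence to the stated form of $D_n$; your computation exposes this typo rather than reproducing it, so your claim that the collected terms "yield $D_n$" should be amended accordingly. The discrepancy is immaterial for everything downstream, since in either form $D_n=\mathcal{O}\left(\max(\pi,1-\pi)^n\right)$, which is all the paper ever uses.
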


The difference-in-means under a Bernoulli design has a finite sample bias due to the possibility of a sample where everyone receive treatments or control. But the bias is exponentially decreasing with $n$. Also note that,

\begin{align*}
   D_n &=  \mathcal{O}\left(\max(\pi, 1-\pi)^n\right)
\end{align*}

The asymptotic variance of the difference-in-means is the variance usually reported in textbooks, and corresponds to the finite sample of the Difference-in-Means estimator under a completely randomized trial. Note that we could also show that the Difference-in-Means is asymptotically normally distributed, for example using M-estimation technics \citep{Stefanski2002Mestimation}. As this result is not used in this paper, we do not detail the proof.\\

Note that for a \underline{completely randomized design}, the difference-in-means is unbiased and its finite sample variance is,

\begin{align*}
       \operatorname{Var}\left[ \hat{\tau}_{\text{\tiny DM},n} \right] &=  \frac{  \operatorname{Var}\left[ Y^{(1)}\right] }{n_1}+  \frac{  \operatorname{Var}\left[ Y^{(0)}\right] }{n_0},
\end{align*}

where $n_1$ is the number of treated units ($\sim \pi n$) and $n_0$ is the number of control units ($\sim (1-\pi) n$). 
This formula is extensively used in the literature, but under a Bernoulli design this formula is true only in large sample as detailed in Lemma~\ref{lemma:DM-bias-and-variance}.

\begin{proof}

\textbf{Bias} \\

One can use the law of total expectation, conditioning on the treatment assignment vector denoted $\mathbf{A}$, 

\begin{align*}
       \mathbb{E}\left[    \hat{\tau}_{\text{\tiny DM}}\right]& =  \mathbb{E}\left[     \mathbb{E}\left[   \hat{\tau}_{\text{\tiny DM}} \mid \mathbf{A}\right] \right] \\
      &=  \mathbb{E}\left[  \frac{\frac{1}{n} \sum_{i=1}^n A_i }{\frac{1}{n}\sum_{i=1}^n A_i}  \mathbb{E}\left[  Y_i^{(1)}\mid\mathbf{A}\right]  - \frac{\frac{1}{n} \sum_{i=1}^n (1-A_i)}{\frac{1}{n} \sum_{i=1}^n (1-A_i)} \mathbb{E}\left[  Y_i^{(0)}\mid\mathbf{A}\right]   \right] \\
      &=\mathbb{E}\left[ \frac{ \frac{1}{n} \sum_{i=1}^n A_i }{\frac{1}{n}\sum_{i=1}^n A_i}  \mathbb{E}\left[  Y_i^{(1)}\right]  - \frac{\frac{1}{n} \sum_{i=1}^n (1-A_i) }{\frac{1}{n}\sum_{i=1}^n (1-A_i)} \mathbb{E}\left[  Y_i^{(0)}\right]   \right] && \text{$\{Y_i^{(1)}, Y_i^{(0)} \} \indep A_i $} \\
      &= \mathbb{E}\left[ \mathds{1}_{\sum_{i=1}^n A_i > 0}  \mathbb{E}\left[  Y_i^{(1)}\right] -  \mathds{1}_{\sum_{i=1}^n 1-A_i > 0} \mathbb{E}\left[  Y_i^{(0)}\right]   \right] \\
      &=   \mathbb{E}\left[  Y_i^{(1)}\right] \mathbb{E}\left[ \mathds{1}_{\sum_{i=1}^n A_i > 0} \right]  - \mathbb{E}\left[ \mathds{1}_{\sum_{i=1}^n 1-A_i > 0}    \right] \mathbb{E}\left[  Y_i^{(0)}\right] \\
      &= \left(1- (1-\pi)^n \right) \mathbb{E}\left[  Y_i^{(1)}\right] - \left(1- \pi^n \right) \mathbb{E}\left[  Y_i^{(0)}\right] \\
      &=  \mathbb{E}\left[Y_i^{(1)}   -Y_i^{(0)}  \right] - (1-\pi)^n\mathbb{E}\left[  Y_i^{(1)}\right]  + \pi^n\mathbb{E}\left[  Y_i^{(0)}\right] \\
      &= \tau -  (1-\pi)^n\mathbb{E}\left[  Y_i^{(1)}\right]  + \pi^n\mathbb{E}\left[  Y_i^{(0)}\right],
\end{align*}

where the second row uses linearity of expectation and the conditioning on $\mathbf{A}$. To summarize, the difference-in-means has a finite sample bias,

\begin{align*}
        \mathbb{E}\left[    \hat{\tau}_{\text{\tiny DM},n}\right] - \tau &=    \pi^n\mathbb{E}\left[  Y_i^{(0)}\right] - (1-\pi)^n\mathbb{E}\left[  Y_i^{(1)}\right].
\end{align*}

\textbf{Variance} \\

Using the law of total variance, and conditioning on the treatment assignment vector $\mathbf{A}$, one has
\begin{align*}
    \operatorname{Var}\left[ \hat{\tau}_{\text{\tiny DM}} \right] &=    \operatorname{Var}\left[\mathbb{E}\left[   \hat{\tau}_{\text{\tiny DM}} \mid \mathbf{A}\right] \right]  +   \mathbb{E}\left[ \operatorname{Var}\left[ \hat{\tau}_{\text{\tiny DM}} \mid \mathbf{A} \right] \right].
\end{align*}

Recall from derivations about the bias that, 

\begin{align*}
     \mathbb{E}\left[  \hat{\tau}_{\text{\tiny DM}} \mid \mathbf{A}\right] &=  \mathds{1}_{\sum_{i=1}^n A_i > 0}  \mathbb{E}\left[  Y_i^{(1)}\right] -  \mathds{1}_{\sum_{i=1}^n 1-A_i > 0} \mathbb{E}\left[  Y_i^{(0)}\right].
\end{align*}

Note that if the number of treated was fixed, we would have $\mathbb{E}\left[  \hat{\tau}_{\text{\tiny DM}} \mid \mathbf{A}\right]  = \tau$, and therefore, $\operatorname{Var}\left[\mathbb{E}\left[   \hat{\tau}_{\text{\tiny DM}} \mid \mathbf{A}\right] \right] = 0$.

Here, one has,

\begin{align*}
  \operatorname{Var}\left[\mathbb{E}\left[   \hat{\tau}_{\text{\tiny DM}} \mid \mathbf{A}\right] \right] &=     \operatorname{Var}\left[  \mathds{1}_{\sum_{i=1}^n A_i > 0}  \mathbb{E}\left[  Y_i^{(1)}\right] -  \mathds{1}_{\sum_{i=1}^n 1-A_i > 0} \mathbb{E}\left[  Y_i^{(0)}\right]  \right]\\
  &=  \mathbb{E}\left[  Y_i^{(1)}\right]^2   \operatorname{Var}\left[\mathds{1}_{\sum_{i=1}^n A_i > 0}  \right] + \mathbb{E}\left[  Y_i^{(0)}\right]^2   \operatorname{Var}\left[\mathds{1}_{\sum_{i=1}^n 1-A_i > 0}  \right] \\
  &\qquad -2   \mathbb{E}\left[  Y_i^{(1)}\right]  \mathbb{E}\left[  Y_i^{(0)}\right] \operatorname{Cov}\left[\mathds{1}_{\sum_{i=1}^n A_i > 0} , \mathds{1}_{\sum_{i=1}^n 1-A_i > 0}  \right].
\end{align*}

Besides,

\begin{align*}
      \operatorname{Var}\left[\mathds{1}_{\sum_{i=1}^n A_i > 0}  \right]  &=  \mathbb{E}\left[\mathds{1}_{\sum_{i=1}^n A_i > 0} ^2 \right] -  \mathbb{E}\left[\mathds{1}_{\sum_{i=1}^n A_i > 0} \right]^2 \\
      &= (1-\pi)^n\left( 1- (1-\pi)^n\right),
\end{align*}

and similarly,

\begin{align*}
          \operatorname{Var}\left[\mathds{1}_{\sum_{i=1}^n 1-A_i > 0}  \right]  &=  \pi^n\left( 1- \pi^n\right).
\end{align*}

On the other hand,

\begin{align*}
    \operatorname{Cov}\left[\mathds{1}_{\sum_{i=1}^n A_i > 0} , \mathds{1}_{\sum_{i=1}^n 1-A_i > 0}  \right] &= \mathbb{E}\left[ \left(\mathds{1}_{\sum_{i=1}^n A_i > 0} - \left(1- (1-\pi)^n \right)  \right)  \left( \mathds{1}_{\sum_{i=1}^n 1-A_i > 0}- 1-\pi^n\right)  \right] \\
    &= \mathbb{E}\left[\mathds{1}_{\sum_{i=1}^n A_i > 0} \mathds{1}_{\sum_{i=1}^n 1-A_i > 0} \right] - \left(1- (1-\pi)^n \right)\left(1-\pi^n\right) \\
    &= 1-(1-\pi)^n-\pi^n - \left(1- \pi^n - (1-\pi)^n - \pi^n(1-\pi)^n \right)\\
    &=  \pi^n(1-\pi)^n,
\end{align*}

such that,

\begin{align*}
  \operatorname{Var}\left[\mathbb{E}\left[   \hat{\tau}_{\text{\tiny DM}} \mid \mathbf{A}\right] \right] &=   \mathbb{E}\left[  Y_i^{(1)}\right]^2  (1-\pi)^n\left( 1- (1-\pi)^n\right)+  \mathbb{E}\left[  Y_i^{(0)}\right]^2  \pi^n\left( 1- \pi^n\right)-2 \mathbb{E}\left[  Y_i^{(1)}\right]  \mathbb{E}\left[  Y_i^{(0)}\right] \pi^n(1-\pi)^n \\
  &=  \mathbb{E}\left[  Y_i^{(1)}\right]^2 (1-\pi)^n +  \mathbb{E}\left[  Y_i^{(0)}\right]^2  \pi^n - \left(\mathbb{E}\left[  Y_i^{(1)}\right] (1-\pi)^n  +  \mathbb{E}\left[  Y_i^{(0)}\right]  \pi^n \right)^2 \\
 & \leq \mathbb{E}\left[  Y_i^{(1)}\right]^2 (1-\pi)^n +  \mathbb{E}\left[  Y_i^{(0)}\right]^2  \pi^n \\
  & \leq \left( \mathbb{E}\left[  Y^{(1)}\right]^2 + \mathbb{E}\left[  Y^{(0)}\right]^2\right) \max(\pi, 1 - \pi)^n.
\end{align*}

Now,

\begin{align*}
    \operatorname{Var}\left[ \hat{\tau}_{\text{\tiny DM}} \mid \mathbf{A} \right]  &=  \operatorname{Var}\left[ \frac{1}{n}  \sum_{i=1}^n \left( \frac{A_i Y_i^{(1)}}{\hat \pi}  - \frac{(1-A_i)Y_i^{(0)}  }{1-\hat \pi} \right) \mid \mathbf{A} \right] \\
    &= \frac{1}{n}  \operatorname{Var}\left[   \frac{A_i Y_i^{(1)}}{\hat \pi}  - \frac{(1-A_i)Y_i^{(0)}  }{1-\hat \pi} \mid \mathbf{A} \right] && \text{iid}\\
    &=  \frac{1}{n}  \left(  \operatorname{Var}\left[ \frac{A_i Y_i^{(1)}}{\hat \pi}  \mid \mathbf{A}\right] + \operatorname{Var}\left[ \frac{(1-A_i)Y_i^{(0)}  }{1-\hat \pi}  \mid \mathbf{A}\right]  - 2 \operatorname{Cov}\left[ \frac{A_i Y_i^{(1)}}{\hat \pi} , \frac{(1-A_i)Y_i^{(0)}  }{1-\hat \pi} \mid \mathbf{A} \right]\right).
\end{align*}

Now, developing the covariance term, it is possible to show that,

\begin{align*}
    \operatorname{Cov}\left[ \frac{A_i Y_i^{(1)}}{\hat \pi} , \frac{(1-A_i)Y_i^{(0)}  }{1-\hat \pi} \mid \mathbf{A} \right] &= - \mathbb{E}\left[ \frac{(1-A_i)Y_i^{(0)}  }{1-\hat \pi} \mid \mathbf{A} \right] \mathbb{E}\left[ \frac{A_iY_i^{(1)}  }{\hat \pi} \mid \mathbf{A} \right] \\
    &=  -  \frac{(1-A_i)\mathbb{E}\left[Y_i^{(0)}  \mid \mathbf{A} \right] }{1-\hat \pi} \frac{A_i\mathbb{E}\left[ Y_i^{(1)}  \mid \mathbf{A} \right] }{\hat \pi} && \text{Linearity and conditioned on $\mathbf{A}$} \\
    &= 0. && \text{$A_i(1-A_i)=0$} 
\end{align*}

Now, also using linearity of expectation, and the fact that we conditioned on $\mathbf{A}$, one has

\begin{align*}
    \operatorname{Var}\left[ \hat{\tau}_{\text{\tiny DM}} \mid \mathbf{A} \right]  &=  \frac{1}{n}  \left(  \left( \frac{A_i}{\hat \pi}\right)^2 \operatorname{Var}\left[Y_i^{(1)} \mid \mathbf{A}\right]   +  \left( \frac{1-A_i}{1- \hat \pi}\right)^2 \operatorname{Var}\left[Y_i^{(0)} \mid \mathbf{A}\right]  \right) \\
     &=  \frac{1}{n}  \left(  \left( \frac{A_i}{\hat \pi}\right)^2 \operatorname{Var}\left[Y_i^{(1)}\right]   +  \left( \frac{1-A_i}{1- \hat \pi}\right)^2 \operatorname{Var}\left[Y_i^{(0)}\right]  \right), && \text{using $\{Y_i^{(1)}, Y_i^{(0)} \} \indep A_i $.} 
\end{align*}

Taking the expecation of the previous term leads to,

\begin{align*}
     \mathbb{E}\left[ \operatorname{Var}\left[ \hat{\tau}_{\text{\tiny DM}} \mid \mathbf{A} \right] \right]&= \mathbb{E}\left[ \frac{1}{n}  \left(  \left( \frac{A_i}{\hat \pi}\right)^2 \operatorname{Var}\left[Y_i^{(1)}\right]   +  \left( \frac{1-A_i}{1- \hat \pi}\right)^2 \operatorname{Var}\left[Y_i^{(0)}\right]  \right)\right] \\
     &=  \frac{1}{n} \left(\mathbb{E}\left[ \left( \frac{A_i}{\hat \pi}\right)^2\right]  \operatorname{Var}\left[Y_i^{(1)}\right] + \frac{1}{n} \mathbb{E}\left[ \left( \frac{1-A_i}{1-\hat \pi}\right)^2\right]  \operatorname{Var}\left[Y_i^{(0)}\right] \right), && \text{by linearity.}
\end{align*}

Note that,

\begin{align*}
     \mathbb{E}\left[ \left( \frac{A_i}{\hat \pi}\right)^2\right]  &=   \mathbb{E}\left[  \frac{A_i}{\left(\hat \pi\right)^2}\right] \\
     &= \frac{1}{n}\left(  \mathbb{E}\left[  \frac{A_1}{\hat \pi^2}\right] +  \mathbb{E}\left[  \frac{A_2}{\hat \pi^2}\right]  + \dots +  \mathbb{E}\left[  \frac{A_n}{\hat \pi^2}\right] \right) \\
     &= \mathbb{E}\left[  \frac{\hat \pi}{\hat \pi^2}\right] \\
     &= \mathbb{E}\left[  \frac{\mathds{1}_{\hat \pi > 0} }{\hat \pi}\right],
\end{align*}

so that

\begin{align*}
         \mathbb{E}\left[ \operatorname{Var}\left[ \hat{\tau}_{\text{\tiny DM}} \mid \mathbf{A} \right] \right] &=  \frac{1}{n} \left( \mathbb{E}\left[  \frac{\mathds{1}_{\hat \pi > 0} }{\hat \pi}\right] \operatorname{Var}\left[Y_i^{(1)}\right] +  \mathbb{E}\left[  \frac{\mathds{1}_{(1-\hat \pi) > 0} }{1-\hat \pi}\right]  \operatorname{Var}\left[Y_i^{(0)}\right] \right).
\end{align*}

Coming back to the law of total variance, one has,

\begin{align*}
     \operatorname{Var}\left[ \hat{\tau}_{\text{\tiny DM}} \right] &=    \operatorname{Var}\left[\mathbb{E}\left[   \hat{\tau}_{\text{\tiny DM}} \mid \mathbf{A}\right] \right]  +   \mathbb{E}\left[ \operatorname{Var}\left[ \hat{\tau}_{\text{\tiny DM}} \mid \mathbf{A} \right] \right] \\
     &= \mathbb{E}\left[  Y_i^{(1)}\right]^2 (1-\pi)^n +  \mathbb{E}\left[  Y_i^{(0)}\right]^2  \pi^n - \left(\mathbb{E}\left[  Y_i^{(1)}\right] (1-\pi)^n  +  \mathbb{E}\left[  Y_i^{(0)}\right]  \pi^n \right)^2  \\
     & \qquad + \frac{1}{n} \left( \mathbb{E}\left[  \frac{\mathds{1}_{\hat \pi > 0} }{\hat \pi}\right] \operatorname{Var}\left[Y_i^{(1)}\right] +  \mathbb{E}\left[  \frac{\mathds{1}_{(1-\hat \pi) > 0} }{1-\hat \pi}\right]  \operatorname{Var}\left[Y_i^{(0)}\right] \right)
\end{align*}

In particular, for any sample size,

\begin{align*}
      \operatorname{Var}\left[ \hat{\tau}_{\text{\tiny DM}} \right] &= \frac{1}{n} \left( \mathbb{E}\left[  \frac{\mathds{1}_{\hat \pi > 0} }{\hat \pi}\right] \operatorname{Var}\left[Y_i^{(1)}\right] +  \mathbb{E}\left[  \frac{\mathds{1}_{(1-\hat \pi) > 0} }{1-\hat \pi}\right]  \operatorname{Var}\left[Y_i^{(0)}\right] \right) + \mathcal{O}\left(\max(\pi, 1-\pi)^n \right), 
\end{align*}

and more particularly,
\begin{align*}
     \lim_{n\to\infty}  n \operatorname{Var}\left[ \hat{\tau}_{\text{\tiny DM}} \right] &= \frac{  \operatorname{Var}\left[ Y^{(1)}\right] }{\pi}+  \frac{  \operatorname{Var}\left[ Y^{(0)}\right] }{1-\pi}:=  V_{ \text{\tiny DM}, \infty}.
\end{align*}

\end{proof}

\subsection{Variance inequality between a Horvitz-Thomson and difference-in-means}\label{proof:variance-inequality}

Recall that 
\begin{align*}
V_{\text{\tiny DM},n}(x)
 =  \frac{\mathds{1}_{Z_n(x) >0}}{Z_n(x)} \operatorname{Var} \left[    \sum_{i=1}^{n} \mathds{1}_{X_i=x}   \left( \frac{A_i Y_i^{(1)}}{\hat \pi_n(x)} - \frac{(1-A_i)Y_i^{(0)}}{1-\hat \pi_n(x)} \right) \mid \mathbf{X}_{n+m} \right].
\end{align*}

\begin{lemma}
\label{lem_inequality_vdm_vht}
We have, for all $x$, 
\begin{align*}
& ~~ \mathbb{E} \left[ V_{\text{\tiny DM},n}(x) \right]  \leq V_{ \text{\tiny HT}}(x) - \alpha(x)^2 + \beta(x) n^{-1/4} + o(n^{-1/4}),
\end{align*}
with 
\begin{align*}
    \alpha(x)=  \sqrt{\frac{1-\pi(x)}{\pi(x)}} \mathbb{E}_{ \text{\tiny R}}[Y^{(1)} | X=x] + \sqrt{\frac{\pi(x)}{1-\pi(x)}} \mathbb{E}_{\text{\tiny  R}}[Y^{(0)} | X=x],
\end{align*}
and some $\beta(x)$ independent of $n$. 
\end{lemma}

\begin{proof}

Following the same proof as that of Lemma~\ref{lemma:DM-bias-and-variance}, noticing that  
\begin{align*}
 V_{\text{\tiny DM},n}(x)
 & = \frac{1}{Z_n(x)}  \operatorname{Var} \left[    \sum_{i=1}^{n} \mathds{1}_{X_i=x}   \left( \frac{A_i Y_i^{(1)}}{\hat \pi_n(x)} - \frac{(1-A_i)Y_i^{(0)}}{1-\hat \pi_n(x)} \right) \mid \mathbf{X}_{n+m} \right] \\
 &= \frac{1}{Z_n(x)}  \operatorname{Var} \left[ Z_n(X) \hat{\tau}_{\text{\tiny DM}}(X) | X=x, Z_n(x) \right]\\
 & =  Z_n(x) \operatorname{Var} \left[ \hat{\tau}_{\text{\tiny DM}}(X) | X=x, Z_n(x) \right],
\end{align*}
we have
\begin{align*}
V_{\text{\tiny DM},n}(x)
 &=  \mathbb{E}\left[  \frac{\mathds{1}_{\hat \pi(x) > 0} }{\hat \pi(x)} | Z_n(x)\right] \operatorname{Var}\left[Y^{(1)} | X=x\right] +  \mathbb{E}\left[  \frac{\mathds{1}_{(1-\hat \pi(x)) > 0} }{1-\hat \pi(x)} | Z_n(x) \right]  \operatorname{Var}\left[Y^{(0)} | X=x\right]   \\
 & \qquad + Z_n(x) \mathbb{E}\left[  Y^{(1)} | X=x\right]^2 (1-\pi(x))^{Z_n(x)} +  Z_n(x) \mathbb{E}\left[  Y^{(0)} | X=x\right]^2  \pi(x)^{Z_n(x)} \\
 & \qquad - Z_n(x) \left(\mathbb{E}\left[  Y^{(1)} | X=x\right] (1-\pi(x))^{Z_n(x)}  +  \mathbb{E}\left[  Y^{(0)} | X=x\right]  \pi(x)^{Z_n(x)} \right)^2.
\end{align*}

According to Lemma~\ref{lemma:ineq-binomial-pi-hat}, letting $ \alpha = 1/4$ and $C_{1/4, \pi(x)}=1+2\left(\frac{32}{\pi(x)^2}\right)^{4}$, 
\begin{align*}
    & ~~ V_{\text{\tiny DM},n}(x) \\
    & \le Z_n(x) \mathbb{E}\left[  Y^{(1)} | X = x\right]^2 (1-\pi(x))^{Z_n(x)} +  Z_n(x)\mathbb{E}\left[  Y^{(0)} | X = x \right]^2  \pi(x)^{Z_n(x)} \\
    & \qquad - Z_n(x) \left(\mathbb{E}\left[  Y^{(1)} | X = x\right] (1-\pi(x))^{Z_n(x)}  +  \mathbb{E}\left[  Y^{(0)} | X = x \right]  \pi(x)^{Z_n(x)} \right)^2  \\
      & \qquad +   \left( \frac{1+C_{1/4, \pi(x)} Z_n(x)^{-\frac{1}{4}}}{\pi(x)} \right)\operatorname{Var}\left[Y^{(1)} | X = x\right] +  \left(\frac{1+C_{1/4, 1-\pi(x)} Z_n^{-\frac{1}{4}}}{1-\pi(x)} \right)\operatorname{Var}\left[Y^{(0)} | X = x\right] \\
      & \leq V_{\text{\tiny DM}, \infty}(x) + Z_n(x)^{-1/4} \left( \frac{C_{1/4, \pi(x)}}{\pi(x) } \operatorname{Var}\left[Y^{(1)} | X = x\right] + \frac{C_{1/4, 1-\pi(x)}}{1-\pi(x) } \operatorname{Var}\left[Y^{(1)} | X = x\right] \right) \\ 
      & \qquad + Z_n(x) \left(\mathbb{E}\left[  Y^{(1)} | X = x\right]^2 (1-\pi(x))^{Z_n(x)} +  \mathbb{E}\left[  Y^{(0)} | X = x \right]^2  \pi(x)^{Z_n(x)} \right)\\
    & \qquad - Z_n(x) \left(\mathbb{E}\left[  Y^{(1)} | X = x\right] (1-\pi(x))^{Z_n(x)}  +  \mathbb{E}\left[  Y^{(0)} | X = x \right]  \pi(x)^{Z_n(x)} \right)^2,
\end{align*}
where
\begin{align*}
V_{ \text{\tiny DM}, \infty}(x) =  \frac{  \operatorname{Var}\left[ Y^{(1)} | X=x\right] }{\pi(x)}+  \frac{  \operatorname{Var}\left[ Y^{(0)} | X=x\right] }{1-\pi(x)}.
\end{align*}
Besides, we have
\begin{align*} 
V_{ \text{\tiny HT}}(x)  & =  \mathbb{E}\left[ \frac{\left( Y^{(1)} \right)^2}{\pi(x)} | X=x\right]  + \mathbb{E}\left[ \frac{\left( Y^{(0)} \right)^2}{1-\pi(x)} | X=x \right]  - \tau(x)^2 \\
& =  \frac{ \operatorname{Var}\left[Y^{(1)}| X = x\right] }{\pi(x)} + \frac{ \operatorname{Var}\left[Y^{(0)}| X = x \right] }{1-\pi(x)} + \frac{1}{\pi(x)} \left( \mathbb{E}\left[  Y^{(1)}  | X=x\right] \right)^2  + \frac{1}{1 - \pi(x)} \left( \mathbb{E}\left[  Y^{(0)}  | X=x\right] \right)^2    - \tau(x)^2 \\
& = V_{ \text{\tiny DM}, \infty}(x)  + \frac{1}{\pi(x)} \left( \mathbb{E}\left[  Y^{(1)}  | X=x\right] \right)^2  + \frac{1}{1 - \pi(x)} \left( \mathbb{E}\left[  Y^{(0)}  | X=x\right] \right)^2    - \tau(x)^2.
\end{align*}
Noting that,
\begin{equation*}
    \tau(x)^2 = \left( \mathbb{E}\left[  Y^{(1)} -  Y^{(0)} | X=x \right]\right)^2 =  \mathbb{E}\left[  Y^{(1)} | X=x \right]^2 + \mathbb{E}\left[  Y^{(0)} | X=x \right]^2 - 2\mathbb{E}\left[  Y^{(1)} | X=x \right] \mathbb{E}\left[  Y^{(0)} | X=x \right],
\end{equation*}
allows us to obtain,
\begin{align*}
    V_{ \text{\tiny HT}}(x) &= V_{ \text{\tiny DM}, \infty}(x)  - \left(1-\frac{1}{\pi(x)}\right)\mathbb{E}\left[  Y^{(1)} \right]^2- \left(1-\frac{1}{1-\pi(x)}\right)\mathbb{E}\left[  Y^{(0)} | X=x\right]^2 + 2\mathbb{E}\left[  Y^{(1)} | X=x\right] \mathbb{E}\left[  Y^{(0)} | X=x \right]\\
     &=  V_{ \text{\tiny DM}, \infty}(x) + \left( \sqrt{\frac{1-\pi(x)}{\pi(x)}} \mathbb{E}_{ \text{\tiny R}}[Y^{(1)} | X=x] + \sqrt{\frac{\pi(x)}{1-\pi(x)}} \mathbb{E}_{\text{\tiny  R}}[Y^{(0)} | X=x]\right)^2.
\end{align*}
Letting 
\begin{align*}
    \alpha(x) = \left( \sqrt{\frac{1-\pi(x)}{\pi(x)}} \mathbb{E}_{ \text{\tiny R}}[Y^{(1)} | X=x] + \sqrt{\frac{\pi(x)}{1-\pi(x)}} \mathbb{E}_{\text{\tiny  R}}[Y^{(0)} | X=x]\right)^2,
\end{align*}
we have
\begin{align*}
    & ~~ V_{\text{\tiny DM},n}(x) \\
    & \leq V_{ \text{\tiny HT}}(x) - \alpha(x) + Z_n(x)^{-1/4} \left( \frac{C_{1/4, \pi(x)}}{\pi(x) } \operatorname{Var}\left[Y^{(1)} | X = x\right] + \frac{C_{1/4, 1-\pi(x)}}{1-\pi(x) } \operatorname{Var}\left[Y^{(1)} | X = x\right] \right) \\ 
      & \qquad + Z_n(x) \left(\mathbb{E}\left[  Y^{(1)} | X = x\right]^2 (1-\pi(x))^{Z_n(x)} +  \mathbb{E}\left[  Y^{(0)} | X = x \right]^2  \pi(x)^{Z_n(x)} \right). 
\end{align*}
Taking the expectation on both sides with respect to $Z_n(x)$, we have
\begin{align*}
& ~~ \mathbb{E} \left[ V_{\text{\tiny DM},n}(x) \mathds{1}_{Z_n(x) >0}\right] \\   
& \leq V_{ \text{\tiny HT}}(x) - \alpha(x) + \mathbb{E}\left[ \frac{\mathds{1}_{Z_n(x) >0}}{Z_n(x)^{1/4}} \right] \left( \frac{C_{1/4, \pi(x)}}{\pi(x) } \operatorname{Var}\left[Y^{(1)} | X = x\right] + \frac{C_{1/4, 1-\pi(x)}}{1-\pi(x) } \operatorname{Var}\left[Y^{(1)} | X = x\right] \right) \\ 
      & \qquad + n \left(\mathbb{E}\left[  Y^{(1)} | X = x\right]^2 +  \mathbb{E}\left[  Y^{(0)} | X = x \right]^2 \right) \mathbb{E} \left[ \tilde{\pi}(x)^{Z_n(x)} \mathds{1}_{Z_n(x)>0} \right].
\end{align*}
where $\tilde{\pi}(x) = \max(\pi(x), 1 - \pi(x)).$ Simple calculations show that 
\begin{align*}
\left(\mathbb{E}\left[ \frac{\mathds{1}_{Z_n(x) >0}}{Z_n(x)^{1/4}} \right]\right)^4 & \leq     \mathbb{E}\left[ \frac{\mathds{1}_{Z_n(x) >0}}{Z_n(x)} \right]  \\
& \leq \frac{2}{(n+1) p_\text{\tiny R}\left( x \right)}
\end{align*}
according to Jensen inequality and \cite{Arnould2021Analyzing} (Lemma S5, Supplementary Material, page 27). Besides 
\begin{align*}
\mathbb{E} \left[ \tilde{\pi}(x)^{Z_n(x)} \mathds{1}_{Z_n(x)>0} \right] & \leq \prod_{i=1}^n \mathbb{E} \left[ \tilde{\pi}(x)^{\mathds{1}_{X_i=x}}\right] \\
& \leq  \left( 1 - p_R(x) + \tilde{\pi}(x) p_R(x) \right)^n.
\end{align*}
Hence, 
\begin{align*}
& ~~ \mathbb{E} \left[ V_{\text{\tiny DM},n}(x) \mathds{1}_{Z_n(x) >0}\right] \\  & \leq V_{ \text{\tiny HT}}(x) - \alpha(x) + \left( \frac{2}{(n+1) p_\text{\tiny R}\left( x \right)}\right)^{1/4} \left( \frac{C_{1/4, \pi(x)}}{\pi(x) } \operatorname{Var}\left[Y^{(1)} | X = x\right] + \frac{C_{1/4, 1-\pi(x)}}{1-\pi(x) } \operatorname{Var}\left[Y^{(1)} | X = x\right] \right) \\ 
      & \qquad + n \left(\mathbb{E}\left[  Y^{(1)} | X = x\right]^2 +  \mathbb{E}\left[  Y^{(0)} | X = x \right]^2 \right)  \left( 1 - p_R(x) + \tilde{\pi}(x) p_R(x) \right)^n\\
& \leq V_{ \text{\tiny HT}}(x) - \alpha(x) + \beta(x) n^{-1/4} + o(n^{-1/4}).
\end{align*}

\end{proof}

In this work we use an inequality to compare the variance of the Horvitz-Thomson with the variance of the difference-in-means under a Bernoulli design. We propose two inequalities, one for the finite sample and one for the asymptotic variance. The result in finite sample depends on another equality on Binomial law, and in particular $\hat \pi$, that we detail in Lemma~\ref{lemma:ineq-binomial-pi-hat}.

\begin{lemma}[Inequality on $\hat \pi$]\label{lemma:ineq-binomial-pi-hat}
Consider a Bernoulli trial (Definition~\ref{def:bernoulli-trial}) and the estimated propensity score $\hat \pi$ defined as,

\begin{equation*}
    \hat \pi = \frac{\sum_{i=1}^n A_i}{n}.
\end{equation*}

Then, for all $n\ge1$ and for all $\alpha \in (0, \frac{1}{2})$,

\begin{equation*}
    \mathbb{E}\left[\frac{\mathds{1}_{\hat{\pi}>0}}{\hat{\pi}}\right] \leq \frac{1+C_{\alpha, \pi} n^{-\alpha}}{\pi},
\end{equation*}

where $C_{\alpha, \pi}=1+2\left(\frac{16}{\pi^2(1-2 \alpha)}\right)^{\frac{2}{1-2 \alpha}}$.

\end{lemma}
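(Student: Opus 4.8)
The plan is to reduce everything to a concentration statement about the sample mean $\hat\pi$ of $n$ i.i.d. Bernoulli$(\pi)$ variables. Writing $S_n = n\hat\pi = \sum_{i=1}^n A_i \sim \mathfrak{B}(n,\pi)$, the quantity to control is
$$\mathbb{E}\left[\frac{\mathds{1}_{\hat\pi>0}}{\hat\pi}\right] = n\,\mathbb{E}\left[\frac{\mathds{1}_{S_n>0}}{S_n}\right],$$
and the whole difficulty is that $1/\hat\pi$ can be as large as $n$ on the (rare) event where $\hat\pi$ lands far below $\pi$. First I would fix $\varepsilon = n^{-\alpha}$ and split the expectation according to whether $\hat\pi$ is close to its mean,
$$\mathbb{E}\left[\frac{\mathds{1}_{\hat\pi>0}}{\hat\pi}\right] = \mathbb{E}\left[\frac{\mathds{1}_{\hat\pi>0}}{\hat\pi}\mathds{1}_{|\hat\pi-\pi|<\varepsilon}\right] + \mathbb{E}\left[\frac{\mathds{1}_{\hat\pi>0}}{\hat\pi}\mathds{1}_{|\hat\pi-\pi|\ge\varepsilon}\right].$$

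On the concentration event $\{|\hat\pi-\pi|<\varepsilon\}$ one has $\hat\pi>\pi-\varepsilon>0$ (once $\varepsilon<\pi$), so the first term is at most $\frac{1}{\pi-\varepsilon}$; writing $\frac{1}{\pi-\varepsilon}=\frac{1}{\pi}\frac{1}{1-\varepsilon/\pi}$ and using the elementary bound $\frac{1}{1-x}\le 1+2x$ valid for $x\in[0,1/2]$ (i.e. for $\varepsilon\le\pi/2$) turns this into $\frac{1}{\pi}\left(1+\frac{2n^{-\alpha}}{\pi}\right)$, which is exactly of the announced form $\frac{1+O(n^{-\alpha})}{\pi}$. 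On the complementary event I would use the crude bound $\mathds{1}_{\hat\pi>0}/\hat\pi\le n$ together with Chernoff's inequality $\mathbb{P}[|\hat\pi-\pi|\ge\varepsilon]\le 2e^{-2\varepsilon^2 n}$, giving a contribution at most $2n\,e^{-2n^{1-2\alpha}}$. This super-exponentially small term must be re-expressed as a multiple of $n^{-\alpha}$; for that I would invoke the elementary inequality $e^{-2u^{1-2\alpha}}\le u^{-\beta}$, which holds for $u\ge \left(\beta/(2(1-2\alpha))\right)^{2/(1-2\alpha)}$ with a suitable $\beta>1+\alpha$, so that $2n\,e^{-2n^{1-2\alpha}}\le 2n^{-\alpha}$ once $n$ exceeds a threshold $n_0$ depending only on $\pi$ and $\alpha$.

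The remaining issue is uniformity over all $n\ge 1$, since the split argument only controls $n\ge n_0$, where $n_0=(16/(\pi^2(1-2\alpha)))^{2/(1-2\alpha)}$ is the threshold at which both $\varepsilon\le\pi/2$ and the exponential inequality take effect. For $n<n_0$ I would fall back on the uniform estimate $\mathbb{E}[\mathds{1}_{\hat\pi>0}/\hat\pi]=n\,\mathbb{E}[\mathds{1}_{S_n>0}/S_n]\le \frac{2n}{(n+1)\pi}\le\frac{2}{\pi}$, which follows from the binomial bound of \cite{Arnould2021Analyzing} already used elsewhere in the paper, and then simply check that $\frac{2}{\pi}\le\frac{1+C_{\alpha,\pi}n^{-\alpha}}{\pi}$ on this range; this is where the large summand in $C_{\alpha,\pi}=1+2\left(16/(\pi^2(1-2\alpha))\right)^{2/(1-2\alpha)}$ is needed, since it guarantees $C_{\alpha,\pi}n^{-\alpha}\ge C_{\alpha,\pi}n_0^{-\alpha}\ge 1$ whenever $n\le n_0$. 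Collecting the $\frac{1}{\pi}$-term coming from the good event, the $2n^{-\alpha}$ from the bad event, and the small-$n$ contribution then yields the stated closed form.

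The main obstacle I anticipate is precisely this bookkeeping of constants: choosing $\varepsilon$, $\beta$ and the threshold $n_0$ so that a single expression $C_{\alpha,\pi}$ simultaneously (i) absorbs the $\frac{1}{1-\varepsilon/\pi}$ correction, (ii) dominates the super-exponential tail after converting it to a power of $n$, and (iii) is large enough that the crude bound $2/\pi$ fits under $\frac{1+C_{\alpha,\pi}n^{-\alpha}}{\pi}$ on the entire small-$n$ range. None of the individual ingredients (Hoeffding, $\frac{1}{1-x}\le 1+2x$, $e^{-2u^{1-2\alpha}}\le u^{-\beta}$, and the binomial bound) is difficult; the delicate part is calibrating these thresholds so that the final constant emerges in the clean form claimed.
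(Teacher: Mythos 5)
Your proposal is correct in substance and uses the same skeleton as the paper's proof: split on the concentration event $\{|\hat\pi-\pi|<\varepsilon\}$, bound the good part by $1/(\pi-\varepsilon)$, and control the bad part by the crude bound $\mathds{1}_{\hat\pi>0}/\hat\pi\le n$ combined with Chernoff's inequality. Where you differ is calibration, and the difference matters. The paper sets $\varepsilon=\frac{\pi}{4}n^{-\alpha}$, so that $\varepsilon/\pi\le\frac{1}{4}<\frac{1}{2}$ for \emph{every} $n\ge1$, and converts the tail $2ne^{-\pi^2n^{1-2\alpha}/8}$ into a power of $n$ via the global estimate $\sup_{x\ge1}x^2e^{-\frac{\pi^2}{8}x^{1-2\alpha}}\le\left(\frac{16}{\pi^2(1-2\alpha)}\right)^{2/(1-2\alpha)}$ — a bound on the maximum of the function, not a threshold inequality — so the whole argument is uniform in $n$ and no small-$n$ case is needed. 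Your choice $\varepsilon=n^{-\alpha}$ instead forces thresholds ($n\ge(2/\pi)^{1/\alpha}$ for $\varepsilon\le\pi/2$, plus the one needed for $e^{-2u^{1-2\alpha}}\le u^{-\beta}$), hence your fallback regime handled by $\frac{2n}{(n+1)\pi}\le\frac{2}{\pi}$. That two-regime plan can be completed with the stated constant, but your threshold bookkeeping is off: for small $\alpha$ the requirement $n\ge(2/\pi)^{1/\alpha}$ \emph{exceeds} your claimed $n_0=\left(16/(\pi^2(1-2\alpha))\right)^{2/(1-2\alpha)}$ (take $\pi=\frac{1}{2}$, $\alpha=0.1$: then $(2/\pi)^{1/\alpha}=4^{10}\approx10^{6}$ while $n_0=80^{2.5}\approx 5.7\times10^{4}$), so $n_0$ must be enlarged to include $(2/\pi)^{1/\alpha}$, after which one must re-verify $C_{\alpha,\pi}\ge n_0^{\alpha}$; this still holds, since $C_{\alpha,\pi}\ge 2/\pi$ and $C_{\alpha,\pi}$ dominates the $\alpha$-th power of the exponential threshold. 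So your route is viable and lands on the same constant, but the paper's two devices — scaling $\varepsilon$ by $\pi$ and using a sup bound rather than a threshold bound — remove exactly the delicate multi-threshold calibration that you correctly identified as the main obstacle in your version.
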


\begin{proof}
Let $\varepsilon > 0$. (and later in the proof, we will more precisely posit $\varepsilon=\frac{\pi}{4} n^{-\alpha}$ with $\alpha \in (0, \frac{1}{2})$)

The law of total expectation leads to,

\begin{align*}
     \mathbb{E}\left[\frac{\mathds{1}_{\hat{\pi}>0}}{\hat{\pi}}\right]  &=  \mathbb{E}\left[\frac{\mathds{1}_{\hat{\pi}>0}}{\hat{\pi}} \mathds{1}_{|\hat \pi - \pi|<\varepsilon}\right]  +  \mathbb{E}\left[\frac{\mathds{1}_{\hat{\pi}>0}}{\hat{\pi}} \mathds{1}_{|\hat \pi - \pi|\ge\varepsilon}\right].
\end{align*}

For the first term,

\begin{align*}
   \mathbb{E}\left[\frac{\mathds{1}_{\hat{\pi}>0}}{\hat{\pi}} \mathds{1}_{|\hat \pi - \pi|<\varepsilon}\right] &\le \frac{1}{\pi-\varepsilon}  \mathbb{E}\left[\mathds{1}_{\hat{\pi}>0} \mathds{1}_{|\hat \pi - \pi|<\varepsilon}\right] \\
   &\le \frac{1}{\pi-\varepsilon},
\end{align*}

and for the second term,

\begin{align*}
      \mathbb{E}\left[\frac{\mathds{1}_{\hat{\pi}>0}}{\hat{\pi}} \mathds{1}_{|\hat \pi - \pi|\ge\varepsilon}\right] &\le n   \mathbb{E}\left[\mathds{1}_{\hat{\pi}>0}\mathds{1}_{|\hat \pi - \pi|\ge\varepsilon}\right] \\
      &\le n \mathbb{P}\left( |\hat \pi - \pi| \ge \varepsilon  \right) \\
      &\le 2 n e^{-2 \varepsilon^2 n}.
\end{align*}

The last row is obtained through Chernoff's inequality in a similar manned as in the proof for the semi-oracle (see \eqref{eq_proof_Chernoff}). As a consequence, and gathering the two previous inequalities,

\begin{align*}
    \mathbb{E}\left[\frac{\mathds{1}_{\hat{\pi}>0}}{\hat{\pi}}\right] &\leq \frac{1}{\pi-\varepsilon}+2 n e^{-2 \varepsilon^2 n} \\
    &= \frac{1}{\pi}\frac{1}{1-\frac{\varepsilon}{\pi}}+2 n e^{-2 \varepsilon^2 n}.
\end{align*}

One can show using function analysis, that, for all $0\le x < \frac{1}{2}$, we have

\begin{align*}
    \frac{1}{1-x} \le 1 + \frac{x}{1-2x}.
\end{align*}

Then, as soon as $\varepsilon$ is small enough, then $\frac{\varepsilon}{\pi} < \frac{1}{2}$, so that,
\begin{align*}
    \mathbb{E}\left[\frac{\mathds{1}_{\hat{\pi}>0}}{\hat{\pi}}\right] & \leq \frac{1}{\pi}\frac{1}{1-\frac{\varepsilon}{\pi}}+2 n e^{-2 \varepsilon^2 n} \\
    &\leq \frac{1}{\pi} \left( 1 +\frac{\frac{\varepsilon}{\pi}}{1-2\frac{\varepsilon}{\pi}}\right) +2 n e^{-2 \varepsilon^2 n}.
\end{align*}

Letting $\varepsilon=\frac{\pi}{4} n^{-\alpha}$ with $\alpha \in (0, \frac{1}{2})$, we have

\begin{align*}
\mathbb{E}\left[\frac{\mathds{1}_{\hat{\pi}>0}}{\hat{\pi}}\right] & \leq \frac{1}{\pi}   + \frac{1}{4\pi} \frac{ n^{-\alpha}}{1-\frac{n^{-\alpha}}{2}} + 2 n e^{-\frac{\pi^2}{8}n^{1-2\alpha}} 
\end{align*}

Now, using the fact that 

\begin{align*}
    \forall x \ge 1, \forall \alpha \in (0,\frac{1}{2}), \quad x^2 e^{-\frac{\pi^2}{8} x^{1-2 \alpha}} &\le  \underbrace{\left( \frac{16}{\pi^2(1-2\alpha)} \right)^{\frac{2}{1-2\alpha}}}_{C_{\alpha,\pi}},
\end{align*}
allows to have 
\begin{align*}
\mathbb{E}\left[\frac{\mathds{1}_{\hat{\pi}>0}}{\hat{\pi}}\right] & \leq \frac{1}{\pi}  + \frac{1}{4\pi} \frac{ n^{-\alpha}}{1-\frac{n^{-\alpha}}{2}}  + 2\frac{C_{\alpha,\pi}}{n} \\
&\leq \frac{1}{\pi} + \frac{n^{-\alpha}}{\pi} + 2\frac{C_{\alpha,\pi}}{\pi n^{\alpha}} \\
&= \frac{1 + n^{-\alpha} (1 + 2C_{\alpha,\pi})}{\pi}.
\end{align*}

\end{proof}

\begin{lemma}[Variance inequality]\label{lemma:variance-inequality}
Considering the Horvitz-Thomson estimator (Definition~\ref{def:HT}) and the difference-in-means estimator (Definition~\ref{def:difference-in-means}), with an internally valid randomized controlled trial of size $n$ (Assumption~\ref{a:trial-internal-validity}), then asymptotic variance of the  difference-in-means is always smaller or equal than the Horvitz-Thomson, such as
\begin{equation*}
      V_{ \text{\tiny DM}, \infty}= V_{ \text{\tiny HT}} - \left( \sqrt{\frac{1-\pi}{\pi}} \mathbb{E}_{ \text{\tiny R}}[Y^{(1)}] + \sqrt{\frac{\pi}{1-\pi}} \mathbb{E}_{\text{\tiny  R}}[Y^{(0)}]\right)^2 \le V_{ \text{\tiny HT}}.
\end{equation*}

In addition, and using the previous inequality, Lemma~\ref{lemma:DM-bias-and-variance} and Lemma~\ref{lemma:ineq-binomial-pi-hat}, one can bound the finite sample difference-in-means's variance:

\begin{align*}
     \operatorname{Var}\left[ \hat{\tau}_{\text{\tiny DM},n} \right] & = \frac{1}{n}\left( \frac{ \operatorname{Var}\left[Y^{(1)}\right] }{\pi}  + \frac{ \operatorname{Var}\left[Y^{(0)}\right] }{1-\pi} \right) + \mathcal{O}\left( n^{-3/2}\right) \\
     &\le  V_{ \text{\tiny HT}}+ \mathcal{O}\left( n^{-3/2}\right) .
\end{align*}

\end{lemma}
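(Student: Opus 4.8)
The plan is to handle the two displayed claims separately, since the first is a purely algebraic identity and the second is a direct application of the two preceding lemmas.

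\textbf{Asymptotic inequality.} First I would substitute the closed forms of $V_{\text{\tiny HT}}$ from \eqref{eq:variance-for-HT} and $V_{\text{\tiny DM},\infty}$ from \eqref{eq:variance-for-DM} into the difference $V_{\text{\tiny HT}} - V_{\text{\tiny DM},\infty}$. Using $\mathbb{E}_{\text{\tiny R}}[(Y^{(a)})^2] = \operatorname{Var}[Y^{(a)}] + \mathbb{E}_{\text{\tiny R}}[Y^{(a)}]^2$ for $a \in \{0,1\}$, the variance terms cancel and I am left with $\frac{\mathbb{E}_{\text{\tiny R}}[Y^{(1)}]^2}{\pi} + \frac{\mathbb{E}_{\text{\tiny R}}[Y^{(0)}]^2}{1-\pi} - \tau_{\text{\tiny R}}^2$. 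Expanding $\tau_{\text{\tiny R}}^2 = (\mathbb{E}_{\text{\tiny R}}[Y^{(1)}] - \mathbb{E}_{\text{\tiny R}}[Y^{(0)}])^2$ and regrouping, the remaining expression factors exactly as the perfect square $\left(\sqrt{\frac{1-\pi}{\pi}}\mathbb{E}_{\text{\tiny R}}[Y^{(1)}] + \sqrt{\frac{\pi}{1-\pi}}\mathbb{E}_{\text{\tiny R}}[Y^{(0)}]\right)^2$; the key algebraic check is that its cross term equals $2\,\mathbb{E}_{\text{\tiny R}}[Y^{(1)}]\mathbb{E}_{\text{\tiny R}}[Y^{(0)}]$ because $\sqrt{\tfrac{1-\pi}{\pi}}\sqrt{\tfrac{\pi}{1-\pi}} = 1$. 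Nonnegativity of a square then gives the stated identity together with $V_{\text{\tiny DM},\infty} \le V_{\text{\tiny HT}}$.

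\textbf{Finite-sample bound.} Next I would start from the exact variance in Lemma~\ref{lemma:DM-bias-and-variance}, $\operatorname{Var}[\hat\tau_{\text{\tiny DM},n}] = \frac{1}{n}\big(\mathbb{E}[\tfrac{\mathds{1}_{\hat\pi>0}}{\hat\pi}]\operatorname{Var}[Y^{(1)}] + \mathbb{E}[\tfrac{\mathds{1}_{(1-\hat\pi)>0}}{1-\hat\pi}]\operatorname{Var}[Y^{(0)}]\big) + D_n$ with $D_n = \mathcal{O}(\max(\pi,1-\pi)^n)$. The two expectations are controlled by Lemma~\ref{lemma:ineq-binomial-pi-hat}, which gives $\mathbb{E}[\tfrac{\mathds{1}_{\hat\pi>0}}{\hat\pi}] \le \frac{1 + C_{\alpha,\pi}n^{-\alpha}}{\pi}$; applying the same lemma with $\pi$ replaced by $1-\pi$ (valid since $1-\hat\pi$ is the empirical control proportion under the same Bernoulli design) bounds the second expectation. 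Plugging these in isolates the leading term $\frac{1}{n}\big(\frac{\operatorname{Var}[Y^{(1)}]}{\pi} + \frac{\operatorname{Var}[Y^{(0)}]}{1-\pi}\big)$ plus a remainder of order $n^{-1-\alpha}$, while $D_n$ is absorbed because $\max(\pi,1-\pi)<1$ under Assumption~\ref{a:trial-internal-validity}. The second inequality then follows by replacing $\frac{\operatorname{Var}[Y^{(1)}]}{\pi} + \frac{\operatorname{Var}[Y^{(0)}]}{1-\pi} = V_{\text{\tiny DM},\infty}$ with the larger $V_{\text{\tiny HT}}$ from the first part.

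\textbf{Main obstacle.} The delicate point is extracting the sharp $\mathcal{O}(n^{-3/2})$ rate. Lemma~\ref{lemma:ineq-binomial-pi-hat} only permits $\alpha \in (0,\tfrac12)$, and its constant $C_{\alpha,\pi} = 1 + 2(\tfrac{16}{\pi^2(1-2\alpha)})^{2/(1-2\alpha)}$ blows up as $\alpha \uparrow \tfrac12$. Strictly this produces a remainder $\mathcal{O}(n^{-1-\alpha})$ for every $\alpha<\tfrac12$, i.e. a rate arbitrarily close to but not exactly $n^{-3/2}$; reaching the stated $\mathcal{O}(n^{-3/2})$ requires either reading it as holding for any rate slower than $n^{-3/2}$, or sharpening the concentration estimate for $\mathbb{E}[\mathds{1}_{\hat\pi>0}/\hat\pi]$ at the endpoint $\alpha=\tfrac12$. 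I would flag this as the single nonroutine step and present everything else as direct substitution.
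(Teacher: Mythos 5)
Your proof is correct and follows essentially the same route as the paper: the asymptotic identity is obtained by the identical perfect-square completion of $V_{\text{\tiny HT}} - V_{\text{\tiny DM},\infty}$ (with the cross term handled via $\sqrt{(1-\pi)/\pi}\cdot\sqrt{\pi/(1-\pi)}=1$), and the finite-sample bound combines Lemma~\ref{lemma:DM-bias-and-variance} with Lemma~\ref{lemma:ineq-binomial-pi-hat} exactly as the paper does, absorbing $D_n$ because $\max(\pi,1-\pi)<1$. The obstacle you flag at the end is genuine and is in fact present in the paper's own proof: the paper applies Lemma~\ref{lemma:ineq-binomial-pi-hat} with $\alpha=1/4$, which yields a remainder of order $n^{-5/4}$ rather than the stated $\mathcal{O}(n^{-3/2})$, and with that lemma as written one can only reach $\mathcal{O}(n^{-1-\alpha})$ for fixed $\alpha<1/2$; attaining the sharp rate would require a finer estimate of $\mathbb{E}\left[\mathds{1}_{\hat\pi>0}/\hat\pi\right]$ (for a binomial count one in fact has $\mathbb{E}\left[\mathds{1}_{\hat\pi>0}/\hat\pi\right]=\pi^{-1}+\mathcal{O}(n^{-1})$, which gives an $\mathcal{O}(n^{-2})$ remainder), exactly as you suggest.
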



\begin{proof}

\textbf{Asymptotic inequality}\\

Recall that,

\begin{equation*}
 V_{ \text{\tiny HT}} =  \mathbb{E}\left[ \frac{\left( Y^{(1)} \right)^2}{\pi} \right]  + \mathbb{E}\left[ \frac{\left( Y^{(0)} \right)^2}{1-\pi} \right]  - \tau^2.
\end{equation*}

Noting that,

\begin{equation*}
    \tau^2 = \left( \mathbb{E}\left[  Y^{(1)} -  Y^{(0)} \right]\right)^2 =  \mathbb{E}\left[  Y^{(1)} \right]^2 + \mathbb{E}\left[  Y^{(0)} \right]^2 - 2\mathbb{E}\left[  Y^{(1)} \right] \mathbb{E}\left[  Y^{(0)} \right],
\end{equation*}

and that for any $a \in  \{0, 1\}$, 

\begin{equation*}
      \operatorname{Var}\left[Y^{(a)}\right] =  \mathbb{E}\left[ \left( Y^{(a)} \right)^2\right] - \mathbb{E}\left[  Y^{(a)} \right]^2,
\end{equation*}

allows to obtain,

\begin{align*}
    V_{ \text{\tiny HT}} &= \frac{ \operatorname{Var}\left[Y^{(1)}\right] }{\pi} + \frac{ \operatorname{Var}\left[Y^{(0)}\right] }{1-\pi} - (1-\frac{1}{\pi})\mathbb{E}\left[  Y^{(1)} \right]^2- (1-\frac{1}{1-\pi})\mathbb{E}\left[  Y^{(0)} \right]^2 + 2\mathbb{E}\left[  Y^{(1)} \right] \mathbb{E}\left[  Y^{(0)} \right]\\
     &=  V_{ \text{\tiny DM}, \infty} + \left( \sqrt{\frac{1-\pi}{\pi}} \mathbb{E}_{ \text{\tiny R}}[Y^{(1)}] + \sqrt{\frac{\pi}{1-\pi}} \mathbb{E}_{\text{\tiny  R}}[Y^{(0)}]\right)^2.
\end{align*}

\textbf{Finite sample inequality}\\
Recall the finite sample variance of the difference-in-means from Lemma~\ref{lemma:DM-bias-and-variance}, and using the inequality from Lemma~\ref{lemma:ineq-binomial-pi-hat},

\begin{align*}
     \operatorname{Var}\left[ \hat{\tau}_{\text{\tiny DM},n} \right] &= \mathbb{E}\left[  Y_i^{(1)}\right]^2 (1-\pi)^n +  \mathbb{E}\left[  Y_i^{(0)}\right]^2  \pi^n - \left(\mathbb{E}\left[  Y_i^{(1)}\right] (1-\pi)^n  +  \mathbb{E}\left[  Y_i^{(0)}\right]  \pi^n \right)^2  \\
     & \qquad + \frac{1}{n} \left( \mathbb{E}\left[  \frac{\mathds{1}_{\hat \pi > 0} }{\hat \pi}\right] \operatorname{Var}\left[Y_i^{(1)}\right] +  \mathbb{E}\left[  \frac{\mathds{1}_{(1-\hat \pi) > 0} }{1-\hat \pi}\right]  \operatorname{Var}\left[Y_i^{(0)}\right] \right) \\
     &\le \mathbb{E}\left[  Y_i^{(1)}\right]^2 (1-\pi)^n +  \mathbb{E}\left[  Y_i^{(0)}\right]^2  \pi^n - \left(\mathbb{E}\left[  Y_i^{(1)}\right] (1-\pi)^n  +  \mathbb{E}\left[  Y_i^{(0)}\right]  \pi^n \right)^2  \\
      & \qquad + \frac{1}{n} \left( \frac{1+C_{1/4, \pi} n^{-\frac{1}{4}}}{\pi} \operatorname{Var}\left[Y_i^{(1)}\right] +  \frac{1+C_{1/4, 1-\pi} n^{-\frac{1}{4}}}{1-\pi}\operatorname{Var}\left[Y_i^{(0)}\right] \right),
\end{align*}

where Lemma~\ref{lemma:ineq-binomial-pi-hat} is applied with $\alpha = 1/4$ and we recall that $C_{1/4, \pi}=1+2\left(\frac{32}{\pi^2}\right)^{4}$. Note that, at this stage, it is possible to write that,

\begin{align}\label{eq:short-writing-of-var-DM}
       \operatorname{Var}\left[ \hat{\tau}_{\text{\tiny DM},n} \right] &= \frac{1}{n}\left( \frac{ \operatorname{Var}\left[Y^{(1)}\right] }{\pi}  + \frac{ \operatorname{Var}\left[Y^{(0)}\right] }{1-\pi} \right) + \mathcal{O}\left( n^{-3/2}\right).
\end{align}

But the overall goal here is to compare $\operatorname{Var}\left[ \hat{\tau}_{\text{\tiny DM},n} \right]$ with  $\operatorname{Var}\left[ \hat{\tau}_{\text{\tiny HT},n} \right]$.

\begin{align*}
       \operatorname{Var}\left[ \hat{\tau}_{\text{\tiny DM},n} \right] &\le \operatorname{Var}\left[ \hat{\tau}_{\text{\tiny HT},n} \right] - \frac{1}{n}\left(\sqrt{\frac{1-\pi}{\pi}} \mathbb{E}_{ \text{\tiny R}}[Y^{(1)}] + \sqrt{\frac{\pi}{1-\pi}} \mathbb{E}_{\text{\tiny  R}}[Y^{(0)}] \right) \\ 
       &\qquad + \frac{1}{n} \left( \frac{C_{1/4, \pi} n^{-\frac{1}{4}}}{\pi} \operatorname{Var}\left[Y_i^{(1)}\right] +  \frac{C_{1/4, 1-\pi} n^{-\frac{1}{4}}}{1-\pi}\operatorname{Var}\left[Y_i^{(0)}\right] \right) && \text{$\ge 0$} \\
       &\qquad \qquad + \mathbb{E}\left[  Y_i^{(1)}\right]^2 (1-\pi)^n +  \mathbb{E}\left[  Y_i^{(0)}\right]^2  \pi^n - \left(\mathbb{E}\left[  Y_i^{(1)}\right] (1-\pi)^n  +  \mathbb{E}\left[  Y_i^{(0)}\right]  \pi^n \right)^2 
\end{align*}

\end{proof}

\subsection{Post-stratification estimator}
The post-stratified estimator (see Definition~\ref{def:post-stratification-estimator}) is an estimator of the average treatment effect from a RCT sample. The principle is to divide the RCT sample into strata, to compute the difference-in-means per strata, and then to average the estimand on each strata, weighting by the strata size. Indeed, the post-stratification estimator introduced in Definition~\ref{def:post-stratification-estimator} can be re-written as follows.

\begin{align*}
    \hat \tau_{\text{\tiny PS},n}=   \sum_{x \in \mathds{X}} \frac{n_{x,1} + n_{x,0}}{n}  \left( \frac{1}{n_{x,1}} \sum_{A_{i}=1, X_i = x} Y_{i}-\frac{1}{n_{x,0}} \sum_{A_{i}=0, X_i = x }Y_{i} \right), \quad \text{where } n_{x,a}= \sum_{i=1}^n \mathbbm{1}_{X_i = x}\mathbbm{1}_{A_i = a}.
\end{align*}

Therefore, the post-stratification estimator can be understood as a weighted estimate of each strata level difference-in-means estimates,

\begin{align*}
     \hat \tau_{\text{\tiny PS},n}=  \sum_{x \in \mathds{X}} \frac{n_x}{n} \hat \tau_{\text{\tiny DM},n_x}, \quad \text{where } n_{x}= \sum_{i=1}^n \mathbbm{1}_{X_i = x}.
\end{align*}

\begin{proof}
Recalling the definition of $\hat \pi_n(x)$ (Definition~\ref{def:procedure-for-densities-and-pi}) and denoting $n_{x,a}= \sum_{i=1}^n \mathbbm{1}_{X_i = x}\mathbbm{1}_{A_i = a}$
\begin{align*}
      \hat \tau_{\text{\tiny PS},n} &= \frac{1}{n} \sum_{i=1}^n \frac{A_iY_i}{\hat \pi_n(x)} -  \frac{(1-A_i)Y_i}{1-\hat \pi_n(x)}\\
      &= \frac{1}{n} \sum_{i=1}^n \frac{A_iY_i}{n_{x,1}/n_x} -  \frac{(1-A_i)Y_i}{n_{x,0}/n_x} && \text{Definition~\ref{def:procedure-for-densities-and-pi}}\\
      &=  \sum_{x \in \mathds{X}}    \frac{1}{n}  \sum_{i=1}^n \mathds{1}_{X_i=x}\frac{A_iY_i}{n_{x,1}/n_x} -  \frac{(1-A_i)Y_i}{n_{x,0}/n_x} && \text{Categorical covariates}\\
      &= \sum_{x \in \mathds{X}}    \frac{n_x}{n}  \sum_{i=1}^n \mathds{1}_{X_i=x}\frac{A_iY_i}{n_{x,1}} -  \frac{(1-A_i)Y_i}{n_{x,0}} && \text{Re-arranging $n_x$} \\
      &=  \sum_{x \in \mathds{X}}    \frac{n_x}{n}  \hat \tau_{\text{\tiny DM},n_x}.
\end{align*}

\end{proof}

The post-stratified estimator is extensively detailed in \cite{miratrix2013adjusting}, but largely focused on inference on a finite population (except in their Section 5). In particular the variance of the post-stratified estimator under a Bernoulli or a completely randomized design is given in \cite{miratrix2013adjusting} (see their Equation (16)). \cite{Imai2008Misunderstanding} also present derivation to compare the variance of a difference-in-means with a post-stratified estimator, quantifying the gain in precision (see Appendix A).

\section{(Non-exhaustive) Review of the different IPSW versions in the literature}\label{appendix:different-version-IPSW}
 
Within the generalization literature, the IPSW can be found under slightly different forms, such as with estimated $\pi$ or not, or with or without normalization. Here, and to help the reader navigates, we reference some of the different formulas found in the literature and in implementations. 
 
 \begin{center}
    \begin{table}[H]
        \begin{tabular}{|l|l|l|}
        \hline
        \textbf{Reference} & \textbf{IPSW formula} & \textbf{Comments} \\ \hline
         \cite{Huang2022Sensitivity}    &                  $\frac{1}{n} \sum_{i \in \text{Trial}}\hat w_{n,m}(X_i) \left(\frac{ Y_i A_i}{\hat \pi_n} - \frac{ Y_i (1-A_i)}{1-\hat \pi_n} \right)$        &  $\pi$ estimated once    $\hat \pi_n = \sum_{i=1}^n A_i / n$      \\ \hline
                  \cite{josey2021transporting}    &                  $\frac{1}{n} \sum_{i \in \text{Trial}}\hat w_{n,m}(X_i) \left(\frac{ Y_i A_i}{\hat \pi_n} - \frac{ Y_i (1-A_i)}{1-\hat \pi_n} \right)$        &  $\pi$ estimated by any consistent estimator    \\ \hline
        \cite{nie2021covariate}     &    $\frac{1}{n} \sum_{i \in \text{Trial}}\hat w_{n,m}(X_i) \left(\frac{ Y_i A_i}{\pi} - \frac{ Y_i (1-A_i)}{1-\pi} \right)$                   &  Oracle $\pi$        \\ \hline
        \cite{dahabreh2020extending}     &    $\frac{1}{n} \sum_{i \in \text{Trial}}\hat w_{n,m}(X_i) \left(\frac{ Y_i A_i}{\hat \pi_n(X)} - \frac{ Y_i (1-A_i)}{1-\hat \pi_n(X)} \right)$                   &  $\hat \pi_n(X)$ estimated with logistic regression         \\ \hline
         \cite{buchanan2018generalizing}   &                  $\frac{1}{n} \sum_{i \in \text{Trial}}\hat w_{n,m}(X_i) \left(\frac{ Y_i A_i}{\hat \pi_n} - \frac{ Y_i (1-A_i)}{1-\hat \pi_n} \right)$        &  $\pi$ estimated once    $\hat \pi_n = \sum_{i=1}^n A_i / n$             \\ \hline
        \end{tabular}
    \caption{\textbf{Non-exhaustive review of the different IPSW versions}, illustrating that different approaches exist.}
    \label{tab:non-exhaustive-review-IPSW}
    \end{table}    
 \end{center}
 
\end{document}